\newtheorem{theorem}{Theorem}
\newtheorem{lemma}{Lemma}
\newtheorem{claim}{Claim}
\newtheorem{definition}{Definition}
\newtheorem{corollary}{Corollary}
\newtheorem{proposition}{Proposition}
\newcommand{\defeq}{:=}
\newcommand{\norm}[1]{\left\lVert#1\right\rVert}
\newcommand{\inprod}[2]{\left\langle#1, #2\right\rangle}
\newcommand{\eps}{\epsilon}
\newcommand{\argmax}{\textup{argmax}}
\newcommand{\argmin}{\textup{argmin}} 
\newcommand{\R}{\mathbb{R}}
\newcommand{\N}{\mathbb{N}}
\newcommand{\E}{\mathbb{E}}
\newcommand{\Nor}{\mathcal{N}}
\definecolor{burntorange}{rgb}{0.8, 0.33, 0.0}
\newcommand{\xbold}{\mathbf{x}}
\newcommand{\calI}{\mathcal{I}}
\newcommand{\Par}[1]{\left(#1\right)}
\newcommand{\Brack}[1]{\left[#1\right]}
\newcommand{\ind}{\mathrm{in}}
\newcommand{\out}{\mathsf{out}}
\newcommand{\tmix}{t_{\mathrm{mix}}}
\newcommand{\calA}{\mathcal{A}}
\newcommand{\Atot}{A_{\mathsf{tot}}}
\newcommand{\Atoti}{A_{\mathsf{tot},i}}
\newcommand{\calS}{\mathcal{S}}
\newcommand{\calX}{\mathcal{X}}
\newcommand{\PP}{\mathbf{P}}
\newcommand{\QQ}{\mathbf{Q}}
\newcommand{\II}{\mathbf{I}}
\newcommand{\ee}{\mathbf{e}}
\newcommand{\zero}{\mathbf{0}}
\renewcommand{\aa}{\mathbf{a}}
\newcommand{\pp}{\mathbf{p}}
\newcommand{\ppi}{\bm{\pi}}
\newcommand{\llam}{\bm{\lambda}}
\newcommand{\tildeppi}{\bm{\widetilde{\pi}}}
\newcommand{\tildepi}{\widetilde{\pi}}
\newcommand{\qq}{\mathbf{q}}
\renewcommand{\r}{\mathbf{r}}
\newcommand{\RR}{\mathbf{R}}
\newcommand{\VV}{\textbf{V}}
\newcommand{\subin}{_\mathsf{in}}
\newcommand{\subino}{_{\mathsf{in}1}}
\newcommand{\subint}{_{\mathsf{in}2}}
\newcommand{\subout}{_\mathsf{out}}
\newcommand{\subaux}{_\mathsf{aux}}
\newcommand{\med}{\mathsf{med}}
\newcommand{\poly}{\mathrm{poly}}
\newcommand{\calM}{\mathcal{M}}
\newcommand{\calG}{\mathcal{G}}
\newcommand{\calT}{\mathcal{T}}
\newcommand{\calC}{\mathcal{C}}
\newcommand{\Vbold}{\mathbf{V}}
\newcommand{\tbsg}{\mathsf{TBSG}}
\newcommand{\ssg}{\mathsf{SimSG}}
\mathchardef\mhyphen="2D
\newcommand{\onestate}{\mathsf{O}\mhyphen\tbsg}
\newcommand{\onestatessg}{\mathsf{O}\mhyphen\ssg}
\newcommand{\lrm}{\mathsf{long}}
\newcommand{\srm}{\mathsf{short}}
\newcommand{\aux}{\mathsf{aux}}
\newcommand{\ar}{\mathsf{AR}}
\newcommand{\slrm}{\mathsf{sl}}
\newcommand{\outility}{\upsilon}
\newcommand{\gcircuit}{\mathsf{GCircuit}}
\newcommand{\ronly}{\mathsf{LocReward}}
\newcommand{\ppad}{\mathsf{PPAD}}
\newcommand{\LLam}{\mathbf{\Lambda}}
\newcommand{\lineofaend}{\mathsf{EndOfALine}}
\newcommand{\brouwer}{\mathsf{Brouwer}}
\newcommand{\osBsf}{\mathsf{os-Bellman}}
\newcommand{\Vsf}{\mathsf{value}}
\newcommand{\tail}{\mathsf{tail}}
\newcommand{\head}{\mathsf{head}}
\newcommand{\Abs}{\mathsf{Abs}}
\newcommand{\codeStyle}[1]{{\bfseries #1} }
\newcommand{\codeInput}{\codeStyle{Input:}}	
\newcommand{\codeReturn}{\codeStyle{Return:}}
\definecolor{burntorange}{rgb}{0.8, 0.33, 0.0}
\title{The Complexity of Infinite-Horizon\\
	General-Sum Stochastic Games}
\author{
Yujia Jin \\
Stanford University \\
\texttt{\href{mailto:yujiajin@stanford.edu}{yujiajin@stanford.edu}}
\and
	Vidya Muthukumar \\
Georgia Institute of Technology \\
\texttt{\href{mailto:vmuthukumar8@gatech.edu}{vmuthukumar8@gatech.edu}}
\and
	Aaron Sidford \\
Stanford University \\
\texttt{\href{mailto:sidford@stanford.edu}{sidford@stanford.edu}}
}
\date{}
\begin{document}

\maketitle
\thispagestyle{empty}

\begin{abstract}

We study the complexity of computing stationary Nash equilibrium (NE) in $n$-player infinite-horizon general-sum stochastic games. 
We focus on the problem of computing NE in such stochastic games when each player is restricted to choosing a stationary policy and rewards are discounted. 
 First, we prove that computing such NE is in $\ppad$ (in addition to clearly being $\ppad$-hard). 
Second, we consider turn-based specializations of such games where at each state there is at most a single player that can take actions and show that these (seemingly-simpler) games remain $\ppad$-hard.
Third, we show that under further structural assumptions on the rewards computing NE in such turn-based games is possible in polynomial time. 
Towards achieving these results we establish structural facts about stochastic games of broader utility, including monotonicity of utilities under single-state single-action changes and reductions to settings where each player controls a single state.

\end{abstract}

\newpage
\tableofcontents
\thispagestyle{empty}
\newpage
\pagenumbering{arabic}

\section{Introduction}\label{sec:intro}

Stochastic games~\cite{filar2012competitive,bacsar1998dynamic} are a fundamental mathematical model for dynamic, non-cooperative interaction between multiple players. 
Multi-player dynamic interaction arises naturally in a diverse set of contexts including natural resource competition~\cite{levhari1980great}, monetary interaction in markets~\cite{karatzas1997strategic}, packet routing~\cite{altman1994flow}, and computer games~\cite{silver2017mastering,silver2016mastering}.
Such games have also been of increased study in reinforcement learning (RL); there have been a number of successes in transferring results from single-player RL to multiplayer RL under zero-sum and cooperative interaction, but comparatively less success for general-sum interaction (see e.g.\ \cite{zhang2021multi} for a survey). 

We consider the broad class of general-sum, simultaneous, tabular, $n$-player stochastic games \cite{shapley1953stochastic,fink1964equilibrium,takahashi1964equilibrium}, which we henceforth refer to as $\ssg$s.\footnote{See Section~\ref{sec:prelim} for the more formal definition and description of our notational conventions.}
$\ssg$s are parameterized by a (finite) state space $\calS$ and disjoint (finite) action sets $\calA_{i,s}$ for each player $i$ and state $s$. The players choose a joint strategy $\ppi$, consisting of distributions $\ppi_{i,s}^t$ over the actions $\calA_{i,s}$ for each player $i \in [n]$ at each state $s \in \calS$ at time-step $t \geq 0$. The game then proceeds in time-steps, where in each time-step $t \geq 0$, the game is at a state $s^t \in \calS$ and each player $i \in [n]$ samples independently from $\calA_{i,s^t}$ according to $\ppi_{i,s^t}^{t}$. The set of actions $\aa^t$ chosen at time-step $t$ then yields an immediate reward $\r_{i,s,\aa^t}$ to each player $i$, and causes the next state $s^{t +1}$ to be sampled from a distribution $\pp_{s,\aa^t}$. Each player $i$ aims to maximize her own long-term value as a function of the rewards they receive, i.e.~$\r_{i,s,\aa^t}$. 
For any fixed strategy the states in a $\ssg$ evolve as a Markov chain\footnote{This Markov structure is commonly assumed across the stochastic games literature, particularly when stationary strategies are considered~\cite{shapley1953stochastic,filar2012competitive,condon1992complexity} and some recent literature~\cite{bai2020near,jin2021v,song2021can} refers to $\ssg$s as \emph{Markov games}. There are studied generalizations of $\ssg$s that allow non-stationary or non-Markovian dynamics~\cite{bacsar1998dynamic}, but are outside the scope of this paper.} and the single-player specialization of $\ssg$s, i.e. when $n = 1$, is a Markov decision processes (MDP)~\cite{Bertsekas19,puterman2014markov}.

Our focus in this paper is on computing (\emph{approximate}) \emph{Nash equilibrium} in the multiplayer \emph{general-sum setting} of $\ssg$s. The term \emph{general-sum} emphasizes that we do not impose any shared structure on the immediate reward functions across players (in contrast to the special case of zero-sum games where $n = 2$ and $\r_2 = - \r_1$).
A \emph{Nash equilibrium (NE)}~\cite{nash1951non} is defined as a joint strategy $\ppi$ such that no player can gain in reward by deviating (keeping the other players' strategies fixed). NE is a solution concept of fundamental interest and importance in both static~\cite{myerson1997game} and dynamic games~\cite{filar2012competitive,bacsar1998dynamic}.
While NE are known to always exist in $\ssg$~\cite{shapley1953stochastic,fink1964equilibrium,takahashi1964equilibrium}, they are challenging to compute efficiently; current provably efficient algorithms from computing (approximate) NE for $\ssg$s make strong assumptions on the rewards~\cite{hu2003nash,littman2001friend}.

One setting for which the complexity of computing general-sum NE is relatively well-understood is the \emph{finite-horizon} model, where all players play up to a horizon of finite and known length $H$ and wish to optimize their total reward.
Even with just two players, and a single state (or multiple states but a horizon length $H = 1$), the problem of NE computation in $\ssg$ is $\ppad$-hard as it generalizes computing NE for a two-player normal-form game which is known to be $\ppad$-complete~\cite{chen2006settling,daskalakis2009complexity}.
On the other hand, by leveraging stochastic dynamic programming techniques~\cite{filar2012competitive}, one can show that the complexity of NE computation in finite-horizon $\ssg$ and normal-form games is polynomial-time equivalent: in particular, NE-computation remains $\ppad$-complete.
This dynamic programming technique is also broadly applicable to solution concepts that are comparatively tractable, such as correlated equilibrium (CE)~\cite{papadimitriou2008computing}.
Exploiting this property, recent work~\cite{jin2021v,song2021can,mao2022provably} has shown that simple decentralized RL algorithms can provably learn and converge to the set of CE's in a finite-horizon $\ssg$.

The central goal of this work is to broaden our understanding of the complexity of $\ssg$s. We ask, ``\emph{how brittle is the property of $\ppad$-completeness of finite-horizon $\ssg$s?}'', specifically to: 
\begin{itemize}
\item \emph{\textbf{Infinite time horizon}: what if players optimize rewards over an infinite time horizon?}
\item \emph{\textbf{Turn-based games}: what if each state is controlled only by a single player?}
\item \emph{\textbf{Localized rewards}: what if rewards are only received for a player at states they control?} 
\end{itemize} 

In this paper we systematically address these questions and provide theoretical foundations for understanding the complexity of infinite-horizon stochastic games. Our key results include complexity-class characterizations, algorithms, equivalences and structural results regarding such games. 
For a brief summary of our main complexity characterizations, see Table~\ref{tab:complexityfull}.

\paragraph{Infinite time horizon:} 
First, we consider \emph{infinite-horizon} $\ssg$s in which each player seeks to maximize rewards over an infinite time horizon while following a \emph{stationary strategy}.
A stationary strategy is one in which action distributions are independent of the time-step (i.e.\ $\ppi_{i,s}^{t_1} = \ppi_{i,s}^{t_2}$ for all $i \in [n]$, $s \in \calS$, and  $t_1,t_2 \geq 0$).
We focus on the discounted-reward model, and defer discussion of the alternative average-reward model to~\Cref{apdx:average}.
(The single-player version of such games is known as a discounted Markov decision process (DMDP) and has been the subject of extensive study in optimization~\cite{ye2011simplex}, operations research~\cite{puterman2014markov}, and machine learning~\cite{sutton2018reinforcement}.)
Stationary strategies are especially attractive to study owing to their succinctness in representation compared to non-stationary strategies and the fact that stationary \emph{policies} (the 1-player analog of strategies) can attain the optimal value in single-player DMDPs~\cite{Bertsekas19,puterman2014markov}. 

Despite the fact that stationary NE are always known to exist in infinite-horizon $\ssg$s~\cite{fink1964equilibrium,takahashi1964equilibrium}, existence does not appear to directly follow from the straightforward proof of existence in finite-horizon $\ssg$. In particular, the dynamic programming technique for finite-horizon $\ssg$s breaks down for infinite-horizon  $\ssg$s~\cite{zinkevich2006cyclic} and does not directly imply membership in $\ppad$. 
Nevertheless, as described in~\Cref{sec:resultsssg}, we show that the stationary NE-computation problem for $\ssg$ remains in $\ppad$.
To prove this result we establish a number of key properties of discounted $\ssg$s (and, thereby, an alternative NE existence proof) that are crucial for several of the results in this paper and may be independently useful for future $\ssg$ algorithm design (see \Cref{sec:foundation}).

\paragraph{Turn-based games:}
We then consider \emph{turn-based} variants of $\ssg$s, which we henceforth refer to as $\tbsg$. Formally, $\tbsg$s are the specialization of $\ssg$s where for each state there is at most one-player that has a non-trivial set of distinct actions to choose from. $\tbsg$s are common in the literature and encompass the popular instantiations of game-play for which large-scale RL has yielded empirical success~\cite{silver2017mastering,silver2016mastering}.
Additionally, they have been extensively studied in the case of two players and zero-sum rewards~\cite{shapley1953stochastic,condon1992complexity,etessami2010complexity,hansen2013strategy,sidford2020solving}.

Whereas it was natural to suspect that discounted $\ssg$s would be $\ppad$-complete, the computational complexity of computing NE for $\tbsg$s seems less clear.
The trivial proof of $\ppad$-hardness for $\ssg$s breaks down even for the case of multiplayer $\tbsg$ --- specializing to a single-state game reduces the problem to trivial independent reward maximization by each player, rather than a simultaneous normal-form game.
More generally, $\tbsg$s seem to have more special structure than $\ssg$s owing to the restriction of a single player controlling each state.
As a quick illustration of this structure, note that non-stationary NE for general-sum, \emph{finite-horizon} $\tbsg$s can be computed in polynomial time by a careful application of the multi-agent dynamic programming technique.
Further, in~\Cref{sec:nonstat}, we extend this technique to show that \emph{non-stationary} NE for $\tbsg$s can be computed in polynomial time for a polynomially bounded discount factor.

Despite this seemingly special structure of $\tbsg$s, one of the main contributions of our work (described in~\Cref{sec:resultstbsg}) is to show that computing a multiplayer \emph{stationary} NE for $\tbsg$ is $\ppad$-hard even for a constant discount factor $\gamma \in (0,1)$.
This shows a surprising and non-standard divergence between the non-stationary and stationary solution concepts in infinite-horizon stochastic games.
Moreover, it even implies the hardness of stationary coarse-correlated equilibrium (CCE) computation in $\ssg$s (owing to a stationary NE in $\tbsg$s being a special case), which is a relaxed notation of equilibrium that allows for more computationally-efficient methods in two-player normal-form games (in contrast to $\ssg$s).
Our hardness results hold even for $\tbsg$s for which each player controls a different state, and each player receives a non-zero reward (allowed to be either positive or negative) at at most $4$ states, including her own.

\paragraph{Localized rewards:}
Finally, with the hardness of discounted general-sum $\ssg$s and $\tbsg$s established, we ask \emph{``under what further conditions on reward functions are there polynomial-time algorithms for $\tbsg$s?''}
As described in~\Cref{sec:resultslocalrewards}, we show that further localizing the reward structure such that each player receives a reward of the same sign \emph{only} at a single state which she controls changes the complexity picture and leads to a polynomial-time algorithm.
We show that for these specially structured $\tbsg$s, a pure NE always exists and is polynomial-time computable via approximate best-response dynamics (also called \emph{strategy iteration} in the stochastic games literature~\cite{hansen2013strategy}).
These results are derived via a connection to potential games~\cite{monderer1996potential} modulo a monotonic transformation of the utilities.
While the connection to potential game theory yields \emph{approximate} NE, we also design a more combinatorial, graph-theoretic algorithm that computes \emph{exact} NE in polynomial-time if, additionally, the transitions in $\tbsg$ are deterministic.

\paragraph{Summary and additional implications:}

In summary, we show that (a) stationary NE computation for infinite-horizon $\ssg's$ is in $\ppad$, (b) stationary NE computation for infinite-horizon $\tbsg$s is $\ppad$-hard, and (c) stationary pure NE computation for infinite-horizon $\tbsg$s when each player receives a consistently-signed reward at one controlled state is polynomial-time solvable.

Beyond shedding light on the complexity of infinite horizon general-sum stochastic games, our work yields several insights and implications of additional interest. 
On the one hand, our hardness result for stationary NE in infinite-horizon $\tbsg$ implies the hardness of slightly more complex solution concepts such as stationary coarse-correlated equilibrium (CCE) in $\ssg$ (as the former is a special case of the latter).
On the other hand, our $\ppad$ membership result for $\ssg$ (which includes $\tbsg$ as a special case) is interesting as in $\ssg$s the utility that a player receives is a non-convex function of her actions, and general-sum non-convex games lie in a complexity class suspected to be harder than $\ppad$~\cite{schoenebeck2012computational}; indeed, even the zero-sum case is $\ppad$-hard~\cite{daskalakis2021complexity}. 
Further, many of the results in this paper crucially utilize special structure that we prove (in Lemma~\ref{lem:quasi-mono}) of a monotonic change with upper and lower-bounded slope (which we refer to as \emph{pseudo-linear}) on each player's value function when she changes her policy at only \emph{one} state.
This observation has powerful consequences for many of our results and allows us to leverage several algorithmic techniques that are normally applied only to linear and piecewise-linear utilities.
As one example, it yields a particularly simple existence proof of stationary NE in $\ssg$ compared to past literature~\cite{fink1964equilibrium,takahashi1964equilibrium}.
We hope these results facilitate the further study of infinite-horizon stochastic games.

\begin{table}[t]
\begin{center}
\begin{small}
\begin{tabular}{|l|c|c|c|}
\hline
Setting &  $\ssg$ & $\tbsg$ & $\tbsg$ (localized rewards) \\
\hline\hline
Finite-horizon & $\ppad$ & Polynomial & Polynomial \\
\hline
Infinite-horizon & $\ppad$ (\Cref{thm:ppad-membership}) & $\ppad$ (\Cref{thm:PPAD-hard}) & Polynomial (\Cref{lem:pureinP_potential})  \\
\hline
\end{tabular}
\end{small}
\caption{ 
Summary of $\ssg$ complexity characterization. Characterizations are for computing non-stationary NE in the finite-horizon case, and for computing stationary NE in the infinite-horizon case.
\label{tab:complexityfull}}
\end{center}
\vskip -0.4in
\end{table}

\paragraph{Paper Organization:} We cover notation and fundamental definitions in \Cref{sec:prelim}, an overview of our results and techniques in \Cref{sec:overview}, and related work in \Cref{sec:related-work-short}. Main results are proved in \Cref{sec:axiom,sec:ppad-membership,sec:tbsg,sec:pure} and 
additional technical facts and settings are in \Cref{app:qmcounterexample,ssec:bellman,apdx:average}.

\section{Preliminaries}\label{sec:prelim}

Here we introduce notation and basic concepts for $\ssg$s and $\tbsg$s we use throughout the paper.

\paragraph{Simultaneous stochastic games ($\ssg$s).} This paper focuses on computing NE of \emph{multi-agent general-sum simultaneous stochastic games ($\ssg$s)} in infinite-horizon settings.
Unless stated otherwise, we consider discounted infinite-horizon $\ssg$s and denote an instance by tuple $\mathcal{G} = (n, \calS ,\mathcal{A},\pp,\r,\gamma)$. $n$ denotes the  number of players (agents), $\calS$ denotes a finite state space, and $\calA$ denotes the finite set of actions available to the players where for player $i\in [n]$ and $s \in \calS$ the possible actions of player $i$ at states $s$ are $\calA_{i,s}$. We say player $i\in[n]$  \emph{controls state $s\in\calS$} if $\calA_{i,s}\neq\emptyset$. We use $\calI_s = \{i\in[n]|\calA_{i,s}\neq\emptyset\}\subseteq [n]$ to denote the players controlling state $s$, and $\calA_s$ to denote the joint action space of all players controlling state $s$, i.e.\ for any $\aa_s \in\calA_s$,  $\aa_s = (a_{i,s})_{i\in\calI_s}$ where $a_{i,s}\in\calA_{i,s}$.  
We denote the action space size for player $i$ by $\Atoti := \sum_{s \in \calS} |\calA_{i,s}|$ and the joint action space size by $\Atot := \sum_{i \in [n]} \Atoti$. 
We let $\pp$ denote the transition probabilities, where $\pp_{s,\aa_s}\in\Delta^\calS \defeq \{\xbold \in \R^\calS_{\geq 0} | \sum_{s \in \calS} x_s = 1\}$ is a distribution over states for all $s\in\calS$ and $\aa_s\in\calA_s$.  $\r$ denotes the instantaneous rewards, where $r_{i,s,\aa_s}$ with $|r_{i,s,\aa_s}| \leq 1$  is the reward of player $i$ at state $s$ if the players controlling it play $\aa_s\in\calA_s$. $\gamma \in (0, 1)$ denotes a discount factor.

\paragraph{$\ssg$ notation and simplifications.} Recall that we use $\calI_s = \{i\in[n]|\calA_{i,s}\neq\emptyset\}\subseteq [n]$ to denote the players controlling state $s$. Additionally, we use $\calS_i=\{s\in\calS|\calA_{i,s}\neq\emptyset\}\subseteq\calS$ to denote states that are controlled by player $i$. Without loss of generality, we assume that for each player $i$ there exists at least one state $s\in\calS$ where $\calA_{i,s}\neq\emptyset$ (i.e.\ $|\calS_i|\ge 1$), since otherwise we can remove the corresponding player $i$ from the game. Also, we assume for each state $s\in\calS$ there is at least a player $i$ such that $\calA_{i,s}\neq\emptyset$ (i.e.\  $|\calI_s|\ge 1$). This is because for any $s\in\calS$, if $\calA_{i,s}=\emptyset$ for all $i\in[n]$ and the transition from the state is $\pp_s \in \Delta^\calS$, this is equivalent to setting $\calA_{1,s}=\{a_s\}$ and $\pp_{s,a_s}=\pp_s$.

\paragraph{$\ssg$ model and objectives.} A $\ssg$ proceeds as follows. It starts from time step $t=0$ and initial state $s^0 \in \calS$ drawn from initial distribution $\qq$.
In each turn $t \geq 0$ the game is at a state $s^t$. At state $s^t$, each player $i\in\calI_{s^t}$ plays an action $a^t_{i}\in \calA_{i, s^t}$. The joint action $\aa^{t} = (a^t_{i})_{i\in \calI_{s^t}}\in\calA_{s^t}$ then yields reward $r_{i,s^t, \aa^t}$ for each player $i\in[n]$.The next state $s^{t+1}$ is then sampled (independently) by $\pp_{s^t,\aa^t}\in\Delta^\calS$. The goal of each player $i \in [n]$ is to maximize their expected infinite-horizon discounted reward, or known as \emph{\emph{value} of the game for player $i$}, defined as 
$v_i = \E[\sum_{t\ge 0}\gamma^t r_{i,s^t, \aa^t}]$.%

\paragraph{$\ssg$ policies and strategies.} Unless stated otherwise, for each player  $i\in[n]$ we restrict to considering randomized stationary policies, i.e. $\ppi_i = \Par{\ppi_{i,s}}_{s\in\calS_i}$ where $\ppi_{i,s}\in\Delta^{\calA_{i,s}}$, and use $\pi_{i,s}(a)$ to denote the probability of player $i$ playing action $a\in\calA_{i,s}$ at state $s$. We call a collection of policies for all players, i.e. $\ppi=(\ppi_i)_{i\in[n]}$, a \emph{strategy}. For a strategy $\ppi$ we use $\ppi_{-i}$ to denote the collection of policies of all players other than player $i$, i.e. $\ppi_{-i} \defeq (\ppi_j)_{j\in [n]\setminus\{i\}}$; we do not distinguish between orders of $\ppi_{i,s}$ in the set $\ppi$ when clear from context (e.g. see definition of NE in \eqref{def:NE-approx}). Further, we use $\PP^{\ppi}\in\R^{\calS\times\calS}$  and $\r^{\ppi}$ to denote the probability transition kernel and instantaneous reward, respectively, under strategy $\ppi$, where
\begin{equation}\label{def:prelim-pi-r}
	\begin{gathered}
		\PP^{\ppi}(s,\cdot) \defeq  \sum_{\aa_s\in\calA_s}\Par{\prod_{i\in\calI_s}\pi_{i,s}(a_{i,s})}\pp_{s,\aa_s}\in\Delta^{\calS}
		\text{ and }
		\r_i^{\ppi}(s)  \defeq \sum_{\aa_s\in\calA_s}\Par{\prod_{i\in\calI_s}\pi_{i,s}(a_{i,s})}r_{i,s,\aa_s}.
	\end{gathered}
\end{equation}
Under strategy $\ppi$, we define the value function of each player $i \in [n]$ at state $s \in \calS$ to be 
\begin{equation}\label{eq:def-V-i-pi}
V_i^{\ppi}(s) \defeq \E\left[\sum_{t\ge 0}\gamma^t r_{i,s^t,\aa^t}|s_0 = s, a^t_{j,s^t}\sim \ppi_{j,s^t}~\text{for all}~j,t\right]
= \ee_s^\top (\II - \gamma \PP^{\ppi})^{-1} \r^{\ppi} ~.
\end{equation}
The value of a strategy to player $i$ starting at initial distribution $\qq\in\Delta^\calS$ is defined as
\begin{equation}\label{eq:def-v-i-pi}
v_i^{\ppi,\qq} \defeq \E_\qq^{\ppi}\left[\sum_{t\ge 0}\gamma^t r_{i,s^t,\aa^t}\right] = \E\left[\sum_{t\ge 0}\gamma^t r_{i,s^t,\aa^t}|s_0\sim \qq, a^t_{j,s^t}\sim \ppi_{j,s^t}~\text{for all}~j,t\right] = \langle \qq,\VV_i^\pi\rangle.
\end{equation}

\paragraph{Nash equilibrium (NE) in $\ssg$s.} Given any $\epsilon\ge 0$, we call a strategy $\ppi$ an \emph{ $\eps$-approximate  Nash Equilibrium (NE) ($\epsilon$-NE)} if for each player $i\in[n]$ 
\begin{align}\label{def:NE-approx}
	& u_i(\ppi_i, \ppi_{-i})\ge u_i(\ppi_i', \ppi_{-i})-\epsilon,~~\text{for any}~~ \ppi_{i,s}' \in \Delta^{\calA_{i,s}}.
\end{align}
where $u_i(\ppi)$ for all $i \in [n]$ is a real value (as a function of $\ppi$) referred to as \emph{utility of player $i$ under strategy $\ppi$}. For general $\ssg$s, unless specified otherwise, we let $u_i(\ppi) \defeq u_i(\ppi_i, \ppi_{-i}) = \inprod{\qq}{\VV_i^{\ppi}}$, i.e.\ the value function with initial distribution $\qq = \frac{1}{|\calS|}\ee_{\calS}$. Further, we call any $0$-approximate NE an \emph{exact NE} and when we refer to a NE we typically mean an $\epsilon$-NE for inverse-polynomially small $\epsilon$. We use the term \emph{approximate NE} to refer to an $\epsilon$-NE for constant $\epsilon$.

\paragraph{Turn-based stochastic games ($\tbsg$s).} $\tbsg$s are the class of $\ssg$s where each state is controlled by at most one player (i.e.\ $|\calI_s| \leq 1$), or equivalently, the states controlled by each of the players are disjoint (i.e\ $\calS_i\cap\calS_j=\emptyset$ for any $i\neq j$, $i,j\in[n]$). Equivalently (by earlier assumptions), a $\tbsg$ is a  $\ssg$ with $|\calI_s| = 1$ for all $s \in \calS$; accordingly, we use $\calI_s = \{i_s\}$ to denote the single player that is controlling state $s$ in a $\tbsg$. 
Since  $\calS=\cup_{i\in[n]}\calS_i$ in a $\tbsg$ we denote an instance by $\mathcal{G} = (n,\calS = \cup_{i\in[n]}\mathcal{S}_i,\mathcal{A},\pp,\r,\gamma)$. Following $\ssg$ notation, we have $\calA=(\calA_{i,s})_{i\in[n], s\in\calS_i}$ and $\calA_s = \calA_{i,s}$ if and only if $s\in\calS_i$ as well as $\pp = (\pp_{s,a})_{s\in\calS, a_s\in\calA_s}$ and $\r=(r_{i,s,a_s})_{i\in[n],s\in\calS, a_s\in\calA_s}$. 
When clear from context, we also use $\ppi_s \defeq \ppi_{i_s,s}$ for all $s\in\calS$. Using this notation, the probability transition kernel $\PP^{\ppi}\in\R^{\calS\times\calS}$ and instantaneous reward $\r^{\ppi}\in\R^\calS$ under strategy $\ppi$ are
\begin{equation}\label{def:prelim-pi-r-tbsg}
\begin{gathered}
		\PP^{\ppi}(s,\cdot) = \sum_{a_{s}\in\calA_s}\pi_{i_s,s}(a_{s})\pp_{s,a_s}\in\Delta^{\calS}
		\text{ and }
		\r_i^{\ppi}(s)  = \sum_{a_s\in\calA_s}\pi_{i_s,s}(a_{s})r_{i,s,a_s}\,.
\end{gathered}
\end{equation}

\paragraph{Game variations.} Here we briefly discuss variants of discounted $\ssg$s we consider.

\begin{itemize}
	\item \textbf{Number of players}: We focus on $n$-player games and our hardness results use that $n$ can scale with the problem size. Establishing the complexity of computing general-sum NE for $\tbsg$s with a constant number of players, e.g.\ $n=2$, remains open.

\item \textbf{Number of states each player controls}: We use $\onestatessg$ (and $\onestate$) to denote the class of $\ssg$s (and $\tbsg$s) where each player only controls one state $|\calS_i|=1$ (note that it is possible that $|\calI_s|>1$, for some $s\in\calS$ in an $\onestatessg$). For simplicity, in $\onestate$ instances we denote the state space by $\calS_i = \{s_i\}$ for each $i\in[n]$ and thus $\calS  = \cup_{i\in[n]}\{s_i\}$ and let $\calA_i \defeq \calA_{s_i} = \calA_{i,s_i}$. %
For $\onestatessg$s and $\onestate$s, we use $\outility_i(\ppi) \defeq V_i^{\ppi}(s_i)$ to denote the value of player $i$ under strategy $\pi$ with initial distribution $\ee_{s_i}$. 
Unless specified otherwise, we use $\outility(\cdot)$ as the utility function in the definition of NE for $\onestatessg$s and $\onestate$s; in \Cref{ssec:bellman} we prove that these two notions of approximate NE are equivalent up to polynomial factors.%

\item \textbf{Different types of strategies (and policies).} We focus on stationary strategies in the majority of this paper, but at times we consider non-stationary strategies where the distribution over actions chosen at each time-step is allowed to depend on $t$. 
Further,  we call a policy $\ppi_i$ a \emph{pure (or deterministic) policy} if it maps a state to a single action for that player, i.e.\ if $\ppi_{i,s} = \ee_{a_{i,s}}$ for some $a_{i,s}\in\calA_{i,s}$ for each  $s\in\calS_i$ and call a strategy $\ppi = (\ppi_i)_{i\in[n]}$  a \emph{pure strategy} if all policies $\ppi_{i}$ are pure. 
Some of the results in paper restrict to consider pure strategies and we extend the definitions of NE to these cases by restricting to such strategies in \eqref{def:NE-approx}.

\end{itemize}

\section{Overview of results and techniques}\label{sec:overview}

Here we provide an overview of our main results and techniques for establishing the complexity of computing stationary NEs in discounted infinite-horizon general-sum $\ssg$s. First, in \Cref{sec:foundation} we cover foundational structural results regarding such $\ssg$s that we use throughout the paper.
In \Cref{sec:resultsssg} we discuss how we show that the problem of computing stationary NE in such $\ssg$s is in $\ppad$. We then consider the $\tbsg$ specialization of this problem and discuss how we show that computing stationary NE in such $\tbsg$s is $\ppad$-hard (\Cref{sec:resultstbsg}), but polynomial-time solvable under additional assumptions on rewards  (\Cref{sec:resultslocalrewards}). 

Although we focus on discounted $\ssg$s and $\tbsg$s in the body of the paper, in \Cref{apdx:average}, we extend our results to the \emph{average-reward model} where the rewards are not discounted, but instead amortized over time. We show that results analogous to our main results hold for under the assumption of bounded mixing times. These extensions are achieved by building upon tools established in~\cite{jin2021towards} for related discounted and average-reward MDPs. 

\subsection{Foundational properties}\label{sec:foundation}

Here we introduce two types of foundational structure we demonstrate for infinite-horizon $\ssg$s, which both our positive and negative complexity characterizations crucially rely on. These structures use the fact that when fixing the strategies of all but one of the players in a $\ssg$, the problem reduces to a single-agent DMDP.

The first property we observe is that when changing the action of a player at any single state in a $\ssg$ from one distribution to another, the utility for that player changes in a \emph{monotonic} manner, with slope that is both upper and lower-bounded. 
We refer to this type of change as \emph{pseudo-linear}. 
This property is equivalent to showing the following theorem that the utilities are pseudo-linear in the special class of $\ssg$s where each player controls only one state, i.e.\ $\onestatessg$s.

\begin{theorem}[Pseudo-linear utilities in $\onestatessg$s, restating~\Cref{coro:quasi-mono-ssg}]\label{thm:quasi-mono-ssg} 
	Consider any $\onestatessg$ instance $\mathcal{G} = (n, \calS ,\mathcal{A},\PP,\RR,\gamma)$ any initial distribution $\qq$, and some player $i\in[n]$. Her utility function $u_i(\ppi_i, \ppi_{-i}) = v_i^{\ppi,\qq}$, when fixing other players' strategy $\ppi_{-i}$, is \emph{pseudo-linear} in $\ppi_i$, i.e.\  for any $\ppi_i$, $\ppi_i'\in\Delta^{\calA_i}$ ordered such that $u_i(\ppi_i,\ppi_{-i}) \leq u_i(\ppi_i',\ppi_{-i})$ and any $\theta\in[0,1]$, we have 
\begin{equation}\label{eq:def-quasi-mono-bounded-velo-overview}
(1-\gamma)\theta (u_i(\ppi'_i,\ppi_{-i})-u_i(\ppi))\le u_i(\theta\ppi'_i+(1-\theta)\ppi_i,\ppi_{-i})-u_i(\ppi)\le \frac{1}{1-\gamma}\theta (u_i(\ppi'_i,\ppi_{-i})-u_i(\ppi)).
\end{equation}
\end{theorem}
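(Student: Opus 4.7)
The plan is to reduce the statement to a simple one-dimensional algebraic inequality via a first-passage decomposition of player $i$'s value function. Fix $\ppi_{-i}$; because $|\calS_i|=1$, varying $\ppi_i$ alters only the row indexed by $s_i$ of both $\r_i^\ppi$ and $\PP^\ppi$, and each of those two rows is \emph{affine} in $\ppi_i$ (the joint action distribution at $s_i$ is a product of $\ppi_{i,s_i}$ with the fixed $\{\ppi_{j,s_i}\}_{j\neq i}$, so linearity in $\ppi_i$ is preserved even when $|\calI_{s_i}|>1$). First, I would use the Bellman equation $\VV_i^\ppi = \r_i^\ppi + \gamma \PP^\ppi \VV_i^\ppi$ to show that for every $s\in\calS$,
\begin{equation*}
V_i^\ppi(s) = f(s) + g(s)\, V_i^\ppi(s_i),
\end{equation*}
where $g(s) := \E_s^{\ppi_{-i}}[\gamma^\tau\, \1\{\tau<\infty\}] \in [0,1]$ with $\tau$ the first hitting time of $s_i$, and $f(s)$ is the discounted reward accumulated before $\tau$; neither depends on $\ppi_i$. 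Substituting back into the Bellman equation at $s_i$ and solving for $V_i^\ppi(s_i)$ yields the closed form
\begin{equation*}
V_i^\ppi(s_i) = \frac{\bar r(\ppi_i)}{1 - \bar g(\ppi_i)},
\end{equation*}
where $\bar r(\cdot)$ and $\bar g(\cdot)$ are affine in $\ppi_i$ and $\bar g(\ppi_i)\in[0,\gamma]$ (one $\gamma$-factor is picked up per transition out of $s_i$).

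Next I would lift this to the $\qq$-averaged utility. Plugging the first display into $u_i(\ppi)=\inprod{\qq}{\VV_i^\ppi}$ gives $u_i(\ppi_i,\ppi_{-i}) = F + H\cdot \bar r(\ppi_i)/(1-\bar g(\ppi_i))$ for constants $F,H$ depending only on $\qq$ and $\ppi_{-i}$, where $H=\qq_{s_i}+\sum_{s\neq s_i}\qq_s g(s)\geq 0$. Abbreviating $r_0=\bar r(\ppi_i)$, $r_1=\bar r(\ppi_i')$, $g_0=\bar g(\ppi_i)$, $g_1=\bar g(\ppi_i')$, affinity yields $r_\theta = (1-\theta)r_0+\theta r_1$ and $1-g_\theta = (1-\theta)(1-g_0)+\theta(1-g_1)$. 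A one-line manipulation of $r_\theta(1-g_0)-r_0(1-g_\theta) = \theta[r_1(1-g_0)-r_0(1-g_1)]$ then produces the identity
\begin{equation*}
u_i(\theta\ppi_i'+(1-\theta)\ppi_i,\ppi_{-i}) - u_i(\ppi) \;=\; \theta\cdot\frac{1-g_1}{(1-\theta)(1-g_0)+\theta(1-g_1)}\cdot\bigl(u_i(\ppi_i',\ppi_{-i}) - u_i(\ppi)\bigr).
\end{equation*}

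It then remains to bound the scalar factor $R := (1-g_1)/\bigl((1-\theta)(1-g_0)+\theta(1-g_1)\bigr)$ within $[1-\gamma,\,1/(1-\gamma)]$. Writing $t=(1-g_1)/(1-g_0)$ and using $1-g_0,1-g_1\in[1-\gamma,1]$ gives $t\in[1-\gamma,\,1/(1-\gamma)]$, and $R = t/((1-\theta)+\theta t)$ is strictly increasing in $t$ (its derivative equals $(1-\theta)/((1-\theta)+\theta t)^2 > 0$). Evaluating at the two endpoints of $[1-\gamma,1/(1-\gamma)]$ yields $R\geq (1-\gamma)/(1-\theta\gamma) \geq 1-\gamma$ and $R\leq 1/(1-\gamma(1-\theta)) \leq 1/(1-\gamma)$, which combined with the displayed identity and the assumption $u_i(\ppi)\leq u_i(\ppi_i',\ppi_{-i})$ yields the desired pseudo-linearity \eqref{eq:def-quasi-mono-bounded-velo-overview}.

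The step I expect to be the main obstacle is deriving the closed form $V_i^\ppi(s_i)=\bar r(\ppi_i)/(1-\bar g(\ppi_i))$ with the stated affinity and the tight bound $\bar g(\ppi_i)\in[0,\gamma]$. Once established, the lifting to $u_i$ and the final M\"obius-style bound on $R$ are routine. The crucial point is that, because player $i$ controls only $s_i$, any change in $\ppi_i$ affects the long-run discounted dynamics only through the single-step transition and reward at $s_i$, which, combined with the independence of other players' actions, makes both $\bar r$ and $\bar g$ genuinely linear in $\ppi_i$.
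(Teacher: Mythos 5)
Your argument is correct, and every step checks out: the first-passage decomposition $V_i^{\ppi}(s)=f(s)+g(s)V_i^{\ppi}(s_i)$ is valid because in an $\onestatessg$ every reward and transition encountered strictly before the first hitting time of $s_i$ is governed by $\ppi_{-i}$ alone; the renewal equation at $s_i$ then gives $V_i^{\ppi}(s_i)=\bar r(\ppi_i)/(1-\bar g(\ppi_i))$ with $\bar r,\bar g$ affine in $\ppi_{i,s_i}$ and $\bar g\in[0,\gamma]$; and your M\"obius identity together with the monotonicity of $t\mapsto t/((1-\theta)+\theta t)$ yields exactly the interval $[(1-\gamma)\theta,\theta/(1-\gamma)]$. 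The route, however, is genuinely different from the paper's. The paper (\Cref{lem:quasi-mono}) first packages $\ppi_i$ and $\ppi_i'$ as the two actions of an auxiliary two-action MDP (\Cref{def:twoactionmdp}), applies Sherman--Morrison to the rank-one perturbation $\II-\gamma\PP^{(\theta)}=\II-\gamma\PP^{(0)}-\gamma\theta\ee_s(\pp_{s,a'}-\pp_{s,a})^\top$, and reduces everything to bounding the difference of discounted visitation frequencies $\Delta E^{(0)}(s)\in[-1/(1-\gamma),1]$, which requires a conditioning argument on whether the chain launched from $\pp_{s,a'}$ ever visits $s$. Your probabilistic version replaces that step with the immediate observation $\bar g\le\gamma$ (returning to $s_i$ costs at least one factor of $\gamma$), and the two bounded quantities are in exact correspondence: one can check that $1-\gamma\Delta E^{(0)}(s)=(1-\bar g_1)/(1-\bar g_0)$, which is precisely your $t$, so your range $t\in[1-\gamma,1/(1-\gamma)]$ is the paper's bound on $\Delta E^{(0)}(s)$ in disguise. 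What your approach buys is a shorter, more transparent derivation of the key bound and the ability to work directly with arbitrary mixed $\ppi_i,\ppi_i'$ without the two-action abstraction; what the paper's approach buys is that reusable two-action-MDP lemma, which it invokes again in later sections. Both proofs ultimately rest on the same structural fact: along the segment $\theta\mapsto\theta\ppi_i'+(1-\theta)\ppi_i$ the utility is a M\"obius function of $\theta$ whose denominator is an affine function confined, after normalization, to $[1-\gamma,1]$.
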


First, to see why $\onestatessg$s have pseudo-linear utilities, we note that when we consider a linear combination of policies $\ppi_i'$ and $\ppi_i$ for player $i$ and fix the other players' strategy $\ppi_{-i}$, it is equivalent to considering a DMDP in which a single player linearly changes her policy on a single state $s$ between two actions $a$ and $a'$.
In this case, the difference in transition matrices is of rank-$1$ and we can use the Sherman-Morrison formula to exactly characterize the change in utility as
\begin{align}
	&u_i(\theta\ppi'_i+(1-\theta)\ppi_i,\ppi_{-i}) = u_i(\ppi_i,\ppi_{-i}) +\theta \frac{\qq^\top \QQ \ee_s \cdot \left[(r_{s,a'} - r_{s,a})+\gamma  (\pp_{s,a'}-\pp_{s,a})^\top \QQ \r^{\ppi} \right]}{1-\gamma\theta  (\pp_{s,a'}-\pp_{s,a})^\top \QQ \ee_s} \label{eq:nonconvexity}
\end{align}
where $\QQ \defeq \left(\II-\gamma\PP^{(\ppi_i,\ppi_{-i})}\right)^{-1}$. 
We then bound the difference in utilities arising from changing the transitions; we show that $(\pp_{s,a'}-\pp_{s,a})^\top \QQ \ee_s \in[-1/(1-\gamma),1]$ by utilizing a specific Markov chain interpretation of the utilities.
This implies the more fine-grained property in~\eqref{eq:def-quasi-mono-bounded-velo-overview} that the utility function is \emph{pseudo-linear} with bounded slope.

Theorem~\ref{thm:quasi-mono-ssg} describes powerful structure on the utility functions of each player that we leverage for our membership \emph{and} hardness results. Although utilities for $\ssg$s may be non-linear and non-convex (in fact, even under a single-state policy change, \eqref{eq:nonconvexity} may be either convex or concave in $\theta$ depending on the sign of $D$ and $[u_i(\ppi'_i,\ppi_{-i})-u_i({\ppi})]$) with complex global correlations, for any fixed player Theorem~\ref{thm:quasi-mono-ssg}  shows that utilities are not too far from linear.

This pseudo-linear structure is key to many of our subsequent proofs. For example, the pseudo-linear property in~\eqref{eq:def-quasi-mono-bounded-velo-overview} implies a distinct proof of existence of NE for $\onestatessg$ that is considerably simpler than the classic existence proofs for $\ssg$s~\cite{fink1964equilibrium,takahashi1964equilibrium}.
We describe how pseudo-linearity is used in each of our proofs of membership of $\ssg$ (\Cref{sec:resultsssg}), hardness of $\tbsg$ (\Cref{sec:resultstbsg}) and polynomial-time algorithms for pure NE in special cases (\Cref{sec:resultslocalrewards}).
While pseudo-linearity is useful for several of our results, it appears to tie closely with NE in $\onestatessg$s\footnote{Monotonicity structure in stochastic games has been studied previously, and~\cite{lozovanu2018stationary} claimed that a version of this structure holds for all $\tbsg$s, including ones in which one player can control multiple states.
However, it appears that the restriction to $\onestatessg$ or (equivalently) considering the change in actions only at a single state is key and we prove in \Cref{app:qmcounterexample} that without this, monotonicity may not hold.
}.
To leverage the pseudo-linearity property more broadly for $\ssg$s, we make the following important structural observation of $\ssg$s, which implied that computing an approximate NE of general $\ssg$ ($\tbsg$) instances is polynomial-time reducible to computing an approximate NE of some corresponding $\onestatessg$ ($\onestate$) instances.
\begin{theorem}[Approximate-NE equivalences for $\ssg$s and $\onestatessg$s, restating~\Cref{lem:redx-MtO}]\label{lem:redx-MtO-overview}
	   There exists a linear-time-computable mapping between the original $\ssg$ and a linear-time-computable corresponding $\onestatessg$ instance, such that for any $\eps \geq 0$ a strategy $\ppi$ is an $\epsilon$-approximate mixed NE of the original $\ssg$ if its induced policy $\ppi'$ is a $((1-\gamma)\epsilon/|\calS|)$-approximate mixed NE in the corresponding $\onestatessg$ (Definition~\ref{def:mspp_ospp_ssg}).
\end{theorem}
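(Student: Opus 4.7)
The plan is to build an explicit ``player-splitting'' reduction from $\mathcal{G} = (n, \calS, \calA, \pp, \r, \gamma)$ to a corresponding $\onestatessg$ instance $\mathcal{G}'$, and then translate the $\onestatessg$ NE hypothesis into the $\ssg$ NE conclusion via the standard single-agent DMDP Bellman contraction. For the construction, I would keep the state space and transition kernel unchanged and introduce one ``copy'' player $(i,s)$ per pair $i \in [n]$, $s \in \calS_i$; copy $(i,s)$ controls only state $s$ with action set $\calA_{i,s}$, and its reward at state $s'$ under joint action $\aa$ is $r'_{(i,s), s', \aa} \defeq r_{i, s', \aa}$, so each copy ``cares about'' its parent player's original reward stream. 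The induced stationary strategy is $\ppi'_{(i,s)} \defeq \ppi_{i,s}$; because the per-state joint action distribution is unchanged, $V_{(i,s)}^{\ppi'} = V_i^{\ppi}$ as vectors over $\calS$, and the construction is computable in linear time in the input size.

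For the translation, note that copy $(i,s)$'s $\onestatessg$ utility is $\outility_{(i,s)}(\ppi') = V_{(i,s)}^{\ppi'}(s) = V_i^{\ppi}(s)$, so the assumption that $\ppi'$ is a $\delta$-NE in $\mathcal{G}'$ with $\delta \defeq (1-\gamma)\eps/|\calS|$ says exactly that, for each original player $i$ and each $s \in \calS_i$, no deviation of $\ppi_{i,s}$ alone (with $\ppi_{i,s'}$ for $s' \ne s$ and $\ppi_{-i}$ held fixed) improves $V_i^{\ppi}(s)$ by more than $\delta$. Once $\ppi_{-i}$ is fixed, player $i$'s best-response problem in $\mathcal{G}$ collapses to a single-agent DMDP over $\calS_i$ with current policy $\ppi_i$, and the above copy-wise condition is exactly a uniform per-state bound of $\delta$ on the Bellman sub-optimality of $\ppi_i$ in that DMDP.

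Applying the standard $\gamma$-contraction of the Bellman optimality operator in $\ell_\infty$ (summing the geometric policy-improvement series), a per-state Bellman error at most $\delta$ implies $\|\VV_i^{(\ppi_i^*, \ppi_{-i})} - \VV_i^{\ppi}\|_\infty \leq \delta/(1-\gamma)$, where $\ppi_i^*$ denotes any best response to $\ppi_{-i}$. Testing this against the uniform initial distribution $\qq = \frac{1}{|\calS|}\ee_\calS$ that defines $u_i$ in $\mathcal{G}$,
\[
u_i(\ppi_i^*, \ppi_{-i}) - u_i(\ppi) = \inprod{\qq}{\VV_i^{(\ppi_i^*, \ppi_{-i})} - \VV_i^{\ppi}} \leq \|\VV_i^{(\ppi_i^*, \ppi_{-i})} - \VV_i^{\ppi}\|_\infty \leq \frac{\delta}{1-\gamma} = \frac{\eps}{|\calS|} \leq \eps,
\]
so $\ppi$ is an $\eps$-NE of $\mathcal{G}$ for every $i$. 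The main subtlety I expect in writing this up carefully is verifying that a copy-local deviation in $\mathcal{G}'$ corresponds \emph{exactly} to a single-state change of the parent's policy in $\mathcal{G}$ (with everything else held fixed): copies of different parents at the same state live as separate agents in $\mathcal{G}'$, and two copies $(i,s)$ and $(i,s')$ of the same parent are independent agents controlling disjoint states, so the ``hold everything else fixed'' clauses in $\mathcal{G}'$ and $\mathcal{G}$ line up exactly by construction. The conversion from state-wise-local optimality for each copy to joint-policy near-optimality for the original player $i$ rides on the $1/(1-\gamma)$ DMDP contraction rather than on pseudo-linearity (Theorem~\ref{thm:quasi-mono-ssg}), which is a useful tool elsewhere but can fail in the multi-state setting for a single player.
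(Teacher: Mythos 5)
Your proposal is correct and follows essentially the same route as the paper: the identical player-splitting construction of Definition~\ref{def:mspp_ospp_ssg}, followed by converting copy-local near-optimality into per-state Bellman error and then into global near-optimality of $\ppi_i$ in the induced DMDP via monotonicity of the on-policy operator and the $\gamma$-contraction. The paper writes this in contrapositive form through its asynchronous policy-improvement lemma (Lemma~\ref{lem:mdp-policy-improve}) and tracks the uniform-initial-distribution utility (whence the $1/|\calS|$ factor), whereas you argue directly with the own-state utility $\outility$, but the ingredients and bounds are the same.
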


To prove \Cref{lem:redx-MtO-overview}, we leverage a key property of the induced single-player MDP for player $i$ when the other players' policies $\ppi_{-i}$ are fixed: the policy improvement property of coordinate-wise (i.e.\ \emph{asynchronous}) policy iteration~\cite{Bertsekas19,puterman2014markov}. In general, \Cref{lem:redx-MtO-overview} implies that an algorithm applicable to all $\onestatessg$ instances can also be adapted to solve $\ssg$ instances. This allows us to transfer the benefits of pseudo-linearity in the more specialized $\onestatessg$ classes to all infinite-horizon $\ssg$s, despite the absence of monotonicity structure in the latter.

\subsection{Complexity of NE in $\ssg$s}\label{sec:resultsssg}

Here we describe how we leverage our structural results on infinite horizons $\ssg$s to show that computing NE of $\ssg$s is in $\ppad$ and thereby obtain a full complexity characterization of such games (they are $\ppad$-complete). Our main complexity result for $\ssg$s is \Cref{thm:ppad-membership-overview}.

\begin{theorem}[Complexity of NE in $\ssg$, restating~\Cref{thm:ppad-membership}]\label{thm:ppad-membership-overview}
The problem of computing an $\eps$-approximate NE for infinite-horizon $\ssg$ class is $\ppad$-complete for a polynomially-bounded discount factor $\frac{1}{1 - \gamma} = \poly(\Atot)$ and accuracy $\eps = \Omega(1/\poly(\Atot))$.%
\end{theorem}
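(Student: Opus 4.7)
The $\ppad$-hardness direction is immediate from the finite-horizon case: a two-player bimatrix game (whose NE is $\ppad$-complete~\cite{chen2006settling,daskalakis2009complexity}) can be embedded as a one-state $\ssg$ with all transitions looping back to the single state, which makes the discounted value equal to $(1-\gamma)^{-1}$ times the bimatrix payoff and preserves NE exactly. My plan for the harder direction, $\ppad$-membership, is to reduce NE computation to a Brouwer fixed-point problem. I would first apply \Cref{lem:redx-MtO-overview} to reduce computing an $\eps$-NE of an arbitrary $\ssg$ to computing an $\eps' = (1-\gamma)\eps/|\calS|$-NE of a corresponding $\onestatessg$. Since both $(1-\gamma)^{-1}$ and $|\calS|$ are polynomially bounded, only a polynomial loss in accuracy is incurred, and $\onestatessg$ is precisely the class where the pseudo-linearity structure of \Cref{thm:quasi-mono-ssg} applies.

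\textbf{Brouwer formulation via pseudo-linearity.} On the $\onestatessg$ side, I would define a Nash-style retraction $F:\prod_{i}\Delta^{\calA_i}\to\prod_{i}\Delta^{\calA_i}$ of the form
\begin{equation*}
F(\ppi)_i(a) \defeq \frac{\pi_i(a) + \max\bigl(0,\outility_i(\ee_a,\ppi_{-i})-\outility_i(\ppi)\bigr)}{1 + \sum_{a'\in\calA_i}\max\bigl(0,\outility_i(\ee_{a'},\ppi_{-i})-\outility_i(\ppi)\bigr)},
\end{equation*}
where $\outility_i$ is evaluated via the Bellman identity $\VV_i^{\ppi}=(\II-\gamma\PP^{\ppi})^{-1}\r^{\ppi}$; hence $F$ is continuous and polynomial-time computable to polynomial precision by matrix inversion. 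The key step is to show that an (approximate) fixed point of $F$ is an (approximate) NE. The usual fixed-point argument only yields that \emph{no pure action is profitable}, which in general $\ssg$s would not preclude profitable mixed deviations because utilities are non-convex in each player's own strategy. However, \Cref{thm:quasi-mono-ssg} rescues this: since $\outility_i(\cdot,\ppi_{-i})$ is pseudo-linear (monotonic on line segments) on $\Delta^{\calA_i}$, its maximum is necessarily attained at a vertex. I would verify this vertex-maximum property by a short contradiction argument: any alleged interior maximizer $\ppi_i^*$ lies strictly inside a segment joining two pure actions, and monotonicity on that segment forces at least one endpoint to achieve the same or higher value. Consequently, any fixed point of $F$ that is (approximately) best-responding against all pure deviations is (approximately) best-responding against all mixed deviations.

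\textbf{Completing the reduction and main obstacle.} Given that $F$ is a polynomial-time-computable continuous self-map of a polytope, computing an approximate Brouwer fixed point of $F$ lies in $\ppad$ by the standard reduction to $\sperner$ (or $\gcircuit$); composing with \Cref{lem:redx-MtO-overview} yields the claimed membership. The main obstacle I expect is the careful quantitative tracking of approximation error: I must convert an approximate fixed point of $F$ (in, say, $\ell_\infty$ distance on strategies) into an approximate NE (in utility gap), and then propagate this back through the $\onestatessg$-to-$\ssg$ reduction. The upper bound $\tfrac{1}{1-\gamma}$ on the pseudo-linear slope in \eqref{eq:def-quasi-mono-bounded-velo-overview} provides the needed Lipschitz control on $\ppi_i$-perturbations, but one must also bound how $\outility_i(\ppi)$ depends on $\ppi_{-i}$, which is governed by $(\II-\gamma\PP^{\ppi})^{-1}$ and thus requires bit-complexity bounds on matrix inversion with condition number $O((1-\gamma)^{-1})$. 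Ensuring that all such estimates compose polynomially, so that the Brouwer precision needed is only $1/\poly(\Atot)$, is the most delicate part of the argument.
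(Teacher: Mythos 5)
Your overall architecture matches the paper's: the same reduction from $\ssg$ to $\onestatessg$ via \Cref{lem:redx-MtO-overview}, the same Nash-style Brouwer map $f_\Vsf$, and the same use of quasi-monotonicity to argue that once no \emph{pure} deviation is profitable, no mixed deviation is either (the maximum of a quasi-monotone utility over $\Delta^{\calA_i}$ is attained at a vertex). The hardness direction via a one-state embedding of a bimatrix game is also fine and equivalent to the paper's.

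However, there is a genuine gap at the step you dismiss as ``the usual fixed-point argument only yields that no pure action is profitable.'' That implication is precisely the crux, and it does \emph{not} follow from the standard analysis when utilities are nonlinear. The classical argument (Lemma 3.8 of \cite{daskalakis2009complexity}) multiplies the approximate-fixed-point inequality $\pi_i(a^j)\sum_{j'}\Upsilon_i^{j'} \le \Upsilon_i^{j} + \epsilon''$ by $\pi_i(a^j)$, sums over $j$, and then crucially invokes the identity $\sum_j \pi_i(a^j)\bigl(u_i(\ee_{a^j},\ppi_{-i}) - u_i(\ppi)\bigr) = 0$, which holds only because normal-form utilities are \emph{linear} in the player's own mixed strategy. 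In a $\onestatessg$ this identity fails, so the case $\pi_{\tail}$ small cannot be closed this way; mere monotonicity along segments (which is all you invoke for this step) is not enough. The paper's fix is \Cref{lem:quasilowerbound}: using the \emph{quantitative} bounded-slope bounds $[(1-\gamma)\theta,\,\theta/(1-\gamma)]$ of \eqref{eq:def-quasi-mono-bounded-velo-overview}, it constructs a surrogate distribution $\tildeppi_i$ with $u_i(\ppi_i,\ppi_{-i}) = \sum_j \tildepi_i(a^j)\,u_i(\ee_{a^j},\ppi_{-i})$, together with the comparability bounds $\sum_j \pi_i(a^j)\tildepi_i(a^j) \ge (1-\alpha)^2/(\rho^2|\calA_i|^2)$ and $\sum_{j>\ell}\tildepi_i(a^j) \le \rho\,\pi_{\tail}$, and then runs the summation argument against $\tildeppi_i$ instead of $\ppi_i$ (\Cref{lem:brouwerapproxNE}, Case 2). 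This is exactly the step the paper identifies as the place where the earlier claimed proof of \cite{deng2021complexity} breaks down, so your proposal as written would reproduce that same gap. You do cite the bounded-slope property, but only for Lipschitz control of $F$ (Property 1), not where it is actually indispensable.
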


Showing hardness in \Cref{thm:ppad-membership} is relatively trivial: it follows immediately by considering $\gamma \rightarrow 0$ and noting that choosing the optimal stationary policy for one step involves computing a NE for an arbitrary multiplayer normal-form game, which is known to be $\ppad$-hard~\cite{daskalakis2009complexity}.

The more interesting component  of the proof of  \Cref{thm:ppad-membership} is the proof of $\ppad$ membership. This proof is provided in~\Cref{ssec:ppad-membership} and leverages the foundational structure of $\ssg$ discussed in~\Cref{sec:foundation} and additional properties of DMDPs.
In particular, making use of the Brouwer fixed point argument (see, e.g.~\cite{dasgupta1986existence}) that shows the existence of NE, we construct two different types of Brouwer functions on strategies as below:
\begin{align}
f_\Vsf:~&\ppi\rightarrow \mathbf{y}~\text{such that}~y_{i,a}(\ppi) = \frac{\pi_{i}(a) + \max(u_i(\ee_{a},\ppi_{-i}) - u_i(\ppi),0)}{1 + \sum_{a' \in \calA_{i,s}} \max(u_i(\ee_{a'},\ppi_{-i}) - u_i(\ppi),0)}~\text{for}~\onestatessg,\label{eq:Brouwer-functions}\\
f_\osBsf:~&\ppi\rightarrow \mathbf{y}~\text{such that}~y_{i,s,a}(\ppi) = \frac{\pi_{i,s}(a) + \max([r_{i,s,a} + \gamma \pp_{s,a}^\top \VV_i^{\ppi}]-V_i^{\ppi}(s),0)}{1 + \sum_{a' \in \calA_{i,s}} \max([r_{i,s,a} + \gamma \pp_{s,a}^\top \VV_i^{\ppi}]-V_i^{\ppi}(s),0)}~\text{for}~\ssg~.\nonumber
\end{align}
Both of these functions satisfy the property that $f(\ppi) = \ppi$ if and only if $\ppi$ is a NE, and are reminiscent of the Brouwer functions used in original $\ppad$-membership arguments that are tailored to linear utilities~\cite{daskalakis2009complexity}. Each function leads to a different  $\ppad$-membership proof and we include both due to the interesting distinct properties of $\ssg$s that they utilize.

Our proof based on $f_\Vsf$ uses both the linear-time equivalence between $\onestatessg$ and $\ssg$ provided in~\Cref{lem:redx-MtO-overview}, and the pseudo-linear structure of  $\onestatessg$ utilities in~\Cref{thm:quasi-mono-ssg}. 
The most non-trivial step involves showing that approximate Brouwer fixed points correspond to approximate NE (Lemma~\ref{lem:brouwerapproxNE}), for which we critically use our established property of pseudo-linearity.
We also show that this proof strategy generalizes to show $\ppad$-membership of any $n$-player-$k$-action game with pseudo-linear utilities (\Cref{thm:ppad-membership-general}, under other mild conditions), which we think may be of independent interest. 

Our alternative proof based on $f_\osBsf$ builds upon the structural fact that small Bellman errors suffice to argue about approximation of NE in~\eqref{lem:bellman-suff} of~\Cref{ssec:bellman}.
Here the crucial observation is that fixing all other players' policies, the Bellman errors are linear in policy-space for a single player.
As a consequence we can apply the more standard analysis~\cite{daskalakis2009complexity} to argue that when $\ppi$ is an approximate fixed point of $f_\osBsf$, the Bellman update error $\max([r_{i,s,a} + \gamma \pp_{s,a} V_i^{\ppi}]-V_i^{\ppi}(s)$ is close to $0$.
This in turn maps back to an approximate NE using the sufficient conditions on Bellman-error for NE (\Cref{ssec:bellman}).

\vspace{-1em} 
\paragraph{Clarification and contextualization with recent prior work~\cite{deng2021complexity}:} 

After initial drafting of this manuscript, we were pointed to the recent work of~\cite{deng2021complexity}, which claims to have already shown the $\ppad$-membership of general $\ssg$s.
However, we were unable to verify their proof; in particular, we do not know how to derive the $6$-th line from the $5$-th line in proving Case 2 of Lemma 4 in~\cite{deng2021complexity} (analogous to our Lemma~\ref{lem:brouwerapproxNE}).
Like us, the authors of~\cite{deng2021complexity} also use the Brouwer function $f_\Vsf$ (more commonly known as Nash's Brouwer function and originally designed for linear utilities); however, unlike us, they do not establish or use any special pseudo-linear structure on the value functions.
In our proof of Lemma~\ref{lem:brouwerapproxNE}, this structure is key to establishing $\ppad$-membership and used for the most non-trivial part of the proof --- that the approximate fixed points of the Brouwer function $f_\Vsf$ are equivalent to approximate NE. 

\subsection{Complexity of NE in $\tbsg$s}\label{sec:resultstbsg}

Here we consider the specialization of infinite-horizon $\ssg$s to $\tbsg$s. 
Recall that in a $\tbsg$, each state is controlled by only one player and, thus, players take turns in controlling the Markov process.
We ask the fundamental question, \emph{how hard is it to compute stationary NE in $\tbsg$s}?

Unlike their non-turn-based counterparts, it is no longer clear that this problem is $\ppad$-hard: the aforementioned direct encoding of NE of arbitrary two-player normal-form games no longer applies when $|\calS| = 1$ or $\gamma \to 0$.
Moreover, in~\Cref{sec:nonstat} and~\Cref{ssec:lpalg} we show that approximate NE computation for $\tbsg$ is in polynomial-time if: (a) non-stationary NE are allowed, \emph{or} (b) the number of states $|\calS|$ is held to a constant; note that equilibrium computation for $\ssg$s remains $\ppad$-hard even under these simplifications. 

Though prior work on general-sum $\tbsg$s is limited, the special case of 2-player zero-sum $\tbsg$s has been well studied~\cite{condon1992complexity,shapley1953stochastic,hansen2013strategy,sidford2020solving} and are known to possess additional structure beyond $\ssg$s.
For example,~\cite{shapley1953stochastic} showed that a pure NE always exists for zero-sum $\tbsg$s and~\cite{hansen2013strategy} showed that NE is computable in strongly polynomial time when the discount factor is constant.
However, this structure does not carry over to the general-sum case and \cite{zinkevich2006cyclic} shows that there are $\tbsg$s with only mixed NE (which hints at possible hardness). 
In~\Cref{ssec:ppad-hard}, we prove the following theorem and establish $\ppad$-hardness of computing NEs of $\tbsg$s.

\begin{theorem}[Complexity of $\tbsg$  NEs, restating \Cref{thm:PPAD-hard}, informal]\label{thm:ppad-hardness-overview} 
Approximate NE-computation in infinite-horizon $\gamma$-discounted $\tbsg$s with any $\gamma\in[1/2,1)$ is $\ppad$-complete. 
\end{theorem}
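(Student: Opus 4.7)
The theorem has two parts. The PPAD upper bound is immediate: every $\tbsg$ is a special case of an $\ssg$ (it satisfies $|\calI_s| \le 1$), so Theorem~\ref{thm:ppad-membership-overview} already places the problem in $\ppad$. The substantive work is PPAD-hardness, which I would establish by polynomial-time reduction from an appropriate known PPAD-complete problem. The cleanest target is $\eps'$-approximate NE computation in a bounded-degree \emph{polymatrix} game (or, more combinatorially, $\gcircuit$), which is PPAD-hard even for polynomially small $\eps'$ and bounded in-degree. Given a polymatrix instance with player set $[n]$, action sets $\calA_i$, and edge-payoff matrices $M_{ij}$, the plan is to build a $\onestate$-style $\tbsg$ with one ``player state'' $s_i$ per player (where $\calA_{i,s_i}=\calA_i$) together with a small number of auxiliary transit states that carry no player choices, producing an instance in which each player receives a non-zero reward at only $O(1)$ states, consistent with the claim in \Cref{sec:intro}.

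The key gadget encodes the polymatrix bilinear form $u_i^{\mathrm{poly}}(\ppi) = \sum_{j} \ppi_i^\top M_{ij}\,\ppi_j$ into the discounted value $V_i^{\ppi}(s_i)$. From $s_i$ under action $a\in\calA_i$, the transition would route with weights proportional to incidence in the polymatrix graph into an auxiliary state whose downstream transition visits a single neighbor $s_j$ for exactly one step, collects a reward that depends on player $j$'s action and on $M_{ij}[a,\cdot]$ (this is the reward that accrues to \emph{player $i$}, delivered with an appropriate discount factor), and then is dumped into an absorbing sink. The ``one-shot'' routing is crucial because strategies are stationary; if I allowed the trajectory to revisit $s_j$, the geometric series in $\ppi_j$ would create higher-order terms rather than a bilinear one. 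With this construction and a careful normalization by $(1-\gamma)$, $V_i^{\ppi}(s_i)$ equals $u_i^{\mathrm{poly}}(\ppi)$ up to a known affine rescaling.

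To convert NEs between the two instances, I would invoke the pseudo-linearity property of Theorem~\ref{thm:quasi-mono-ssg}. Since the TBSG is a $\onestate$ instance, each player's utility $\outility_i(\ppi_i,\ppi_{-i})$ is pseudo-linear in $\ppi_i$ with slopes in $[1-\gamma,\,1/(1-\gamma)]$, and the best-response gap in the TBSG is therefore sandwiched (up to a $\poly(1/(1-\gamma))$ factor) with the best-response gap computed against the bilinear form. Hence a $\delta$-NE of the constructed TBSG yields a $\poly(1/(1-\gamma))\cdot\delta$-NE of the polymatrix game, and choosing $\delta$ polynomially small preserves PPAD-hardness. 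The constant-discount condition $\gamma\in[1/2,1)$ is consistent with this argument because $1/(1-\gamma)=O(1)$, so no blowup occurs in the reduction's approximation parameter.

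The main obstacle will be the gadget design under a constant discount factor. Because $\gamma$ does not tend to $1$, the ``effective horizon'' is $O(1)$, so the gadget can only string together a bounded number of transitions before rewards decay geometrically. This forces the encoding of the bilinear interaction $\ppi_i^\top M_{ij}\ppi_j$ into essentially two hops: a hop out of $s_i$ using $\ppi_i$ and a hop at $s_j$ using $\ppi_j$. The auxiliary states and absorbing sinks must be arranged so that (i) each desired pair of actions $(a,b)\in\calA_i\times\calA_j$ contributes exactly $M_{ij}[a,b]$ (scaled uniformly) to $V_i^{\ppi}(s_i)$, (ii) no player's action leaks into anyone else's value beyond the intended edges, and (iii) the per-player ``reward footprint'' stays at $O(1)$ states. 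Verifying these three properties, together with the polynomial-time-and-accuracy bookkeeping required by PPAD-hardness, is where I expect the bulk of the technical care to go.
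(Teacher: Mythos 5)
Your upper bound is fine and matches the paper (a $\tbsg$ is a special $\ssg$, so \Cref{thm:ppad-membership-overview} applies), but the hardness reduction as proposed has two concrete obstructions. First, the $\tbsg$ reward model cannot deliver a general bilinear edge payoff $\ppi_i^\top M_{ij}\ppi_j$ in a single visit to $s_j$: the instantaneous reward $r_{i,s_j,b}$ is a function only of the action $b$ taken \emph{at} $s_j$ and cannot depend on which action $a$ was taken at $s_i$. What your two-hop gadget actually produces is $\bigl(\sum_a \ppi_i(a)\,p_a\bigr)\cdot\bigl(\sum_b \ppi_j(b)\,r_{i,s_j,b}\bigr)$, a \emph{rank-one} interaction per visited state; encoding a general $M_{ij}$ would require multiple copies of $s_j$, but separate states are controlled by separate players in a $\onestate$ instance, so the copies would no longer be bound to a single strategy $\ppi_j$. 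Second, the ``one-shot then absorb'' routing is inconsistent with the fact that $s_j$ is simultaneously the \emph{start} of player $j$'s own value computation: the transition out of $(s_j,b)$ is a single object and must both terminate player $i$'s trajectory and launch player $j$'s gadget. Whichever you choose, player $i$'s value picks up player $j$'s entire downstream payoff structure at relative weight $\Theta(\gamma^2)$, which for constant $\gamma\in[1/2,1)$ is a constant contamination, not an error term. This is exactly the ``global structure'' problem the paper calls out, and your pseudo-linearity step cannot repair it, since pseudo-linearity controls interpolation along a single player's policy line, not cross-gadget leakage.

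The paper's actual route avoids both issues. It reduces from $\eps$-$\gcircuit$ rather than polymatrix games, so every gate only needs to enforce a scalar arithmetic relation between probabilities $p\subin,p\subout\in[0,1]$ of two-action players; rank-one interactions suffice for this. Each gate is implemented with an auxiliary comparator player $\aux$ whose best response detects whether $p\subin$ exceeds $p\subout$, and --- contrary to your intuition that constant $\gamma$ forces short gadgets --- the construction inserts dummy paths and cycles of length $L=\lceil\tfrac{4}{1-\gamma}\log(1/\eps)\rceil$ precisely so that $\gamma^L\le\eps^2$ and all downstream/feedback contamination becomes negligible, which is what lets gadgets compose in series. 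Pseudo-linearity (\Cref{coro:quasi-mono-ssg}) is then used where it genuinely helps: to upgrade the exact-NE contradiction argument at each gate to a $\delta$-approximate-NE argument with $\delta=\Theta((1-\gamma)\gamma^{L+1}\eps)$. If you want to salvage your plan, you would need to either restrict to rank-one (or two-action) interactions from the start --- at which point you have essentially rediscovered the $\gcircuit$ route --- and add the long-path decoupling to kill the $\Theta(\gamma^2)$ leakage.
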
   
We prove \Cref{thm:PPAD-hard} by reducing the problem of generalized approximate circuit satisfiability ($\eps$-$\gcircuit$, formally defined in~\Cref{def:gcircuit}) to $\onestate$s; $\eps$-$\gcircuit$ is known to be $\ppad$-hard for even sufficiently small constant $\eps > 0$~\cite{rubinstein2018inapproximability}.
This reduction is, at a high-level, the approach taken in the first proofs of $\ppad$-hardness of normal-form games~\cite{chen2006settling,daskalakis2009complexity} as well as more recent literature (e.g.~hardness for public goods games~\cite{papadimitriou2021public}); though it has been predominantly applied to games with linear or piecewise linear utilities.
The key ingredients of our reduction are the implementation of certain circuit gates, i.e. $G_=$ (equal), $G_\alpha$ (set to constant $\alpha$), $G_{\times}$ (multiply), $G_+$ (sum), $G_{-}$ (subtraction), $G_{>}$ (comparison), $G_{\land}$ (logic AND), $G_{\lor}$ (logic OR), $G_{\neg}$ (logic NOT), through $\onestate$ \emph{game gadgets} which carefully encode these gates in an $\onestate$. 

As an illustration, here we show how to implement an approximate equal gate $G_=$ between input and output players (corresponding to input and output states), i.e. $p\subout \in [p\subin - \eps, p\subin + \eps]$ at \emph{any} approximate NE.
This gadget includes $3$ players (states): $\ind$ $(s\subin)$, $\out$ $(s\subout)$ and $\aux$ $(s\subaux)$. Figure~\ref{fig:mixedgamegadgets-intro} illustrates the transitions in the $\tbsg$ instance and Table~\ref{tab:equal-reward-aux-intro} partially specifies the instantaneous rewards. 
Here, the reader should think of $p\subin$ as the probability of player $\mathsf{in}$ choosing action $a^1\subin$ and $p\subout$ as the probability of player $\out$ choosing action $a^1\subout$.
Our game gadgets are crucially \emph{multiplayer} in that they allow flexible choice of instant rewards for different players (e.g.\ $\mathsf{in}$, $\aux$ and $\mathsf{out}$).
We consider the case of exact NE as a warmup; in particular, we hope to show that exact NE necessitates $\pi\subin(a^1\subin) = \pi\subout(a^1\subout)$.
Just as in the typically implemented graphical game gadgets~\cite{daskalakis2009complexity,chen2006settling}, our hope is to enforce this equality constraint through a proof-by-contradiction argument that goes through two steps.
As an illustration of the contradiction argument, suppose that $\pi\subin(a^1\subin) > \pi\subout(a^1\subout)$.
Our optimistic hope would be to choose the rewards and transitions so that the value function of player $\aux$ at his own state under choice of $\ppi\subout, \ppi\subin$ satisfies
\begin{align}\label{eq:overview-V-value}
V\subaux^{(\ee_{a\subaux^1}, \ppi\subin, \ppi\subout)} = \gamma \ppi\subin(a^1\subin)~~\text{ and }~~V\subaux^{(\ee_{a\subaux^2}, \ppi\subin, \ppi\subout)} = \gamma\ppi\subout(a^1\subout).
\end{align}
If~\eqref{eq:overview-V-value} were satisfied, $\aux$ player would have to take pure strategy $a^1\subaux$ at exact NE, which would transit to state $s\subin$. 
As reflected in the reward table (\Cref{tab:equal-reward-aux-intro}) this would be a bad event for player $\mathsf{out}$ due to the negative reward she accrues at state $s\subin$.
Consequently, she would prefer to take action $a^1\subout$ as much as possible, i.e. $\pi\subout(a^1\subout) \approx 1$, which would lead to the desired contradiction.
(A symmetric contradictory argument would work for the case $\pi\subin(a^1\subin) < \pi\subout(a^1\subout)$, ensuring that the system balances and necessitates $\pi\subin = \pi\subout$ at an exact NE.) 

\begin{figure}[t]
	\centering
		\includegraphics[width = 0.7\textwidth]{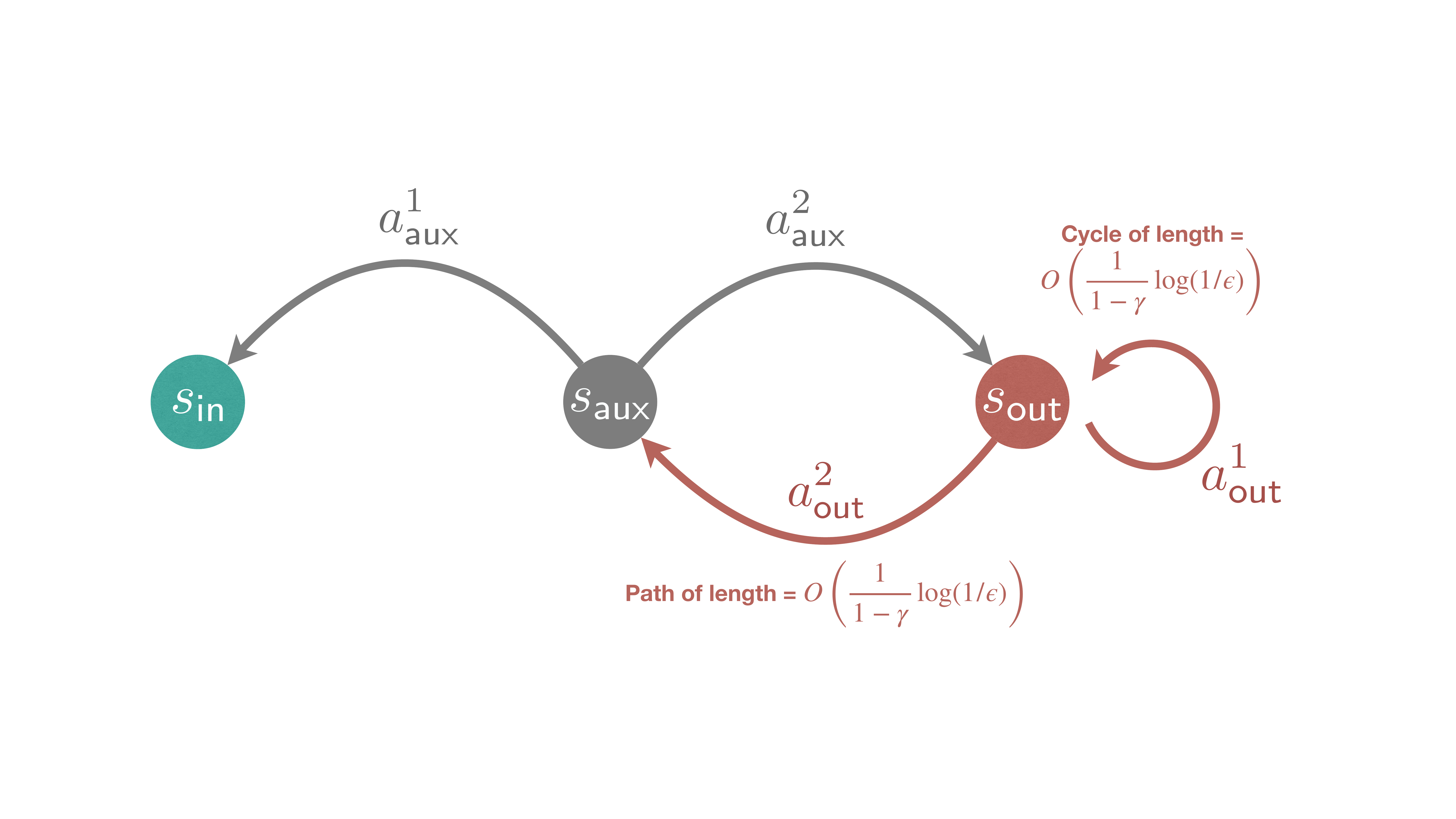}
    \captionof{figure}{Illustration of states and transitions for ``equal gadget'' to implement $G_{=}$. The transitions in red encode a cycle or path of length $L$, where $L = \lceil \frac{4}{1 - \gamma} \log (1/\eps)\rceil$ for some constant $\gamma$, $\epsilon$.
	}\label{fig:mixedgamegadgets-intro}
\end{figure}
 
 \begin{table}[t]
\begin{center}
\begin{tabular}{llll}
\multicolumn{1}{l|}{}     & \multicolumn{1}{l|}{$s\subin,a\subin^1$} & \multicolumn{1}{l|}{$s\subin,a\subin^2$} & \multicolumn{1}{l}{$s\subaux, a^1\subaux$} \\ \hline
\multicolumn{1}{l|}{$\aux$}   &
\multicolumn{1}{l|}{$1/2$} & \multicolumn{1}{l|}{$0$}   & 
\multicolumn{1}{l}{$0$}   \\ \hline
\multicolumn{1}{l|}{$\mathsf{out}$}  &
\multicolumn{1}{l|}{$-1/4$}   &
\multicolumn{1}{l|}{$-1/4$} & \multicolumn{1}{l}{$0$}   
\end{tabular}
\caption{ Instantaneous rewards of player $\mathsf{aux},\mathsf{out}$, informal: Constant $\gamma$ is the given problem discount factor, $\gamma^L\le \epsilon^2$ is tiny.}\label{tab:equal-reward-aux-intro}
\end{center}
\end{table}

However, creating an equal gadget through $\onestate$ is much more intricate than a corresponding graphical game gadget due to the twin challenges of \emph{nonlinearity} and \emph{common structure} in players' utilities.
For one, the pseudo-linear structure described in~\Cref{sec:foundation} only ensures approximate linearity up to multiplicative constants; the more fine-grained equality required in~\eqref{eq:overview-V-value} is far more difficult to achieve (and unclear whether possible).
Moreover, unlike the definitional local structure between players in graphical games~\cite{kearns2013graphical}, $\tbsg$s have significant \emph{global} structure between players (as players represent states that transit to one another).
In other words the players' utility functions depend on \emph{all} of the other players and not just their immediate neighbors.
As a consequence of this global structure, a naive combination of individual gadgets could sizably change the value functions and break the local circuit operations.

We work around these two issues by creating long cycles and paths with ``dummy states" for the actions the $\mathsf{out}$ player takes such that the only non-zero rewards are collected outside these dummy states. 
We show that this elongation of paths simultaneously induces \emph{approximate linearity} and \emph{localization to neighbors} in the $\onestate$ instance.
When the path has length $L = O(\frac{1}{1-\gamma}\log(\frac{1}{\eps}))$, we satisfy \eqref{eq:overview-V-value} in an approximate sense up to tiny  $\poly(\eps)$ errors.
Further, this almost-linear structure turns out to be robust to transitions that are ``further away'' from $s\subout$.
This ensures that the $\mathsf{out}$ player can then be used as an input for subsequent gadgets connected in series, and enables a successful combination of the gadgets without changing the NE conditions at each state.

It remains to translate these ideas from an exact-NE argument to an approximate-NE argument.
For this, the pseudo-linearity property that we established in~\Cref{ssec:quasi-mono} proves to be especially useful.
In particular, when $\ppi\subin(a\subin^1)>\ppi\subout(a^1\subout)+\epsilon$, we can adapt the bounded ``slope'' argument in~\eqref{eq:def-quasi-mono-bounded-velo-overview} and observe that
\begin{align*}
V\subaux^{(\ee_{a^1\subaux}, \ppi\subin, \ppi\subout)} - V\subaux^{(\theta \ee_{a^1\subaux}+(1-\theta)\ee_{a^2\subaux}, \ppi\subin, \ppi\subout)}& \ge(1-\theta)(1-\gamma)\left[V\subaux^{(\ee_{a^1\subaux}, \ppi\subin, \ppi\subout)}-V\subaux^{(\ee_{a^2\subaux}, \ppi\subin, \ppi\subout)}\right]\\
& \ge (1-\theta)\gamma (1-\gamma)\epsilon,
\end{align*}
which ensures that $\theta$, i.e. the probability that player $\aux$ takes action $a^1\subaux$, must be close enough to $1$ at any approximate NE. We use this pseudo-linearity multiple times to formally relax the exact NE argument under approximation, in order to implement $G_{=}$ gate for $\epsilon$-$\gcircuit$. 

Ultimately, our proof of this theorem sheds further light on the problem's structure and shows that hardness is fairly resilient in general-sum stochastic games. Even in the special case where each player controls a single state and receives non-zero reward at at most $4$ states (\emph{or} alternatively, all players have non-negative but dense reward structure), the problem is still $\ppad$-hard. 

\paragraph{Contextualization with independent concurrent work:}
In independent and concurrent work, the authors of~\cite{daskalakis2022complexity} were additionally able to prove that the computation of NE of even $2$-player $\tbsg$s is $\ppad$-hard. 
We note that beyond claims of hardness in $\tbsg$s, each of \cite{daskalakis2022complexity} and this work contain disjoint results of independent interest. For instance,~\cite{daskalakis2022complexity} provides a polynomial-time algorithm for finding \emph{non-stationary} Markov CCEs for $\ssg$s.
On the other hand, this work focuses exclusively on stationary equilibrium concepts.
In addition to hardness of $\tbsg$s, we show the $\ppad$-membership of general games with pseudo-linear utilities including $\ssg$s (see~\Cref{sec:resultsssg}) and provide polynomial-time algorithms for finding stationary NEs for $\tbsg$s under extra assumptions on the reward structure (see~\Cref{sec:resultslocalrewards}). 

\subsection{Efficient algorithms for $\tbsg$s under localized rewards}\label{sec:resultslocalrewards}

As shown above, the problem of finding an approximate NE for infinite-horizon $\tbsg$s is $\ppad$-complete even under a variety of additional structural assumptions.
For example we show that even when each player only controls one state, all transitions are deterministic, and all players receive non-negative rewards (possibly in many states) computing a NE in a $\tbsg$ is $\ppad$-hard. 

Towards characterizing what features are critical to the hardness of the problem, we specialize further and ask what happens if we further restrict each player to receive reward only at the \emph{single state} that they control. We call this class of games $\ronly$ and consider the class of \emph{fixed-sign} $\ronly$ $\onestate$, i.e.\ $r_{i,s_i,\cdot}\ge0$ (or $r_{i,s_i,\cdot}\le0$) for all $i\in[n]$ and $r_{i,s',\cdot}=0$ for any $s'\neq s_i$.%

The intuitive reason for why this special structure is helpful is that it creates a qualitative symmetry in the players' incentives: all of them wish to either reach (in the case of non-negative rewards) or avoid (in the case of negative rewards) their own controlling state. Mathematically, we observe that  given a strategy $\ppi$, the utility function has the following structure 
\begin{align}
u_i(\ppi) & = V^{\ppi}_i	=  \ee_{s_i}^\top (\II-\gamma\PP^{\ppi})^{-1}\r^{\ppi} = \frac{E_i(\ppi_{-i})}{\det(\ppi)}r_i^{\ppi_i}(s_i)
\text{ where }
\det(\ppi) \defeq \det\left(\II-\gamma\PP^{\ppi}\right)>0
\label{eq:loc_utility}
\end{align}
and $E_i(\ppi_{-i})$ is the determinant of the $(i,i)^{th}$ minor of the matrix $\II-\gamma\PP^{\ppi}$.  
The last equality for $u_i(\ppi)$ in \eqref{eq:loc_utility} used the matrix inversion formula and the fact that $r_i^{\ppi_j}(s_j)=0$ for any $j\neq i$. 
Consequently, the numerator of $u_i(\ppi)$ is separable in $\ppi_i$ and $\ppi_{-i}$ and the denominator is common to all players $i \in [n]$.
This implies that, following a logarithmic transformation, a fixed-sign $\ronly$ $\onestate$ game is equivalent to a \emph{potential game}~\cite{monderer1996potential}, with potential function $\Phi(\ppi) \defeq \log (\det(\ppi)^{-1} \prod_{i\in[n]}r_i^{\ppi_i}(s_i))$.
That is, for any $i\in[n]$ and $\ppi_i$, $\ppi_i'$ we have $\Phi(\ppi_i',\ppi_{-i}) - \Phi(\ppi) = \log V_i^{(\ppi_i'.\ppi_{-i})}-\log V_i^{(\ppi_i,\ppi_{-i})}$.

The potential game structure of $\ronly$ $\onestate$ automatically implies the existence of a pure NE for all fixed-sign $\ronly$ games.
Further, it follows~\cite{monderer1996potential} that (approximate) best response dynamics (also known as \emph{strategy iteration} in the stochastic games literature~\cite{hansen2013strategy})  provably decrease this potential by a polynomial factor, until it achieves a pure-strategy approximate NE.
This yields a polynomial-time algorithm for computing approximate NE as stated below.

\begin{theorem}[Restating~\Cref{lem:nonneg-existence,lem:pureinP_potential,lem:nonpos-existence,lem:pureinP_potential-nonpos}]\label{thm:positive}
Consider a $\ronly$ $\onestate$ instance $\calG  = (n,\calS = \cup_{i\in[n]}\{s_i\},\mathcal{A},\pp,\r,\gamma)$ where all rewards are non-negative (or non-positive). Then the game has a pure NE, and given some accuracy $\epsilon$, approximate best-response dynamics find an $\epsilon$-approximate pure NE in time $\mathrm{poly}(\Atot, \frac{1}{1-\gamma}, \frac{1}{\epsilon})$.
\end{theorem}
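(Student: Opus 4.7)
The plan is to exploit the utility factorization in~\eqref{eq:loc_utility} to recognize the game as a (weighted) potential game after a logarithmic transformation, and then invoke polynomial convergence of approximate best-response dynamics in such games.

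First I would verify the factorization $u_i(\ppi) = E_i(\ppi_{-i}) \, r_i^{\ppi_i}(s_i)/\det(\II - \gamma\PP^{\ppi})$ by Cramer's rule, using that in a $\ronly$ $\onestate$ instance $\r^{\ppi}$ has its only nonzero entry at coordinate $s_i$ and depends only on $\ppi_i$ there; this also yields $E_i(\ppi_{-i}) > 0$ and $\det(\II - \gamma\PP^{\ppi}) > 0$ because $(\II - \gamma\PP^{\ppi})^{-1}$ is entrywise non-negative with diagonal at least $1$. In the non-negative case I would then define
\begin{equation*}
\Phi(\ppi) \defeq -\log\det(\II - \gamma\PP^{\ppi}) + \sum_{j \in [n]}\log r_j^{\ppi_j}(s_j),
\end{equation*}
and observe that $\log E_i(\ppi_{-i})$ and the cross-player summands for $j \neq i$ are unaffected by $\ppi_i$, giving the exact potential identity $\Phi(\ppi_i',\ppi_{-i}) - \Phi(\ppi) = \log u_i(\ppi_i',\ppi_{-i}) - \log u_i(\ppi)$. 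The non-positive case is symmetric upon replacing $\log r_j^{\ppi_j}(s_j)$ by $\log(-r_j^{\ppi_j}(s_j))$, with players now minimizing $\Phi$.

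Existence of a pure NE is then immediate: take a pure maximizer (resp.\ minimizer) of $\Phi$ over the finite set of pure strategies and combine the potential identity with the standard DMDP fact that any best response to a fixed $\ppi_{-i}$ can be chosen deterministic. For the algorithmic claim, I would analyze the natural approximate best-response dynamics: while some player $i$ has a pure deviation improving her utility by more than $\epsilon$, overwrite $\ppi_i$ with a pure best response (computable in polynomial time by solving the induced single-player DMDP via linear programming or policy iteration). Each such step improves $u_i$ by more than $\epsilon$; since $u_i \leq 1/(1-\gamma)$, this is a multiplicative gain of at least $1 + \Omega(\epsilon(1-\gamma))$ and hence a $\Phi$-gain of $\Omega(\epsilon(1-\gamma))$. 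Combined with a $\poly(\Atot, 1/(1-\gamma), \log(1/\epsilon))$ bound on the range of $\Phi$ along the trajectory, this yields a $\poly(\Atot, 1/(1-\gamma), 1/\epsilon)$ iteration count and overall runtime.

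The main obstacle is that $\Phi = -\infty$ whenever some $r_j^{\ppi_j}(s_j) = 0$, which would break the range bound. I would address this in two steps. First, drop any player $j$ with $\max_a r_{j,s_j,a} < (1-\gamma)\epsilon$, since her utility is at most $\max_a r_{j,s_j,a}/(1-\gamma) < \epsilon$ regardless of strategy, making any $\ppi_j$ trivially $\epsilon$-optimal for her. Second, initialize each remaining player with the pure policy choosing $\arg\max_a r_{j,s_j,a}$, so that $r_j^{\ppi_j}(s_j) \geq (1-\gamma)\epsilon$ holds initially. This invariant is preserved: when player $j$ updates, her new utility satisfies $u_j > \epsilon$, so $r_j^{\ppi_j}(s_j) = u_j \cdot \det(\II - \gamma\PP^{\ppi})/E_j(\ppi_{-j}) \geq (1-\gamma)u_j > (1-\gamma)\epsilon$ (using that $E_j/\det = [(\II - \gamma\PP^{\ppi})^{-1}]_{jj} \leq 1/(1-\gamma)$); for non-updating players $\ppi_j$ is unchanged. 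Hence $\Phi$ remains in a $\poly$-bounded interval and the range argument closes.
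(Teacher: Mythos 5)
Your proposal is correct and its core is the same as the paper's: the factorization $u_i(\ppi) = E_i(\ppi_{-i})\,r_i^{\ppi_i}(s_i)/\det(\II-\gamma\PP^{\ppi})$, the logarithmic potential whose single-player increments equal $\log u_i(\ppi_i',\ppi_{-i}) - \log u_i(\ppi)$, and strategy iteration with a per-step potential gain of $\Omega((1-\gamma)\epsilon)$ against a $\poly$-bounded potential range (the paper's Lemmas~\ref{lem:nonneg-existence}--\ref{lem:pureinP_potential-nonpos} and Algorithms~\ref{alg:ronly-non-negative-AltMin},~\ref{alg:ronly-non-positive-AltMin}). Where you genuinely diverge is the treatment of the degenerate case $r_j^{\ppi_j}(s_j)=0$ that sends $\Phi$ to $-\infty$: the paper clips every reward to be at least $(1-\gamma)\epsilon/2$ in magnitude, runs the dynamics on the padded game $\calG'$, and transfers an $\epsilon/2$-NE back to $\calG$ using $\|(\II-\gamma\PP^{\ppi})^{-1}\|_\infty \le 1/(1-\gamma)$; you instead keep the original game, discard players who are trivially $\epsilon$-satisfied, initialize the rest at their max-reward action, and prove the invariant $r_j^{\ppi_j}(s_j) \ge (1-\gamma)\epsilon$ along the trajectory via $E_j(\ppi_{-j})/\det(\ppi) = [(\II-\gamma\PP^{\ppi})^{-1}]_{jj} \le 1/(1-\gamma)$. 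Both devices work; the padding is arguably cleaner to state (one uniform perturbation bound, no trajectory invariant), while your version avoids modifying the instance and makes explicit that the dynamics on the true game converge. Your existence argument (take a pure maximizer of $\Phi$, using that best responses in the induced DMDP can be taken pure) is also a more direct packaging of the paper's cycle-contradiction proof. Two small points to tidy up: the sum in your $\Phi$ should range only over the non-discarded players, since a discarded player may have $r_j^{\ppi_j}(s_j)=0$ and would otherwise contribute $-\infty$ (harmlessly constant, but better excluded); and in the non-positive case the "symmetric" fix is to freeze any player possessing an action with $r_{j,s_j,a} \ge -(1-\gamma)\epsilon$ at that action (she is then $\epsilon$-optimal forever since all utilities are $\le 0$), after which every remaining player's reward is bounded away from zero for \emph{all} policies and no trajectory invariant is even needed.
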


We also show that under further assumptions it is possible to compute an exact NE through a different set of algorithms inspired by graph problems.
Specifically, we consider a special sub-class of fixed-sign $\ronly$ $\onestate$ where we impose two additional structural assumptions: (a) all transitions are deterministic, and (b) all rewards on each player's own state are independent of actions. 
Under these refinements, all players in an non-negative $\ronly$ instance are incentivized to go through a shortest-cycle to maximize its utility, while all players in an non-positive $\ronly$ instance are incentivized to go through a cycle that is as long as possible (or, most ideally follow a path to a cycle that doesn't return to the player's controlled state). 
Accordingly, we design graph algorithms (\Cref{alg:ronly-non-negative-graph} and \Cref{alg:ronly-non-positive-graph}) that locally, iteratively find the cycle and path structure that corresponds to an exact NE. 
Our results show that best-response dynamics (i.e.~strategy iteration) and graph-based algorithms can work in general-sum $\tbsg$s beyond zero-sum setting~\cite{hansen2013strategy}.

Finally, note that our positive results really require \emph{both} of the assumptions of (a) reward only at a single state (b) rewards of the same sign (see \Cref{tab:complexity-loc} for a summary). 
From~\Cref{sec:resultstbsg} we already know when relaxing the first condition, finding an approximate mixed NE is $\ppad$-hard. 
The second condition is also important, as relaxing it (i.e.\ allowing both positive and negative rewards in the $\ronly$ $\onestate$ model) may preclude even the existence of pure NE~\cite{zinkevich2006cyclic}. In fact, we show in~\Cref{ssec:mixed-sign} that even determining whether or not a pure NE exists is NP-hard via a reduction to the Hamiltonian path problem (but whether \emph{mixed} NE are polynomial-time computable under this modification remains open).
Ultimately, this gives a more complete picture of what transformations change the problem from being $\ppad$-complete to being polynomial time solvable.

\begin{table}[t]
\begin{center}
\begin{small}
\begin{tabular}{|l|c|c|}
\hline
Setting ($\onestate$) &  Localized rewards ($\ronly$) & General rewards \\
\hline\hline
Fixed-sign rewards & Polynomial (pure NE) & $\ppad$-complete  \\
\hline
Mixed-sign rewards & NP-hard (pure NE), open problem (mixed NE) & $\ppad$-complete   \\
\hline
\end{tabular}
\end{small}
\caption{ 
Summary of complexity characterization for $\onestate$ under various reward assumptions.
\label{tab:complexity-loc}}
\end{center}
\vskip -0.1in
\end{table}

\section{Related work}\label{sec:related-work-short}

Here we highlight prior work that is most closely related to our results. 

\paragraph{General-sum stochastic game theory:}
Central questions in stochastic game theory research involve (a) the existence of equilibria and (b) the convergence and complexity of algorithms that compute these equilibria.
Existence of equilibria is known in significantly more general formulations of stochastic games than the tabular $\ssg$s that are studied in our paper (see,~e.g.~the classic textbooks~\cite{filar2012competitive,bacsar1998dynamic}).
Relevant to our study, the first existence proofs of general-sum tabular stochastic games appeared in~\cite{fink1964equilibrium,takahashi1964equilibrium}.
They are based on Kakutani's fixed point theorem, and so \emph{non-constructive} in that they do not immediately yield an algorithm.
This is a departure from the zero-sum case, where Shapley's proof of existence~\cite{shapley1953stochastic} is constructive and directly leverages the convergence of infinite-horizon dynamic-programming.

Indeed, the recent survey paper on multi-agent RL~\cite{zhang2021multi} mentions the search for computationally tractable and provably convergent (to NE) algorithms for $\ssg$ as an open problem.
Algorithms that are known to converge to NE in $\ssg$ require strong assumptions on the heterogeneous rewards --- such as requiring the one-step equilibrium to be unique at each iteration~\cite{hu2003nash,greenwald2003correlated}, or requiring the players to satisfy a ``friend-or-foe'' relationship~\cite{littman2001friend}.

An important negative result in the literature was the shown failure of convergence of infinite-horizon dynamic-programming algorithms for general-sum $\ssg$s~\cite{zinkevich2006cyclic}.
More generally, they uncover a fundamental identifiability issue by showing that more than one equilibrium value (and, thereby, more than one NE) can realize identical action-value functions.
This identifiability issue suggests that any iterative algorithm that uses action-value functions in its update (including policy-based methods like policy iteration and two-timescale actor-critic~\cite{konda1999actor}) will fail to converge for similar reasons. 

Since then, alternative algorithms that successfully \emph{asymptotically} converge to NE have been developed for general-sum $\ssg$s based on two-timescale approaches~\cite{prasad2015two} and homotopy methods~\cite{borkovsky2010user,herings2004stationary}.
However, these algorithms are intricately coupled across players and states in a more intricate way and, at the very least, suffer a high complexity per iteration.
Finite-time guarantees for these algorithms do not exist in the literature.
A distinct approach that uses linear programming is also proposed~\cite{dermed2009solving}, but this algorithm also suffers from exponential iteration complexity.
Algorithms that are used for general-sum $\ssg$ in practice are largely heuristic and directly minimize the Bellman error of the strategy~\cite{perolat2017learning} (which we defined in~\Cref{ssec:bellman}).

Interestingly, this picture does not significantly change for $\tbsg$s despite their significant structure over and above $\ssg$s.
The counterexamples of~\cite{zinkevich2006cyclic} are in fact $2$-player, $2$-state, and $2$-action-per-state $\tbsg$s.
Our $\ppad$-hardness results for $\tbsg$ resolve an open question that was posed by~\cite{zinkevich2006cyclic}, who asked whether alternative methods (using Q-values and equilibrium-value functions) could be used to derive stationary NE in $\tbsg$ instead.
In particular, we show that the stationary NE is not only difficult to approach via popular dynamics, but is fundamentally hard.

Very recently, a number of positive results for finite-horizon non-stationary CCE in $\ssg$s were provided~\cite{song2021can,jin2021v,mao2022provably}.
These results even allow for independent learning by players.
A natural question is whether an infinite-horizon \emph{stationary} CCE could be extracted from these results.
Since $\tbsg$ NE is a special case of $\ssg$ CCE, our $\ppad$-hardness result answers this question in the negative.
In general, tools that are designed for computing and approaching non-stationary equilibria cannot be easily leveraged to compute or approach stationary equilibria due to the induced nonconvexity in utilities and the failure of infinite-horizon dynamic programming.
Our paper fills this gap and provides a comprehensive characterization of complexity of computing stationary NE for infinite-horizon multi-player $\ssg$s and $\tbsg$s.

A trivial observation is that the problem of exact computation for general-sum stochastic games is only harder than approximation; in general, exact computation for NE of stochastic games is outside the scope of this paper and we refer readers to~\cite{filos2022fixp} for recent hardness result following that thread.

\paragraph{The zero-sum case:}
There is a substantial literature on equilibrium computation, sample complexity and learning dynamics in the case of zero-sum $\ssg$ and $\tbsg$.
For a detailed overview of advances in learning in zero-sum stochastic games, see the survey paper~\cite{zhang2021multi}. 
In contrast, our results address the general-sum case.
Positive results for zero-sum $\tbsg$, such as the property of strongly-polynomial-time computation of an exact NE with a constant discount factor~\cite{hansen2013strategy}, leverage special structure that does not carry over to the general-sum case.
In particular, a pure NE always exists for a zero-sum $\tbsg$ owing to the convergence of Shapley's value iteration~\cite{shapley1953stochastic}.
~\cite{zinkevich2006cyclic} showed that a pure NE need not exist for general-sum $\tbsg$s.
We further show in~\Cref{sec:pure} that pure NE are NP-hard to compute (at least in part due to their possible lack of existence).
On the more positive side, we also characterize specializations of general-sum $\tbsg$s for which pure NE always exist and are polynomial-time computable.

It is crucial to note that our results only address the equilibrium computation problem of general-sum $\ssg$s and $\tbsg$s with a constant discount factor.
When the rewards are zero-sum this is known to be polynomial-time~\cite{shapley1953stochastic} and additionally strongly polynomial-time in the case of $\tbsg$~\cite{hansen2013strategy}.
Whether it is possible to compute an (exact or approximate) NE in even zero-sum $\tbsg$s with an increasing discount factor remains open~\cite{andersson2009complexity}.
This open problem has important connections to simple stochastic games~\cite{condon1992complexity,etessami2010complexity}, mean-payoff games~\cite{gurvich1988cyclic,zwick1996complexity}, and parity games~\cite{emerson1991tree,voge2000discrete,jurdzinski2008deterministic}.

\paragraph{Algorithmic game theory for normal-form and market equilibria:} 
The $\ppad$ complexity class was introduced by~\cite{papadimitriou1994complexity} to capture the complexity of all total search problems (i.e. problems for which a solution is known)~\cite{megiddo1991total} that are polynomial-time reducible to the problem of finding at least one unbalanced vertex on a directed graph.
~\cite{daskalakis2009complexity} first showed that NE computation for $n$-player $k$-action normal-form games lies in $\ppad$.
By definition, the utilities of normal-form games are always \emph{linear} in the mixed strategies.
This is not the case for $\ssg$ or $\tbsg$, whose utilities are not even convex in their argument.
The membership of nonconvex general-sum games in $\ppad$ is not obvious.
For example,~\cite{daskalakis2021complexity} recently showed $\ppad$-hardness of even \emph{zero-sum} constrained nonconvex-nonconcave games.
Moreover, the complexity of all general-sum games satisfying a succinct representation and the property of polynomial-time evaluation of expected utility (which includes $\ssg$) is believed to lie in a strictly harder complexity class than $\ppad$~\cite{schoenebeck2012computational}.
General-sum nonlinear game classes that are known to be in $\ppad$ primarily involve market equilibrium~\cite{chen2009settling,vazirani2011market,chen2017complexity,garg2017settling} and Bayes-NE of auctions~\cite{filos2021complexity} and make distinct assumptions of either a) \emph{separable concave and piecewise linear (SPLC)} assumptions on the utilities or b) \emph{constant-elasticity-of-substitution (CES) utilities}~\cite{chen2017complexity}.
They also utilize in part linearity in sufficient conditions for NE (e.g. Walras's law for market equilibrium).
These structures, while interesting in their own right, are also not satisfied by $\ssg$s or $\tbsg$s.
The pseudo-linear property of $\ssg$s that we uncover in~\Cref{ssec:quasi-mono} is key to showing $\ppad$-membership.
Our subsequent proof in~\Cref{ssec:ppad-membership} is a useful generalization of the traditional proof for linear utilities~\cite{daskalakis2009complexity} to pseudo-linear utilities.

In addition to being in $\ppad$, general-sum normal-form games were established to be $\ppad$-hard by~\cite{chen2006settling,daskalakis2009complexity}.
Since then, $\ppad$-hardness has been shown for several structured classes of normal-form games~\cite{mehta2014constant,liu2018approximation,chen2015complexity,deligkas2020tree,papadimitriou2021public} as well as for weaker objectives in normal-form games such as constant-additive approximation~\cite{daskalakis2013complexity,rubinstein2016settling,rubinstein2018inapproximability} and smoothed-analysis~\cite{boodaghians2020smoothed}.
Our approach to prove $\ppad$-hardness for $\tbsg$ takes inspiration from the approach to prove $\ppad$-hardness for $n$-player graphical games~\cite{kearns2007graphical,kearns2013graphical} (which was subsequently used to prove $\ppad$-hardness for constant-player normal-form games by~\cite{daskalakis2009complexity}).
In particular we construct \emph{game gadgets} to implement real-valued arithmetic circuit operations through $\tbsg$ NE.
As summarized in~\Cref{sec:resultstbsg}, the details of our $\tbsg$ game gadgets are significantly more intricate than the corresponding graphical game gadgets due to the additional challenges of global shared structure across players and the nonlinearity of the utilities.
These challenges do not manifest in graphical games as, by definition, they only possess local structure and satisfy linearity in utilities.
Whether $\tbsg$s are directly reducible to graphical games or bimatrix games remains an intriguing open question.

\paragraph{Relation of $\tbsg$ to other game-theoretic paradigms:}

We conclude our overview of related work with a brief summarization of solution concepts and paradigms that are partially related to $\tbsg$'s.
First, the class of sequential or \emph{extensive-form games} is known to lie in $\ppad$~\cite{young2014handbook} and is trivially $\ppad$-hard due to normal-form games being a special case. 
We note that computation of non-stationary equilibria in the finite-horizon $\ssg$ and $\tbsg$ are special cases of these. 
Second, the solution concept of (coarse) correlated equilibrium (CCE) is polynomial-time computable, in contrast with NE, even for multiplayer games with linear utilities~\cite{papadimitriou2008computing}.
Since $\tbsg$ involves a non-trivial action set for only one player at each state, the solution concepts of NE and CCE all become equivalent for both stationary and non-stationary equilibria. 
On the positive side, this may imply the convergence of recently designed finite-horizon learning dynamics~\cite{song2021can,jin2021v,mao2022provably} to $\tbsg$ NE.
On the negative side, our $\ppad$-hardness of approximation of stationary NE in $\tbsg$ (Section~\ref{ssec:ppad-hard}) implies hardness of \emph{stationary} CCE equilibria in $\ssg$s.
Finally, we contextualize our NP-hardness results on certain decision problems (i.e. \emph{does there exist an equilibrium with certain properties?}) in~\Cref{ssec:mixed-sign}.
In normal-form games, such decision problems are known to be NP-hard~\cite{conitzer2002complexity}.
\section{Foundational properties of $\ssg$}\label{sec:axiom}

We begin by discussing some structural properties of $\ssg$ and $\tbsg$.
These structural properties all follow from the observation that when fixing strategies of all other players $\ppi_{-i}$, the game degenerate to a single-player Markov decision process for player $i$. 
Such structure has the following implications: 
\begin{enumerate}
\item It ensures the utilities to be all monotonic (either decreasing or increasing) with upper and lower-bounded slope, which we call \emph{pseudo-linearity}, along any linear path between two policies changing for a single player, as long as each player only controls a single state (i.e. the instance is in $\onestatessg$).
This result is shown in~\Cref{ssec:quasi-mono}.
\item It allows us to show a polynomial-time equivalence between computing an approximate NE in a general instance $\ssg$ (in which a player may control multiple states) and computing an approximate NE in a correspondingly defined instance of $\onestatessg$.
This result is shown in~\Cref{ssec:redx-m2o}.
\end{enumerate}

These foundational lemmas are repeatedly used in subsequent sections.

\subsection{Pseudo-linear utilities for $\onestatessg$}\label{ssec:quasi-mono}

In this section we show a strong version of quasi-monotonicity property of the utility functions of players in a $\onestatessg$ instance, which we refer to as \emph{pseudo-linearity} throughout the paper.

As a stepping stone to this result, we first prove this property for a single-player Markov decision process (MDP) for which only one state has a non-trivial action space of size $two$.
We formally define this type of MDP below.

\begin{definition}[Two Action MDP]\label{def:twoactionmdp}
A \emph{two-action MDP} $\calM = (\calS,\calA,\pp,\r,\gamma)$ fixes a single state $s \in \calS$ for which $\calA_s = \{a,a'\}$ and sets $|\calA_{s'}| = 1$ for all $s' \neq s$.
Rewards and transition probabilities for state $s \in \calS$ and action $a \in \calA_s$ are denoted by $r_{s,a}$ and $\pp_{s,a}$ respectively.
It suffices to consider the continuum of policies $\ppi^\theta$ such that $\pi_s(a') = \theta \in [0,1]$ (note that by the definition of this simplified MDP, this is the only state at which the policy needs to be specified).
We define as shorthand the corresponding value of the policy $\ppi^\theta$ (starting at initial distribution $\qq$ over states) as $v^{(\theta),\qq}$.
\end{definition}

\begin{lemma}[Pseudo-linearity of value in two-action MDP]\label{lem:quasi-mono}
	The value of a policy $\ppi^\theta$ for any two-action MDP is monotonic in $\theta \in [0,1]$.
	Moreover, whenever $v^{(1),\qq}\neq v^{(0),\qq}$ we have
	\begin{equation}\label{eq:bounded-velo}
		\frac{v^{(\theta),\qq}-v^{(0),\qq}}{v^{(1),\qq}-v^{(0),\qq}} \in\left[(1-\gamma) \theta
		~ , ~ 
		\frac{\theta}{1-\gamma}  \right],~\forall \theta\in[0,1].
	\end{equation}
	In other words, the monotonicity is strict with an upper and lower-bounded slope.
\end{lemma}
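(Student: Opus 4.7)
\textbf{Proof plan for Lemma~\ref{lem:quasi-mono}.} The approach is to exploit the rank-one perturbation structure: as $\theta$ varies, both the transition kernel and reward vector of the two-action MDP change in a rank-one fashion at coordinate $s$. Writing $\QQ_0 \defeq (\II - \gamma \PP^{\ppi^0})^{-1}$, I observe
\begin{equation*}
\PP^{\ppi^\theta} = \PP^{\ppi^0} + \theta\, \ee_s (\pp_{s,a'} - \pp_{s,a})^\top, \qquad \r^{\ppi^\theta} = \r^{\ppi^0} + \theta (r_{s,a'} - r_{s,a})\, \ee_s.
\end{equation*}
Applying the Sherman--Morrison formula to $(\II - \gamma \PP^{\ppi^\theta})^{-1}$, an analog of the identity \eqref{eq:nonconvexity} from the overview gives, after simplification,
\begin{equation*}
v^{(\theta),\qq} - v^{(0),\qq} \;=\; \frac{\theta\, N}{1 - \gamma \theta D}, \qquad \text{where } D \defeq (\pp_{s,a'} - \pp_{s,a})^\top \QQ_0 \ee_s
\end{equation*}
and $N$ is some $\theta$-independent scalar. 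Since $(\II - \gamma \PP^{\ppi^\theta})^{-1}$ exists for every $\theta \in [0,1]$, continuity forces $1 - \gamma \theta D > 0$ throughout $[0,1]$. Differentiating yields $\frac{d}{d\theta} v^{(\theta),\qq} = N/(1 - \gamma\theta D)^2$, whose sign is the (constant) sign of $N$. This establishes monotonicity.

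The heart of the proof is then the pair of inequalities in~\eqref{eq:bounded-velo}, which, after substituting the closed form, become
\begin{equation*}
\frac{v^{(\theta),\qq} - v^{(0),\qq}}{v^{(1),\qq} - v^{(0),\qq}} \;=\; \frac{\theta(1 - \gamma D)}{1 - \gamma\theta D},
\end{equation*}
so the task reduces to proving $(1 - \gamma)\theta \leq \frac{\theta(1 - \gamma D)}{1 - \gamma \theta D} \leq \frac{\theta}{1 - \gamma}$ for all $\theta \in [0,1]$. This, in turn, boils down to the scalar bound $D \in [-1/(1-\gamma), 1]$; once that bound is in hand, both inequalities follow by elementary algebra, distinguishing the cases $\theta + \gamma \geq 1$ and $\theta + \gamma < 1$ for the upper inequality and using $1 - \theta(1-\gamma) \in [\gamma, 1]$ for the lower inequality.

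The main obstacle is thus proving the sandwich bound on $D$, and I would attack this using a Markov chain interpretation of $\QQ_0 \ee_s$. Specifically, the key identity obtained from Sherman--Morrison (after a small additional computation) is
\begin{equation*}
\ee_s^\top (\II - \gamma \PP^{\ppi^\theta})^{-1} \ee_s \;=\; \frac{\ee_s^\top \QQ_0 \ee_s}{1 - \gamma \theta D}.
\end{equation*}
The quantity on the left equals the expected discounted number of visits to $s$ starting from $s$ under $\ppi^\theta$, namely $\sum_{k \geq 0} \gamma^k \Pr^{\ppi^\theta}[s^k = s \mid s^0 = s]$, which lies in $[1, 1/(1-\gamma)]$ since the $k=0$ term contributes $1$ and the total discounted mass is at most $\sum_{k\geq 0}\gamma^k$. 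Setting $\theta = 1$ and $\theta = 0$ in the identity, the ratio of the two values of this quantity equals $1 - \gamma D$, so $1 - \gamma D \in [1-\gamma, 1/(1-\gamma)]$, giving exactly $D \in [-1/(1-\gamma), 1]$. Plugging this back into the two inequalities above completes the proof. The rest is routine algebra; the one conceptual step that I expect to need care is the translation of the visitation-count interpretation into the two-sided bound on $1 - \gamma D$, which is what makes the slope bounds tight in both directions.
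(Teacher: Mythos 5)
Your proposal is correct and follows the paper's overall skeleton --- Sherman--Morrison, the closed form $v^{(\theta),\qq}-v^{(0),\qq}=\theta N/(1-\gamma\theta D)$, reduction to the scalar bound $D\in[-1/(1-\gamma),1]$, then elementary algebra --- but you prove the crucial bound on $D$ by a genuinely different route. Writing $\QQ_0=(\II-\gamma\PP^{(0)})^{-1}$, the paper conditions on whether the chain started from $\pp_{s,a'}$ ever reaches $s$, obtaining $\pp_{s,a'}^\top\QQ_0\ee_s\le p\cdot\ee_s^\top\QQ_0\ee_s$ for the hitting probability $p\in[0,1]$, combines this with the identity $\ee_s^\top\QQ_0\ee_s=1+\gamma\pp_{s,a}^\top\QQ_0\ee_s$, and subtracts to land in $[-1/(1-\gamma),p]\subseteq[-1/(1-\gamma),1]$. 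You instead apply Sherman--Morrison once more to the scalar $g(\theta):=\ee_s^\top(\II-\gamma\PP^{(\theta)})^{-1}\ee_s=g(0)/(1-\gamma\theta D)$, observe that $g(\theta)$ is a discounted self-visitation count lying in $[1,1/(1-\gamma)]$ for every $\theta$, and read off $1-\gamma D=g(0)/g(1)\in[1-\gamma,1/(1-\gamma)]$, which is exactly $D\in[-1/(1-\gamma),1]$; I verified the identity and the resulting bound, and they are correct. Your argument is arguably cleaner: it avoids the conditioning step and the hitting probability entirely, and it makes transparent why both endpoints of the slope bound arise (they are the extremal ratios of two self-return counts), at the cost of losing the marginally sharper upper bound $D\le p$ that the paper obtains but never uses. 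Two smaller divergences: you establish $1-\gamma\theta D>0$ from invertibility of $\II-\gamma\PP^{(\theta)}$ plus continuity of the linear function $\theta\mapsto 1-\gamma\theta D$ rather than deducing it from the bound on $D$ as the paper does (both are valid), and your final casework on $\theta+\gamma\gtrless 1$ can be bypassed by noting that $\theta\mapsto(1-\gamma D)/(1-\gamma\theta D)$ is monotone with endpoint values $1-\gamma D$ and $1$, both of which lie in $[1-\gamma,1/(1-\gamma)]$.
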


\begin{proof}
	We recall the expression of the value function of MDP to be
	\begin{equation}\label{def:mdp-v}
		\begin{aligned}
			v^{(\theta),\qq} &= \qq^\top \left(\II-\gamma\PP^{(\theta)}\right)^{-1}\r^{(\theta)},\\
			~\text{where we define}~~\PP^{(\theta)}(s,\cdot) & \defeq \theta \pp_{s,a'} + (1 - \theta) \pp_{s,a} ~\text{for}~s,~\text{and}~\PP^{(\theta)}(s',\cdot) \defeq \pp_{s'}~\text{for}~s'\neq s,\\
			\r^{(\theta)}(s) & \defeq \theta r_{s,a'} + (1 - \theta) r_{s,a}~\text{for}~s,~\text{and}~\r^{(\theta)}(s') \defeq r_{s'}~\text{for}~s'\neq s.
		\end{aligned}
	\end{equation}
	By the Sherman-Morrison-Woodbury formula specialized to a rank-$1$ update, we have
	\begin{align*}
		\left(\II-\gamma\PP^{(\theta)}\right)^{-1} & = \left(\II-\gamma\PP^{(0)}-\gamma \theta \ee_s(\pp_{s,a'}-\pp_{s,a})^\top\right)^{-1}\\
		& = \left(\II-\gamma\PP^{(0)}\right)^{-1}+\frac{\gamma\theta\left(\II-\gamma\PP^{(0)}\right)^{-1}\ee_s(\pp_{s,a'}-\pp_{s,a})^\top\left(\II-\gamma\PP^{(0)}\right)^{-1}}{1-\gamma\theta(\pp_{s,a'}-\pp_{s,a})^\top\left(\II-\gamma\PP^{(0)}\right)^{-1}\ee_s}
	\end{align*}
	Plugging this back into the value function expression (Equation~\eqref{def:mdp-v}), we then get
	\begin{align*}
		& v^{(\theta),\qq} = \qq^\top\left(\II-\gamma\PP^{(0)}\right)^{-1}\r^{(\theta)}+\frac{\gamma\theta\cdot\qq^\top \left(\II-\gamma\PP^{(0)}\right)^{-1}\ee_s(\pp_{s,a'}-\pp_{s,a})^\top\left(\II-\gamma\PP^{(0)}\right)^{-1}\r^{(\theta)} }{1-\gamma\theta \left(\pp_{s,a'}-\pp_{s,a}\right)^\top \left(\II-\gamma\PP^{(0)}\right)^{-1}\ee_s}\\
		& \hspace{2em} = v^{(0),\qq}+\theta\left(r_{s,a'}-r_{s,a}\right)\cdot\qq^\top \left(\II-\gamma\PP^{(0)}\right)^{-1}\ee_s\\
		& \hspace{2em}+\frac{\gamma\theta\cdot\qq^\top \left(\II-\gamma\PP^{(0)}\right)^{-1}\ee_s(\pp_{s,a'}-\pp_{s,a})^\top\left(\II-\gamma\PP^{(0)}\right)^{-1}\r^{(\theta)} }{1-\gamma\theta \left(\pp_{s,a'}-\pp_{s,a}\right)^\top \left(\II-\gamma\PP^{(0)}\right)^{-1}\ee_s}.
	\end{align*}
	Now, we define as shorthand $E^{(0),\qq}(s) = \qq^\top \left(\II-\gamma\PP^{(0)}\right)^{-1}\ee_s$.
	Note that $E^{(0),\qq}(s)$ is simply the \emph{expected visitation frequency} of state $s$ for initial distribution $\qq$ and probability transition $\PP^{(0)}$.
	Similarly, we define $\Delta E^{(0)}(s)=(\pp_{s,a'}-\pp_{s,a})^\top\left(\II-\gamma\PP^{(0)}\right)^{-1}\ee_s$ and note that this is the \emph{difference} of expected visitation frequency of state $s$ between initial distribution $\pp_{s,a'}$ and $\pp_{s,a}$.
	Finally, we define $\Delta V^{(0)} = (\pp_{s,a'}-\pp_{s,a})^\top\left(\II-\gamma\PP^{(0)}\right)^{-1}\r^{(0)}$ to be the difference of the value of policy $\ppi^0$ between initial state distributions $\pp_{s,a'}$ and $\pp_{s,a}$. 
	Using this notation, the equality above simplifies to
	\begin{equation}\label{eq:quasi-mono-main}
		\begin{aligned}
			v^{(\theta),\qq} & = v^{(0),\qq} + \theta\left(r_{s,a'}-r_{s,a}\right)\cdot E^{(0),\qq}(s)+\frac{\gamma\theta\cdot E^{(0),\qq}(s)\cdot\left( \Delta V^{(0)}+\theta\left(r_{s,a'}-r_{s,a}\right)\Delta E^{(0)}(s)\right)}{1-\gamma\theta \cdot \Delta E^{(0)}(s)}\\
			& = v^{(0),\qq}+ \theta\cdot \frac{ E^{(0),\qq}(s)\cdot \left[\left(r_{s,a'}-r_{s,a}\right)+\gamma \Delta V^{(0)}\right]}{1-\gamma\theta \cdot \Delta E^{(0)}(s)}.
		\end{aligned}
	\end{equation}
	
	Next, we bound the quantity $1-\gamma\theta \Delta E^{(0)}(s)$, which will imply that $1-\gamma\theta \cdot \Delta E^{(0)}(s)>0$ for any $\gamma\in[0,1)$ and any $\theta\in[0,1]$.
	Since the numerator of the right hand side of Equation~\eqref{eq:quasi-mono-main} only depends on $\theta$ linearly, monotonicity follows immediately as a result.

	We proceed to bound $1 - \gamma \theta \Delta E^{(0)}(s)$.
	To see this we note that by definition of $\Delta E^{(0)}(s)$ and its meaning in terms of expected visitation, we have 
	\begin{align*}
	& \pp_{s,a'}^\top\left(\II-\gamma\PP^{(0)}\right)^{-1}\ee_s  = \E^{\ppi^{(0)}}_{\pp_{s,a'}}\left[\sum_{t\ge 0}\gamma^t\mathbf{1}_{s^t = s}\right]\\
	& \hspace{2em} = \E^{\ppi^{(0)}}_{\pp_{s,a'}}\left[\sum_{t\ge 0}\gamma^t\mathbf{1}_{s^t = s}|\text{will visit state }s~\right]+ \E^{\ppi^{(0)}}_{\pp_{s,a'}}\left[\sum_{t\ge 0}\gamma^t\mathbf{1}_{s^t = s}|\text{will never visit state }s~\right]\\
	& \hspace{2em} \le  \mathbb{P}\left(\text{starting from}~\pp_{s,a'}~\text{will visit}~s\right)\cdot \E_{\ee_{s}}^{\ppi^{(0)}}\left[\sum_{t\ge 0}\gamma^t\mathbf{1}_{s^t = s}\right] + 0\\
	& \hspace{2em} \le p\cdot \ee_s^\top \left(\II-\gamma\PP^{(0)}\right)^{-1}\ee_s.
	\end{align*}
Here for the last inequality we let $p = \mathbb{P}\left(\text{starting from}~\pp_{s,a'}~\text{will visit}~s\right)\in[0,1]$. This further impliesx
	\begin{align*}
		0\le \pp_{s,a'}^\top \left(\II-\gamma\PP^{(0)}\right)^{-1}\ee_s & \le  p	\cdot \ee_s^\top \left(\II-\gamma\PP^{(0)}\right)^{-1}\ee_s \\
		& = p\cdot \ee_s^\top \left(\II+\gamma \PP^{(0)}\left(\II-\gamma \PP^{(0)}\right)^{-1}\right)\ee_s = p\left(1+\gamma \pp_{s,a}^\top \left(\II-\gamma\PP^{(0)}\right)^{-1}\ee_s\right),
	\end{align*}
	where for the last equality we utilize the fact that $\pp_{s,a} = \ee_s^\top \PP^{(0)}$. 
	Subtracting $\pp_{s,a}^\top \left(\II-\gamma\PP^{(0)}\right)^{-1}\ee_s$ on both sides, we get
	\begin{align*}
		\Delta E^{(0)}(s) & = \pp_{s,a'}^\top \left(\II-\gamma\PP^{(0)}\right)^{-1}\ee_s -\pp_{s,a}^\top \left(\II-\gamma\PP^{(0)}\right)^{-1}\ee_s \\
		& \le  p\left(1+\gamma \pp_{s,a}^\top \left(\II-\gamma\PP^{(0)}\right)^{-1}\ee_s\right) - \pp_{s,a}^\top \left(\II-\gamma\PP^{(0)}\right)^{-1}\ee_s\in \left[-\frac{1}{1-\gamma},p\right]\subseteq\left[-\frac{1}{1-\gamma},1\right],
	\end{align*}
	Above, the last step uses the Cauchy-Schwarz inequality and the fact that $\norm{\II-\gamma \PP^{(0)}}_\infty\le \frac{1}{1-\gamma}$. 
	Noting that $\gamma, \theta \leq 1$ immediately implies that $1-\gamma\theta\cdot \Delta E^{(0)}(s)>0$.
	This completes the proof of monotonicity.

	We now show the bounded-slope property.
	Substituting $\theta = 1$ into Equation~\eqref{eq:quasi-mono-main} yields
	\begin{align}\label{eq:quasi-mono-main1}
	v^{(1),\qq}-v^{(0),\qq} = \frac{ E^{(0),\qq}(s)\cdot \left[\left(r_{s,a'}-r_{s,a}\right)+\gamma \Delta V^{(0)}\right]}{1-\gamma \cdot \Delta E^{(0)}(s)}.
	\end{align}
	Then, as long as $v^{(1),\qq}\neq v^{(0),\qq}$, we can divide Equation~\eqref{eq:quasi-mono-main} by Equation~\eqref{eq:quasi-mono-main1} to get
	\[
	\frac{v^{(\theta),\qq}-v^{(0),\qq}}{v^{(1),\qq}-v^{(0),\qq}} = \frac{\theta( 1-\gamma \cdot \Delta E^{(0)}(s))}{1-\gamma\theta \cdot \Delta E^{(0)}(s)}\in\left[(1-\gamma)\theta, \left(\frac{1}{1-\gamma}\right) \theta\right].
	\]
	This is exactly Equation~\eqref{eq:bounded-velo} and completes the proof of the lemma.
\end{proof}
We now leverage Lemma~\ref{lem:quasi-mono} to show quasi-monotonicity of each player's policy for any $\onestatessg$ instance.
This immediately follows as a corollary from the following observation for $\onestatessg$: Consider a player $i$ and her controlling state $s_i$.
Fix the other players' policies $\ppi_{-i}$, and consider two candidate policies for player $i$ denoted by $\ppi$ and $\ppi'$.
Then, the induced MDP for player $i$ is a two-action MDP in the sense of Definition~\ref{def:twoactionmdp}, and pseudo-linearity follows as an immediate consequence.
The complete statement of pseudo-linearity (quasi-monotonicity with bounded slope) for $\onestatessg$ is provided below.

\begin{corollary}[Pseudo-linear utility under $\onestatessg$]\label{coro:quasi-mono-ssg}
	Consider an $\onestatessg$ instance $\mathcal{G} = (n, \calS ,\mathcal{A},\PP,\RR,\gamma)$, and any initial distribution $\qq$, recall the definition of the utility function $u_i(\ppi) = v_i^{\ppi,\qq}$ given any strategy $\ppi$. 
	Fix a player $i$ and other players' policies $\ppi_{-i}$, the the player $i$'s utility function $u_i(\ppi_i, \ppi_{-i})$ is quasi-monotonic in $\ppi_i$.
	In other words, for any two candidate policies $\ppi_i$, $\ppi_i'\in\Delta^{\calA_i}$ and any $\theta \in [0,1]$, we have
	\begin{equation}\label{eq:def-quasi-mono}
		\min\{u_i(\ppi_i,\ppi_{-i}),u_i(\ppi_i',\ppi_{-i})\}\le u_i(\theta\ppi_i+(1-\theta)\ppi_i',\ppi_{-i})\le \max\{u_i(\ppi_i,\ppi_{-i}),u_i(\ppi_i',\ppi_{-i})\}.
	\end{equation}
	Further when $u_i(\ppi_i,\ppi_{-i})\neq u_i(\ppi_i',\ppi_{-i}) $, we have
	\begin{equation}\label{eq:def-quasi-mono-bounded-velo}
		\frac{u_i(\theta\ppi'_i+(1-\theta)\ppi_i,\ppi_{-i})-u_i(\ppi)}{u_i(\ppi'_i,\ppi_{-i})-u_i(\ppi)} \in\left[(1-\gamma) \theta, \left(\frac{1}{1-\gamma}\right) \theta\right],~\text{for any}~\ppi_{i}, \ppi'_{i}\in\Delta^\calA_{i} ,~\theta\in[0,1].
	\end{equation}
\end{corollary}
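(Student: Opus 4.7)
The plan is to derive this corollary as a direct consequence of Lemma~\ref{lem:quasi-mono}. First, I would fix the player $i \in [n]$ and the strategies $\ppi_{-i}$ of the other players, which collapses the $\onestatessg$ into a single-player MDP for player $i$ in which only state $s_i$ has a nontrivial action space (since by the definition of $\onestatessg$, $\calS_i = \{s_i\}$). Even though $\onestatessg$ permits $|\calI_{s_i}| > 1$ so that other players can share control of $s_i$, marginalizing their action distributions specified in $\ppi_{-i}$ produces effective reward and transition maps $\tilde r_{s_i, a_i}$ and $\tilde \pp_{s_i, a_i}$ as functions of player $i$'s own action $a_i \in \calA_i$ alone. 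For all other states $s' \neq s_i$, which player $i$ does not control, the reward and transition are independent of $\ppi_i$.

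Next, given any two policies $\ppi_i, \ppi_i' \in \Delta^{\calA_i}$, I would analyze the one-parameter family $\ppi_i^\theta \defeq \theta \ppi_i' + (1-\theta)\ppi_i$ for $\theta \in [0,1]$ and reinterpret it as a two-action MDP in the sense of Definition~\ref{def:twoactionmdp}. Specifically, treat $\ppi_i$ and $\ppi_i'$ themselves as the two ``meta-actions'' at state $s_i$, with meta-rewards $\sum_{a \in \calA_i} \pi_i(a)\, \tilde r_{s_i,a}$ and $\sum_{a \in \calA_i} \pi_i'(a)\, \tilde r_{s_i,a}$ and meta-transitions $\sum_{a \in \calA_i} \pi_i(a)\, \tilde \pp_{s_i,a}$ and $\sum_{a \in \calA_i} \pi_i'(a)\, \tilde \pp_{s_i,a}$, respectively. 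By the definitions in \eqref{def:prelim-pi-r}, the induced $\r^{\ppi_i^\theta}(s_i)$ and $\PP^{\ppi_i^\theta}(s_i, \cdot)$ are exactly the convex combinations $\theta$-versus-$(1-\theta)$ of these two pairs, matching the form in \eqref{def:mdp-v}. So this construction faithfully embeds the policy family $\{\ppi_i^\theta\}_{\theta \in [0,1]}$ into a two-action MDP whose value at $\theta$ equals $u_i(\ppi_i^\theta, \ppi_{-i})$.

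Then I would invoke Lemma~\ref{lem:quasi-mono}: monotonicity of $v^{(\theta),\qq}$ in $\theta$ yields \eqref{eq:def-quasi-mono} after accounting for whether $u_i(\ppi_i, \ppi_{-i}) \leq u_i(\ppi_i', \ppi_{-i})$ or vice versa (the roles of the two endpoints simply swap), and when the endpoints differ the bounded-slope bound in \eqref{eq:bounded-velo} translates literally into \eqref{eq:def-quasi-mono-bounded-velo}, since the ratio $(u_i(\ppi_i^\theta,\ppi_{-i}) - u_i(\ppi_i,\ppi_{-i}))/(u_i(\ppi_i',\ppi_{-i}) - u_i(\ppi_i,\ppi_{-i}))$ is precisely $(v^{(\theta),\qq} - v^{(0),\qq})/(v^{(1),\qq} - v^{(0),\qq})$ under the embedding.

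I do not expect a substantive obstacle here; the content is all in Lemma~\ref{lem:quasi-mono}. The main thing to be careful about is the bookkeeping around the reduction: namely, verifying that $\r^{\ppi_i^\theta}$ and $\PP^{\ppi_i^\theta}$ are affine in $\theta$ (which holds because $\pi_{i,s_i}$ enters multilinearly in \eqref{def:prelim-pi-r} and the $\ppi_{-i}$ factors are fixed) and that no other state's row of $\PP$ or entry of $\r$ depends on $\theta$ (which uses that $\calS_i = \{s_i\}$ in $\onestatessg$). Once these are noted, the conclusion is immediate.
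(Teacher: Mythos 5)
Your proposal is correct and follows essentially the same route as the paper: the paper also constructs a two-action MDP in the sense of Definition~\ref{def:twoactionmdp} whose two actions at $s_i$ are the "meta-actions" $\ppi_i$ and $\ppi_i'$ (with rewards $r^{(\ppi_i,\ppi_{-i})}_i(s_i)$, $r^{(\ppi_i',\ppi_{-i})}_i(s_i)$ and transitions $\PP^{(\ppi_i,\ppi_{-i})}(s_i,\cdot)$, $\PP^{(\ppi_i',\ppi_{-i})}(s_i,\cdot)$), and then invokes Lemma~\ref{lem:quasi-mono}. Your extra care in checking that $\r^{\ppi_i^\theta}$ and $\PP^{\ppi_i^\theta}$ are affine in $\theta$ via the multilinearity in \eqref{def:prelim-pi-r}, and that only the row of $s_i$ depends on $\ppi_i$, is exactly the justification behind the paper's remark that the identification of values is "immediate."
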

\begin{proof}
	Consider the original instance in $\onestatessg$.
	Fix a player $i$, the unique state that it controls (denoted by $s_i$), and other players' policies $\ppi_{-i}$.
	Corresponding to two candidate policies of player $i$, $\ppi_i$ and $\ppi_i'$, we construct the following two-action MDP instance (Definition~\ref{def:twoactionmdp}) for player $i$ with the following specifications:
	\begin{itemize}
		\item The state space $\calS$ is the same as the state space $\calS$ of original instance $\calG$. 
		\item 	Player $i$ has two actions: $a$ (which corresponds to $\ppi_i$), and $a'$ (which corresponds to $\ppi'_i$), at state $s=s_i$. 
		For all other states $s'\neq s_i$, player $i$ has only one (degenerate) action.
		\item For actions $a,a'$ taken at state $s$, the respective transition probabilities are given by 
		\[
		\pp_{s,a} = \PP^{(\ppi_i,{\ppi_{-i}})}(s,\cdot), ~~\pp_{s,a'} = \PP^{(\ppi'_i,{\ppi_{-i}})}(s,\cdot).
		\]
		For all other states $s'\neq s$, the transition probability is given by  $\pp_{s'} = \PP^{\ppi}(s',\cdot)$. Similarly, the reward at state $s$ is given by $r_{s,a} = r^{(\ppi_i,{\ppi_{-i}})}_{i}(s)$ when taking action $a$ and $r_{s,a'} = r^{(\ppi'_i,{\ppi_{-i}})}_{i}(s)$ when taking action $a'$. 
		For $s'\neq s$ the reward is given by $r_{s'} = r^{\ppi}_{i}(s')$.
	\end{itemize}
	
	It is then immediate to see that $u_i(\ppi) = v_i^{\ppi,\qq}$ for any strategy $\ppi$. Consequently, we conclude from~\Cref{lem:quasi-mono} that $u_i(\theta\ppi'_i+(1-\theta)\ppi_i,\ppi_{-i})$ is monotonic in $\theta$, which then gives
	\[
	\min\{u_i(\ppi_i,\ppi_{-i}),u_i(\ppi_i',\ppi_{-i})\}\le u_i(\theta\ppi_i+(1-\theta)\ppi_i',\ppi_{-i})\le \max\{u_i(\ppi_i,\ppi_{-i}),u_i(\ppi_i',\ppi_{-i})\}.
	\]
	Similarly, the second claim (Equation~\eqref{eq:def-quasi-mono-bounded-velo}) directly follows from Equation~\eqref{eq:bounded-velo}.
	This completes the proof of the corollary.
\end{proof}

We note that the pseudo-linearity property crucially relies on the fact that one player only controls a single state, and is not true for general $\ssg$s. Appendix~\ref{app:qmcounterexample} provides an explicit counterexample in the case where one player can control multiple states.
For completeness, we specialize the statement to $\onestate$, which is just a special case of~\Cref{coro:quasi-mono-ssg}.

\begin{corollary}[Psuedo-linear utility under~$\onestate$]
	\label{lem:quasi-mono-tbsg}
	Given any $\onestate$ $\mathcal{G} = (n,\calS = \cup_{i\in[n]}\mathcal{S}_i,\mathcal{A},\pp,\r,\gamma)$ and some initial distribution 
	$\qq\in \Delta^{\calS}$. Given anystrategy $\ppi$, define the utility function $u_i(\ppi) = v_i^{\ppi,\qq}$ under initial distribution $\qq$. Then we have  $u_i(\ppi_i,\ppi_{-i})$ is \emph{quasi-monotonic} in $\ppi_i$, as defined in~\eqref{eq:def-quasi-mono}. Further~\Cref{eq:def-quasi-mono-bounded-velo} also holds true.
\end{corollary}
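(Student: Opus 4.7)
The plan is to derive this statement as an immediate specialization of Corollary~\ref{coro:quasi-mono-ssg} by observing that $\onestate$ is a strict subclass of $\onestatessg$. Recall that a $\tbsg$ is by definition an $\ssg$ in which every state is controlled by at most one player (i.e., $|\calI_s| \le 1$ for all $s \in \calS$), and consequently a $\onestate$ instance is a $\onestatessg$ instance that additionally satisfies the turn-based restriction. Since Corollary~\ref{coro:quasi-mono-ssg} is stated for \emph{arbitrary} $\onestatessg$ instances and imposes no assumption on $|\calI_s|$, the argument applies verbatim.

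Concretely, I would fix a player $i \in [n]$, her unique controlling state $s_i$, the fixed policies $\ppi_{-i}$ of the other players, and two candidate policies $\ppi_i, \ppi_i' \in \Delta^{\calA_i}$. All of these objects trivially define a $\onestatessg$ instance (with the additional, unused property $|\calI_s| = 1$ for every $s$). Applying Corollary~\ref{coro:quasi-mono-ssg} to this instance with the same initial distribution $\qq$ yields both the quasi-monotonicity inequality~\eqref{eq:def-quasi-mono} and the bounded-slope inequality~\eqref{eq:def-quasi-mono-bounded-velo} for $u_i(\ppi_i, \ppi_{-i}) = v_i^{\ppi,\qq}$.

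The only minor bookkeeping point to check is that the definitions of $\PP^\ppi$ and $\r^\ppi$ used in Corollary~\ref{coro:quasi-mono-ssg} agree with those used in the $\tbsg$ setting; this is immediate by comparing~\eqref{def:prelim-pi-r} with~\eqref{def:prelim-pi-r-tbsg}, since the $\tbsg$ expressions are precisely the $\ssg$ expressions specialized to $|\calI_s| = 1$. There is no genuine obstacle in this proof — the Sherman--Morrison rank-one argument underlying Lemma~\ref{lem:quasi-mono} depends only on the fact that fixing $\ppi_{-i}$ collapses the game to a two-action MDP for player $i$ along the segment $\theta \ppi_i' + (1-\theta)\ppi_i$, and this collapse happens regardless of whether the original instance is turn-based or simultaneous.
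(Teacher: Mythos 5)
Your proposal is correct and matches the paper's treatment exactly: the paper also obtains this corollary as an immediate specialization of Corollary~\ref{coro:quasi-mono-ssg}, since every $\onestate$ instance is a $\onestatessg$ instance and the definitions in~\eqref{def:prelim-pi-r-tbsg} are just~\eqref{def:prelim-pi-r} restricted to $|\calI_s|=1$. No further argument is needed.
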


\subsection{Reductions to $\onestatessg$}\label{ssec:redx-m2o}

In this section, we show that one can reduce finding an $\epsilon$-approximate (exact) mixed NE of general $\tbsg$ ($\ssg$) to finding an $\Theta(\epsilon)$-approximate (exact) mixed NE of $\onestate$ ($\onestatessg$). We state and prove the reduction for the most general case of $\ssg$, and the reduction for $\tbsg$ immediately applies as a special case.

We provide a useful way to create a $\onestatessg$ from any $\ssg$ instance below.
\begin{definition}\label{def:mspp_ospp_ssg}
Consider an instance of $\ssg$ given by $\mathcal{G} = (n, \calS ,\mathcal{A},\pp,\r,\gamma)$.
We create a corresponding instance in $\onestatessg$ given by  $\mathcal{G}' = (n|\calS|, \calS, \mathcal{A}', \pp',\r',\gamma)$ with the following set of properties:
\begin{itemize}
\item The number of players in the $\onestatessg$ instance is equal to $n |\calS|$, and we index the players by $(i,s) \in [n] \times \calS$.
That is, a copy of each player is created at different states.
\item For every player $(i,s)$, we have $\calA_{(i,s),s}' = \calA_{i,s}$, and $\calA_{(i,s),s'}' = \emptyset$ for all $s' \neq s$. 
\item For every player $(i,s)$, and all $s' \in \calS$, we have $\pp'_{s',\aa} = \pp_{s',\aa}$  and $r'_{(i,s),s',\aa} = r_{i,s',\aa}$, for any $\aa\in\calA_{s'}$.
\end{itemize}
Further, we write a stationary strategy for the $\onestatessg$ instance $\mathcal{G}'$ as $\ppi' = (\ppi'_{(i,s)})_{(i,s) \in [n] \times \calS}$, where $\ppi'_{(i,s)} = \ppi_{i,s}$, and define the value functions as specified in~\Cref{sec:prelim}. 
\end{definition}

In essence, Definition~\ref{def:mspp_ospp_ssg} simply adds player copies to each state with the same reward functions as the corresponding player in the original $\ssg$ instance.
(Note that this is the reason for the $\onestatessg$ moniker: by definition, player $(i,s)$ only takes an action at state $s$.)
Also note that the corresponding strategy $\ppi'$ is identical in representational size to the original strategy $\ppi$; therefore, we will overload notation and write $\ppi' = \ppi$ for the rest of this section.
As a consequence of this property, we can compare equilibrium conditions directly, which is precisely what we do in the following lemma.

\begin{theorem}[Approximate-NE equivalences for $\ssg$s and $\onestatessg$s]\label{lem:redx-MtO}
		   Fix $\epsilon\ge 0, \gamma\in[0,1)$ and a strategy $\ppi$ in the original $\ssg$.
	   Then, the strategy $\ppi$ is an $\epsilon$-approximate mixed NE of the original $\ssg$ if its induced strategy $\ppi'$ is a $((1-\gamma)\epsilon/|\calS|)$-approximate mixed NE in the corresponding $\onestatessg$ of the $\ssg$ (Definition~\ref{def:mspp_ospp_ssg}). 
\end{theorem}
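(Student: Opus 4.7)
The plan is to fix any player $i \in [n]$ and any alternative policy $\widetilde{\ppi}_i$, and upper-bound the deviation gain
\[
u_i(\widetilde{\ppi}_i, \ppi_{-i}) - u_i(\ppi) = \tfrac{1}{|\calS|}\sum_{s \in \calS}\left[V_i^{(\widetilde{\ppi}_i, \ppi_{-i})}(s) - V_i^\ppi(s)\right]
\]
by $\epsilon$. Fixing $\ppi_{-i}$ reduces player $i$'s problem to a single-agent DMDP, and by construction of~\Cref{def:mspp_ospp_ssg} the $\epsilon'$-approximate NE condition for the copy $(i,s)$ in $\calG'$, with $\epsilon' \defeq (1-\gamma)\epsilon/|\calS|$, is exactly a bound of $\epsilon'$ on the gain in $V_i^\ppi(s)$ from any single-state policy change $\widetilde{\ppi}_{i,s}$ at state $s$. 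The remainder is then a purely single-agent MDP fact: uniform control over ``single-state-change'' gains at every state implies control over ``full-policy-change'' gains.

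The first step is to show that these single-state gap bounds imply the one-step Bellman advantage $\delta_s := \max_{a \in \calA_{i,s}} [r_{i,s,a,\ppi_{-i,s}} + \gamma\, \pp_{s,a,\ppi_{-i,s}}^\top \VV_i^\ppi] - V_i^\ppi(s)$ of the induced DMDP satisfies $\delta_s \leq \epsilon'$ at every state $s$. I would reuse the Sherman-Morrison rank-one perturbation identity already worked out in the proof of~\Cref{lem:quasi-mono}, expressing the value gain from the best greedy single-state deviation at $s$ as $E_s \delta_s / (1 - \gamma \Delta_s)$, where $E_s \defeq \ee_s^\top (\II - \gamma \PP^\ppi)^{-1} \ee_s$ is the discounted visitation to $s$ starting from $s$ and $\Delta_s$ is the perturbation in visitation when swapping the action at $s$. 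The key new algebraic observation beyond~\Cref{lem:quasi-mono} is that $E_s = 1 + \gamma\alpha$ and $1-\gamma\Delta_s = E_s - \gamma\beta$ for two non-negative discounted visitation quantities $\alpha, \beta \geq 0$, so the ratio $E_s/(1-\gamma\Delta_s) \geq 1$; consequently the single-state gain is at least $\delta_s$, and the $\onestatessg$ approximate-NE condition forces $\delta_s \leq \epsilon'$.

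The second step invokes the classical Bellman optimality gap for DMDPs: a uniform bound $\delta_s \leq \epsilon'$ implies $V_i^*(s) - V_i^\ppi(s) \leq \epsilon'/(1-\gamma)$ pointwise, where $V_i^*$ is the optimal value function of player $i$'s induced DMDP. Since $V_i^{(\widetilde{\ppi}_i, \ppi_{-i})}(s) \leq V_i^*(s)$ for every alternative policy $\widetilde{\ppi}_i$ and every state $s$, averaging this per-state bound over $\qq = \tfrac{1}{|\calS|}\ee_\calS$ yields $u_i(\widetilde{\ppi}_i, \ppi_{-i}) - u_i(\ppi) \leq \epsilon'/(1-\gamma) = \epsilon/|\calS| \leq \epsilon$. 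Since $i$ and $\widetilde{\ppi}_i$ were arbitrary, $\ppi$ is an $\epsilon$-approximate NE in the original $\ssg$.

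The main technical obstacle will be the sharpness of Step~1. A naive appeal to only the lower slope bound $V_i^{\mathrm{new}}(s) - V_i^\ppi(s) \geq (1-\gamma)\delta_s$ that falls directly out of~\Cref{lem:quasi-mono} would give only the weaker $\delta_s \leq \epsilon'/(1-\gamma)$, which in turn would force $\epsilon'$ to scale with an additional factor of $(1-\gamma)$ in order to conclude an $\epsilon$-NE. Recovering the tighter bound $\delta_s \leq \epsilon'$ required by the stated scaling relies on the specific algebraic identity $1-\gamma\Delta_s = E_s - \gamma\beta$, which can be read off by expanding $(\II-\gamma\PP^\ppi)^{-1} = \II + \gamma\PP^\ppi(\II-\gamma\PP^\ppi)^{-1}$ and is the one genuinely new step beyond directly reusing results already established in~\Cref{sec:axiom}.
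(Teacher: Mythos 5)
Your proposal is correct and is essentially the paper's argument written in direct rather than contrapositive form: the paper's Lemma~\ref{lem:mdp-policy-improve} establishes the same two facts you use (a single-state greedy switch at $s$ gains at least the Bellman advantage $\delta_s$, proved there via monotonicity of the on-policy Bellman operator where you use the Sherman--Morrison identity and the ratio $E_s/(1-\gamma\Delta_s)\ge 1$; and uniformly small $\delta_s$ implies a small optimality gap via $\gamma$-contraction), and then passes between the two games exactly as you do. The only bookkeeping caveat is that the paper's proof measures the copy player $(i,s)$'s utility under the uniform initial distribution $\qq=\frac{1}{|\calS|}\ee_{\calS}$ rather than $\ee_s$, so the NE condition there only yields $\delta_s\le|\calS|\epsilon'=(1-\gamma)\epsilon$ (using that the single-state improvement is pointwise nonnegative), which still closes your Step~2 with $V_i^*-V_i^{\ppi}\le\epsilon$ and the stated constant.
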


Our proof is built upon the following Lemma~\ref{lem:mdp-policy-improve} on structural properties of single-player MDP. The proof of this lemma follows from a property of \emph{policy improvement} of the coordinate-wise (also called asynchronous) exact policy iteration algorithm~\cite{Bertsekas19}. We provide the detailed proof for completeness (as it is a slight generalization from exact to approximate policy iteration).

\begin{lemma}[Policy improvement for single-player MDP]\label{lem:mdp-policy-improve}
In a single-player MDP $\calM = (\calS,\calA,\pp,\r,\gamma)$, for given $\epsilon>0$, suppose $\ppi$ is not an $\epsilon$-approximate optimal policy under initial uniform distribution $\qq = \frac{1}{|\calS|}\ee_{\calS}$, then there must exist a state $s\in\calS$ and a single policy $\ppi_s'\neq \ppi_s$ that is varied only at $s$ such that $V^{(\ppi_s',\ppi_{-s})}(s)> V^{\ppi}(s)+(1-\gamma)\epsilon$.
\end{lemma}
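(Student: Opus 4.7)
My plan is to prove the contrapositive: assume that for every state $s$ and every local deviation $\ppi'_s \neq \ppi_s$ we have $V^{(\ppi'_s,\ppi_{-s})}(s) \leq V^{\ppi}(s) + (1-\gamma)\epsilon$, and then deduce that $\ppi$ is $\epsilon$-approximately optimal under the uniform initial distribution. The key tool is the classical policy-improvement identity, combined with monotonicity of the Bellman optimality operator $\calT$.

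The first step is to exactly characterize the single-state value change. For any $\ppi'$ that agrees with $\ppi$ except at state $s$, one has $V^{\ppi'} - V^{\ppi} = (\II - \gamma \PP^{\ppi'})^{-1} \delta_{\ppi'}$, where $\delta_{\ppi'} \defeq \r^{\ppi'} + \gamma \PP^{\ppi'} V^{\ppi} - V^{\ppi}$ is supported only on coordinate $s$ (since $\ppi,\ppi'$ agree elsewhere, Bellman's equation gives $\delta_{\ppi'}(s') = 0$ for $s' \neq s$). Taking the $s$-coordinate and using that $\bigl[(\II - \gamma \PP^{\ppi'})^{-1}\bigr]_{s,s} \geq 1$ (it is the expected discounted visitation to $s$ starting at $s$, which is at least $1$ from the time-$0$ visit), we get
\begin{equation*}
V^{\ppi'}(s) - V^{\ppi}(s) \;=\; \bigl[(\II - \gamma \PP^{\ppi'})^{-1}\bigr]_{s,s}\cdot \delta_{\ppi'}(s) \;\geq\; \delta_{\ppi'}(s)
\end{equation*}
whenever $\delta_{\ppi'}(s) \geq 0$. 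Hence our standing assumption forces $\delta_{\ppi'}(s) \leq (1-\gamma)\epsilon$ for every $s$ and every local deviation $\ppi'_s$. Taking $\ppi'_s = \ee_a$ and maximizing over $a \in \calA_s$ yields $(\calT V^{\ppi})(s) - V^{\ppi}(s) \leq (1-\gamma)\epsilon$ for every $s$, i.e.\ the Bellman residual is bounded pointwise.

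Next I would convert this Bellman-residual bound to a value-suboptimality bound via monotonicity of $\calT$. Since $\calT V^{\ppi} \leq V^{\ppi} + (1-\gamma)\epsilon \cdot \mathbf{1}$, applying $\calT$ to $V^{\ppi} + \epsilon \mathbf{1}$ gives
\begin{equation*}
\calT(V^{\ppi} + \epsilon \mathbf{1}) \;=\; \calT V^{\ppi} + \gamma \epsilon \mathbf{1} \;\leq\; V^{\ppi} + (1-\gamma)\epsilon \mathbf{1} + \gamma \epsilon \mathbf{1} \;=\; V^{\ppi} + \epsilon \mathbf{1},
\end{equation*}
so $V^{\ppi} + \epsilon \mathbf{1}$ is a super-solution of Bellman's equation and therefore pointwise upper-bounds $V^*$, the optimal value function. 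Averaging against $\qq = \tfrac{1}{|\calS|}\ee_{\calS}$ yields $\langle \qq, V^*\rangle \leq \langle \qq, V^{\ppi}\rangle + \epsilon$, showing $\ppi$ is $\epsilon$-approximately optimal under the uniform initial distribution. This contradicts the hypothesis and completes the proof.

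The main obstacle is the first step: one needs the precise observation that the diagonal entry $[(\II - \gamma \PP^{\ppi'})^{-1}]_{s,s} \geq 1$ (so that a small single-state value improvement forces a small local Bellman residual, \emph{not} just the other direction). Once that is noted, the remainder is the standard monotonicity argument for $\calT$ and is routine.
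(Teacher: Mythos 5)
Your proof is correct. It is the contrapositive of the paper's argument and passes through the same intermediate quantity --- the pointwise Bellman residual bound $\calT V^{\ppi} \le V^{\ppi} + (1-\gamma)\epsilon\cdot\mathbf{1}$ --- but you execute both halves with different, equivalent tools. Where the paper goes from a large residual at $s$ to a large single-state improvement by iterating the monotone on-policy Bellman operator $\calT^{(\ppi'_s,\ppi_{-s})}$ starting from $V^{\ppi}$, you go in the reverse direction using the exact resolvent identity $V^{\ppi'}-V^{\ppi} = (\II-\gamma\PP^{\ppi'})^{-1}\delta_{\ppi'}$ together with the observations that $\delta_{\ppi'}$ is supported on $\{s\}$ and that the diagonal visitation entry is at least $1$; this is a sharper, exact version of the same monotonicity fact. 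And where the paper derives $V^* \le V^{\ppi}+\epsilon$ from the residual bound via a $\gamma$-contractivity inequality applied to $V^*$ and $V^{\ppi}$ (which, as written there as a vector inequality, really needs to be read at the state maximizing $V^*-V^{\ppi}$ to be airtight), your super-solution argument --- $\calT(V^{\ppi}+\epsilon\mathbf{1}) \le V^{\ppi}+\epsilon\mathbf{1}$, hence $V^* \le V^{\ppi}+\epsilon\mathbf{1}$ by monotonicity of $\calT$ --- is cleaner and yields the same pointwise bound directly. The only detail worth making explicit is the edge case $\ee_a = \ppi_s$ in your maximization over actions, which your standing assumption (quantified over $\ppi'_s \neq \ppi_s$) does not cover; there the residual is exactly $0$ by the Bellman equation for $V^{\ppi}$, so the bound $(\calT V^{\ppi})(s) - V^{\ppi}(s) \le (1-\gamma)\epsilon$ still holds and the argument goes through.
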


\begin{proof}
Let $V^*$ denote the optimal value of the single-player MDP.
Given a policy $\ppi$ and the value vectors $V^{\ppi}$, we have
\[
V^{\ppi} = \r^{\ppi}+\gamma \PP^{\ppi} V^{\ppi}.
\]
Suppose $\ppi$ is not an $\epsilon$-approximate optimal policy under initial distribution $\qq = \frac{1}{|\calS|}\ee_{\calS}$, we first claim there must exist some state $s\in\calS$ such that 
\begin{align}\label{eq:claimedpolicyopt}
\max_{a\in\calA_s} \left(r_{s,a}+\gamma \pp_{s,a} V^{\ppi}\right)> V^{\ppi}(s)+(1-\gamma)\epsilon.
\end{align}
We prove the claim by contradiction.
First, we define $\calT,\calT^{\ppi}:\R^\calS\rightarrow\R^\calS$ to be the optimal and on-policy Bellman operators~\cite{Bertsekas19,puterman2014markov}, defined as
\begin{gather*}
\calT V (s) = \max_{a\in\calA_s} \left[r_{s,a}+\gamma \pp_{s,a}V\right] \text{ and } \\
\calT^{\ppi} V (s) =  \left[r^{\ppi}(s)+\gamma \PP^{\ppi}(s,\cdot)V\right].
\end{gather*} 
Then, the contradiction of Equation~\eqref{eq:claimedpolicyopt} gives us $\calT V^{\ppi}\le V^{\ppi}+(1-\gamma)\epsilon$.
On the other hand, by the definition of the optimal Bellman operator we have $\calT V^* = V^*$.
Then, the property of $\gamma$-contractivity of the optimal Bellman operator $\calT$ yields
\begin{align*}
\gamma (V^* - V^{\ppi}) \geq \calT V^* - \calT V^{\ppi} \geq V^* - V^{\ppi} - (1 - \gamma) \epsilon,
\end{align*}
and rearranging terms gives $V^* \le V^{\ppi}+\epsilon$ which is the desired contradiction. Consequently, we conclude that the claim is true and we denote $s\in\calS$ to be the state such that $\max_{a\in\calA_s} \left(r_{s,a}+\gamma \pp_{s,a}V^{\ppi}\right)> V^{\ppi}(s)+(1-\gamma)\epsilon$.

Accordingly, we consider the alternative policy $\ppi'_s  = \ee_{a^*}$ where $a^*\in\arg\max_{a\in\calA_s} \left(r_{s,a}+\gamma \pp_{s,a}V^{\ppi}\right)$.
The definition of the on-policy Bellman operator and Equation~\eqref{eq:claimedpolicyopt} give us $\calT^{(\ppi_s',\ppi_{-s})}V^{\ppi}\ge V^{\ppi}+(1-\gamma)\epsilon\cdot\ee_s\ge V^{\ppi}$.
Then, applying $\calT^{(\ppi'_s,\ppi_{-s})}$ recursively we get
\begin{align*}
V^{(\ppi'_s,\ppi_{-s})} &= \lim_{n\rightarrow\infty}\Par{\calT^{(\ppi_s',\ppi_{-s})}}^nV^{\ppi} \\
&> \calT^{(\ppi_s',\ppi_{-s})} V^{\ppi} \\
&\geq V^{\ppi}+(1-\gamma)\epsilon\cdot\ee_s.
\end{align*}
Above, the first equality follows because the value function $V^{(\ppi'_s,\ppi_{-s})}$ is the unique fixed point of its on-policy Bellman operator and the first strict inequality uses the well-known monotonicity property of the on-policy Bellman operator~\cite{Bertsekas19,puterman2014markov}.
Consequently we get $V^{(\ppi'_s,\ppi_{-s})} > V^{\ppi}+(1-\gamma)\epsilon\cdot\ee_s$, and restricting to state $s$ of this inequality completes the proof. 
\end{proof}

\begin{proof}[Proof of~\Cref{lem:redx-MtO}]

Consider the original instance in $\ssg$, $\calG$, and fix a player $i$ and the other players' policies $\ppi_{-i}$.
This induces a single-player MDP $\calM := \calM(\calG,i,\ppi_{-i})$ with the following specifications:
\begin{itemize}
\item The state space $\calS$ is the same as the state space $\calS$ of the original instance $\calG$.
\item The action space is given by $\calA_s = \calA_{i,s}$ for all $s \in \calS$.
\item For action $a_{i,s} \in \calA_{i,s}$ taken at state $s$, the transition probability is given by
\begin{align*}
\pp_{s,a_{i,s}} = \PP^{(\ee_{a_{i,s}},\ppi_{-i})}(s,\cdot)
\end{align*}
and correspondingly, the probability transition kernel under strategy $(\ppi_i,\ppi_{-i})$ is given by 
\begin{align*}
\PP^{\ppi_i}(s,\cdot) = \PP^{\ppi}(s,\cdot).
\end{align*}
\item For action $a_{i,s}\in \calA_{i,s}$ taken at state $s$, the instantaneous reward is given by
\begin{align*}
r_{s,a_{i,s}} = r^{(\ee_{a_{i,s}},\ppi_{-i})}_i(s).
\end{align*}
\end{itemize}
As a consequence of these definitions, it is clear that for any policy $\ppi_i$ and any state $s \in \calS$, we have $V^{\ppi_i}(s) = V_i^{(\ppi_i,\ppi_{-i})}(s)$.
Therefore, the best response policy $\ppi^*_i$ to $\ppi_{-i}$ must be the optimal policy for the induced MDP $\calM$. 
We also observe the following claim due to definitions of their utilities:
\begin{claim}\label{claim:MSPPSSG-MDP-approx-NE}
	A strategy $\ppi$ is an $\epsilon$-approximate mixed NE of $\ssg$ \emph{iff} for all players $i$, $\ppi_i$ is an $\epsilon$-approximate optimal policy for the induced MDP $\calM(\calG, i, \ppi_{-i})$, under initial uniform distribution $\qq = \frac{1}{|\calS|}\ee_{\calS}$.
\end{claim}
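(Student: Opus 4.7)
The claim is essentially definitional once the right objects are matched up, so the plan is to show that both directions of the ``iff'' follow directly from tracing through the definitions of $\epsilon$-approximate NE, the utility $u_i$, and the induced MDP value function, with no additional machinery needed.

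First, I would recall that by the definition in Section~\ref{sec:prelim}, the utility of player $i$ in the $\ssg$ under a strategy $\ppi$ is $u_i(\ppi) = \inprod{\qq}{\VV_i^{\ppi}}$ where $\qq = \frac{1}{|\calS|}\ee_{\calS}$ is the uniform initial distribution. Therefore $\ppi$ being an $\epsilon$-approximate NE is equivalent to the statement that for every player $i$ and every candidate deviation $\ppi_i'$,
\begin{equation*}
\inprod{\qq}{\VV_i^{(\ppi_i,\ppi_{-i})}} \geq \inprod{\qq}{\VV_i^{(\ppi_i',\ppi_{-i})}} - \epsilon.
\end{equation*}

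Next, I would invoke the identity $V^{\ppi_i}(s) = V_i^{(\ppi_i,\ppi_{-i})}(s)$ for all $s \in \calS$, which was established just above the claim by construction of the induced MDP $\calM(\calG, i, \ppi_{-i})$ from matching transition kernels and rewards. Substituting this identity into the previous inequality rewrites the NE condition entirely in terms of the induced MDP: for every $i$ and every policy $\ppi_i'$ for the induced MDP,
\begin{equation*}
\inprod{\qq}{V^{\ppi_i}} \geq \inprod{\qq}{V^{\ppi_i'}} - \epsilon.
\end{equation*}
This is precisely the definition of $\ppi_i$ being an $\epsilon$-approximate optimal policy for $\calM(\calG, i, \ppi_{-i})$ under initial distribution $\qq = \frac{1}{|\calS|}\ee_{\calS}$.

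Since each implication reverses (the substitution $V^{\ppi_i} = V_i^{(\ppi_i,\ppi_{-i})}$ is an equality, and the quantifier over $\ppi_i'$ ranges over the same simplex $\Delta^{\calA_{i,s}}$ in both formulations), the equivalence holds in both directions and the claim follows. The only mild subtlety I would be careful about is that the deviations $\ppi_i'$ in the NE definition range over all stationary policies for player $i$ in the $\ssg$, which coincides exactly with the set of stationary policies in the induced MDP by construction; no measurability or compactness issues arise and no new tools (e.g.\ the pseudo-linearity of Corollary~\ref{coro:quasi-mono-ssg} or Lemma~\ref{lem:mdp-policy-improve}) are needed, so there is no real obstacle in this step.
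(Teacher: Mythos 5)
Your proposal is correct and matches the paper, which states this claim without a separate proof, asserting it follows ``due to definitions of their utilities'' --- exactly the definitional unwinding you carry out via $u_i(\ppi) = \inprod{\qq}{\VV_i^{\ppi}}$ and the identity $V^{\ppi_i} = V_i^{(\ppi_i,\ppi_{-i})}$ for the induced MDP. The only point worth making explicit is that ``$\epsilon$-approximate optimal'' means within $\epsilon$ of the MDP's optimal value, which coincides with being within $\epsilon$ of the best stationary deviation because a discounted MDP always admits a stationary optimal policy; you implicitly rely on this standard fact and it closes the argument.
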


Next, we consider the ``copied'' instance in $\onestatessg$, which we denoted by $\calG'$.
Fix a ``player'' in this game $(i,s)$ (corresponding to a player $i$ and state $s$ in $\calG$), and the other players' policies $\ppi_{-(i,s)}$.
This induces a single-player MDP by $\calM' := \calM(\calG',(i,s),\ppi_{-(i,s)})$ with the following specifications:
\begin{itemize}
\item The state space $\calS$ is the same as the state space $\calS$ of the original instance.
\item The action space is given by $\calA'_s = \calA_{i,s}$, and $\calA'_{s'} = \emptyset$ for all $s' \neq s$.
\item For action $a_{(i,s),s} \in \calA_{i,s}$ taken at state $s$, the transition probability is given by
\begin{align*}
\pp'_{s,a_{(i,s),s}} = \PP^{(\ee_{a_{(i,s),s}},\ppi_{-(i,s)})}(s,\cdot)
\end{align*}
and correspondingly, the probability transition kernel under strategy $(\ppi_{(i,s)},\ppi_{-{(i,s)}})$ is given by 
\begin{align*}
(\PP')^{\ppi_{(i,s)}}(s,\cdot) = \PP^{\ppi}(s,\cdot).
\end{align*}
The other transitions (starting from states $s' \neq s$) are defined trivially depending on the transition kernel of the fixed policy $\ppi_{-(i,s)}$ of other players'.
\item For action $a_{(i,s),s} \in \calA_{i,s}$ taken at state $s$, the instantaneous reward is given by
\begin{align*}
r'_{s,a_{(i,s),s}} = \r^{(\ee_{a_{(i,s),s}},\ppi_{-(i,s)})}(i,s).
\end{align*}
\end{itemize}
As a consequence of these definitions, it is clear that for any policy $\ppi_{(i,s)}$ and any state $s \in \calS$, we have $(V')^{\ppi_{(i,s)}}(s) = V_i^{(\ppi_{(i,s)},\ppi_{-(i,s)})}$.
Therefore, the best response policy $\ppi^*_{(i,s)}$ to $\ppi_{-(i,s)}$ must be the optimal policy for the induced MDP $\calM'$. We also observe the following claim due to definitions of their utilities:
\begin{claim}\label{claim:OSPPSSG-MDP-approx-NE}
	A strategy $\ppi$ is an $\epsilon$-approximate mixed NE of $\onestatessg$ \emph{iff} for all players $(i,s)$, and fixing $\ppi_{-(i,s)}$, $\ppi_{(i,s)}$ is an $\epsilon$-approximate optimal policy for the induced MDP $\calM' = \calM(\calG', (i,s),\ppi_{-(i,s)})$ under initial uniform distribution $\qq = \frac{1}{|\calS|}\ee_{\calS}$.
\end{claim}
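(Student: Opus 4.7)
The plan is to prove the claim as a direct definitional equivalence, leveraging the identity between MDP value functions and in-game value functions that is built into the construction of $\calM'$. Once this identity is established, the $\epsilon$-NE inequality for each player $(i,s)$ translates verbatim into the $\epsilon$-approximate optimality inequality for $\ppi_{(i,s)}$ in the induced MDP $\calM'$, under the common initial distribution $\qq = \frac{1}{|\calS|}\ee_{\calS}$.

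First I would verify the key identity $(V')^{\ppi_{(i,s)}}(s') = V_{(i,s)}^{(\ppi_{(i,s)},\ppi_{-(i,s)})}(s')$ for every state $s' \in \calS$ and every $\ppi_{(i,s)} \in \Delta^{\calA_{i,s}}$. This is already stated in the preamble to the claim and follows from the construction of $\calM'$: at state $s$ the MDP transition kernel and reward under each pure action $a_{(i,s),s}$ are defined to be $\PP^{(\ee_{a_{(i,s),s}},\ppi_{-(i,s)})}(s,\cdot)$ and $\r^{(\ee_{a_{(i,s),s}},\ppi_{-(i,s)})}(i,s)$ respectively, so mixing these with $\ppi_{(i,s)}$ reproduces the $\onestatessg$ transition kernel and reward at $s$ under the joint strategy $(\ppi_{(i,s)},\ppi_{-(i,s)})$; at the other states $\calM'$ trivially inherits the dynamics induced by $\ppi_{-(i,s)}$. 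Since both $(V')^{\ppi_{(i,s)}}$ and $V_{(i,s)}^{(\ppi_{(i,s)},\ppi_{-(i,s)})}$ then solve the same Bellman equation on $\calS$ with the same discount factor $\gamma$, they agree componentwise.

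With this identity in hand I would translate the NE definition into MDP optimality. By the definition of the $\onestatessg$ utility at uniform initial distribution,
\[ u_{(i,s)}(\ppi_{(i,s)},\ppi_{-(i,s)}) = \langle \qq, V_{(i,s)}^{(\ppi_{(i,s)},\ppi_{-(i,s)})} \rangle = \langle \qq, (V')^{\ppi_{(i,s)}} \rangle, \]
so the $\epsilon$-NE condition $u_{(i,s)}(\ppi_{(i,s)},\ppi_{-(i,s)}) \ge u_{(i,s)}(\ppi'_{(i,s)},\ppi_{-(i,s)}) - \epsilon$ ranging over all deviations $\ppi'_{(i,s)} \in \Delta^{\calA_{i,s}}$ is equivalent to $\langle \qq, (V')^{\ppi_{(i,s)}} \rangle \ge \langle \qq, (V')^{\ppi'_{(i,s)}} \rangle - \epsilon$ ranging over all such $\ppi'_{(i,s)}$. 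Taking the supremum on the right-hand side (attained by an optimal policy of $\calM'$) yields exactly the statement that $\ppi_{(i,s)}$ is an $\epsilon$-approximate optimal policy for $\calM'$ under initial distribution $\qq$. Applying this equivalence simultaneously over all $(i,s) \in [n] \times \calS$ completes the iff in both directions.

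I do not anticipate a technical obstacle, since $\calM'$ is defined precisely so that the above equivalence is essentially tautological; no pseudo-linearity, policy-improvement lemma, or quantitative rescaling is required. The only minor point worth flagging is that an MDP optimum can be attained by a pure policy while the NE deviations are mixed, but this is harmless since any mixed policy's value in $\calM'$ is dominated by the MDP optimal value function, so the two notions of $\epsilon$-optimality (against all policies versus against all mixed policies) coincide.
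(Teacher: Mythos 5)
Your proof is correct and matches the paper's treatment: the paper asserts this claim ``due to definitions of their utilities,'' relying on exactly the identity $(V')^{\ppi_{(i,s)}} = V_i^{(\ppi_{(i,s)},\ppi_{-(i,s)})}$ stated just before the claim, and you have simply spelled out the same definitional equivalence in full. Your flagged point about pure versus mixed deviations is also handled correctly, since the optimal value function of $\calM'$ dominates the value of every randomized policy, so the two notions of $\epsilon$-optimality coincide.
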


Now, consider a policy $\ppi$ that is \emph{not} an $\epsilon$-approximate mixed NE of the original game $\ssg$ of $\calG$, then by~\Cref{claim:MSPPSSG-MDP-approx-NE} we have there exists a player $i$, such that $\ppi_i$ is not an $\epsilon$-approximate optimal policy for the induced MDP $\calM(\calG, i,\ppi_{-i})$. 
Applying~\Cref{lem:mdp-policy-improve} to MDP $\calM(\calG, i,\ppi_{-i})$ with uniform initial distribution $\qq = \frac{1}{|\calS|}\ee_{\calS}$ we thus have there must exist a state $s\in\calS$ and a policy $\ppi_{i,s}'\neq\ppi_{i,s}$ so that $V_i^{(\ppi_{i,s}',\ppi_{-(i,s)})}(s)> V_i^{\ppi}(s)+(1-\gamma)\epsilon$. 

Note this corresponds to the utilities of the induced single-player MDP $\calM'=\calM(\calG',(i,s),\ppi_{-(i,s)})$ under initial distribution $\qq = \frac{1}{|\calS|}\ee_\calS$, corresponding to the ``copied'' $\onestatessg$ instance $\calG'$, and implies $\ppi_{i,s}$ is not an $((1-\gamma)\epsilon/|\calS|)$-approximate policy for $\calM'$ by the necessary condition of NE in terms of Bellman equations (see~\Cref{lem:bellman-nece} in~\Cref{ssec:bellman}).  Now by~\Cref{claim:OSPPSSG-MDP-approx-NE} we have consequently $\ppi$ is not an $((1-\gamma)\epsilon/|\calS|)$-approximate mixed NE for the constructed $\onestatessg$ of $\calG'$, concluding the proof.

\end{proof}

\section{Membership of $\ssg$s in $\ppad$}\label{sec:ppad-membership}

In this section, we show the membership of infinite-horizon $\ssg$ in $\ppad$. 
We begin with a brief description of the $\ppad$ complexity class, which is defined with respect to a long-standing computational problem in circuit/graph theory, the $\lineofaend$ problem (first defined in~\cite{papadimitriou1994complexity}, see also~\cite{daskalakis2021complexity} for a  detailed illustration).
 
\begin{definition}\label{def:endofline}
The $\lineofaend$ problem takes as input two binary circuits, each having $n$ inputs and outputs: $\calC_S: \{0,1\}^n \to \{0,1\}^n$ (for successor) and $\calC_P: \{0,1\}^n \to \{0,1\}^n$ (for predecessor).
It returns as output one of the following:
\begin{enumerate}
\item $\zero$ if either (a) $\calC_P(\calC_S(\zero)) = \calC_S(\calC_P(\zero)) = \zero$ or (b) $\calC_P(\calC_S(\zero)) \neq \zero$ \emph{and} $\calC_S(\calC_P(\zero)) \neq \zero$.
\item A binary string $\pp \in \{0,1\}^n$ such that $\pp \neq \zero$ and that either $\calC_P(\calC_S(\pp)) \neq \pp$ or $\calC_S(\calC_P(\pp)) \neq \pp$.
\end{enumerate}
\end{definition}

It is well-known that a solution of the type $1$ or $2$ always exists for any input to $\lineofaend$.
This puts $\lineofaend$ in the class of total search problems~\cite{megiddo1991total}.
There is a more intuitive graph-theoretic interpretation of the $\lineofaend$ problem that helps the reader see this more clearly.
In particular, let the circuits $\calC_S$ and $\calC_P$ implicitly define a directed graph with the set of vertices given by $\{0,1\}^n$ such that the directed edge $(\pp,\qq) \in \{0,1\}^n \times \{0,1\}^n$ belongs to the graph if and only if $\calC_S(\pp) = \qq$ and $\calC_P(\qq) = \pp$.
As a consequence of this definition, (a) all vertices of this graph have both in-degree and out-degree at most $1$, (b) $\pp$ has out-degree equal to $1$ if and only if $\calC_P(\calC_S(\pp)) = \pp$, and (similarly) (c) $\pp$ has in-degree equal to $1$ if and only if $\calC_S(\calC_P(\pp)) = \pp$.
In this graph-theoretic interpretation, $\lineofaend$ is equivalent to finding one of the following outputs:
\begin{enumerate}
\item The $\zero$ vertex if it has equal in-degree and out-degree on this graph (either $1$ or $0$).
\item A vertex $\pp\neq \zero$ that has \emph{either} in-degree or out-degree equal to $0$.
\end{enumerate}
The parity argument on directed graphs, i.e. that the sum of in-degrees on all vertices is equal to the sum of out-degrees, implies that a solution of either type 1 or 2 always exists. To see this, note that \emph{if} the first condition does not hold, then the in-degree of vertex $\zero$ is not equal to its out-degree and at least one other vertex has this property.

The complexity class $\ppad$ (short form for ``polynomial parity arguments on directed graphs''), introduced by~\cite{papadimitriou1994complexity}, is defined with respect to the $\lineofaend$ problem below.
\begin{definition}[$\ppad$ complexity class]\label{def:ppad}
The complexity class $\ppad$ consists of all search problems that are polynomially-time reducible to the $\lineofaend$ problem.
\end{definition}

In this section, we prove the following theorem and we show $\ppad$-membership of $\ssg$.

\begin{theorem}[$\ppad$-membership of $\ssg$.]\label{thm:ppad-membership}
The problem of computing an $\eps$-approximate NE  in $\ssg$ where $\eps = \Omega(1/\poly(\Atot))$ and $1/(1-\gamma) = \poly(\Atot)$ is in $\ppad$.
\end{theorem}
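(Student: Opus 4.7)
The plan is to prove $\ppad$-membership by constructing a polynomially-computable Brouwer self-map on the strategy space whose approximate fixed points correspond to approximate NE, and then invoking the standard reduction of approximate Brouwer fixed points to $\lineofaend$. Before doing this for a general $\ssg$, I would first apply~\Cref{lem:redx-MtO} to reduce the $\eps$-NE problem on $\ssg$ $\calG$ to the $\eps'$-NE problem on the corresponding $\onestatessg$ $\calG'$, where $\eps' \defeq (1-\gamma)\eps/|\calS|$. Because $1/(1-\gamma)$ and $1/\eps$ are polynomial in $\Atot$ and this reduction is linear-time, it suffices to prove membership for the $\onestatessg$ NE problem. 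The advantage of working on $\calG'$ is that every player controls exactly one state, so~\Cref{coro:quasi-mono-ssg} grants pseudo-linearity of each player's utility in her own policy, which will be the key analytic hammer.

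On $\calG'$ I would use the value-function-based Brouwer map $f_\Vsf$ from~\eqref{eq:Brouwer-functions}, namely
\[
[f_\Vsf(\ppi)]_{i,a} \defeq \frac{\pi_i(a) + \max\Par{\outility_i(\ee_a,\ppi_{-i}) - \outility_i(\ppi),\,0}}{1+\sum_{a'\in\calA_i}\max\Par{\outility_i(\ee_{a'},\ppi_{-i}) - \outility_i(\ppi),\,0}}.
\]
This is a self-map on the product of simplices $\prod_i \Delta^{\calA_i}$. Evaluating $\outility_i(\ppi) = V_i^{\ppi}(s_i)$ reduces to solving the linear system $(\II-\gamma\PP^{\ppi})\VV_i=\r_i^{\ppi}$, which runs in polynomial time, so $f_\Vsf$ is efficiently computable, polynomially Lipschitz (since $1/(1-\gamma)$ is polynomial in $\Atot$), and of polynomially-bounded bit complexity. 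The identification of exact fixed points of $f_\Vsf$ with exact NE is the classical Nash argument: at a fixed point the numerator shift must vanish on the support of each $\ppi_i$, so no pure deviation strictly improves $\outility_i$, which combined with pseudo-linearity certifies NE.

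The main obstacle is showing that every $\delta$-approximate fixed point of $f_\Vsf$ yields an $\eps'$-approximate NE of $\calG'$ with $\delta$ only inverse-polynomially smaller than $\eps'$. The classical Daskalakis argument for this step uses linearity of utilities in the deviating player's strategy, which fails here because $\outility_i$ is a rational function of $\ppi_i$ (see~\eqref{eq:nonconvexity}). I would bridge the gap via~\Cref{coro:quasi-mono-ssg}: if player $i$ has a pure deviation $\ee_a$ that improves $\outility_i$ by more than $\eps'$, then by the \emph{lower} slope bound $(1-\gamma)\theta$ in~\eqref{eq:def-quasi-mono-bounded-velo}, mixing a mass $\theta$ of $\ee_a$ into $\ppi_i$ strictly increases $\outility_i$ by at least $(1-\gamma)\theta\eps'$. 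Mimicking the classical averaging identity (expressing $\outility_i(\ppi) = \sum_a \pi_i(a)\,\outility_i(\ee_a,\ppi_{-i})$-style convex combinations via a sequence of single-action mixings and applying pseudo-linearity to each) then forces the total excess best-response mass $\sum_{a'}\max(\outility_i(\ee_{a'},\ppi_{-i})-\outility_i(\ppi),0)$ to be at most $O(\delta|\calA_i|/(1-\gamma))$ at any $\delta$-fixed point; a second invocation of pseudo-linearity transports this sum-level bound into a per-action gap bound. Choosing $\delta=\Theta((1-\gamma)^2\eps'/\Atot)$ then makes every $\delta$-approximate fixed point an $\eps'$-approximate NE. I expect this pseudo-linearity-driven approximate-fixed-point-to-approximate-NE implication to be by far the most delicate step, since it is exactly the place where general-sum nonlinearity would otherwise obstruct the linear-utility proof.

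To finish, since $f_\Vsf$ is a polynomially-computable, polynomially-Lipschitz continuous self-map of a product of simplices of total dimension $\Atot$, $\delta$-approximate fixed points can be found in $\ppad$ via the standard reduction to $\brouwer$ and thence to $\lineofaend$~\cite{papadimitriou1994complexity,daskalakis2009complexity}. Composing with the linear-time reduction to $\onestatessg$ gives $\ppad$-membership for $\ssg$. All parameter blowups --- $n\mapsto n|\calS|$, $\eps\mapsto\eps'=(1-\gamma)\eps/|\calS|$, and $\delta=\Theta((1-\gamma)^2\eps/(\Atot|\calS|))$ --- remain inverse-polynomial under the standing hypotheses $1/(1-\gamma),1/\eps=\poly(\Atot)$, completing the membership claim.
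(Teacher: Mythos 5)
Your proposal follows essentially the same route as the paper: reduce to $\onestatessg$ via \Cref{lem:redx-MtO}, apply the Nash-style Brouwer map $f_\Vsf$, and use the pseudo-linearity of \Cref{coro:quasi-mono-ssg} to build a surrogate convex-combination representation $u_i(\ppi_i,\ppi_{-i})=\sum_a\tildepi_i(a)u_i(\ee_a,\ppi_{-i})$ that substitutes for the linearity used in the classical argument, which is exactly what the paper's \Cref{lem:quasilowerbound} and \Cref{lem:brouwerapproxNE} do. One minor quantitative caveat: your claimed bound of $O(\delta|\calA_i|/(1-\gamma))$ on the excess best-response mass is more optimistic than what the paper's two-case analysis (splitting on whether the tail mass $\pi_\tail$ is large or small) actually delivers, namely a $\sqrt{\delta}$-type dependence, but since $\delta$ may be taken inverse-polynomially small this does not affect the membership conclusion.
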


To prove~\Cref{thm:ppad-membership}, we follow the template of the $\ppad$-membership proof outlined by Theorem 3.1,~\cite{daskalakis2009complexity}, which proves $\ppad$-membership.
Formally, we reduce the problem of $\eps$-approximate NE computation for any $\ssg$ instance to an instance of the $\brouwer$ problem, defined below.

\begin{definition}\label{def:brouwer}
Consider dimension $d \geq 1$ and domain $\calX \subseteq [0,1]^d$.
The $\brouwer$-$\delta$ problem takes as input a $L$-Lipschitz function (with respect to the $\ell_{\infty}$-norm) $f: \calX \to \calX$, referred to as the \emph{Brower function},  that can be exactly evaluated in time polynomial in $d$. The problem asks to compute a point $\pp \in \calX$ such that $\|\pp - f(\pp) \|_{\infty} \leq \delta$.
\end{definition}
The following lemma from \cite{daskalakis2009complexity} shows that $\brouwer$-$\delta$ is in $\ppad$. 
\begin{lemma}[cf. Theorem 3.1,~\cite{daskalakis2009complexity}]\label{lem:brouwerppad}
$\brouwer$-$\delta$ is in $\ppad$ for $\delta = \poly(1/d)$, $L = \poly(d)$.
\end{lemma}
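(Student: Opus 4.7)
The plan is to reduce $\brouwer$-$\delta$ to $\lineofaend$ via a Sperner-style coloring on a discretized grid, following the now-classical Hirsch--Papadimitriou--Vavasis approach. First, I would pick a grid spacing $h$ on $\calX \subseteq [0,1]^d$ small enough that Lipschitzness transfers approximate panchromaticity into the $\delta$-fixed-point guarantee, concretely $h = \Theta(\delta/(Ld))$. Since $1/\delta, L = \poly(d)$, each grid vertex is representable in $\poly(d)$ bits, so the reduction remains polynomial-size. A small padding/retraction of $\calX$ into a cube is applied first so that $f$ extends to a Lipschitz self-map of $[0,1]^d$ without changing the set of $\delta$-fixed points substantively.

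Second, I would assign each grid point $\pp$ one of $d+1$ colors according to the displacement $f(\pp) - \pp$: for interior points use the standard rule that colors $\pp$ by some index $i \in \{1,\dots,d\}$ along which $f(\pp)_i - \pp_i$ is sufficiently negative, and color $\pp$ by $0$ if no such coordinate exists (forcing a ``downward'' movement test). On the boundary faces of $[0,1]^d$ I would assign the Sperner boundary colors (e.g., color $i$ on the face $\{x_i = 0\}$, color $0$ on $\{x_i = 1\}$), which is consistent with the Brouwer displacement there because $f$ maps into the cube. A panchromatic simplex in the canonical Kuhn triangulation of the grid then contains vertices of all $d+1$ colors; by Lipschitzness and the grid spacing, any such vertex is a point $\pp$ with $\|f(\pp) - \pp\|_\infty \leq \delta$.

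Third, I would build the $\lineofaend$ circuits $\calC_S$ and $\calC_P$ so that the vertices of the induced graph are the ``almost-panchromatic'' simplices (those missing exactly one color), with edges joining two such simplices when they share a $(d-1)$-dimensional face whose colors match a distinguished palette. This ``door-in-door-out'' walk makes each almost-panchromatic simplex have degree at most two, and fully panchromatic simplices become degree-$1$ endpoints. A canonical starting endpoint at $\zero$ is forced by the Sperner boundary coloring (the unique panchromatic ``corner'' of the triangulation near the origin). Each evaluation of $\calC_S$ or $\calC_P$ requires identifying a simplex, its neighbors in the triangulation, and evaluating $f$ on their $O(d)$ vertices, all polynomial-time in $d$ given the oracle for $f$. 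Any non-$\zero$ $\lineofaend$ solution is an unmatched endpoint, i.e., a panchromatic simplex, from which an $\delta$-fixed point is read off in one extra step.

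The main obstacle is the careful verification that the coloring, boundary conventions, and the ``door'' definition on simplex faces combine so that (i) the graph really has maximum degree two, (ii) the canonical $\zero$ endpoint is unique and matches the $\lineofaend$ convention, and (iii) an approximate-panchromatic cell truly certifies $\|\pp - f(\pp)\|_\infty \leq \delta$; this last step is where the slack between the grid spacing $h$ and the Lipschitz constant $L$ must be tuned, and where the $\poly(d)$ bounds on $L$ and $1/\delta$ are essential. The rest of the reduction is circuit bookkeeping, and once those invariants are established, \Cref{lem:brouwerppad} follows immediately from the definition of $\ppad$ in \Cref{def:ppad}.
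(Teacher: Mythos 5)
The paper does not actually prove \Cref{lem:brouwerppad} — it is imported wholesale from Theorem 3.1 of \cite{daskalakis2009complexity} — and your sketch is precisely the standard Sperner-to-$\lineofaend$ reduction underlying that citation, with the correct grid scale $h = \Theta(\delta/(Ld))$, the correct degree-at-most-two path-following structure on almost-panchromatic simplices, and the correct observation that $1/\delta, L = \poly(d)$ is what keeps the grid indexable in $\poly(d)$ bits. The one quibble is your example boundary convention (color $i$ on the face $\{x_i = 0\}$): with an interior rule that assigns color $i$ when $f(\pp)_i - \pp_i$ is sufficiently \emph{negative}, that face has $f(\pp)_i - \pp_i \geq 0$ and so can never legitimately carry color $i$, meaning the admissible boundary palette is the complement of what you wrote; since you explicitly flag the boundary/door bookkeeping as the step requiring verification, this reads as a notational slip rather than a gap.
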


To prove \Cref{thm:ppad-membership} we apply Lemma~\ref{lem:brouwerppad} on the strategy $\ppi$ where $d = \sum_{i \in [n]} \sum_{s \in \calS_i} |\calA_{i,s}| = \Atot$.
Accordingly, our proof of $\ppad$-membership construct a Brouwer function $f(\ppi)$ corresponding to every $\ssg$ instance.
We actually provide two proofs for $\ppad$-membership that use two different Brouwer functions.
The first, $f_\Vsf(\ppi)$ (\eqref{eq:brouwerfunction},~\Cref{ssec:ppad-membership}), directly uses the values; the second, $f_\osBsf(\ppi)$ (\eqref{eq:brouwer-alt},~\Cref{ssec:ppad-membership-alt}), uses Bellman errors.
The key technical steps in the proof are to show two essential properties of \emph{both} of these Brouwer functions (abbreviated to $f(\cdot)$ here and in corresponding sections for brevity):
\begin{enumerate}
\item \textbf{Property 1:} \emph{$f(\cdot)$ is $L$-Lipschitz with respect to the $\ell_{\infty}$-norm, where $L = \poly(\Atot, \frac{1}{1 - \gamma})$}:
This allows the applicability of Lemma~\ref{lem:brouwerppad} and the proof that we can find a fixed point $\|\ppi - f(\ppi)\|_{\infty} \leq \delta$ for $\delta = 1/\poly(\Atot,\frac{1}{1 - \gamma})$ in $\ppad$. The proof of this property for both $f_\Vsf(\cdot)$ and $f_\osBsf(\cdot)$ will use properties of value functions under strategy evaluation for single player in stochastic games.

\item \textbf{Property 2:} \emph{a solution to $\brouwer$-$\delta$, i.e. $\|\ppi - f(\ppi)\|_{\infty} \leq \delta$, is an $\eps$-approximate NE of the original $\ssg$ instance for $\eps = \poly(\delta)$}:
This ensures that we can find an $\eps = 1/\poly(\Atot,\frac{1}{1 - \gamma})$-approximate NE in $\ppad$.
The proof of this property for $f_\Vsf(\cdot)$ crucially uses technical steps from the the pseudo-linearity structure of utility functions $u_i$ provided in~\Cref{ssec:quasi-mono}, and for $f_\osBsf(\cdot)$ will use the Bellman definitions of NE provided in~\Cref{ssec:bellman}. 
Our first membership proof for $f_\Vsf(\cdot)$ in particular handles pseudo-linear (which may be non-convex) utilities, and is a useful generalization of the ideas in the original $\ppad$-membership proofs~\cite{daskalakis2009complexity} that rely on exactly linear utilities.
We believe this generalization will be of independent interest. 
\end{enumerate}

We also show that one of our approaches (See \Cref{ssec:ppad-membership}) generalizes to showing that, under mild conditions, approximating $\eps$-NE for any multi-agent games with pseudo-linear utilities is in $\ppad$ (See \Cref{thm:ppad-membership-general}). This is an interesting generalization of prior work (which asssumed linearity or piecewise linearity on either the utility functions themselves, or sufficient conditions for NE, to establish $\ppad$-membership~\cite{chen2009settling,daskalakis2009complexity,vazirani2011market,chen2017complexity,garg2017settling,filos2021complexity}) to incorporate \emph{pseudo-linear} structure.
We believe this may find further utility, by adding to our toolbox of key structural properties of nonlinear general-sum games that allow for membership in the $\ppad$ complexity class.

\subsection{Brouwer functions in terms of direct utilities}\label{ssec:ppad-membership}

By the reduction provided in~\Cref{ssec:redx-m2o}, it suffices to show that $\onestatessg$ is in $\ppad$. To do so, we will first provide a general result regarding the $\ppad$-membership for a class of particular games satisfying pseudo-linear utility structure.

\begin{theorem}\label{thm:ppad-membership-general}
Let $\calG$ be a game with utility functions $\{u_i(\ppi)\}_{i\in[n]}$. Suppose each player $i\in[n]$ has actions space $\calA_i$ with size bounded by $A$, each utility function is $L$-Lipschitz in $\|\cdot\|_\infty$, bounded by $U$, and is $\rho$-pseudo-linear for some $\rho\ge1$, i.e.\ for any $i\in[n]$, when $u_i(\ppi_i,\ppi_{-i})\neq u_i(\ppi_i',\ppi_{-i}) $, we have
	\begin{equation}\label{eq:pseudo-linear-general}
    \frac{u_i(\theta\ppi'_i+(1-\theta)\ppi_i,\ppi_{-i})-u_i(\ppi)}{u_i(\ppi'_i,\ppi_{-i})-u_i(\ppi)} \in\left[\frac{1}{\rho}\cdot\theta, \rho\cdot \theta\right],~\text{for any}~\ppi_{i}, \ppi'_{i},~\theta\in[0,1].
	\end{equation}
	The problem of computing an $\epsilon$-approximate NE of such games for any $\epsilon>0$, when $A,L,U,\rho, 1/\epsilon$ are all polynomially-bounded by the problem representation size, is in $\ppad$.
\end{theorem}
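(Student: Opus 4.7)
The plan is to reduce the problem to $\brouwer$-$\delta$ (Lemma~\ref{lem:brouwerppad}) using the classical Nash map
\begin{equation*}
f(\ppi)_{i,a} \defeq \frac{\pi_i(a) + \max\bigl(u_i(\ee_a,\ppi_{-i}) - u_i(\ppi),\, 0\bigr)}{1 + \sum_{a'\in\calA_i}\max\bigl(u_i(\ee_{a'},\ppi_{-i}) - u_i(\ppi),\, 0\bigr)}.
\end{equation*}
As in~\cite{daskalakis2009complexity}, exact fixed points of $f$ coincide with exact NE of $\calG$ by a direct algebraic calculation. Two further ingredients are required: (i) $f$ is polynomial-time evaluable and $\poly(A,L,U)$-Lipschitz in $\|\cdot\|_\infty$, which follows routinely since boundedness $|u_i|\le U$ forces the denominator into $[1, 1+2AU]$ and the numerator is itself Lipschitz by the assumed $L$-Lipschitzness of each $u_i$; (ii) every $\delta$-approximate fixed point of $f$ is an $\eps$-approximate NE with $\eps = \poly(\delta, A, L, U, \rho)$. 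Step (ii) is the technical core and is where pseudo-linearity does the real work.

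For step (ii), I would use pseudo-linearity in two ways. First, applying the ratio bound~\eqref{eq:pseudo-linear-general} \emph{both} in the forward direction along a segment $\ppi_i \to \ppi'_i$ and in the reverse direction $\ppi'_i \to \ppi_i$ sandwiches the value of $u_i$ on the segment between its two endpoint values, i.e.\ yields quasi-linearity. Iterating over the vertices of $\Delta^{\calA_i}$ then gives $\sup_{\ppi'_i} u_i(\ppi'_i,\ppi_{-i}) = \max_{a\in\calA_i} u_i(\ee_a,\ppi_{-i})$, so the approximate-NE check reduces to upper-bounding the best \emph{pure}-deviation gain $\eta_i := \max_a \eps_i(a)$, where $\eps_i(a) := u_i(\ee_a,\ppi_{-i}) - u_i(\ppi)$. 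Second, rearranging the coordinate-wise condition $|f(\ppi)_{i,a} - \pi_i(a)| \le \delta$ produces the approximate Nash identity
\begin{equation*}
\bigl|\pi_i(a)\, D_i - \max(\eps_i(a),0)\bigr| \le \delta(1+D_i), \qquad D_i := \sum_{a'}\max(\eps_i(a'),0).
\end{equation*}
By quasi-linearity some $a_0 \in \mathrm{supp}(\ppi_i)$ must satisfy $\eps_i(a_0) \le 0$, and the identity at $a_0$ gives $\pi_i(a_0) D_i \le \delta(1+D_i)$. Combining this with the lower-slope half of pseudo-linearity, $u_i((1-\theta)\ppi_i + \theta\ee_{a^*},\ppi_{-i}) \ge u_i(\ppi) + (\theta/\rho)\eta_i$, together with the Lipschitz-forced bound $\pi_i(a^*) \le 1 - \eta_i/L$ (which holds whenever $\eta_i$ is non-trivial, since $\|\ee_{a^*} - \ppi_i\|_\infty = 1 - \pi_i(a^*)$), should, after algebraic elimination, yield the target inequality $\eta_i \le \poly(\delta,A,L,U,\rho)$.

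The hard part will be this final quantitative elimination. In the linear setting of~\cite{daskalakis2009complexity} the identity $u_i(\ppi) = \sum_a \pi_i(a) u_i(\ee_a,\ppi_{-i})$ is used to turn the approximate Nash identity into the clean contraction $\sum_a (\max(\eps_i(a),0))^2 = O(\delta)$, which directly controls $\eta_i$. This identity is unavailable under pseudo-linearity and must be replaced by the bounded-slope estimate, which introduces factors of $\rho$ in several places and makes the polynomial book-keeping more delicate; I expect this replacement to be the main novelty and the chief source of technical difficulty. Once the bound $\eta_i \le \poly(\delta,A,L,U,\rho)$ is established, choosing $\delta$ a sufficiently small inverse polynomial in $A, L, U, \rho, 1/\eps$ and invoking Lemma~\ref{lem:brouwerppad} completes the reduction and places $\eps$-NE computation for $\calG$ in $\ppad$.
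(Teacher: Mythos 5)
Your overall architecture matches the paper's: the same Nash/Brouwer map, the same two properties (Lipschitzness of $f$, and ``approximate fixed point $\Rightarrow$ approximate NE''), and the same invocation of Lemma~\ref{lem:brouwerppad}. Step (i) is fine and is essentially Lemma~\ref{lem:lipschitzbrouwer}. The gap is in step (ii), and it sits exactly where you flag ``the chief source of technical difficulty'': the three inequalities you propose to combine do not suffice to bound $\eta_i$. Your anchor is a single supported action $a_0$ with $\eps_i(a_0)\le 0$, giving $\pi_i(a_0)D_i\le\delta(1+D_i)$; but in the problematic regime nearly all of the mass of $\ppi_i$ sits on actions with $\eps_i(a)>0$, so $\pi_i(a_0)$ can be arbitrarily small and this inequality is vacuous. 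The remaining two facts --- the lower-slope deviation bound and $\pi_i(a^*)\le 1-\eta_i/L$ --- are consistent with, say, mass $1/2$ on each of two head actions with $\eps_i=\eta_i$ of constant order and a vanishing tail, a configuration in which every coordinate of $f(\ppi)-\ppi$ is $O(\delta)$. What rules this out in the linear setting is the identity $u_i(\ppi)=\sum_a\pi_i(a)u_i(\ee_a,\ppi_{-i})$, which forces $\sum_a\pi_i(a)\eps_i(a)=0$; no substitute for it appears in your sketch, and the ``algebraic elimination'' you defer is exactly the missing argument.

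The paper's resolution (Lemma~\ref{lem:quasilowerbound}) is to manufacture that identity: pseudo-linearity guarantees a surrogate distribution $\tildeppi_i$ with $u_i(\ppi_i,\ppi_{-i})=\sum_a\tildepi_i(a)\,u_i(\ee_a,\ppi_{-i})$, together with the quantitative controls $\sum_a\pi_i(a)\tildepi_i(a)\ge(1-\alpha)^2/(\rho^2|\calA_i|^2)$ and tail mass at most $\rho\,\pi_{\tail}$. Multiplying the fixed-point inequality by $\tildepi_i(a)$ and summing then reproduces the linear-case contraction up to $\rho$ factors: the overlap bound replaces $\sum_a\pi_i(a)^2\ge 1/k$, the tail bound controls the cross term by $\rho\,\pi_{\tail}U$, and a separate easy case handles large $\pi_{\tail}$. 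To complete your proof you would need to supply this construction (or an equivalent mechanism for aggregating the fixed-point conditions against a distribution that both represents $u_i(\ppi)$ as a convex combination of pure-action utilities and has non-negligible correlation with $\ppi_i$); the single-action anchor at $a_0$ cannot be made to work.
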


To prove the theorem let's consider Brouwer function $f(\ppi)$ as follows: for any $i\in[n]$, $a\in\calA_i$, 
\begin{align}\label{eq:brouwerfunction}
y_{i,a}(\ppi) = [f(\ppi)]_{i}(a) = \frac{\pi_{i}(a) + \Upsilon_{i}^{\ppi}(a)}{1 + \sum_{a' \in \calA_{i}} \Upsilon_{i}^{\ppi}(a')}
~\text{ where }~\Upsilon_{i}^{\ppi}(a) \defeq \max(u_i(\ee_{a},\ppi_{-i}) - u_i(\ppi),0)\,.
\end{align}

Note that this is the identical Brouwer function to the one that is typically defined for normal-form games~\cite{daskalakis2009complexity}.
However, the original proofs of Properties 1 and 2 heavily use the linear structure in the utilities.
We leverage the powerful pseudo-linear structure of game $\calG$ to show that this Brouwer function also satisfies properties 1 and 2. 

The following lemma establishes \textbf{Property 1}.
In other words, it shows that for any $L$-Lipschitz utility function class $\{u_i\}_{i\in[n]}$, the Brouwer function $f$ defined as in~\eqref{eq:brouwerfunction} is also Lipschitz.
 
\begin{lemma}[\textbf{Property 1}]\label{lem:lipschitzbrouwer}
Given a game $\calG$ where the number of actions for each player is bounded by $A$, and further all utility functions $\{u_i\}_{i\in[n]}$ are $L$-Lipschitz with respect to $\|\cdot\|_\infty$, i.e. $u_i(\ppi)-u_i(\ppi')\le L\|\ppi-\ppi'\|_\infty$ for any $i\in[n], \ppi, \ppi'$. Then for any $\|\ppi-\ppi'\|_\infty\le \delta$, for $f$ defined as in~\eqref{eq:brouwerfunction},
we have
\begin{align}\label{eq:brouwerlipschitz}
    \|f(\ppi) - f(\ppi')\|_{\infty} &\leq  \left(1+4AL\right)\delta.
\end{align}
Consequently, the Brouwer function $f(\cdot)$ is $\poly(A,L)$-Lipschitz with respect to the $\ell_{\infty}$-norm.
\end{lemma}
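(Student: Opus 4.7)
The plan is to bound $|[f(\ppi)]_i(a) - [f(\ppi')]_i(a)|$ uniformly over $i,a$ by analyzing the numerator and denominator of the Brouwer quotient separately, using only the $L$-Lipschitz property of each $u_i$ together with the elementary fact that $x \mapsto \max(x,0)$ is $1$-Lipschitz. Pseudo-linearity is \emph{not} needed for this step -- it will be reserved for the approximate-NE conclusion (Property 2).

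First I would establish the auxiliary bound that the regret term $\Upsilon_i^{\ppi}(a) = \max(u_i(\ee_a,\ppi_{-i}) - u_i(\ppi), 0)$ is $2L$-Lipschitz in $\ppi$: by the triangle inequality and $L$-Lipschitzness of $u_i$, the argument of the $\max$ changes by at most $2L\|\ppi-\ppi'\|_\infty$, and then the $1$-Lipschitzness of $\max(\cdot,0)$ preserves this bound. Denoting
\begin{equation*}
N_{i,a}(\ppi) \defeq \pi_i(a) + \Upsilon_i^{\ppi}(a), \qquad D_i(\ppi) \defeq 1 + \sum_{a' \in \calA_i} \Upsilon_i^{\ppi}(a'),
\end{equation*}
this yields $|N_{i,a}(\ppi) - N_{i,a}(\ppi')| \le (1+2L)\delta$ (since $|\pi_i(a) - \pi'_i(a)| \le \delta$) and, summing over at most $A$ terms, $|D_i(\ppi) - D_i(\ppi')| \le 2AL\delta$.

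Second I would bound the quotient difference via the identity
\begin{equation*}
\frac{N(\ppi)}{D(\ppi)} - \frac{N(\ppi')}{D(\ppi')} = \frac{N(\ppi) - N(\ppi')}{D(\ppi)} + \frac{N(\ppi')\bigl(D(\ppi') - D(\ppi)\bigr)}{D(\ppi)\,D(\ppi')}.
\end{equation*}
Two simple observations control the resulting terms: (i) $D_i(\ppi) \ge 1$ by construction, so the first summand is bounded by $(1+2L)\delta$; and (ii) the map $f$ sends $\ppi$ into the product of simplices, i.e.\ $\sum_{a} [f(\ppi')]_i(a) = 1$, which forces $N_{i,a}(\ppi') \le D_i(\ppi')$ for every $a$, hence the second summand is bounded by $|D(\ppi) - D(\ppi')|/D(\ppi) \le 2AL\delta$. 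Adding these gives $\|f(\ppi) - f(\ppi')\|_\infty \le (1 + 2L + 2AL)\delta \le (1+4AL)\delta$ (using $L \le AL$), which is \eqref{eq:brouwerlipschitz}.

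There is no serious obstacle here -- every step is routine -- but the one point requiring care is the denominator estimate. A naive bound on $N(\ppi')$ would introduce the utility magnitude $U$ into the final Lipschitz constant; using instead the simplex identity $\sum_a N_{i,a}(\ppi') = D_i(\ppi')$ to deduce $N_{i,a}(\ppi') \le D_i(\ppi')$ is what keeps the bound in terms of $A$ and $L$ only, matching the statement of the lemma.
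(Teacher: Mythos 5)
Your proof is correct and follows essentially the same route as the paper: the paper invokes Lemma 3.6 of the Daskalakis--Goldberg--Papadimitriou paper as a black box for the quotient-difference bound and then applies $L$-Lipschitzness of the $u_i$ plus the $1$-Lipschitzness of $\max(\cdot,0)$, which is exactly your decomposition with the cited lemma's two key facts ($D_i \geq 1$ and $N_{i,a} \leq D_i$) proved inline. The constants match, so there is nothing to fix.
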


\begin{proof}[Proof of~\Cref{lem:lipschitzbrouwer}]
For any strategy $\ppi$, we recall the definition of $\Upsilon_i^{a}(\ppi) := \max(0,u_i(\ee_{a},\ppi_{-i}) - u_i(\ppi))$ as shorthand.
Applying Lemma 3.6,~\cite{daskalakis2009complexity} we have 
\begin{equation}\label{eq:error-membership}
\begin{aligned}
    & |y_{i,a}(\ppi) - y_{i,a}(\ppi')| \leq |\pi_{i}(a) - \pi'_{i}(a)| + |\Upsilon_i^{a}(\ppi) - \Upsilon_i^{a}(\ppi')| + \left|\sum_{a' \in \calA_i} (\Upsilon_i^{a'}(\ppi) - \Upsilon_i^{a'}(\ppi'))\right| \\
    &\hspace{1em}\leq \delta + |u_i(\ee_{a},\ppi_{-i}) - u_i(\ee_{a},\ppi'_{-i})| + \sum_{a' \in \calA_i} \left|u_i(\ee_{a'},\ppi_{-i}) - u_i(\ee_{a'},\ppi'_{-i})\right| +\left(|\calA_i|+1\right)\left|u_i(\ppi)-u_i(\ppi')\right|,
\end{aligned}
\end{equation}
where for the last inequality we used the  assumption on $\ppi$, $\ppi'$ and $|\max\{a - b, 0\} - \max\{c - d, 0\}| \leq |a - c| + |b - d|$.
Finally, we use that all $u_i$s are $L$-Lipschitz together with $\norm{\ppi-\ppi'}_\infty\le \delta$ and $\norm{(\ee_a,\ppi_{-i})-(\ee_a,\ppi'_{-i})}_\infty\le \delta$ for any $a\in\calA_i$ to bound each absolute difference term in~\Cref{eq:error-membership} respectively and obtain the final bound. 
\end{proof}

Next, we prove \textbf{Property 2}.
The following lemma, which uses both the quasi-monotonicity and bounded-slope properties of the pseudo-linearity of utility functions established in~\Cref{ssec:quasi-mono}, shows that an approximate fixed point of the Brouwer function $f(\cdot)$ is also an approximate NE. 
\begin{lemma}[\textbf{Property 2}]\label{lem:brouwerapproxNE}
Consider any game $\calG$ such that each player $i\in[n]$ has action space $\calA_i$ with size bounded by $A$, and all utility functions $\{u_i\}_{i\in[n]}$ bounded by $U$ and are $\rho$-pseudo-linear for some $\rho\ge 1$ such that for any $i\in[n]$, when $u_i(\ppi_i,\ppi_{-i})\neq u_i(\ppi_i',\ppi_{-i}) $, we have
	\begin{equation*}
    \frac{u_i(\theta\ppi'_i+(1-\theta)\ppi_i,\ppi_{-i})-u_i(\ppi)}{u_i(\ppi'_i,\ppi_{-i})-u_i(\ppi)} \in\left[\frac{1}{\rho}\cdot\theta, \rho\cdot \theta\right],~\text{for any}~\ppi_{i}, \ppi'_{i},~\theta\in[0,1].
	\end{equation*}
Suppose for $f$ defined as in~\eqref{eq:brouwerfunction},
\[
\|f(\ppi) - \ppi\|_{\infty} \leq \epsilon'
~\text{ for }~
\epsilon'\le \min\left(\frac{\rho^2U^2}{4A^2(1+AU)},\frac{1}{1+AU}\right) ~.
\]
Then $\ppi$ is an $\epsilon$-Nash equilibrium for $\epsilon = \left(8\rho^2A^2+\rho A U\right) \cdot \sqrt{\epsilon'\left( 1+AU\right)}$.
\end{lemma}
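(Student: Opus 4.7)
The plan is to reduce the task of bounding the Nash gap to bounding $C_i := \sum_{a \in \calA_i} \Upsilon_i^{\ppi}(a)$ for each player $i$, then derive a bound on $C_i$ from the approximate fixed-point condition together with the $\rho$-pseudo-linearity of utilities. First I would show that for $\rho$-pseudo-linear utilities the best response is always attained at a pure action: fixing $\ppi_{-i}$, the quasi-monotonicity of $u_i(\cdot,\ppi_{-i})$ along every line segment in $\Delta^{\calA_i}$ implies that its maximum over this simplex is attained at a vertex. Hence the Nash gap for player $i$ equals $\max_{a \in \calA_i}\Upsilon_i^{\ppi}(a) \le C_i$, and it is enough to show $C_i \le \epsilon$ for every $i$.

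Next, unfolding the definition of $f$ in the hypothesis $\|f(\ppi)-\ppi\|_\infty \le \epsilon'$ yields the standard consequence $|\pi_i(a)C_i - \Upsilon_i^{\ppi}(a)| \le \epsilon'(1+C_i)$ for every $i$ and $a$. Letting $A_+ := \{a : \Upsilon_i^{\ppi}(a)>0\}$, $A_- := \calA_i\setminus A_+$, and $p := \sum_{a\in A_+}\pi_i(a)$, summing this estimate over $a\in A_-$ (where $\Upsilon_i^{\ppi}(a)=0$) gives $1-p \le A\epsilon'(1+C_i)/C_i$: almost all mass of $\ppi_i$ sits on strictly improving actions as soon as $C_i$ is non-negligible. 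Multiplying the same estimate by $\pi_i(a)$ and summing over all $a$ gives the lower bound $\sum_a \pi_i(a)\Upsilon_i^{\ppi}(a) \ge C_i\sum_a \pi_i(a)^2 - \epsilon'(1+C_i) \ge C_i/A - \epsilon'(1+C_i)$, by the power-mean inequality $\sum_a\pi_i(a)^2\ge 1/A$.

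The key step is to match this lower bound with an upper bound on $\sum_a \pi_i(a)\Upsilon_i^{\ppi}(a)$ using pseudo-linearity. Decompose $\ppi_i = p\,\ppi_i^+ + (1-p)\,\ppi_i^-$ where $\ppi_i^\pm$ is $\ppi_i$ restricted to $A_\pm$ and renormalized. A short inductive argument using quasi-monotonicity, applied to the pairwise convex combinations building up $\ppi_i^+$ (resp.\ $\ppi_i^-$) from its supporting pure actions, shows that $u_i(\ppi_i^+,\ppi_{-i}) \ge u_i(\ppi) \ge u_i(\ppi_i^-,\ppi_{-i})$; each pure action in $A_+$ strictly exceeds $u_i(\ppi)$, each pure action in $A_-$ is at most $u_i(\ppi)$, and quasi-monotonicity preserves these comparisons under mixing. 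Next, applying the bounded-slope form of pseudo-linearity to the convex combination $\ppi_i = p\,\ppi_i^+ + (1-p)\,\ppi_i^-$ at parameter $p$ yields
\[
u_i(\ppi) - u_i(\ppi_i^-,\ppi_{-i}) \ \ge\ \frac{p}{\rho}\bigl(u_i(\ppi_i^+,\ppi_{-i}) - u_i(\ppi_i^-,\ppi_{-i})\bigr).
\]
Using that $u_i(\ppi_i^-,\ppi_{-i}) \ge -U$ gives an upper bound on $u_i(\ppi_i^+,\ppi_{-i}) - u_i(\ppi)$ of order $\rho U(1-p)/p$, and hence on $\sum_{a\in A_+}\pi_i(a)\Upsilon_i^{\ppi}(a)$ of order $\rho U (1-p)$ (using that the weighted average of $\Upsilon_i^{\ppi}(a)$ over $A_+$ is at most $\rho \cdot (u_i(\ppi_i^+,\ppi_{-i})-u_i(\ppi))$, again by pseudo-linearity applied to each mixing step within $\ppi_i^+$).

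Combining the upper and lower bounds yields $C_i/A - \epsilon'(1+C_i) \le O(\rho U) \cdot (1-p) \le O(\rho U A \epsilon'(1+C_i)/C_i)$, i.e.\ a quadratic inequality of the form $C_i^2 \le O(\rho A^2 U)\cdot\epsilon'(1+C_i)$. Under the stated assumption $\epsilon'(1+AU)\le 1$, one may absorb $(1+C_i) = O(1+AU)$ and solve to obtain $C_i = O((\rho^2 A^2 + \rho A U)\sqrt{\epsilon'(1+AU)})$, which is the claimed bound. The condition $\epsilon' \le \rho^2 U^2/(4A^2(1+AU))$ is exactly what is needed for the discriminant of this quadratic to remain in the regime where the derivation is valid.

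The main obstacle will be the pseudo-linear analogue of the identity $\sum_a \pi_i(a)(u_i(\ee_a,\ppi_{-i})-u_i(\ppi))=0$ that drives the linear Nash argument of \cite{daskalakis2009complexity}. Without exact linearity one has to control this ``linearity defect'' through a two-sided use of pseudo-linearity (both the quasi-monotonicity for orienting $\ppi_i^+$ and $\ppi_i^-$ relative to $\ppi$, and the slope bounds for quantitative control), and bookkeeping of the $\rho$, $(1-p)$, $(1+C_i)$, and $AU$ factors is required to obtain the exact coefficient $(8\rho^2 A^2+\rho A U)$ and the $\sqrt{\epsilon'(1+AU)}$ rate stated in the lemma.
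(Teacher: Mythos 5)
Your overall architecture is sound and genuinely different from the paper's. The paper splits into two cases on the tail mass $\pi_\tail$ and, in the hard case, constructs via Lemma~\ref{lem:quasilowerbound} a surrogate distribution $\tildeppi_i$ satisfying the \emph{exact} identity $u_i(\ppi_i,\ppi_{-i})=\sum_a\tildepi_i(a)u_i(\ee_a,\ppi_{-i})$ together with a lower bound on $\sum_a\pi_i(a)\tildepi_i(a)$ and an upper bound on the tail mass of $\tildeppi_i$; it then multiplies the fixed-point inequality by $\tildepi_i(a)$ and sums. You instead multiply by $\pi_i(a)$, use $\sum_a\pi_i(a)^2\ge 1/A$ for the lower bound, and absorb the ``linearity defect'' through the head/tail decomposition $\ppi_i=p\,\ppi_i^++(1-p)\,\ppi_i^-$, closing with a quadratic inequality in $C_i$ rather than a case split. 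If completed, your route actually yields a slightly better constant (of order $\rho A\sqrt{U}$ rather than $\rho^2A^2+\rho AU$ in front of $\sqrt{\epsilon'(1+AU)}$), which still establishes the lemma since the claim is only an upper bound.

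However, the crux is under-justified as written: the parenthetical claim that ``the weighted average of $\Upsilon_i^{\ppi}(a)$ over $A_+$ is at most $\rho\cdot(u_i(\ppi_i^+,\ppi_{-i})-u_i(\ppi))$, again by pseudo-linearity applied to each mixing step within $\ppi_i^+$'' does not follow from a naive induction. Pseudo-linearity is a two-point statement, and composing it across the $|A_+|$ pairwise mixing steps that build $\ppi_i^+$ distorts each pure action's effective coefficient by a factor of $\rho$ \emph{per step}, giving $\rho^{|A_+|}$ in general rather than $\rho$ --- this is exactly the compounding that the paper's Lemma~\ref{lem:quasilowerbound} is engineered to avoid (it only needs to control $\tildepi_i$ at a single heavy head coordinate and the total tail mass, never all coordinates at once). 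Your step can be repaired, but it requires a specific argument: peel off the actions of $A_+$ in decreasing order of $u_i(\ee_a,\ppi_{-i})$, write $\mu_k$ for the renormalized mixture of the remaining actions, and unroll $u_i(\mu_k)\ge(1-\theta_k/\rho)u_i(\mu_{k+1})+(\theta_k/\rho)u_i(\ee_{a_k})$ to conclude that $u_i(\ppi_i^+)\ge\sum_k c_k u_i(\ee_{a_k})$ with weights satisfying $\sum_k c_k=1$ and $c_k\ge\lambda_k/\rho$ (the one-sided bound $1-\theta_j/\rho\ge 1-\theta_j$ prevents compounding); since every $u_i(\ee_{a_k})$ with $a_k\in A_+$ exceeds $u_i(\ppi)$, the deficit weight can be dumped onto the baseline and one gets $u_i(\ppi_i^+)-u_i(\ppi)\ge\frac{1}{\rho p}\sum_{a\in A_+}\pi_i(a)\Upsilon_i^{\ppi}(a)$ as you need. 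Without this (or an equivalent device such as the paper's surrogate distribution), the key upper bound on $\sum_{a\in A_+}\pi_i(a)\Upsilon_i^{\ppi}(a)$ is not established and the quadratic inequality does not close.
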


To prove~\Cref{lem:brouwerapproxNE} for our general non-linear utility functions, we will need the following~\Cref{lem:quasilowerbound}, showing that one can still express $u_i(\ppi_i,\ppi_{-i}) = \sum_{j\in[k]}\widetilde{\pi}_i(a^j)u_i(\ee_{a^j},\ppi_{-i})$ with properly-behaved $\widetilde{\ppi}\in\Delta^{\calA_i}$, building on the pseudo-linear structure of utilities as shown in~\Cref{eq:pseudo-linear-general}.

\begin{lemma}[Relating $\tildeppi$ to $\ppi$]\label{lem:quasilowerbound}
Given game $\calG$ under the same assumptions as in~\Cref{lem:brouwerapproxNE}. Fix player $i$, given any policy $\ppi_i$ and some threshold $\ell$ we define  $\pi_\tail \defeq \sum_{j\ge \ell+1}\ppi_i(a^j)\le \alpha$, there exists a choice $\tildeppi_i\in\Delta^{\calA_i}$ such that $u_i(\ppi_i,\ppi_{-i}) = \sum_{j \in [|\calA_i|]} \tildepi_{i}(a^j) u_i(\ee_{a^j},\ppi_{-i})$ and the following two conditions are satisfied: 
\begin{subequations}
\begin{align}
\sum_{j \in [k]} \pi_{i}(a^j) \tildepi_{i}(a^j) &\geq \frac{(1-\alpha)^2}{\rho^2|\calA_i|^2}  \text{ and } \label{eq:quasilowerbound}\\
\sum_{j = \ell+1}^k \tildepi_{i}(a^j) &\leq \rho\pi_\tail\label{eq:pipitilde}.
\end{align}
\end{subequations}
\end{lemma}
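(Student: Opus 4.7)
The plan is to construct $\tildeppi_i$ by repeatedly applying pseudo-linearity~\eqref{eq:pseudo-linear-general} to decompose $u_i(\ppi_i,\ppi_{-i})$ as a convex combination of the pure-strategy utilities $\{u_i(\ee_{a^j},\ppi_{-i})\}_{j\in[|\calA_i|]}$, first splitting off the tail of $\ppi_i$ (which will control the tail mass of $\tildeppi_i$ in~\eqref{eq:pipitilde}) and then isolating one heavy main action $a^{j^*}$ (which will supply the overlap bound~\eqref{eq:quasilowerbound}). I would first pick $j^*\in\argmax_{j\in[\ell]}\pi_i(a^j)$; by pigeonhole $\pi_i(a^{j^*})\ge (1-\pi_\tail)/\ell\ge (1-\alpha)/|\calA_i|$. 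I would then write $\ppi_i=(1-\pi_\tail)\ppi_\med+\pi_\tail\ppi_\tail$, where $\ppi_\med$ and $\ppi_\tail$ are the normalized restrictions of $\ppi_i$ to $\{a^j:j\le\ell\}$ and $\{a^j:j>\ell\}$ respectively (taking $\ppi_\tail$ arbitrarily when $\pi_\tail=0$).

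Applying pseudo-linearity on the segment from $\ppi_\med$ to $\ppi_\tail$ with weight $\pi_\tail$ gives
\[
u_i(\ppi_i,\ppi_{-i}) = (1-\xi_1)\,u_i(\ppi_\med,\ppi_{-i}) + \xi_1\,u_i(\ppi_\tail,\ppi_{-i}), \quad \xi_1\in [\pi_\tail/\rho,\,\rho\pi_\tail]\cap[0,1],
\]
while the further split $\ppi_\med = \pi_\med(a^{j^*})\ee_{a^{j^*}} + (1-\pi_\med(a^{j^*}))\ppi_\med'$, combined with another application of pseudo-linearity, yields
\[
u_i(\ppi_\med,\ppi_{-i}) = \xi_2\,u_i(\ee_{a^{j^*}},\ppi_{-i}) + (1-\xi_2)\,u_i(\ppi_\med',\ppi_{-i}), \quad \xi_2\ge \frac{\pi_\med(a^{j^*})}{\rho} = \frac{\pi_i(a^{j^*})}{\rho(1-\pi_\tail)},
\]
where in each case the degenerate coinciding-endpoint situation is resolved by choosing the coefficient at its linear value inside the allowed interval. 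Next, I would appeal to quasi-monotonicity to express both $u_i(\ppi_\med',\ppi_{-i})$ and $u_i(\ppi_\tail,\ppi_{-i})$ as convex combinations of pure-strategy utilities over the respective remaining supports, since both values lie in the closed interval spanned by those pure-strategy utilities and hence can be realized as mixtures (supported, e.g., on the $\argmin$ and $\argmax$ within each support): $u_i(\ppi_\med',\ppi_{-i}) = \sum_{j\le\ell,\,j\neq j^*}\mu_j\, u_i(\ee_{a^j},\ppi_{-i})$ and $u_i(\ppi_\tail,\ppi_{-i}) = \sum_{j>\ell}\nu_j\, u_i(\ee_{a^j},\ppi_{-i})$.

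Combining these, I would set $\tildepi_i(a^{j^*}) = (1-\xi_1)\xi_2$, $\tildepi_i(a^j) = (1-\xi_1)(1-\xi_2)\mu_j$ for $j\in[\ell]\setminus\{j^*\}$, and $\tildepi_i(a^j) = \xi_1\nu_j$ for $j>\ell$, which gives a distribution in $\Delta^{\calA_i}$ with $\sum_j\tildepi_i(a^j)\,u_i(\ee_{a^j},\ppi_{-i}) = u_i(\ppi_i,\ppi_{-i})$ by construction. Condition~\eqref{eq:pipitilde} is then immediate since $\sum_{j>\ell}\tildepi_i(a^j) = \xi_1 \le \rho\pi_\tail$. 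For condition~\eqref{eq:quasilowerbound}, I would lower-bound the overlap by the single term $\pi_i(a^{j^*})\,\tildepi_i(a^{j^*}) = \pi_i(a^{j^*})(1-\xi_1)\xi_2$ and substitute the three bounds $\xi_1\le \rho\pi_\tail\le \rho\alpha$, $\xi_2\ge \pi_i(a^{j^*})/[\rho(1-\pi_\tail)]$, and $\pi_i(a^{j^*})\ge (1-\alpha)/|\calA_i|$. The main obstacle I expect is keeping the constants clean in the regime where $\rho\alpha$ is not small compared to $1$ --- there $1-\rho\pi_\tail$ is not a useful lower bound for $1-\xi_1$, and one has to either combine $\xi_1\le 1$ with a sharper accounting of the $(1-\pi_\tail)$ factor or split into cases (when $\rho\pi_\tail\ge 1$, condition~\eqref{eq:pipitilde} is trivially satisfied so a simpler construction of $\tildeppi_i$ supported primarily on $a^{j^*}$ suffices).
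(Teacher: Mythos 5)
Your construction is essentially the paper's: the same head/tail split of $\ppi_i$, the same pigeonhole choice of a heavy action $a^{j^*}$ with $\pi_i(a^{j^*})\ge(1-\alpha)/|\calA_i|$, the same two applications of pseudo-linearity to peel off first the tail mass and then the $a^{j^*}$ coordinate, and the same realization of the residual pieces as convex combinations of pure-strategy utilities (your min/max-achiever argument for that last step is a clean way to phrase what the paper calls a ``similar recursive argument''). Condition~\eqref{eq:pipitilde} is handled identically. The one place where your write-up stops short of a proof is exactly the point you flag at the end: you never actually establish a lower bound on $1-\xi_1$, and without one the single-term bound $\pi_i(a^{j^*})(1-\xi_1)\xi_2$ proves nothing, since a priori $\xi_1$ could be arbitrarily close to $1$ once $\rho\pi_\tail\ge 1$.

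The resolution is not a case split but the observation that pseudo-linearity~\eqref{eq:pseudo-linear-general} is a \emph{two-sided} statement along the segment between $\ppi_\tail$ and $\ppi_\head$: applying it with the endpoints in the other order, i.e.\ with $\theta = 1-\pi_\tail$ measuring the movement from $\ppi_\tail$ toward $\ppi_\head$ (so that $\theta\ppi_\head+(1-\theta)\ppi_\tail=\ppi_i$), gives directly
\begin{equation*}
1-\xi_1 \;=\; \frac{u_i(\ppi_i,\ppi_{-i})-u_i(\ppi_\tail,\ppi_{-i})}{u_i(\ppi_\head,\ppi_{-i})-u_i(\ppi_\tail,\ppi_{-i})} \;\ge\; \frac{1-\pi_\tail}{\rho},
\end{equation*}
simultaneously with $\xi_1\le\rho\pi_\tail$ (this is~\eqref{eq:tail-needed-bounds} in the paper). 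Then the $(1-\pi_\tail)$ factors cancel against the denominator of your bound $\xi_2\ge\pi_i(a^{j^*})/[\rho(1-\pi_\tail)]$, yielding $\pi_i(a^{j^*})(1-\xi_1)\xi_2\ge\pi_i(a^{j^*})^2/\rho^2\ge(1-\alpha)^2/(\rho^2|\calA_i|^2)$ with no constraint on the size of $\rho\alpha$. Note also that your proposed fallback for the regime $\rho\pi_\tail\ge1$ does not work as stated: in that regime~\eqref{eq:pipitilde} is indeed vacuous, but the difficulty sits in~\eqref{eq:quasilowerbound}, which still requires controlling $1-\xi_1$, and a $\tildeppi_i$ ``supported primarily on $a^{j^*}$'' is not free to choose because it must still satisfy the exact decomposition $u_i(\ppi_i,\ppi_{-i})=\sum_j\tildepi_i(a^j)u_i(\ee_{a^j},\ppi_{-i})$.
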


Building on this lemma, we first give the formal proof of~\Cref{lem:brouwerapproxNE}. We will provide the proof for~\Cref{lem:quasilowerbound} after finishing the proof of~\Cref{lem:brouwerapproxNE}.

\begin{proof}[Proof of~\Cref{lem:brouwerapproxNE}]
The proof of this lemma is an extension to the proof of Lemma 3.8,~\cite{daskalakis2009complexity}, which is tailored to \emph{linear} structure in the utility function.
To extend this proof to more general pseudo-linear utilities, we critically uses the bounded-slope structure that was identified in \Cref{coro:quasi-mono-ssg}.
We fix a player $i$ and for convenience, let $k = |\calA_i|$ and denote the actions $a^1,\ldots,a^k$ (dropping the index $i$ for this proof).
Without loss of generality, we order the actions such that
\begin{align*}
u_i(\ee_{a^1},\ppi_{-i}) \geq  \ldots u_i(\ee_{a^\ell},\ppi_{-i}) \geq u_i(\ppi) \geq u_i(\ee_{a^{\ell+1}},\ppi_{-i}) \geq \ldots \geq u_i(\ee_{a^k},\ppi_{-i}).
\end{align*}
Note that, according to this ordering, the value under strategy evaluation $u_i(\ppi)$ lies between the evaluations for the pure actions $\ell$ and $\ell+1$.

For any $j \in [k]$, recall we defined $\Upsilon_i^{a^j_i} := \max(0,u_i(\ee_{a^j},\ppi_{-i}) - u_i(\ppi))$. 
We also recall $U$ as the maximal attainable utility.
By this ordering, it suffices to show that $\Upsilon_i^{1}(\ppi) \leq \epsilon$ for a suitable value of $\epsilon$ given the definition of NE and monotonicity of utilities as shown in~\Cref{coro:quasi-mono-ssg}.
We will upper bound the quantity $\sum_{j \in [k]} \Upsilon_i^j = \sum_{j\in[\ell]}\Upsilon_i^j$.
First, just as in the proof of Lemma 3.8,~\cite{daskalakis2009complexity}, observe that $\|f(\ppi) - \ppi\|_{\infty} \leq \epsilon'$ implies 
\begin{align}\label{eq:intermediatebrouwer}
\pi_{i}(a^{j}) \left(\sum_{j' \in [k]} \Upsilon_i^{j'} \right) \leq \Upsilon_i^j + \epsilon'', ~\text{ where }~
\epsilon'' \defeq \epsilon'\left(1 + k U\right).
\end{align}
We define as shorthand $\pi_{\tail} := \sum_{j' = \ell+1}^k \pi_{i}(a^{j'})$, and distinguish two cases.

\paragraph{Case 1: $\pi_{\tail} \geq \frac{\sqrt{\epsilon''}}{\rho U}$}: This case does not require any special structure on the utilities.
Here, we sum \eqref{eq:intermediatebrouwer} over $j = \ell + 1,\ldots,k$.
Noting (by the definition of $\ell$) that $\Upsilon_i^j = 0$ for $j\ge \ell+1$, yields
\begin{align*}
\pi_{\tail} \sum_{j' \in [k]} \Upsilon_i^{j'} \leq (k - \ell) \epsilon'' 
~~\implies~~ 
\sum_{j' \in [k]} \Upsilon_i^{j'} \leq \frac{(k-\ell)\epsilon''}{\pi_{\tail}} \leq k \sqrt{\epsilon''} \rho U\,.
\end{align*}
Since $\Upsilon_i^{j'}$ is non-negative for all $j' \in [k]$, we have $\Upsilon_i^1 \leq k \sqrt{\epsilon''} \rho U$, which completes the proof in this case.

\paragraph{Case 2: $\pi_{\tail} \leq \frac{\sqrt{\epsilon''}}{\rho U}$}: This is the more subtle case where we need to use special properties of the utility functions.
The original proof provided in~\cite{daskalakis2009complexity} multiplies both sides of Equation~\eqref{eq:intermediatebrouwer} by $\pi_{i}(a^j)$ and critically uses the \emph{linearity} of utilities, i.e. that $u_i(\ppi_i,\ppi_{-i}) = \sum_{a\in\calA_i} \pi_{i}(a) u_i(\ee_{a}, \ppi_{-i})$.
While the nonlinearity in utilities precludes this proof technique from working for us, we find an elegant fix.
We will instead consider a specific choice $\tildeppi_i\in\Delta^{\calA_i}$ such that $u_i(\ppi_i,\ppi_{-i}) = \sum_{a\in\calA_i} \tildepi_{i}(a) u_i(\ee_{a},\ppi_{-i})$.
We show in~\Cref{lem:quasilowerbound} that pseudo-linearity implies the existence of such $\tildeppi_i$, which also enjoys some nice structural proximity to the original $\ppi_i$. Given the $\tildeppi_i$ from applying~\Cref{lem:quasilowerbound} with $\alpha = 1/(2k)\ge \pi_\tail$, we multiply both sides of Equation~\eqref{eq:intermediatebrouwer} by $\tildepi_{i}(a^j)$ and sum over $j \in [k]$ to get
\begin{align*}
\left(\sum_{j \in [k]} \pi_{i}(a^j) \tildepi_{i}(a^j)\right) \left(\sum_{j' \in [k]} \Upsilon_i^{j'}\right) \leq \sum_{j \in [k]} \tildepi_{i}(a^j) \Upsilon_i^j + \epsilon''.
\end{align*}
Recalling the definition of $\tildeppi_i$, noting that $\Upsilon_i^j=0$ for $j\ge \ell+1$ and replacing $u_i(\ppi_i,\ppi_{-i})$ gives us
\begin{align*}
& \sum_{j \in [k]} \tildepi_{i}(a^j) (u_i(\ee_{a^j},\ppi_{-i}) - u_i(\ppi_i,\ppi_{-i}))= 0 \\
\implies & \sum_{j \in [k]} \tildepi_{i}(a^j) \Upsilon_i^j + \sum_{j = \ell+1}^k \tildepi_{i}(a^j) (u_i(\ee_{a^j},\ppi_{-i}) - u_i(\ppi_i,\ppi_{-i})) = 0,
\end{align*}
and substituting this equality gives 
\begin{align*}
\left(\sum_{j \in [k]} \pi_{i}(a^j) \tildepi_{i}(a^j)\right) \left(\sum_{j' \in [k]} \Upsilon_i^{j'}\right) \leq -\sum_{j = \ell+1}^k \tildepi_{i}(a^j) (u_i(\ee_{a^j},\ppi_{-i}) - u_i(\ppi_i,\ppi_{-i})) + \epsilon''.
\end{align*}
On the other hand, substituting Equation~\eqref{eq:quasilowerbound} gives us
\begin{align*}
\frac{1}{4\rho^2k^2} \left(\sum_{j' \in [k]} \Upsilon_i^{j'}\right) &\leq \sum_{j = \ell+1}^k \tildepi_{i}(a^j) (u_i(\ppi_i,\ppi_{-i})-u_i(\ee_{a^j},\ppi_{-i})) + \epsilon'' \\
&\leq \rho\pi_{\tail} U + \epsilon''
\leq 2\sqrt{\epsilon''}.
\end{align*}
where the second inequality uses Equation~\eqref{eq:pipitilde} along with the provided upper-bound on $\sum_{j = \ell+1}^k \pi_{i}(a^j)$, and the last inequality uses an upper-bound on $\pi_\tail$.
Putting everything together, we get
\begin{align*}
\Upsilon_i^{a^1} =  \Upsilon_i^{1} \leq 8\rho^2k^2\sqrt{\epsilon''},
\end{align*}
and so we have proved our desired approximation bound.
\end{proof}

In the end of this section, we provide the complete proof of Lemma~\ref{lem:quasilowerbound} used in the above argument.
\begin{proof}[Proof of~\Cref{lem:quasilowerbound}]
This proof builds on the pseudo-linearity~\eqref{eq:pseudo-linear-general}.
We fix the strategy $\ppi = (\ppi_i,\ppi_{-i})$ and specify the choice of $\tildeppi_i$ satisfying the required conditions as below. 
First, we specify the choice of $\tildeppi_i$ on the ``tail'' coordinates $j' = \ell+ 1,\ldots,k$.
Recall that we defined $\pi_\tail := \sum_{j = \ell+1}^k \pi_i(a^j)$.
We note that we can write 
\begin{align*}
\ppi_i & = \pi_{\tail}\cdot\ppi_\tail+\left(1-\pi_\tail\right)\cdot\ppi_{\head};\\
\text{where}~~\ppi_\tail & \defeq\begin{bmatrix}
\mathbf{0}_\ell & \frac{\pi_i(a^{\ell+1})}{\pi_\tail}	 & \frac{\pi_i(a^{\ell+2})}{\pi_\tail} & \cdots & \frac{\pi_i(a^{k})}{\pi_\tail}\end{bmatrix}, \\
\ppi_\head & \defeq\begin{bmatrix}
\frac{\pi_i(a^{1})}{1-\pi_\tail}	 & \frac{\pi_i(a^{2})}{1-\pi_\tail} & \cdots & \frac{\pi_i(a^{\ell})}{1-\pi_\tail} & \mathbf{0}_{k-\ell}
\end{bmatrix},
\end{align*}
where we use $\mathbf{0}_n$ to denote the $n$-dimensional vector of zeros.
Now, we note that by bounded slope in pseudo-linearity, there exists a $\tildepi_{\tail} \in [0,1]$ such that
\begin{align}\label{eq:pitailsplit}
u_i(\ppi_i,\ppi_{-i}) &= (1 - \tildepi_\tail) u_i(\ppi_\head,\ppi_{-i}) + \tildepi_\tail u_i(\ppi_\tail,\ppi_{-i}).
\end{align}
It is then easy to see that
\begin{align*}
\tildepi_\tail &= \frac{u_i(\ppi_i,\ppi_{-i})-u_i(\ppi_\head, \ppi_{-i})}{u_i(\ppi_\tail,\ppi_{-i})-u_i(\ppi_\head,\ppi_{-i})} \text{ and } \\
1 - \tildepi_\tail &= \frac{u_i(\ppi_i,\ppi_{-i})-u_i(\ppi_\tail, \ppi_{-i})}{u_i(\ppi_\head,\ppi_{-i})-u_i(\ppi_\tail,\ppi_{-i})}
\end{align*}
Next, we apply the pseudo-linear property~\eqref{eq:pseudo-linear-general} for the two policies $\ppi_{\tail}$ and $\ppi_\head$, and the choices $\theta = \pi_\tail$ or $1-\pi_\tail$ respectively.
Since the actions are in decreasing order of their utilities, we have $u_i(\ppi_\tail,\ppi_{-i})\neq u_i(\ppi_\head,\ppi_{-i})$.
We can then choose $\theta = 1 - \pi_\tail$ and $\theta = \pi_\tail$ respectively, and apply~\Cref{coro:quasi-mono-ssg} to get
\begin{equation}\label{eq:tail-needed-bounds}
\begin{aligned}
1 - \tildepi_\tail = \frac{u_i(\ppi_i,\ppi_{-i})-u_i(\ppi_\tail, \ppi_{-i})}{u_i(\ppi_\head,\ppi_{-i})-u_i(\ppi_\tail,\ppi_{-i})} & \in \left[ \frac{1}{\rho}\cdot (1 - \pi_\tail),  \rho\cdot(1-\pi_\tail)\right],~\text{here we choose}~\theta = 1-\pi_\tail,\\
\tildepi_\tail = \frac{u_i(\ppi_i,\ppi_{-i})-u_i(\ppi_\head, \ppi_{-i})}{u_i(\ppi_\tail,\ppi_{-i})-u_i(\ppi_\head,\ppi_{-i})} &\in \left[ \frac{1}{\rho}\cdot\pi_\tail, \rho\cdot\pi_\tail\right],~\text{here we choose}~\theta = \pi_\tail.
\end{aligned}
\end{equation}
Now, we will construct our strategy $\tildeppi$ satisfying the requisite conditions.
We start by specifying the tail indices.
For the tail indices, we again use pseudo-linearity and note there exists a probability vector $\tildeppi'_\tail$ such that
\begin{align}\label{eq:tail-split}
u_i(\ppi_\tail,\ppi_{-i}) = \sum_{j' = \ell+1}^k \tildepi'_\tail(a^{j'}) u_i(\ee_{a^{j'}},\ppi_{-i}).
\end{align}
We will show that \emph{any} such choice of tail indices will satisfy~\Cref{eq:pipitilde}.
In particular, combining this with~\Cref{eq:pitailsplit} gives us $\tildepi(a^{j'}) = \tildepi_\tail \tildepi'_\tail(a^{j'})$ for all $j' \geq \ell + 1$.
Consequently, applying bounds in~\eqref{eq:tail-needed-bounds} we obtain
\begin{align*}
\sum_{j' = \ell + 1}^k \tildepi(a^{j'}) = \sum_{j' = \ell + 1}^k \tildepi_\tail \tildepi'_\tail(a^{j'}) 
= \tildepi_\tail \leq \rho\cdot\pi_\tail.
\end{align*}
This is the desired~\Cref{eq:pipitilde}.

Next, we prove~\Cref{eq:quasilowerbound}.
By the condition $\pi_\tail \leq \alpha$, there must exist an index $j \in [\ell]$ such that $\pi_i(a^j) \geq \frac{1-\alpha}{k}$.
We now specify the coordinate of the strategy $\ppi$ on $a^j$, i.e. $\tildepi(a^j)$.
In particular we consider the policies 
\begin{align*}
\ppi_{\head}^{(-j)} &= \begin{bmatrix}\frac{\pi_i(a^{1})}{1-\pi_\tail - \pi_i(a^j)}  & \cdots & \frac{\pi_i(a^{j-1})}{1-\pi_\tail - \pi_i(a^j)} & 0 & \frac{\pi_i(a^{j+1})}{1-\pi_\tail - \pi_i(a^j)} & \cdots & \frac{\pi_i(a^{\ell})}{1-\pi_\tail - \pi_i(a^j)} & \mathbf{0}_{k-\ell} \end{bmatrix}\\
\ppi_{\head}^{(j)} &= \ee_{a^j}
\end{align*}
and note that $\ppi_\head$ can be written as a linear combination of the policies $\ppi_{\head}^{(-j)}$ and $\ppi_{\head}^{(j)}$ in the following manner:
\begin{align*}
\ppi_\head = \ppi_{\head}^{(-j)} + \frac{\pi_i(a^j)}{1-\pi_\tail} (\ppi_{\head}^{(j)} - \ppi_{\head}^{(-j)}).
\end{align*}
Similarly, by assumption~\Cref{eq:pseudo-linear-general}, there exists a $\theta \in \left[\frac{1}{\rho}\cdot\frac{\pi_i(a^j)}{1-\pi_\tail},\rho\cdot \frac{\pi_i(a^j)}{1-\pi_\tail}\right]$ such that
\begin{align}\label{eq:thetaequation}
u_i(\ppi_\head,\ppi_{-i}) &= (1-\theta) u_i(\ppi_{\head}^{(-j)},\ppi_{-i}) + \theta u_i(\ee_{a^j},\ppi_{-i}).
\end{align}
Combining this with~\Cref{eq:pitailsplit}, and~\eqref{eq:tail-split} we have
\begin{align*}
	u_i(\ppi_i) & = (1 - \tildepi_\tail) u_i(\ppi_\head,\ppi_{-i}) + \tildepi_\tail u_i(\ppi_\tail,\ppi_{-i})\\
	& = (1-\tildepi_\tail)(1-\theta)u_i(\ppi_{\head}^{(-j)},\ppi_{-i})+(1-\tildepi_\tail)\theta u_i(\ee_{a^j},\ppi_{-i}) + \sum_{j' = \ell + 1}^k \tildepi_\tail \tildepi'_\tail(a^{j'}) u_i(\ee_{a^{j'}},\ppi_{-i})\\
	& \stackrel{(\star)}{=} \sum_{j' \neq j, j \in [\ell]} (1-\tildepi_\tail)(1-\theta)\tildepi'_\head(a^{j'}) u_i(\ee_{a^{j'}},\ppi_{-i})+(1-\tildepi_\tail)\theta u_i(\ee_{a^j},\ppi_{-i})\\
	& \hspace{3em} + \sum_{j' = \ell + 1}^k \tildepi_\tail \tildepi'_\tail(a^{j'}) u_i(\ee_{a^{j'}},\ppi_{-i})\\
	& = \sum_{j' \neq j, j \in [\ell]} \tildepi(a^{j'}) u_i(\ee_{a^{j'}},\ppi_{-i})+\tildepi(a^{j'})  u_i(\ee_{a^j},\ppi_{-i})+\sum_{j' = \ell + 1}^k \tildepi(a^{j'})u_i(\ee_{a^{j'}},\ppi_{-i}),\\
\end{align*}
 where we recall that $\tildepi(a^{j'}) = \tildepi_\tail \tildepi'_\tail(a^{j'})~\text{for}~ \ell+1\le j'\le k$ as specified above, and further we let $\tildepi(a^{j'}) = (1-\tildepi_\tail)(1-\theta)\tildepi'_\head(a^{j'})$ for $j'\in[\ell], j'\neq j$.
We also note that $\tildepi(a^{j}) = (1-\tildepi_\tail)\theta$.
To verify equality $(\star)$, we note there exists a probability vector $ \tildeppi'_\head$ on $a^{j'}$ for all $j'\in[l], j'\neq j$ such that $u_i(\tildeppi_{\head}^{(-j)},\ppi_{-i}) = \sum_{j' \neq j, j \in [\ell]} \tildepi'_\head(a^{j'}) u_i(\ee_{a^{j'}},\ppi_{-i})$.
These steps follow again due to pseudo-linearity and a similar recursive argument on $j'\in[\ell], j'\neq j$.

Finally, the bounds that we have established on $\theta$ and $(1 - \pi_\tail)$ give us
\begin{align*}
\sum_{j' \in [k]} \pi_{i}(a^{j'}) \tildepi_{i}(a^{j'}) \geq (1 - \tildepi_\tail) \theta \pi_i (a^j)\ge \frac{(1 - \tildepi_\tail)}{(1 - \pi_\tail)} \frac{1}{\rho}\cdot \pi_i (a^j)^2.
\end{align*}
Noting that $1 - \tildepi_\tail \geq \frac{1}{\rho} (1 - \pi_\tail)$ (Equation~\eqref{eq:tail-needed-bounds}), along with the fact that $\pi_i(a^j)\ge \frac{1-\alpha}{k}$, gives us
\begin{align*}
\sum_{j' \in [k]} \pi_{i}(a^{j'}) \tildepi_{i}(a^{j'}) \geq \frac{(1-\alpha)^2}{\rho^2k^2}.
\end{align*}
This completes the proof of Equation~\eqref{eq:quasilowerbound} and the proof of the lemma.
\end{proof}

We now combine these results to prove~\Cref{thm:ppad-membership-general} and with it, providing a way to formally prove~\Cref{thm:ppad-membership}.
\begin{proof}[Proof of~\Cref{thm:ppad-membership-general}]
Combining~both~\Cref{lem:lipschitzbrouwer} and~\Cref{lem:brouwerapproxNE} formally shows the any games satisfying assumptions in~\Cref{thm:ppad-membership-general} will satisfy both \textbf{property 1} and \textbf{property 2} and thus we can apply~\Cref{lem:brouwerppad} to conclude the $\ppad$-membership of the stated games with pseudo-linear utilities.
\end{proof}

In particular, we could instantiate this general membership result for $\onestatessg$. This gives our first proof of the $\ppad$-membership of $\ssg$s.

\begin{proof}[Proof of~\Cref{thm:ppad-membership} using pseudo-linear utilities]
	Without loss of generality, it suffices to show membership of $\onestatessg$ instances with utility functions defined as $u_i(\ppi) = \outility(\ppi) = V_i^{\ppi}(s_i)$ and the Brouwer function defined following~\eqref{eq:brouwerfunction}.
	
	It is immediate to show all utilities are bounded by $|u_i(\ppi)|\le \frac{1}{1-\gamma}$, and the number of actions is bounded by $\Atot$.
	
	Further, we claim the utilities are $(\max_{i\in[n]}|\calA_i|(1-\gamma)^{-2})$-Lipschitz. To see this, by definition we have
\begin{align}
u_i(\ppi) - u_i(\ppi') &= 
\ee_{s_i}^\top \left[ \left(\II - \gamma \PP^{\ppi} \right)^{-1} \r^{\ppi} -  \left(\II - \gamma \PP^{\ppi'}\right)^{-1} \r^{\ppi'}  \right] \nonumber\\
&=
\ee_{s_i}^\top \left[ \left(\II - \gamma \PP^{\ppi} \right)^{-1} -  \left(\II - \gamma \PP^{\ppi'}\right)^{-1}  \right]  \r^{\ppi} 
+
\ee_{s_i}^\top \left(\II - \gamma \PP^{\ppi'}\right)^{-1} (\r^{\ppi} - \r^{\ppi'} ).\label{eq:utility-lip}
\end{align}
We first note that
\begin{align}
\left(\II - \gamma \PP^{\ppi} \right)^{-1} -  \left(\II - \gamma \PP^{\ppi'}\right)^{-1}  
&=
\left(\II - \gamma \PP^{\ppi} \right)^{-1} \left[  \left(\II - \gamma \PP^{\ppi'} \right)- \left(\II - \gamma \PP^{\ppi} \right) \right]   \left(\II - \gamma \PP^{\ppi'}\right)^{-1}  \nonumber\\
&=
\gamma 
\left(\II - \gamma \PP^{\ppi} \right)^{-1} \left[ \PP^{\ppi} -  \PP^{\ppi'} \right]   \left(\II - \gamma \PP^{\ppi'}\right)^{-1}.\label{eq:utility-lip-1}
\end{align}
Since we have $\norm{\ee_{s_i}}_1 = 1$ and $\norm{\r^{\ppi}}_\infty \leq 1$ as well as
\begin{equation}\label{eq:utility-lip-2}
\begin{aligned}
\norm{\r^{\ppi} - \r^{\ppi'}}_\infty & \leq \max_{i\in[n]}\sum_{a\in\calA_{i}}|\pi_{i}(a)-\pi'_{i}(a)|\cdot|r_{i,s_i,a}|\le \max_{i\in[n]}|\calA_{i}|\cdot \delta,\\
\norm{ \PP^{\ppi'} -  \PP^{\ppi}}_\infty & \le \max_{i\in[n],}\sum_{a\in\calA_{i}}|\pi_{i}(a)-\pi'_{i}(a)|\cdot\norm{\pp_{s_i,a}}_1 \leq \max_{i\in[n]}|\calA_{i}|\cdot\delta,\\
~~\text{and}~~\norm{\left(\II - \gamma \PP^{\ppi} \right)^{-1}}_\infty & = \norm{\sum_{t \geq 0} \gamma^t [\PP^{\ppi} ]^t }_\infty \leq \frac{1}{1 - \gamma},
\end{aligned}
\end{equation}
plugging~\eqref{eq:utility-lip-1}~and~\eqref{eq:utility-lip-2} back in~\eqref{eq:utility-lip} we obtain that 
\begin{align*}
	\left| u_i(\ppi) - u_i(\ppi') \right| & 
	\leq \max_{i\in[n]}|\calA_{i}|\cdot\left(\frac{\gamma \delta}{(1 - \gamma)^2} + \frac{\delta}{1 - \gamma}\right)
	= \max_{i\in[n]}|\calA_{i}|\cdot\frac{\delta}{(1 - \gamma)^2}\,.
\end{align*}

Finally, we have the utility functions of $\onestatessg$ $V_i^{\ppi}(s_i)$ are $\rho$-pseudo-linear with $\rho = 1/(1-\gamma)$, following~\Cref{coro:quasi-mono-ssg}.

 Thus, applying~\Cref{thm:ppad-membership-general} with $U =\rho = 1/(1-\gamma)$, $A = \Atot$, $L = A/(1-\gamma)^2$, which are all polynomial in terms of $\Atot$ by assumption, we thus have computing $\eps$-approximate NE in $\onestatessg$ is in $\ppad$, and also computing $\eps$-approximate NE in $\ssg$ is in $\ppad$ due to the reduction in~\Cref{ssec:redx-m2o}.
\end{proof}

\subsection{Brouwer function using one-step Bellman equations}\label{ssec:ppad-membership-alt}

In this section, we provide an alternative argument that $\ssg$ is in $\ppad$. Rather than working with $\onestatessg$s as in \Cref{ssec:ppad-membership} and utilizing the general pseudo-linear utility structure, here we work directly with $\ssg$s. We consider the following Brouwer function that uses Bellman errors:
\begin{align}\label{eq:brouwer-alt}
y_{i,s,a}(\ppi) & = [f(\ppi)]_{i,s}(a) = \frac{\pi_{i,s}(a) + \Upsilon^{\ppi}_{i,s}(a)}{1+\sum_{a' \in \calA_{i,s}} \Upsilon^{\ppi}_{i,s}(a')},\\
\text{where we define}~\Psi_{i,s}^{\ppi}(a) & :=  [r_{i,s,a} + \gamma \pp_{s,a} \VV_i^{\ppi}]-V_i^{\ppi}(s)~~\text{and}~~\Upsilon_{i,s}^{\ppi}(a) := \max(\Psi_{i,s}^{\ppi}(a),0).\nonumber
\end{align}
The intuition for this choice comes from the Bellman-optimalty conditions for NE defined in Equation~\eqref{eq:NE-bellman}.
We now show \textbf{Property 1} and \textbf{Property 2} below.

\begin{lemma}[\textbf{Property 1}]
\label{lem:lipschitzbrouwer-alt}
For any strategies $\ppi,\ppi'$ such that $\|\ppi - \ppi'\|_{\infty} \leq \delta$ we have 
\begin{align*}
	\|f(\ppi) - f(\ppi')\|_{\infty} &\leq  \left(1+\frac{4}{(1-\gamma)^2}\max_{i,s}|\calA_{i,s}|^2\right)\delta.
\end{align*}
Consequently, the Brouwer function $f(\cdot)$ is $\poly(\Atot, \frac{1}{1 - \gamma})$-Lipschitz with respect to the $\ell_{\infty}$-norm.
\end{lemma}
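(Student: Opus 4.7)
My plan is to adapt the proof of~\Cref{lem:lipschitzbrouwer} to the Bellman-error variant of the Brouwer function. The same algebraic decomposition used to derive~\eqref{eq:error-membership}---writing the difference of two normalized ratios in terms of numerator and denominator differences and using $|\max(x,0)-\max(y,0)|\le |x-y|$---yields
\begin{equation*}
|y_{i,s,a}(\ppi)-y_{i,s,a}(\ppi')|\;\le\;\delta \;+\; |\Psi^{\ppi}_{i,s}(a)-\Psi^{\ppi'}_{i,s}(a)| \;+\;(|\calA_{i,s}|+1)\cdot \max_{a'\in\calA_{i,s}}|\Psi^{\ppi}_{i,s}(a')-\Psi^{\ppi'}_{i,s}(a')|.
\end{equation*}
So the whole task reduces to producing a uniform Lipschitz bound on the Bellman error $\Psi_{i,s}^{\ppi}(a)$ as a function of $\ppi$.

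The central substep is bounding $|\Psi^{\ppi}_{i,s}(a)-\Psi^{\ppi'}_{i,s}(a)|$ when $\|\ppi-\ppi'\|_\infty\le\delta$. Since $r_{i,s,a}$ does not depend on $\ppi$ and $\|\pp_{s,a}\|_1=1$, the definition of $\Psi$ immediately gives $|\Psi^{\ppi}_{i,s}(a)-\Psi^{\ppi'}_{i,s}(a)|\le (1+\gamma)\|\VV_i^{\ppi}-\VV_i^{\ppi'}\|_\infty$. To bound the value-function difference I will reuse the calculation already carried out in the proof of~\Cref{thm:ppad-membership} after~\eqref{eq:utility-lip}: write $\VV_i^{\ppi}=(\II-\gamma\PP^{\ppi})^{-1}\r_i^{\ppi}$, apply the resolvent identity $(\II-\gamma\PP^{\ppi})^{-1}-(\II-\gamma\PP^{\ppi'})^{-1}=\gamma(\II-\gamma\PP^{\ppi})^{-1}(\PP^{\ppi}-\PP^{\ppi'})(\II-\gamma\PP^{\ppi'})^{-1}$, and combine with the standard row-norm bound $\|(\II-\gamma\PP^{\ppi})^{-1}\|_\infty\le 1/(1-\gamma)$ and $\|\r_i^{\ppi}\|_\infty\le 1$ from~\eqref{eq:utility-lip-2}.

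The one substantive point absent in the $\onestatessg$ specialization of~\Cref{ssec:ppad-membership} is bounding $\|\r_i^{\ppi}-\r_i^{\ppi'}\|_\infty$ and $\|\PP^{\ppi}-\PP^{\ppi'}\|_\infty$ when multiple players act simultaneously at a single state, since the expressions in~\eqref{def:prelim-pi-r} are expectations under the product measure $\prod_{j\in\calI_s}\pi_{j,s}$ rather than under a single marginal. For this I will invoke the telescoping bound $\sum_{\aa_s}\bigl|\prod_{j}\pi_{j,s}(a_{j,s})-\prod_{j}\pi'_{j,s}(a_{j,s})\bigr|\le \sum_{j\in\calI_s}\|\pi_{j,s}-\pi'_{j,s}\|_1$ for the total-variation distance between product distributions, together with $\|\pi_{j,s}-\pi'_{j,s}\|_1\le |\calA_{j,s}|\cdot\delta$. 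This yields Lipschitz constants of order $\max_{i,s}|\calA_{i,s}|\cdot\delta$ for both $\r$- and $\PP$-differences, and plugging these into the resolvent identity produces a bound of order $\max_{i,s}|\calA_{i,s}|\cdot\delta/(1-\gamma)^2$ on $\|\VV_i^{\ppi}-\VV_i^{\ppi'}\|_\infty$, hence on $|\Psi^{\ppi}_{i,s}(a)-\Psi^{\ppi'}_{i,s}(a)|$.

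Feeding this Lipschitz estimate back into the earlier decomposition---one factor of $\max_{i,s}|\calA_{i,s}|$ coming from the Bellman-error Lipschitz constant and a second from the $(|\calA_{i,s}|+1)$ multiplier on the denominator term---produces the stated bound $(1+4\max_{i,s}|\calA_{i,s}|^2/(1-\gamma)^2)\delta$ after absorbing constants. I expect no conceptual obstacle; the main work is pure bookkeeping to ensure all polynomial factors match, since the value-function Lipschitz machinery and the product-measure telescoping are both standard tools already appearing (or closely analogous to what appears) in~\Cref{ssec:ppad-membership}.
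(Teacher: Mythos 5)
Your proposal follows essentially the same route as the paper's proof: the same Lemma 3.6-style decomposition of the normalized ratio, the same reduction of the Bellman-error differences to $(1+\gamma)\|\VV_i^{\ppi}-\VV_i^{\ppi'}\|_\infty$, and the same resolvent-identity bound on the value-function difference, yielding matching constants. The paper simply defers the value-function Lipschitz step to a cited auxiliary lemma, whereas you spell it out (including the product-measure telescoping needed when $|\calI_s|>1$, a detail the paper glosses over); this is a correct and complete instantiation of the same argument.
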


\begin{proof}
Consider $\ppi,\ppi'$ such that $\|\ppi - \ppi'\|_{\infty} \leq \delta$.
Applying Lemma 3.6,~\cite{daskalakis2009complexity} we have for any $i\in[n]$, $s\in\calS_i$, $a\in\calA_{i,s}$, 
\begin{equation}\label{eq:error-membership-alt}
\begin{aligned}
    \left|[f(\ppi)]_{i,s}(a) - [f(\ppi')]_{i,s}(a)\right| & \leq |\pi_{i,s}(a) - \pi'_{i,s}(a)| + \left|\Upsilon_{i,s}^{\ppi}(a) - \Upsilon_{i,s}^{\ppi'}(a)\right| + \left|\sum_{a \in \calA_{i,s}} (\Upsilon_{i,s}^{\ppi}(a) - \Upsilon_{i,s}^{\ppi'}(a))\right| \\
    & \leq \delta + \left|V^{\ppi}_i(s) - V_i^{\ppi'}(s)\right| + \left|\gamma \pp_{s,a}\VV_i^{\ppi} - \gamma \pp_{s,a} \VV_i^{\ppi'}\right| + \\
&\quad\quad\sum_{a' \in \calA_{i,s}}\left( \left|V^{\ppi}_i(s) - V_i^{\ppi'}(s)\right| +\left| \gamma \pp_{s,a'}\VV_i^{\ppi} - \gamma \pp_{s,a'} \VV_i^{\ppi'}\right|\right) \\
&\leq \delta + 2|\calA_{i,s}|(1+\gamma) \norm{\VV_i^{\ppi} - V_i^{\ppi'}(s)}_\infty,
\end{aligned}
\end{equation}
where we again use assumption on $\ppi$, $\ppi'$ and $|\max\{a - b, 0\} - \max\{c - d, 0\}| \leq |a - c| + |b - d|$ for the last two inequalities. 
Finally we apply \Cref{lem:utility-lipschitz} with $\norm{\ppi-\ppi'}_\infty\le \delta$ for any $i\in[n]$, $s\in\calS_i$, $a\in\calA_i$ to  obtain the  final bound.
\end{proof}

\begin{lemma}[\textbf{Property 2}]\label{lem:brouwerapprox-alt}
Given $\epsilon'\le \frac{1}{2\max_{i,s}|\calA_{i,s}|}$, and any joint strategy $\ppi$ for which $\|f(\ppi) - \ppi\|_{\infty} \leq \epsilon'$, then it is also an $\epsilon$-approximate Nash equilibrium, where $\epsilon := \frac{8\max_{i,s}|\calA_{i,s}|^2}{(1 - \gamma)^2}\sqrt{\epsilon'\max_{i,s}|\calA_{i,s}|}$.
\end{lemma}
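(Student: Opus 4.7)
The plan is to mimic the classical Nash--DGP argument~\cite{daskalakis2009complexity} for linear utilities, exploiting that although the value function $V_i^{\ppi}$ is a \emph{nonlinear} object in $\ppi$, the one-step Bellman residual $\Psi_{i,s}^{\ppi}(a)=r_{i,s,a}+\gamma\pp_{s,a}^{\top}\VV_i^{\ppi}-V_i^{\ppi}(s)$ is \emph{affine} in $\pi_{i,s}$ once we treat $\VV_i^{\ppi}$ as frozen.  Once each state's Bellman residual is small, we invoke the Bellman-error sufficient condition for NE (Lemma~\ref{lem:bellman-suff} in \Cref{ssec:bellman}) to convert back to an NE approximation guarantee.

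First I would translate the fixed-point error into the standard DGP per-coordinate inequality.  From the explicit form of $f_\osBsf$ in~\eqref{eq:brouwer-alt},
\[
[f(\ppi)]_{i,s}(a)-\pi_{i,s}(a)=\frac{\Upsilon_{i,s}^{\ppi}(a)-\pi_{i,s}(a)\sum_{a'}\Upsilon_{i,s}^{\ppi}(a')}{1+\sum_{a'}\Upsilon_{i,s}^{\ppi}(a')},
\]
and since $|\Psi_{i,s}^{\ppi}|\le 2/(1-\gamma)$ (because $\|\VV_i^{\ppi}\|_{\infty}\le 1/(1-\gamma)$ and $|r_{i,s,a}|\le 1$), the hypothesis $\|f(\ppi)-\ppi\|_\infty\le\eps'$ yields, for every $(i,s,a)$,
\[
\pi_{i,s}(a)\sum_{a'\in\calA_{i,s}}\Upsilon_{i,s}^{\ppi}(a')\ \le\ \Upsilon_{i,s}^{\ppi}(a)+\eps'',\qquad \eps''\defeq \eps'\Bigl(1+\tfrac{2|\calA_{i,s}|}{1-\gamma}\Bigr).
\]

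The key structural observation I would invoke next is Bellman consistency of $V_i^{\ppi}$: by definition $V_i^{\ppi}(s)=\sum_a\pi_{i,s}(a)[r_{i,s,a}+\gamma\pp_{s,a}^{\top}\VV_i^{\ppi}]$, so $\sum_a\pi_{i,s}(a)\Psi_{i,s}^{\ppi}(a)=0$ identically in $\ppi$.  Partitioning $\calA_{i,s}$ into $S^{+}=\{a:\Psi_{i,s}^{\ppi}(a)>0\}$ and $S^{-}=\{a:\Psi_{i,s}^{\ppi}(a)\le 0\}$, this identity yields
\[
\sum_{a\in S^{+}}\pi_{i,s}(a)\Upsilon_{i,s}^{\ppi}(a)\ =\ \sum_{a\in S^{-}}\pi_{i,s}(a)\bigl|\Psi_{i,s}^{\ppi}(a)\bigr|\ \le\ \tfrac{2}{1-\gamma}\cdot\pi_\tail,
\]
where $\pi_\tail\defeq\sum_{a\in S^{-}}\pi_{i,s}(a)$.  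This is the precise analog of the linearity step that drives the DGP proof; it is cleaner than the pseudo-linear argument in \Cref{ssec:ppad-membership} because here we trade \emph{one-step} linearity (which is exact) against the need for the Bellman-to-NE conversion at the end.

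With these two ingredients I would run the two-case DGP balancing argument on each $(i,s)$, letting $k=|\calA_{i,s}|$ and choosing a threshold $\alpha$ to split on $\pi_\tail$.  When $\pi_\tail\ge\alpha$, summing the coordinate inequality over $a\in S^{-}$ (where $\Upsilon=0$) yields $\sum_{a'}\Upsilon_{i,s}^{\ppi}(a')\le k\eps''/\alpha$.  When $\pi_\tail<\alpha$, multiplying the coordinate inequality by $\pi_{i,s}(a)$, summing, using Cauchy--Schwarz to lower-bound $\sum_a\pi_{i,s}(a)^2\ge(1-\alpha)^2/k$, and combining with the bound on $\sum_{a\in S^{+}}\pi_{i,s}(a)\Upsilon_{i,s}^{\ppi}(a)$ displayed above gives $\sum_{a'}\Upsilon_{i,s}^{\ppi}(a')\le \frac{k}{(1-\alpha)^2}\bigl(\tfrac{2\alpha}{1-\gamma}+\eps''\bigr)$.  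Choosing $\alpha\asymp\sqrt{k\eps'}$ balances the two cases and produces a bound of the form $\max_{a}\Psi_{i,s}^{\ppi}(a)\le\max_a\Upsilon_{i,s}^{\ppi}(a)\le O\!\bigl(\tfrac{k^{3/2}}{1-\gamma}\sqrt{\eps'}\bigr)$ uniformly over $(i,s)$.

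Finally I would feed this uniform Bellman-error bound into Lemma~\ref{lem:bellman-suff} of \Cref{ssec:bellman}, which converts a pointwise Bellman residual of size $\eps_1$ into an $O(\eps_1/(1-\gamma))$-approximate NE, matching the claimed $\eps=\frac{8\max_{i,s}|\calA_{i,s}|^2}{(1-\gamma)^2}\sqrt{\eps'\max_{i,s}|\calA_{i,s}|}$ after tracking constants.  I expect the main obstacles to be two bookkeeping issues rather than conceptual ones: (i) verifying that the small-$\eps'$ precondition $\eps'\le\tfrac{1}{2\max_{i,s}|\calA_{i,s}|}$ is enough to guarantee $\alpha\in[0,1]$ and $\eps''\le O(k\eps'/(1-\gamma))$ (so that the DGP balance is meaningful), and (ii) threading the constants through the Bellman-to-NE conversion in Lemma~\ref{lem:bellman-suff} to recover exactly the advertised $k^2\sqrt{k\eps'}/(1-\gamma)^2$ dependence; everything else is a direct instantiation of the standard Nash fixed-point argument enabled by the one-step linearity of $\Psi_{i,s}^{\ppi}$ in $\pi_{i,s}$.
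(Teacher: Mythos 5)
Your proposal is correct and follows essentially the same route as the paper's proof: the same per-coordinate inequality extracted from the fixed-point condition, the same exploitation of the exact one-step linearity $\sum_a \pi_{i,s}(a)\Psi_{i,s}^{\ppi}(a)=0$, the same two-case split on the tail mass $\pi_\tail$ with a $\sqrt{\eps''}$ balancing threshold, and the same final conversion through the sufficient Bellman condition of Lemma~\ref{lem:bellman-suff}. The only differences are cosmetic bookkeeping (your lower bound $\sum_a\pi_{i,s}(a)^2\ge(1-\alpha)^2/k$ versus the paper's $1/k^2$, and the exact constant in $U_{\max}$), which do not affect the argument.
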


\begin{proof}
The proof of this lemma follows similarly as the proof of Lemma 3.8,~\cite{daskalakis2009complexity}, using the linearity of the Bellman operator for strategy evaluation.
Fix a player $i$ and state $s \in \calS_i$, and denote $k := |\calA_{i,s}|$ as shorthand.
Then, without loss of generality, we order the actions such that
\begin{align*}
\Psi^{\ppi}_{i,s}(a^1) \geq \Psi^{\ppi}_{i,s}(a^2) \geq \ldots \geq \Psi^{\ppi}_{i,s}(a^\ell) \geq 0 \geq \Psi^{\ppi}_{i,s}(a^{\ell+1}) \geq \ldots \ge \Psi^{\ppi}_{i,s}(a^k).
\end{align*}
for some index $1 \leq \ell \leq k$.

First, just as in the proof of Lemma 3.8,~\cite{daskalakis2009complexity}, observe that the inequality $\|f(\ppi) - \ppi\|_{\infty} \leq \epsilon'$ implies that 
\begin{align}\label{eq:intermediatebrouwer-alt}
\pi_{i}(a^{j}) \left(\sum_{j' \in [k]} \Upsilon_{i,s}^{\ppi}(a^{j'}) \right) &\leq \Upsilon_{i,s}^{\ppi}(a^{j}) + \epsilon'',\\
 \text{ where }\epsilon'' &:= \epsilon'\left(1 + k U_{\max}\right), ~U_{\max} = \frac{3M}{1-\gamma} \nonumber.
\end{align}
We define as shorthand $\pi_{\tail} := \sum_{j' = \ell+1}^k \pi_{i}(a^{j'})$, and distinguish two cases.
\paragraph{Case 1: $\pi_{\tail} > \sqrt{\frac{\epsilon''}{U_{\max}}}$:} This case does not require any special structure on the utilities.
Here, we sum Equation~\eqref{eq:intermediatebrouwer-alt} over $j = \ell + 1,\ldots,k$ and note $\Upsilon_{i,s}^{\ppi}(a^{j}) = 0$ for $j\ge \ell+1$ by assumption to get
\begin{align*}
\pi_{\tail} \sum_{j' \in [k]} \Upsilon_{i,s}^{\ppi}(a^{j'}) &\leq (k - \ell) \epsilon'' \\
\implies \sum_{j' \in [k]} \Upsilon_{i,s}^{\ppi}(a^{j'}) &\leq \frac{(k-\ell)\epsilon''}{\pi_{\tail}} \leq k \sqrt{\epsilon'' U_{\max}}.
\end{align*}

Since $\Upsilon_{i,s}^{\ppi}(a^{j'})$ is non-negative for all $j' \in [k]$, we have $\Upsilon_{i,s}^{\ppi}(a^{1}) \leq k \sqrt{\epsilon''} U_{\max}$, which completes the proof in this case.
\paragraph{Case 2: $\pi_\tail \leq \sqrt{\frac{\epsilon''}{U_{\max}}}$:} Here we use the linearity of the Bellman errors $\Psi_{i,s}^{\ppi}$ for all $j \in [k]$.
In other words, we have $\sum_{j \in [k]} \pi_{i,s}(a^j) \Psi_{i,s}^{\ppi}(a^j) = 0$ by the definition of policy evaluation of Bellman operator.
First, multiplying both sides of Equation~\eqref{eq:intermediatebrouwer-alt} by $\pi_{i,s}(a^j)$ and summing over all $j \in [k]$, we get
\begin{align*}
\left(\sum_{j \in [k]} \pi_{i,s}(a^j)^2\right) \left(\sum_{j' \in [k]} \Upsilon^{\ppi}_{i,s}(a^j)\right) \leq \sum_{j \in [k]} \pi_{i,s}(a^j) \Upsilon^{\ppi}_{i,s}(a^j) + \epsilon''.
\end{align*}
Now, on one hand, we have $\sum_{j \in [k]} \pi_{i,s}(a^j)^2 \geq \frac{1}{k^2}$ since $\ppi_{i,s}\in\Delta^{\calA_{i,s}}$ is a probability distribution.
On the other hand, we have
\begin{align*}
\sum_{j \in [k]} \pi_{i,s}(a_j) \Psi^{\ppi}_{i,s}(a^j) &= 0 \\
\implies \sum_{j \in [k]} \pi_{i,s}(a^j) \Upsilon^{\ppi}_{i,s}(a^j) + \sum_{j = \ell+1}^k \pi_{i,s}(a^j) \Psi^{\ppi}_{i,s}(a^j)  & = 0,
\end{align*}
and putting these together then gives us
\begin{align*}
\frac{1}{k^2} \sum_{j' \in [k]} \Upsilon^{\ppi}_{i,s}(a^j) &\leq \sum_{j = \ell+1}^k \pi_{i,s}(a^j) | \Psi^{\ppi}_{i,s}(a^j)| + \epsilon'' \\
&\leq \pi_{\tail} U_{\max} + \epsilon'' \\
&\leq 2\sqrt{U_{\max} \epsilon''},
\end{align*}
which proves our required approximation bound.
Combining this with the sufficient conditions for Bellman NE(\Cref{ssec:bellman},~\Cref{eq:NE-bellman-suff}) completes the proof.
\end{proof}

This alternative proof of membership makes use of an alternative nonlinear objective (other than the direct values as utility function) in the construction of the fixed-point function $f$~\eqref{eq:brouwer-alt}. 
Certain key steps in the proof preserve linearity in the policy $\ppi_i$ keeping other policies $\ppi_{-i}$ fixed, which yields a comparatively straightforward proof of membership.
However, in the proof argument we still crucially utilize the structure of the $\ssg$s to go from the optimality of $\max_{a^j\in\calA_{i,s}}\Upsilon_{i,s}(a^j)\le \epsilon$ to the approximate Nash equilibrium for player $i$ of the given structure.
In particular we use the sufficient Bellman conditions of approximate NE --- these are discussed in more detail in~\Cref{lem:bellman-suff} of~\Cref{ssec:bellman}.

\section{Complexity results for $\tbsg$s}\label{sec:tbsg}

In this section, we switch our focus to $\tbsg$'s. $\tbsg$'s are a special case of $\ssg$s which remove the obvious indicator of hardness --- which is simultaneity of the game played at each state.
The extra stucture of $\tbsg$'s  allows the design of polynomial-time algorithms for $\tbsg$ in certain simplified settings in contrast to $\ssg$ (for which the corresponding problems remain $\ppad$-hard).
Specifically, in~\Cref{sec:nonstat} we provide a polynomial-time algorithm for computing non-stationary NEs for $\tbsg$s and in~\Cref{ssec:lpalg}, we provide an algorithm which computes approximate NE of $\tbsg$s in polynomial time whenever the number of states is held to a constant. However, ultimately, we show in~\Cref{ssec:ppad-hard} that computing approximate NE of $\tbsg$ is $\ppad$-hard.

\subsection{Non-stationary NE computation for $\tbsg$}\label{sec:nonstat}

The computational complexity of $\ssg$s and $\tbsg$s are fundamentally different when we consider non-stationary strategies, i.e.\ a stationary strategy at every time step when $\gamma=0$.
In particular, $\tbsg$s reduce to multi-agent MDPs with horizon length $H$, where the one-step optimal strategy can be approximated efficiently in polynomial time (assuming polynomially-bounded effective horizon $1/(1-\gamma)$. The latter actually implies we can approximate the non-stationary NE efficiently for $\tbsg$s. Below we formalize this statement for $\tbsg$s to highlight its difference with $\ssg$s.

\begin{definition}\label{def:nonstationary-NE}
	Given a $\tbsg$ instance $\mathcal{G} = (n,\calS=\cup_{i\in[n]}\calS_i,\mathcal{A},\pp,\r,\gamma)$, we consider time-dependent non-stationary policies in form $\ppi=(\ppi_{i}^h)_{i\in[n],h\ge 0}$ where $\ppi_{i,s}^h\in\Delta^{\calA_{i,s}}$. The utility function for each player is $v_i^{\ppi,\qq}$ under the fixed initial distribution $\qq = \frac{1}{|\calS|}\ee_{\calS}$. We say a non-stationary  strategy $\ppi$ is an $\eps$-approximate NE if and only if it holds for all player $i\in[n]$ that 
\begin{align*}
v_i^{\ppi,\qq}\ge v_i^{(\ppi_i',\ppi_{-i}),\qq} -\eps,
~\text{for any}~ [\ppi']_{i,s}^h\in \Delta^{\calA_{i,s}},~s\in\calS_i,~h\ge 1.
\end{align*}

\end{definition}

We provide the model-based backward induction method on computing an approximate NE in~\Cref{alg:nonstationary-BI}.
\begin{algorithm2e}[h]
	\caption{Backward induction for finding approximate non-stationary NE of $\tbsg$}
	\label{alg:nonstationary-BI}
	\DontPrintSemicolon
	\codeInput TBSG instance $\calG  = (n, \calS,\calA,\pp,\r,\gamma)$, accuracy $\epsilon$\;
	Set horizon length $H = \lceil \frac{1}{1-\gamma}\log((1-\gamma)^{-1}\epsilon^{-1})\rceil\in\N_{+}$\;
	Initialize $V_i^{(0)}(s)=0$, for each $i\in[n]$, $s\in\calS$\;
	\For{$h=1$ {\bfseries{\textup{to}}} $H$}
	{
	$Q^{(h)}_i(s, a) \gets r_i(s,a)+\gamma \sum_{s'\in\calS}p_{s,a}(s')V^{(h-1)}_{i}(s')$
	for all $i \in [n]$, $s \in \calS_i$, and $a \in \calA_s$ \label{line:update-Q}\;
	 $a^{h}_s \in 
	\argmax_{a} Q^{(h)}_i(s,a)$
	for all $i \in [n]$ and $s \in \calS_i$ \tcp*{best-response action (break ties arbitrarily)}\label{line:br-a}
	$\ppi_{i}^{h}(s) \gets \ee_{a^{h}_s}$
	for all $i \in [n]$ and $s \in \calS_i$ \tcp*{best-response policy}
	\label{line:br-pi}
	$V^{(h)}_i{(s)} \gets Q_i^{(h)}(s,a^{h}_s)$ for each $i\in[n]$, $s\in\calS$\label{line:update-V}
	}
	\codeReturn $\bar{\ppi}$ such that $\bar{\ppi}^{h} = \ppi^{H-h+1}$, $\forall h\in[H]$, and arbitrary for $h\ge H+1$
\end{algorithm2e}	

We show that~\Cref{alg:nonstationary-BI} computes an $\epsilon$-approximate NE satisfying~\Cref{def:nonstationary-NE} in polynomial time, formally through the following proposition.
We note that variance reduction and sampling techniques could further improve this complexity in terms of the dependence on the various parameters~\cite{sidford2020solving}, but is outside the scope of this paper.

\begin{proposition}[Guarantees of~\Cref{alg:nonstationary-BI}]
	Given a $\tbsg$ instance $\mathcal{G} = (n,\calS=\cup_{i\in[n]}\calS_i,\mathcal{A},\pp,\r,\gamma)$ and some accuracy $\epsilon>0$,~\Cref{alg:nonstationary-BI} finds a non-stationary strategy which is an  $\eps$-approximate NE satisfying~\Cref{def:nonstationary-NE} using a total computational cost of 
\[
O\Par{\frac{|\calS| \Atot}{1-\gamma}\log\Par{\frac{1}{(1-\gamma)\epsilon}}}.
\]
\end{proposition}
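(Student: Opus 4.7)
The plan is to analyze \Cref{alg:nonstationary-BI} via a standard finite-horizon truncation argument combined with a subgame-perfect equilibrium argument exploiting the turn-based structure. First I would observe that, since rewards satisfy $|r_i(s,a)| \le 1$, truncating the infinite-horizon discounted reward at horizon $H$ introduces error at most $\sum_{t \ge H} \gamma^t = \gamma^H/(1-\gamma)$ on any player's value under any strategy. Using $-\log\gamma \ge 1-\gamma$ for $\gamma \in (0,1)$, the choice $H = \lceil (1-\gamma)^{-1} \log((1-\gamma)^{-1}\epsilon^{-1})\rceil$ ensures $\gamma^H/(1-\gamma) \le \epsilon/2$; hence for every strategy profile, the $H$-horizon value and the infinite-horizon value differ by at most $\epsilon/2$ for every player.

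Next I would show that $\bar{\ppi}$ restricted to steps $h = 1, \dots, H$ is an exact subgame-perfect NE of the $H$-horizon truncated game. The key point is that, because each state is controlled by a single player $i_s$, backward induction turns into $|\calS|$ independent single-agent optimizations at each time step, with no simultaneous-move subtlety. Concretely, I would prove by induction on $h$ that $V^{(h)}_i(s)$ equals the $h$-step-to-go value of player $i$ starting from state $s$ under the strategy $\bar{\ppi}$, and that $Q^{(h)}_i(s,a)$ equals player $i$'s value from taking $a$ at $s$ and then following $\bar{\ppi}$. The inductive step uses \Cref{line:update-Q} and the defining equation $V^{(h)}_i(s) = Q^{(h)}_i(s, a^h_s)$ from \Cref{line:update-V}. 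Since $a^h_s$ is chosen in \Cref{line:br-a} to maximize $Q^{(h)}_{i_s}(s, \cdot)$ and no other player has a choice at $s$, the action $a^h_s$ is a best response of the unique controlling player $i_s$ given the continuation strategy, which yields the subgame-perfect property.

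I would then combine the two observations. Fix any player $i$ and any deviation $\ppi'_i$. Her value under $(\ppi'_i, \bar{\ppi}_{-i})$ over the first $H$ steps is at most $v_i^{\bar{\ppi}, \qq}$ restricted to the first $H$ steps (by the exact $H$-horizon NE property), and the contribution of steps $\ge H$ is bounded in absolute value by $\gamma^H/(1-\gamma) \le \epsilon/2$ for both the deviation and for $\bar{\ppi}$. Adding these bounds gives $v_i^{(\ppi'_i, \bar{\ppi}_{-i}), \qq} \le v_i^{\bar{\ppi}, \qq} + \epsilon$, matching \Cref{def:nonstationary-NE}.

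For the runtime, the loop runs $H = O((1-\gamma)^{-1} \log((1-\gamma)^{-1}\epsilon^{-1}))$ times. Within each iteration, \Cref{line:update-Q} computes $Q^{(h)}_i(s,a)$ for each $s \in \calS$ and $a \in \calA_s$, which amounts to $\Atot$ updates, each costing $O(|\calS|)$ work for the inner product $\sum_{s'} p_{s,a}(s') V^{(h-1)}_i(s')$; \Cref{line:br-a,line:br-pi,line:update-V} each cost at most $O(\Atot)$. Thus the per-iteration cost is $O(|\calS| \Atot)$ and the total cost is $O\bigl(\tfrac{|\calS|\Atot}{1-\gamma} \log\tfrac{1}{(1-\gamma)\epsilon}\bigr)$. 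The only mildly delicate part is verifying that the induction for $V^{(h)}_i$ matches player $i$'s expected return under $\bar{\ppi}$ even at states $s$ controlled by other players (where $\bar{\ppi}_{i_s}^{H-h+1}$ is still used to form the transition), but this is immediate from \Cref{line:update-Q} which applies the deterministic action $a^h_s$ uniformly for every player $i$ when aggregating from $V^{(h-1)}$.
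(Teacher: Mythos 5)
Your proposal is correct and follows essentially the same route as the paper's proof: backward induction yields an exact NE of the $H$-horizon truncated game thanks to the turn-based structure, the discounted tail beyond $H$ contributes at most $\gamma^H/(1-\gamma)$ to any player's value under any strategy, and the two bounds combine to give an $\epsilon$-approximate NE with the stated per-iteration cost of $O(|\calS|\Atot)$. The only differences are cosmetic (you phrase the finite-horizon optimality as subgame-perfection and split the error as $\epsilon/2+\epsilon/2$, while the paper bounds $2\gamma^H/(1-\gamma)$ directly), and both arguments carry the same harmless constant-factor slack in the choice of $H$.
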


\begin{proof}

\textbf{Correctness.} By the turn-based structure of the game, it is not hard to see that by definition the backward induction gives a NE for the $H$-horizon general-sum $\tbsg$, i.e. where the utilities are defined using the same initial distribution, but the cumulative rewards for each player are computed restricted to the first $H$ time steps, i.e.,
\[
V_i^{(H),\ppi}(s) = \E\left[\sum_{h \in [H]}\gamma^h r_i(s^h,a^h)|s_1 = s, a^h\sim \ppi^h_j(s^h)~\text{for some}~j~\text{such that}~s^h\in\calS_j\right].
\]
By backward induction, the resulting strategy $\ppi = (\ppi_{i}^h)_{i\in[n], h\in[H]}$ satisfies for any $i\in[n]$, $s\in\calS$,
\begin{align}\label{eq:nonstationary-finite}
V_i^{(H),\ppi}(s)\ge V_i^{(H),(\ppi_i',\ppi_{-i})}(s), \quad \text{for any}~~  [\ppi']_{i,s}^h\in \Delta^{\calA_{i,s}}, ~s\in\calS_i,~h\in[H].
\end{align}

Now, we consider the general discounted infinite-horizon case, we argue that for given $H$, the output strategy of algorithm that follows strategies $(\ppi^{H},\ppi^{H-1},\cdots,\ppi^{1})$ and plays arbitrary strategy afterwards would give a good approximate NE satisfying~\Cref{def:nonstationary-NE}.

Consider the strategy we play as $\bar{\ppi}$ such that $\bar{\ppi}^{h} = \ppi^{H+1-h}$ for all $h\in[H]$, and arbitrary for $h>H$. For any player $i\in[n]$, we let $V_i^{H,\ppi}\in\R^{\calS}$ to be the value of the game for playing strategy $\ppi$ for the first $H$ finite-horizon starting from initial value $0$ so that\footnote{Note this is a definition different from $V_i^{(H),\ppi}$ as it reverses the use of strategy in its order.} $V_i^{H,\ppi} = V_i^{(H),\bar{\ppi}}$, and $V_i^{> H,\ppi}$ as the value of the state following strategy $\ppi$ starting from time step $h=H+1$. 
Thus, by the recursive formulation of $V_i$ we have 
\begin{align*}
V_i^{\bar{\ppi}}(s) & = V_i^{H,\bar{\ppi}}(s)+\gamma^H \ee_s^\top  \Par{\PP^{\bar{\ppi}^1} \PP^{\bar{\ppi}^2} \ldots \PP^{\bar{\ppi}^H}}V_i^{> H,\bar{\ppi}}\\
& = V_i^{(H),\ppi}(s) + \gamma^H \ee_s^\top  \left(\PP^{\ppi^H} \PP^{\ppi^{H-1}} \ldots \PP^{\ppi^1} \right)V_i^{> H,\bar{\ppi}}.
\end{align*}

Now given the assumption that all rewards $|r_i(s,a)|\le 1$, for any $i\in[n]$, $s\in\calS$ and $a\in\calA_s$, we know by definition of $V_i$ that $\|V_i^{> H,\ppi}\|_\infty\le \frac{1}{1-\gamma}$ for arbitrary strategy $\ppi$. Also note $\norm{\PP^{\ppi}}_\infty\le 1$ for any strategy $\ppi$, consequently one has for any player $i\in[n]$, $s\in\calS$, 
\begin{align*}
V_i^{\bar{\ppi}}(s) & \stackrel{(i)}{\ge} V_i^{H,\bar{\ppi}}(s)-\frac{\gamma^H}{1-\gamma}\stackrel{(ii)}{\ge} V_i^{H,(\ppi_i',\bar{\ppi}_{-i})}(s) -\frac{\gamma^H}{1-\gamma}\\
& \stackrel{(i)}{\ge}  V_i^{(\ppi_i',\bar{\ppi}_{-i})}(s)-\frac{2\gamma^H}{1-\gamma} \stackrel{(iii)}{\ge}  V_i^{(\ppi_i',\bar{\ppi}_{-i})}(s)-\epsilon,~~\text{for any}~~  [\ppi']_{i,s}^h\in \Delta^{\calA_{i,s}}, ~s\in\calS_i,~h\ge 1.,
\end{align*}
where we use $(i)$ the $\ell_\infty$ bound of $V_i^{> H,\bar{\ppi}}$, $(ii)$ the optimality of $\ppi$ for finite-horizon in~\eqref{eq:nonstationary-finite}, and $(iii)$ the choice of $H\ge \frac{\log((1-\gamma)^{-1}\epsilon^{-1})}{1-\gamma}$ so that $\gamma^H/(1-\gamma)\le \epsilon$.

Now by definition we have $u_i(\ppi) = \frac{1}{|\calS|}\sum_{s\in\calS}V_i^{\ppi}(s)$ and consequently taking average over $s\in\calS$ for each $i\in[n]$ we conclude that~\Cref{alg:nonstationary-BI} outputs an $\eps$-approximate NE satisfying~\Cref{def:nonstationary-NE}.

\textbf{Computational cost.} 
For the computational cost, each step of~\ref{line:update-Q},~\ref{line:br-a},~\ref{line:br-pi},~\ref{line:update-V} costs $O(|\calS| \Atot)$, $O(|\calS|)$, $O(|\calS|)$, and $O(n|\calS|)$ respectively. 
Since $n = O(|\calS|)$ and $|\calS| = O(\Atot)$, the total computational cost is bounded by 
\begin{align*}
O\Par{H\cdot\Par{|\calS| \Atot+|\calS| + n|\calS|}} = O\Par{Hn\Atot} 
= O\Par{\frac{|\calS|\Atot}{1-\gamma}\log\Par{\frac{1}{(1-\gamma)\epsilon}}}.
\end{align*}	
\end{proof}

\subsection{Polynomial-time approximate NE for constant state TBSGs}\label{ssec:lpalg}

We now turn our attention to stationary NE computation for $\tbsg$s, and provide an algorithm that is polynomial-time when the number of states $|\calS|$ is held to a constant.
The central idea is to write the NE computation problem as the following joint feasibility problem over values $\Vbold := \{V_i(s)\}_{s \in \calS, i \in [n]}$ and joint strategy $\ppi$:
\begin{equation}\label{eq:NE-bellman-sufficient}
V_i(s)
 	\begin{cases}
 	\ge \max_{a\in\calA_s} \Brack{r_{i,s,a}+\gamma \pp_{s,a}^\top   \Vbold_i} -(1-\gamma)\epsilon,~~\text{for any}~i\in[n],~s\in\calS_i,\\
 	= r^{\ppi}_i(s)+\gamma \PP^{\ppi}(s,\cdot) \Vbold_i,~~\text{for any}~i\in[n],~s\notin\calS_i
 	\end{cases}
\end{equation}
Observe that these sufficient conditions are simpler than the ones for $\ssg$ (see~\eqref{eq:NE-bellman-suff} in~\Cref{ssec:bellman}); specifically, if we fix a collection of value vectors $\Vbold \in \R^{\calS \times [n]}$, then solving Equation~\eqref{eq:NE-bellman-sufficient} in strategy-space reduces to solving a \emph{linear program} (LP) feasibility problem: any feasible solution is an $\epsilon$-approximate NE.
Note that fixing the value vector $\Vbold$ does not yield a corresponding simplification for $\ssg$, as Equation~\eqref{eq:NE-bellman-sufficient} would then reduce only to finding a feasible NE of a multiplayer simultaneous normal-form game.

Some natural consequences follow from this observation.
First, if we were given oracle access to the value vector $\Vbold^*$ of \emph{any} NE, then we could compute a corresponding $\epsilon$-approximate NE in time that is polynomial in $\Atot$ and $\frac{1}{1 - \gamma}$ and logarithmic in $\frac{1}{\epsilon}$.
Consequently, when the number of states $|\calS|$ is held to a constant we can perform a brute-force search over values rather than policies, and search for a feasible strategy for each of the corresponding LPs.
It is important to note that when we do this, several of the candidate LPs will \emph{not} be feasible --- but this can be easily checked and such values can be ruled out. The associated algorithm is described in~\Cref{alg:LP-subroutine}.

Putting these pieces together, we obtain which computes an $\epsilon$-approximate NE for $\tbsg$ in polynomial-time whenever the number of states, $|\calS|$, is constants.
We provide its pseudocode in~\Cref{alg:stationary-bruteforce-LP}.

\begin{algorithm2e}[h]
	\caption{Linear program on policies for fixed values}
	\label{alg:LP-subroutine}
	\DontPrintSemicolon
	\codeInput $\tbsg$ instance $\calG  = (\calS = \cup_{i\in[n]}\mathcal{S}_i,\mathcal{A},\pp,\r,\gamma)$,  candidate values $\Vbold := \{V_i(s)\}_{i \in [n], s \in \calS}$, accuracy $\epsilon''$\;
	Define the $\epsilon$-approximate linear feasibility program over $\ppi$
	\begin{align}\label{eq:LP-fixedvalue}
	V_i(s) &\geq r_{i,s,a} + \gamma \pp_{s,a}^\top   \Vbold_i - \epsilon'' \text{ for all } s \in \calS_i, a \in \calA_s \\
	V_i(s) &\geq \r^{\ppi}_i(s) + \gamma \PP^{\ppi}(s,\cdot) \Vbold_i - \epsilon'' \text{ for all } s \in \calS \nonumber \\
	V_i(s) &\leq \r^{\ppi}_i(s) + \gamma \PP^{\ppi}(s,\cdot) \Vbold_i + \epsilon'' \text{ for all } s \in \calS \nonumber \\
	\pi(s,a) &\geq 0 \text{ for all } s \in \calS, a \in \calA \nonumber \\
	\sum_{a \in \calA_s} \pi(s,a) &= 1 \text{ for all } s \in \calS \nonumber.
	\end{align}\;
	\codeReturn either a) $\bar{\ppi}$ that satisfies Equations~\eqref{eq:LP-fixedvalue} or b) ``Infeasible''.
\end{algorithm2e}

\begin{algorithm2e}[h]
	\caption{Brute-force over states + Linear program over actions to solve NE}
	\label{alg:stationary-bruteforce-LP}
	\DontPrintSemicolon
	\codeInput $\tbsg$ instance $\calG  = (\calS = \cup_{i\in[n]}\mathcal{S}_i,\mathcal{A},\pp,\r,\gamma)$, accuracy $\epsilon$\;
	Define $\mathcal{N} := \{\VV \in \R^{\calS \times [n]}: V_i(s) \in \left\{0, \epsilon', 2\epsilon', \ldots, \lceil \frac{1}{1 - \gamma} \rceil \right\} \text{ for all } i \in [n], s \in \calS\}$ \text{ where } $\epsilon' = \frac{\epsilon(1 - \gamma)^2}{1 + \gamma}$.\;
	\For{$\Vbold \in \mathcal{N}$}
	{
	Run~\Cref{alg:LP-subroutine} for $\Vbold,\epsilon'' = \epsilon'(1 + \gamma)$\;
	\lIf{Output of~\Cref{alg:LP-subroutine} is a feasible $\bar{\ppi}$}{
	\codeReturn $\bar{\ppi}$
	}
	}
\end{algorithm2e}

\begin{proposition}\label{prop:TBSG_LP}
Consider a $\tbsg$ instance $\calG = (n, \calS = \cup_{i \in [n]} \calS_i, \calA, \pp, \r, \gamma)$.
Then we have the following results:
\begin{enumerate}
\item Algorithm~\ref{alg:stationary-bruteforce-LP} returns an $\epsilon$-approximate NE in time \[O\left(\exp\left(n |\calS| \log \left(\frac{2(1 + \gamma)}{\epsilon(1 - \gamma)^3}\right) \right)\cdot \poly(\Atot, (1-\gamma)^{-1}, \epsilon^{-1})\right).\]
\item Moreover, Algorithm~\ref{alg:LP-subroutine} initialized with oracle access to an exact equilibrium value $\Vbold^*$ returns an $\epsilon$-approximate NE in time that is polynomial in $\Atot,\frac{1}{1 - \gamma}$ and logarithmic in $\frac{1}{\eps}$.
\end{enumerate}
\end{proposition}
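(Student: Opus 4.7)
The plan is to leverage the sufficient Bellman conditions for NE in $\tbsg$s: because each state is controlled by a single player, once a candidate value vector $\Vbold$ is fixed, the LP \eqref{eq:LP-fixedvalue} becomes linear in $\ppi$ and solvable in polynomial time. I would establish three facts in turn: (i) any exact stationary NE $(\ppi^\ast,\Vbold^\ast)$ (whose existence is guaranteed by \Cref{thm:ppad-membership-overview}) is feasible in \eqref{eq:LP-fixedvalue} with $\eps''=0$; (ii) any feasible $(\bar\ppi,\Vbold)$ at tolerance $\eps''$ yields an approximate NE whose gap scales like $\eps''/(1-\gamma)^2$; and (iii) the discretization $\mathcal{N}$ always contains a grid point $\Vbold$ close enough to $\Vbold^\ast$ that $\ppi^\ast$ is LP-feasible there at the inflated tolerance $\eps''=\eps'(1+\gamma)$.

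For (i), the first block of \eqref{eq:LP-fixedvalue} is precisely the Bellman optimality inequality at controlled states, and the equality constraints are the policy-evaluation equations $\Vbold^\ast_i = \r^{\ppi^\ast}_i + \gamma \PP^{\ppi^\ast}\Vbold^\ast_i$. For (ii), I first combine the two consistency inequalities to obtain a Bellman residual bound $\|\calT^{\bar\ppi}_i\Vbold_i - \Vbold_i\|_\infty \le \eps''$, then invoke $\gamma$-contraction of the policy-Bellman operator $\calT^{\bar\ppi}_i$ to conclude $\|\Vbold_i - V^{\bar\ppi}_i\|_\infty \le \eps''/(1-\gamma)$, and finally plug into the optimality inequality along with the Bellman-error-to-NE sufficient conditions from \Cref{ssec:bellman}. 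For (iii), $\Vbold^\ast$ lies in $[0,1/(1-\gamma)]^{n|\calS|}$, so some grid point is within $\eps'$ of it in $\ell_\infty$; the resulting shift in each LP constraint is at most $\eps'(1+\gamma)$ (the LHS by $\eps'$, and the RHS by $\gamma\eps'$ via $\|\gamma \pp_{s,a}^\top(\Vbold^\ast_i-\Vbold_i)\|_\infty \le \gamma\eps'$ and analogously for the policy-weighted transitions).

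With these in hand, part~(2) follows by applying (i) to the exact $\Vbold^\ast$ and then solving \eqref{eq:LP-fixedvalue} via, e.g., the ellipsoid method in time $\poly(\Atot, 1/(1-\gamma), \log(1/\eps))$ (with $\eps''$ set to a small function of $\eps$ if an approximate NE suffices, or to $0$ for exact computation). Part~(1) follows by enumerating $\mathcal{N}$ in \Cref{alg:stationary-bruteforce-LP}: (iii) guarantees success for at least one grid point, and (ii) certifies that any returned $\bar\ppi$ is an $\eps$-approximate NE for the chosen $\eps'=\eps(1-\gamma)^2/(1+\gamma)$, hence $\eps''=\eps(1-\gamma)^2$. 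Counting $|\mathcal{N}|=\lceil 1/((1-\gamma)\eps')\rceil^{n|\calS|}=\exp(n|\calS|\log(2(1+\gamma)/(\eps(1-\gamma)^3)))$ grid points and polynomial LP work per query gives the claimed runtime.

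\textbf{Main obstacle.} The principal care is the composition of $1/(1-\gamma)$ factors arising in three independent places: grid rounding of $\Vbold^\ast$ ($\eps'$), Bellman contraction ($\eps''\mapsto\eps''/(1-\gamma)$), and the Bellman-error-to-NE-gap reduction (another $1/(1-\gamma)$). These must be balanced against the algorithm's choice of $\eps'$ so that the final NE approximation is exactly $\eps$; this bookkeeping, though routine, is the only non-trivial step. It is substantially simplified compared to the $\ssg$ case because in $\tbsg$ the per-state optimality check in \eqref{eq:NE-bellman-sufficient} reduces to a single best-response maximum rather than requiring a simultaneous multi-player equilibrium.
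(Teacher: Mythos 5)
Your proposal matches the paper's proof essentially step for step: the paper likewise splits the argument into a feasibility-existence lemma (exact NE is LP-feasible at $\Vbold^*$ and at a nearby grid point with tolerance $(1+\gamma)\eps'$) and a feasibility-implies-optimality lemma (an $\eps''$-feasible $\bar\ppi$ has $\|\Vbold_i - \Vbold^{\bar\ppi}_i\|_\infty \le \eps''/(1-\gamma)$ by the same contraction argument, yielding a $2\eps''/(1-\gamma)^2$-NE via the sufficient Bellman conditions), with the same grid count and per-query LP cost. The $\eps$-bookkeeping and the choice $\eps'=\eps(1-\gamma)^2/(1+\gamma)$, $\eps''=\eps'(1+\gamma)$ are identical to the paper's, so this is the same proof.
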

Proposition~\ref{prop:TBSG_LP} provides an algorithm for computing an $\eps$-approximate NE of a general-sum $\tbsg$ that is a) polynomial in the effective action size $\Atot$ when the number of states (and, therefore, number of players) is held to a constant, b) exponential in the number of players $n$ and the number of states $|\calS|$, and c) pseudo-polynomial in the error tolerance $\epsilon$ and effective horizon $\frac{1}{1 - \gamma}$. 
Proposition~\ref{prop:TBSG_LP} shows that oracle access to the value of the game significantly simplifies the problem of equilibrium computation over policies.
(Note that this cannot be the case for a normal-form bimatrix game: if it were, an algorithm in the spirit of~\Cref{alg:stationary-bruteforce-LP} could be run and the brute-force search over all candidate equilibrium values can be done in polynomial time when the number of players is a constant --- which would lead to a contradiction with $\ppad$-hardness.)
It also demonstrates that to derive a hardness result for $\tbsg$, we need to increase either the number of states $|\calS|$ or the number of players $n$.

We conclude this subsection with the proof of Proposition~\ref{prop:TBSG_LP}.
\begin{proof}

\textbf{Correctness.} To show correctness, it suffices to show two lemmas listed below. 

\begin{lemma}\label{lem:feasibleexists}
There exists a feasible $\bar{\ppi}$ to the $\epsilon''$-approximate LP defined in Equation~\eqref{eq:LP-fixedvalue} for $\Vbold := \Vbold^*$, where $\Vbold^*$ is an exact equilibrium value.
For some $\Vbold \in \Nor$, there exists a feasible $\bar{\ppi}$ to the $\epsilon''$-approximate LP defined in Equation~\eqref{eq:LP-fixedvalue}.
\end{lemma}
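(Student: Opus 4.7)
My plan is to tackle the two statements separately, starting from the standard Bellman characterization of a stationary NE in a $\tbsg$: a strategy $\ppi^*$ is an exact NE with induced values $\Vbold^*$ if and only if for every player $i \in [n]$ (i) for each $s \in \calS_i$ and each $a \in \calA_s$, $V_i^*(s) \geq r_{i,s,a} + \gamma \pp_{s,a}^\top \Vbold_i^*$ with equality attained at some $a^\star$, and (ii) for each $s \in \calS$, $V_i^*(s) = \r_i^{\ppi^*}(s) + \gamma \PP^{\ppi^*}(s,\cdot) \Vbold_i^*$. The first statement of the lemma is then immediate: substituting $\Vbold = \Vbold^*$ and $\bar{\ppi} = \ppi^*$ into~\eqref{eq:LP-fixedvalue} yields every inequality with slack $0 \le \epsilon''$, so $\ppi^*$ is feasible.

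For the second statement, I would round $\Vbold^*$ coordinate-wise to a grid point $\widetilde{\Vbold} \in \mathcal{N}$ so that $\|\widetilde{\Vbold} - \Vbold^*\|_\infty \leq \epsilon'$, which is possible because each equilibrium $V_i^*(s)$ lies within the range covered by $\mathcal{N}$ at resolution $\epsilon'$. I would then show that the pair $(\widetilde{\Vbold}, \ppi^*)$ is feasible for~\eqref{eq:LP-fixedvalue} with the choice $\epsilon'' = (1+\gamma)\epsilon'$ prescribed in Algorithm~\ref{alg:stationary-bruteforce-LP}. The key observation is that every constraint is affine in $\Vbold$, with coefficient $1$ on the $V_i(s)$ term and a probability-weighted coefficient of $\gamma$ on the remaining $\Vbold_i$ terms (since $\pp_{s,a}$ and $\PP^\ppi(s,\cdot)$ are distributions). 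Hence replacing $\Vbold^*$ by $\widetilde{\Vbold}$ shifts the left-hand side of each constraint by at most $\epsilon'$ and the right-hand side by at most $\gamma \epsilon'$, so the total perturbation of the constraint gap is at most $(1+\gamma)\epsilon' = \epsilon''$. Each exact equality of the Bellman system therefore becomes an approximate equality within slack $\epsilon''$, and each best-response inequality is preserved within slack $\epsilon''$.

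The main technical care is keeping track of signs in the two-sided value constraint $|V_i(s) - \r_i^\ppi(s) - \gamma \PP^\ppi(s,\cdot) \Vbold_i| \leq \epsilon''$, where the same $\Vbold$ appears on both sides and therefore both perturb simultaneously. A short triangle-inequality argument that writes the perturbed residual as $\bigl(\widetilde{V}_i(s) - V_i^*(s)\bigr) + \gamma \PP^{\ppi^*}(s,\cdot)\bigl(V_i^* - \widetilde{V}_i\bigr)$ controls this cleanly; analogous bookkeeping handles the best-response inequality in the other direction. I do not anticipate any deeper obstacle; the real conceptual point underlying the lemma is that fixing $\Vbold$ linearizes the NE feasibility problem in $\ppi$, so brute force over the (constant-sized) grid $\mathcal{N}$ of value vectors reduces $\tbsg$ NE computation to polynomially many LPs, and the role of the lemma is merely to certify that the brute-force search is guaranteed to hit at least one feasible candidate.
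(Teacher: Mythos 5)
Your proposal is correct and follows essentially the same route as the paper: establish feasibility at $\Vbold^*$ via the exact Bellman conditions for the NE $\ppi^*$, then round $\Vbold^*$ to the nearest grid point $\widetilde{\Vbold}\in\Nor$ and absorb the perturbation into the slack, bounding the left-hand side shift by $\epsilon'$ and the right-hand side shift by $\gamma\epsilon'$ (the paper phrases this as H\"older's inequality, you phrase it as the probability-weighted coefficient; they are the same bound), for a total of $(1+\gamma)\epsilon' = \epsilon''$. The two-sided strategy-evaluation constraint is handled with the same triangle-inequality bookkeeping in both arguments.
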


\begin{lemma}\label{lem:feasibleimpliesopt}
For any candidate value vector $\Vbold$, an $\epsilon''$-feasible solution $\bar{\ppi}$ of~\eqref{eq:LP-fixedvalue}, when it exists, is an $\epsilon$-approximate NE.
\end{lemma}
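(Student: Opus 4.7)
The plan is to show that any $\bar{\ppi}$ that is $\epsilon''$-feasible for~\eqref{eq:LP-fixedvalue} (for some $\Vbold$) has value close to $\Vbold_i$ under its own play and also upper-bounds the value of every unilateral deviation up to $O(\epsilon''/(1-\gamma))$ slack; chaining these two bounds through $\Vbold_i$ will then yield the desired $\epsilon$-approximate NE once the constants in $\epsilon''$ are tracked.

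First I would use constraints two and three of~\eqref{eq:LP-fixedvalue} in tandem to obtain, for each player $i$, the coordinate-wise Bellman-residual bound $\norm{\Vbold_i - \r_i^{\bar{\ppi}} - \gamma \PP^{\bar{\ppi}} \Vbold_i}_\infty \le \epsilon''$. Since the on-policy Bellman operator $T^{\bar{\ppi}}_i \xbold \defeq \r_i^{\bar{\ppi}} + \gamma \PP^{\bar{\ppi}} \xbold$ is a $\gamma$-contraction in $\norm{\cdot}_\infty$ with unique fixed point $V_i^{\bar{\ppi}}$, a standard geometric-series argument then gives $\norm{\Vbold_i - V_i^{\bar{\ppi}}}_\infty \le \epsilon''/(1-\gamma)$.

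Next, for any deviation policy $\ppi_i'$ I would derive the one-sided inequality $\Vbold_i \ge \r_i^{(\ppi_i',\bar{\ppi}_{-i})} + \gamma \PP^{(\ppi_i',\bar{\ppi}_{-i})} \Vbold_i - \epsilon''\1$ by a case split on the state $s$. For $s \in \calS_i$, constraint one of~\eqref{eq:LP-fixedvalue} asserts $V_i(s) \ge r_{i,s,a} + \gamma \pp_{s,a}^\top \Vbold_i - \epsilon''$ for every $a \in \calA_{i,s}$; averaging against $\pi_{i,s}'(\cdot)$ and using the turn-based identities $\r_i^{(\ppi_i',\bar{\ppi}_{-i})}(s) = \sum_a \pi_{i,s}'(a) r_{i,s,a}$ and $\PP^{(\ppi_i',\bar{\ppi}_{-i})}(s,\cdot) = \sum_a \pi_{i,s}'(a) \pp_{s,a}$ yields the required inequality. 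For $s \notin \calS_i$, the turn-based structure further ensures that $\r_i^{(\ppi_i',\bar{\ppi}_{-i})}(s) = \r_i^{\bar{\ppi}}(s)$ and $\PP^{(\ppi_i',\bar{\ppi}_{-i})}(s,\cdot) = \PP^{\bar{\ppi}}(s,\cdot)$, so constraint two supplies the same bound. Iterating this inequality via monotonicity of $T^{(\ppi_i',\bar{\ppi}_{-i})}_i$ and its $\gamma$-contraction with fixed point $V_i^{(\ppi_i',\bar{\ppi}_{-i})}$ yields $\Vbold_i \ge V_i^{(\ppi_i',\bar{\ppi}_{-i})} - \tfrac{\epsilon''}{1-\gamma}\1$. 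Combining with the earlier bound produces $V_i^{\bar{\ppi}} + \tfrac{\epsilon''}{1-\gamma}\1 \ge \Vbold_i \ge V_i^{(\ppi_i',\bar{\ppi}_{-i})} - \tfrac{\epsilon''}{1-\gamma}\1$, and taking an inner product with $\qq = \frac{1}{|\calS|}\ee_\calS$ gives $u_i(\bar{\ppi}) \ge u_i(\ppi_i',\bar{\ppi}_{-i}) - 2\epsilon''/(1-\gamma)$, which is at most $\epsilon$ under the algorithm's choice of $\epsilon''$.

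The main structural point — and the reason this LP-based approach succeeds for $\tbsg$s but not for general $\ssg$s — is the second half of the deviation step: for $s \notin \calS_i$, the turn-based assumption lets us re-use an equality written for $\bar{\ppi}$ verbatim for the deviation $(\ppi_i',\bar{\ppi}_{-i})$. In an $\ssg$, a deviation by player $i$ can change the transition kernel at every state she jointly controls with other players, so constraint two of~\eqref{eq:LP-fixedvalue} could no longer be ported over, and the uniform bound on $\Vbold_i - V_i^{(\ppi_i',\bar{\ppi}_{-i})}$ would fail. The remaining steps are a routine application of Bellman-operator contraction and monotonicity, so I do not anticipate any further obstacles beyond tracking the precise constants linking $\epsilon'$, $\epsilon''$, and the final $\epsilon$.
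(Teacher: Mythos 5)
Your proposal is correct and follows essentially the same route as the paper: both arguments first use the two-sided evaluation constraints plus $\gamma$-contraction of the on-policy Bellman operator to get $\norm{\Vbold_i - V_i^{\bar{\ppi}}}_\infty \le \epsilon''/(1-\gamma)$, and then use the best-response constraint to bound the gain from any unilateral deviation. The only difference is that you inline the iteration of the deviation inequality (which the paper delegates to its sufficient Bellman condition, Lemma~\ref{lem:bellman-suff}) and keep it in terms of $\Vbold_i$ rather than $V_i^{\bar{\ppi}}$, which incidentally yields the slightly tighter slack $2\epsilon''/(1-\gamma)$ in place of the paper's $2\epsilon''/(1-\gamma)^2$.
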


Lemmas~\ref{lem:feasibleexists} and~\ref{lem:feasibleimpliesopt} clearly directly imply the proof of Proposition~\ref{prop:TBSG_LP}.
To see this, we note that:
\begin{itemize}
\item Lemma~\ref{lem:feasibleexists} implies that Algorithm~\ref{alg:LP-subroutine} run with the equilibrium value $\Vbold^*$ will find an $\epsilon''$-approximate feasible solution, and Lemma~\ref{lem:feasibleimpliesopt} in turn implies that this is an $\epsilon$-approximate NE.
\item Lemma~\ref{lem:feasibleexists} ensures that Algorithm~\ref{alg:stationary-bruteforce-LP} will terminate at some $\widetilde{\Vbold} \in \Nor$, and Lemma~\ref{lem:feasibleimpliesopt} in turn implies that the corresponding returned strategy $\bar{\ppi}$ will be an $\epsilon$-approximate NE.
\end{itemize}
To complete our proof of correctness, we prove these lemmas below.

\begin{proof}(of~\Cref{lem:feasibleexists}.)
Consider any exact NE $\ppi^*$ and corresponding value $\Vbold^*$.
First, recall the Bellman definition of sufficient condition for $\epsilon$-approximate NE given in Equation~\eqref{eq:NE-bellman-sufficient}.
Since $\ppi^*$ is an exact NE, it will satisfy
\begin{align*}
V^*_i(s) &\geq r_{i,s,a} + \gamma \pp_{s,a}^\top   \Vbold^*_i \text{ for all } a \in \calA_s, s \in \calS_i \text{ and } \\
V^*_i(s) &= \r^{\ppi}_i(s) + \gamma \PP^{\ppi}(s,\cdot) \Vbold^*_i \text{ for all } s \in \calS,
\end{align*}
which clearly satisfies the feasibility conditions in Equation~\eqref{eq:LP-fixedvalue}.
Thus, $\ppi^*$ is a feasible solution for the LP in Equation~\eqref{eq:LP-fixedvalue} initialized with $\Vbold := \Vbold^*$.
We will now show that there exists an index $\widetilde{\Vbold} \in \Nor$ such that $\ppi^*$ is also a feasible solution for the LP in Equation~\eqref{eq:LP-fixedvalue} intitialized with $\Vbold$.
We consider the index 
\begin{align*}
\widetilde{\Vbold} := {\arg \min}_{\Vbold \in \Nor} \max_{i \in [n], s \in \calS} |V_i(s) - V^*_i(s)|.
\end{align*}
Since $\Nor$ is an $\epsilon'$-net in the infinity norm, we have $\|\widetilde{\Vbold}_i - \Vbold^*_i\|_{\infty} \leq \epsilon'$ for all $i \in [n]$.
Now, we need to check the feasibility of $\ppi^*$ in the LP defined for $\widetilde{\Vbold}$.
The simplex constraints are trivially satisfied.
To show the best-response constraint, we have for any $i \in [n]$, $a \in \calA_s$ and $s \in \calS_i$,
\begin{align*}
\widetilde{V}_i(s) &\geq V^*_i(s) - \epsilon' \\
&\geq r_i(s,a) + \gamma \pp_{s,a}^\top  \Vbold^*_i - \epsilon' \\
&\geq r_i(s,a) + \gamma \pp_{s,a}^\top  \widetilde{\Vbold}_i - (\gamma + 1) \epsilon'
\end{align*}
where the last inequality uses Holder's inequality.
To show the strategy evaluation constraint, we similarly have for any $i \in [n]$ and $s \in \calS$,
\begin{align*}
&\widetilde{V}_i(s) - \r^{\ppi^*}_i(s) - \gamma \PP^{\ppi^*}(s,\cdot) \widetilde{\Vbold}_i \\
&\leq V_i^*(s) + \epsilon' - \r^{\ppi^*}_i(s) - \gamma \PP^{\ppi^*}(s,\cdot) \Vbold^*_i + \gamma \epsilon' \\
&\leq (\gamma + 1) \epsilon'
\end{align*}
where the last step follows via the strategy evaluation equality for the equilibrium value $\Vbold^*$ and corresponding strategy $\ppi^*$.
An identical argument works for the corresponding lower bound on the strategy evaluation error.

In sum, we have shown that $\ppi^*$ satisfies the feasibility conditions of Equation~\eqref{eq:LP-fixedvalue} for $\widetilde{\Vbold}$ and this completes the proof of the lemma.
\end{proof}

\begin{proof}(of~\Cref{lem:feasibleimpliesopt}.)
Consider a value $\Vbold$ for which we can find an $\epsilon''$-approximate feasible solution $\bar{\ppi}$ to the LP initialized with $\Vbold$.
We show that $\bar{\ppi}$ is an $\epsilon$-approximate NE.
To do this, it suffices to show that this strategy $\bar{\ppi}$ and the actual values of the strategy $\{V^{\bar{\ppi}}_i(s)\}_{i \in [n], s \in \calS}$ satisfy the sufficient condition~\eqref{eq:NE-bellman-sufficient} with $\epsilon := \frac{2\epsilon''}{(1 - \gamma)^2}$. This is because if this sufficient condition holds, the choice of $\epsilon'' = \epsilon'(1 + \gamma)$ immediately completes the proof of the lemma.

To show that condition~\eqref{eq:NE-bellman-sufficient} holds, it suffices to prove only the inequality condition
\[
V^{\bar{\ppi}}_i(s)\ge \max_{a\in\calA_s} \Brack{r_{i,s,a}+\gamma\pp_{s,a}^\top  \Vbold^{\bar{\ppi}}_i} -(1-\gamma)\epsilon,~~\text{for any}~i\in[n],~s\in\calS_i,
\] 
since the equality holds immediately by definition.

Our first step is to show that the actual values are close to the candidate values.
Since $\bar{\ppi}$ is feasible in the LP~\eqref{eq:LP-fixedvalue} defined by $\Vbold$, we have for any $s\in\calS$,
\begin{align*}
V_i(s) - \r^{\bar{\ppi}}_i(s) - \gamma \PP^{\bar{\ppi}}(s,\cdot) \Vbold_i \leq \epsilon''
\end{align*}
On the other hand, the strategy evaluation Bellman operator gives us
\begin{align*}
V^{\bar{\ppi}}_i(s) - \r^{\bar{\ppi}}_i(s) - \gamma \PP^{\bar{\ppi}}(s,\cdot) \Vbold^{\bar{\ppi}}_i =0.
\end{align*}
Subtracting the second equation from the first and taking an maximum over $s\in\calS\setminus\calS_i$ gives us
\begin{align*}
\max_{s\in\calS} \left[V_i(s)- V^{\bar{\ppi}}_i(s)\right]\le \gamma \max_{s\in\calS}  \left[V_i(s)- V^{\bar{\ppi}}_i(s)\right]+\epsilon''.
\end{align*}
Similarly we can also obtain 
\begin{align*}
\min_{s\in\calS} \left[V_i(s)- V^{\bar{\ppi}}_i(s)\right]\ge \gamma \min_{s\in\calS}  \left[V_i(s)- V^{\bar{\ppi}}_i(s)\right]-\epsilon''.
\end{align*}
Combining the two inequalities and by rearranging terms this implies that $\|\Vbold_i - \Vbold^{\bar{\ppi}}_i\|_{\infty} \leq \frac{\epsilon''}{1 - \gamma} \text{ for all } i \in [n]$.

It only remains to show the best-response property for strategy $\bar{\ppi}$ on its corresponding value vector $\Vbold^{\bar{\ppi}}$.
This follows via similar arguments to the proof of Lemma~\ref{lem:feasibleexists} together with the above shown closeness of the strategy evaluation $\Vbold^{\bar{\ppi}}$ and candidate value $\Vbold$.
For every $i \in [n],s \in \calS_i$ and $a \in \calA_s$, we have
\begin{align*}
V^{\bar{\ppi}}_i(s) &\geq V_i(s) - \frac{\epsilon''}{1 - \gamma} \\
&\geq r_i(s,a) + \gamma \pp_{s,a}^\top  \Vbold_i - \frac{\epsilon''}{1 - \gamma} - \epsilon'' \\
&\geq r_i(s,a) + \gamma \pp_{s,a}^\top  \Vbold^{\bar{\ppi}}_i - \frac{\epsilon''(1 + \gamma)}{(1 - \gamma)} - \epsilon'' \\
&= r_i(s,a) + \gamma \pp_{s,a}^\top  \Vbold^{\bar{\ppi}}_i - \frac{2 \epsilon''}{1 - \gamma}.
\end{align*}
Recalling the sufficient Bellman equations for NE (\Cref{lem:bellman-suff} in~\Cref{ssec:bellman}) yields an $\epsilon := \frac{2\epsilon''}{(1 - \gamma)^2}$-NE, and noting that $\epsilon'' = \epsilon'(1 + \gamma)$ completes the proof of the lemma.
\end{proof}

\textbf{Computational cost.} The computational cost is given by the size of the $\epsilon'$-net $\Nor$ multiplied by the cost of running~\Cref{alg:LP-subroutine} for a fixed $\Vbold$.
The LP feasibility problem defined in~\eqref{eq:LP-fixedvalue} can be verified to have $O(\Atot)$ variables and $O(n\Atot)$ constraints.
Consequently, standard interior point methods give complexity that is polynomial in $n,\Atot$ and $\frac{1}{\epsilon}$ for running~\Cref{alg:LP-subroutine}.
Therefore, the total time complexity is given by $O\left(|\Nor| \cdot \text{poly}\left(\Atot,\frac{1}{1 - \gamma},\frac{1}{\epsilon}\right)\right)$.
Finally, we characterize $|\Nor|$.
Since $\Nor$ is an infinity-norm net, we can set all the values independently between $\left[0, \lceil \frac{1}{1 - \gamma} \rceil\right]$; to $\epsilon'$-precision.
Therefore, we have
\begin{align*}
|\Nor| \leq \left(\lceil \frac{1}{(1 - \gamma) \epsilon'} \rceil \right)^{n |\calS|} = \exp(n |\calS| \log \left(\lceil \frac{1}{\epsilon'(1 - \gamma)} \rceil \right))
\end{align*}
and putting these together yields total time complexity \[O\left(\exp\left(n |\calS| \log \left(\frac{2(1 + \gamma)}{\epsilon(1 - \gamma)^3}\right)\right) \cdot \text{poly}\left(\Atot,\frac{1}{1 - \gamma},\frac{1}{\epsilon}\right)\right).\]
This completes the proof.
\end{proof}

\subsection{Computing mixed NE is $\ppad$-hard for $\onestate$}\label{ssec:ppad-hard}

Here we show that computing approximate NE in $\onestate$s (and therefore in $\tbsg$s) is $\ppad$-hard, by providing a formal reduction from the generalized circuit problem ($\epsilon$-$\gcircuit$), which is known to be $\ppad$-hard for $\eps = \poly(1/n)$~\cite{chen2006settling} and for (sufficiently small) constant $\eps$~\cite{rubinstein2018inapproximability}.
We reproduce the definition of the $\epsilon$-$\gcircuit$ problem below from~\cite{chen2006settling,daskalakis2009complexity,rubinstein2018inapproximability} for completeness.

\begin{definition}[Generalized circuits, reproduced from~\cite{chen2006settling,rubinstein2018inapproximability}]\label{def:gcircuit} A generalized circuit $\calC$ is a pair $(V,\calT)$ where $V$ is a set of nodes (or vertices) and $\calT$ is a collection of gates.
Every gate $T \in \calT$ is a tuple of four possible types (all described in \Cref{tab:gcircuit}):
\begin{itemize}
\item $T = G(s\subino,s\subint|s\subout)$ (the addition, subtraction, comparison).
\item $T = G(s\subin|s\subout)$ (the equal and NOT gates).
\item $T = G(\alpha | s\subin | s\subout)$ (the  multiplication gate). 
\item $T = G(\alpha | s\subout)$ (the  constant gate). 
\end{itemize}
Above, $s\subino,s\subint,s\subin$ represent input nodes, $s\subout$ represents an output node and $\alpha$ represents a parameter.
The collection of gates $\calT$ must satisfy the following important property: For every two gates $T,T'$, the output nodes must be distinct, i.e. $s\subout \neq s\subout'$.
\end{definition}

For completeness,~\Cref{tab:gcircuit} lists the set of gates required to implement an arbitrary instance of $\gcircuit$.

\begin{table}[t]
\begin{center}
\begin{small}
\begin{tabular}{|l|c|c|}
\hline
Gate name &  Input & Output constraint  \\
\hline\hline
$G_{=}$ (Equal gate) & $p\subin$ & $p\subout\in[p\subin- \epsilon, p\subin+\epsilon]$  \\
\hline
$G_{\alpha}$ (Constant gate) & $\alpha$ & $p\subout\in[\med(0,\alpha-\epsilon, 1-\epsilon), \med(1,\alpha +\epsilon, \epsilon)]$   \\
\hline
$G_{\times}$ (Multiplicative gate) & $p\subin,\alpha \in(0,2]$ & $p\subout\in[\med(0,\alpha (p\subin-\epsilon),1-\epsilon), \med(1,\alpha (p\subin+\epsilon),\epsilon)]$ \\
\hline
$G_{+}$ (Sum gate) & $p\subino,p\subint$ & $p\subout\in[\med(0,p\subino+p\subint-\epsilon,1-\epsilon), \min(1, p\subino+p\subint+\epsilon)]$ \\
\hline
$G_{-}$ (Subtraction gate) & $p\subino,p\subint$ & $p\subout\in[\max(0,p\subino-p\subint-\epsilon), \med(1, p\subino-p\subint+\epsilon,\epsilon)]$ \\
\hline
$G_{>}$ (Comparison gate) & $p\subino,p\subint$ & $p\subout\ge 1-\epsilon$ if $p\subino\ge p\subint+\epsilon$, and $p\subout\le \epsilon$ if $p\subino\le p\subint-\epsilon$ \\
\hline
$G_{\land}$ (AND gate) & $p\subino,p\subint$ & $p\subout \ge 1 - \epsilon$ if $p\subino,p\subint \ge 1 - \epsilon$, $p\subout \le \epsilon$ if $p\subino$ or $p\subint \le \epsilon$. \\
\hline
$G_{\lor}$ (OR gate) & $p\subino,p\subint$ & $p\subout \ge 1 - \epsilon$ if $p\subino \ge 1 - \epsilon$ or $p\subint \ge 1 - \epsilon$, $p\subout \le \epsilon$ if $p\subino,p\subint \le  \epsilon$.\\
\hline
$G_{\neg}$ (NOT gate) & $p\subin$ & $p\subout \le \epsilon$ if $p\subin \ge 1 - \epsilon$, $p\subout \ge 1 - \epsilon$ if $p\subin \le \epsilon$\\
\hline
\end{tabular}
\end{small}
\caption{ 
Table of gates required to implement an arbitrary instance of $\epsilon$-$\gcircuit$.
All values of $p\subin,p\subino,p\subint,p\subout$ are probabilities and so in the interval $[0,1]$.
Note that the logic gates can be implemented through the composition of subtraction, sum, multiplication and comparison gates. For instance,  we can implement $G_{\neg}$ with input $p\subin$ by $G_{-}$ with input $1$ and $p\subin$. For any $\eps\le 1/12$, we can implement $G_{\land}$ with the comparison gadget $G_{>}$ with the first input being $0.75$, and the second input being the output of the sum gate applied to the output of $p\subino$ and $p\subint$ with a multiplicative gate of $\alpha = 1/2$, respectively. We can also implement $G_{\lor}$ with the sum gadget applied to inputs $p\subino$ and $p\subint$. 
\label{tab:gcircuit}}
\end{center}
\vskip -0.1in
\end{table}

With these definitions, the $\epsilon$-$\gcircuit$ problem is defined as below, also reproduced from~\cite{rubinstein2018inapproximability}.
\begin{definition}[$\epsilon$-$\gcircuit$ problem, reproduced from~\cite{rubinstein2018inapproximability}]
Given a generalized circuit $\calC = (V,\calT)$ an assignment $\pp: V \to [0,1]$ \emph{$\epsilon$-approximately satisfies $\calC$} if for each of the gates $T \in \calT$, the assignment $\pp$ satisfies the constraints listed in~\Cref{tab:gcircuit}.
The \emph{$\epsilon$-$\gcircuit$ problem} is that of finding an $\epsilon$-approximately satisfying assignment $\pp$.
\end{definition}

In fact, $\epsilon$-$\gcircuit$ is known to be $\ppad$-hard not only for an inverse-polynomial error tolerance, i.e. $\epsilon = \poly(1/|V|)$~\cite{chen2006settling}, but even for a constant value of $\epsilon_0 > 0$ that is sufficiently small~\cite{rubinstein2018inapproximability}.

To translate these results to $\ppad$-hardness of $\onestate$, we need to reduce a solution of $\epsilon$-$\gcircuit$ to a $\delta$-approximate NE of the constructed $\tbsg$ game, where $\delta = O(\eps)$.
We are successful in doing this and obtain the following result.

\begin{theorem}[$\ppad$-hardness of $\onestate$]\label{thm:PPAD-hard}
It is $\ppad$-hard to find an $\epsilon_1$-approximate NE in the class of $\onestate$s with discount factor $\gamma=1/2$, for some sufficiently small constant $\epsilon_1$.
\end{theorem}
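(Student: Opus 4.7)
The plan is to give a polynomial-time reduction from $\epsilon$-$\gcircuit$, which is $\ppad$-hard for a sufficiently small absolute constant $\epsilon_0 > 0$ by~\cite{rubinstein2018inapproximability}, to the problem of computing an $\epsilon_1$-approximate NE of an $\onestate$ instance with $\gamma=1/2$, where $\epsilon_1 = \Theta(\epsilon_0)$. Given a generalized circuit $\calC = (V, \calT)$, we will build an $\onestate$ instance $\calG(\calC)$ that has one designated ``value player'' $i_v$ per vertex $v \in V$ with two primary actions, so that the quantity $\pi_{i_v}(a^1)$ encodes the value $p_v \in [0,1]$ in $\calC$. The bulk of the reduction consists of building a $\onestate$ \emph{game gadget} for each gate type in Definition~\ref{def:gcircuit} and Table~\ref{tab:gcircuit}, i.e.\ $G_=, G_\alpha, G_\times, G_+, G_-, G_>$ (from which the logic gates $G_\lor, G_\land, G_\neg$ are obtainable by composition as the table indicates).

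The starting template will be the equal gadget $G_=$ sketched around Figure~\ref{fig:mixedgamegadgets-intro} in Section~\ref{sec:resultstbsg}, which uses $3$ players ($\ind$, $\aux$, $\out$) plus $O(1/(1-\gamma))\cdot\log(1/\epsilon_1)$ \emph{dummy states} along the cycles/paths transitioned to by $\out$. With the cycle/path length chosen so that $\gamma^L \le \epsilon_1^2$, we get the approximate identities $V\subaux^{(\ee_{a^1\subaux}, \ppi\subin, \ppi\subout)} \approx \gamma\, \pi\subin(a^1\subin)$ and $V\subaux^{(\ee_{a^2\subaux}, \ppi\subin, \ppi\subout)} \approx \gamma\, \pi\subout(a^1\subout)$, so that $\aux$'s best response forces $\pi\subout(a^1\subout) \approx \pi\subin(a^1\subin)$ via the contradiction argument outlined after~\eqref{eq:overview-V-value}. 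The dummy states achieve two things: they \emph{localize} the dependence of each gadget's value functions so that downstream gadgets can be chained without disturbing upstream equilibrium conditions, and they \emph{approximately linearize} the value contributions so that the desired arithmetic identity on $\pi\subout$ holds up to $\poly(\epsilon_1)$ error. The other gadgets will be obtained by variants of this construction: $G_\alpha$ by fixing $\ind$'s policy via a dominant-action payoff structure; $G_\times$ by scaling the immediate rewards at the $\aux$ comparison state by $\alpha$; $G_+$ and $G_-$ by adding a second $\ind$ player and stacking the two branches on $\aux$'s options so the two input probabilities add into $V\subaux$; and $G_>$ by replacing the balanced payoffs of $G_=$ with a constant threshold so that $\aux$ switches deterministically across the threshold $p\subino - p\subint \lessgtr 0$.

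To convert exact-NE reasoning into $\epsilon_1$-approximate-NE reasoning, the key tool is pseudo-linearity from Corollary~\ref{lem:quasi-mono-tbsg} (via Corollary~\ref{coro:quasi-mono-ssg}): whenever the exact argument uses ``player $\aux$ must pick $a^1\subaux$'', the quantitative bounded-slope bound
\begin{equation*}
V\subaux^{(\ee_{a^1\subaux},\ppi\subin,\ppi\subout)} - V\subaux^{(\theta \ee_{a^1\subaux}+(1-\theta)\ee_{a^2\subaux},\ppi\subin,\ppi\subout)} \ge (1-\theta)(1-\gamma)\Par{V\subaux^{(\ee_{a^1\subaux},\ppi\subin,\ppi\subout)}-V\subaux^{(\ee_{a^2\subaux},\ppi\subin,\ppi\subout)}}
\end{equation*}
forces $\theta \ge 1 - O(\epsilon_1/((1-\gamma)\gamma \epsilon_1))$ whenever the inputs differ by more than $\Theta(\epsilon_1)$, as required by the gate specification. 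Combined with the approximate identities above, this translates $O(\epsilon_1)$-approximate NE of $\calG(\calC)$ into an $O(\epsilon_1)$-approximately satisfying assignment $\pp(v) := \pi_{i_v}(a^1)$ of $\calC$. The reduction is linear-time in $|V| + |\calT|$ since each gadget has $O(1/(1-\gamma))\cdot\log(1/\epsilon_1) = O(\log(1/\epsilon_1))$ states at $\gamma = 1/2$.

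The main obstacle I expect is the \emph{composability} argument: because each player's utility in an $\onestate$ is a global quantity affected by \emph{all} other players (unlike in graphical games where locality is definitional), one must verify that attaching a downstream gadget to the $\out$ state of an upstream gadget does not perturb $V_{i_v}^{\ppi}(s_{i_v})$ enough to invalidate the upstream gate's NE condition. The plan is to exploit that the dummy-state paths attenuate all downstream influence by a factor $\gamma^L \le \epsilon_1^2$ along each composition edge, so that each gate constraint at a NE holds with a cumulative error that is still $O(\epsilon_1)$, independent of circuit depth; this is analogous to, but quantitatively more delicate than, the composability in the graphical-game gadgets of~\cite{daskalakis2009complexity}. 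A secondary subtlety is choosing the reward magnitudes at $\aux$, together with enough ``outside options'' for every auxiliary player, so that each player's best-response incentive is strict by $\Omega(\epsilon_1)$ whenever the gate constraint is violated by more than $\epsilon_1$; this ensures the $\epsilon_1$-approximate NE slack does not absorb the gate check.
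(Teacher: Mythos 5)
Your proposal takes essentially the same route as the paper's proof: a reduction from $\epsilon$-$\gcircuit$ via $\onestate$ game gadgets built from $\ind$/$\aux$/$\out$ players with length-$L$ dummy paths (chosen so $\gamma^L \le \epsilon^2$) to localize and approximately linearize the value functions, pseudo-linearity (Corollary~\ref{coro:quasi-mono-ssg}) to upgrade the exact-NE contradiction argument to $\delta$-approximate NE, and serial composition of gadgets with attenuation through the long paths. The gadget variants you sketch for $G_\alpha$, $G_\times$, $G_+$, $G_-$, $G_>$ and the composability concern you flag are precisely the ones the paper implements and resolves.
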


Theorem~\ref{thm:PPAD-hard} implies $\ppad$-hardness for any error tolerance smaller than some sufficiently small constant $\epsilon_1$, given a discount factor $\gamma=1/2$. This also implies the $\ppad$-hardness for any accuracy $\epsilon\le \epsilon_1$, and for any discount factor $\gamma\in[1/2,1)$, as the latter two cases are weaker notions of hardness for the class of problems we study. We state this as an immediate corollary below.
\begin{corollary}[$\ppad$-hardness of $\onestate$]
	\label{coro:PPAD-hard}
There exists some constant $\epsilon_1\in(0,1)$, such that for any discount factor $\gamma\in(1/2,1)$ and any accuracy $\epsilon\le \epsilon_1$, it is $\ppad$-hard to find an $\epsilon$-approximate NE in the class of $\gamma$-discounted $\onestate$s.
\end{corollary}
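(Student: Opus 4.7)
The plan is to derive~\Cref{coro:PPAD-hard} directly from~\Cref{thm:PPAD-hard} through two elementary observations: monotonicity in the accuracy parameter $\epsilon$, and parametric extension in the discount factor $\gamma$.

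For the accuracy parameter, I will observe that for any fixed $\gamma$, $\ppad$-hardness of computing an $\epsilon_1$-approximate NE trivially implies $\ppad$-hardness of computing an $\epsilon$-approximate NE for every $\epsilon \le \epsilon_1$: any $\epsilon$-NE (per~\eqref{def:NE-approx}) is automatically an $\epsilon_1$-NE, so a polynomial-time algorithm for the tighter problem would solve the looser one, contradicting~\Cref{thm:PPAD-hard}. Thus it suffices to establish hardness of $\epsilon_1$-approximate NE computation for every $\gamma \in (1/2, 1)$.

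For the discount factor, I will re-examine the reduction from $\epsilon$-$\gcircuit$ (\Cref{def:gcircuit}) used to prove~\Cref{thm:PPAD-hard} and note that it is parametric in $\gamma$ rather than tied to the specific value $\gamma = 1/2$. The gadgets of~\Cref{tab:gcircuit} depend on $\gamma$ only through (i) the path/cycle length $L = O\!\left(\frac{1}{1-\gamma}\log(1/\epsilon_1)\right)$ chosen so that $\gamma^L \le \epsilon_1^2$ is negligible, and (ii) the pseudo-linearity slope $\tfrac{1}{1-\gamma}$ from~\Cref{coro:quasi-mono-ssg} invoked in the approximate contradiction argument sketched after~\eqref{eq:overview-V-value}. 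For any fixed $\gamma \in (1/2,1)$, both $\tfrac{1}{1-\gamma}$ and $L$ remain polynomially bounded, so the reduction stays polynomial-sized and each gadget's correctness proof transports verbatim from the $\gamma = 1/2$ case.

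The main step requiring care is securing a single universal constant $\epsilon_1$ valid across all $\gamma \in (1/2, 1)$, rather than a $\gamma$-dependent threshold. This can be ensured because the gadget-correctness contradictions (e.g., the inequality $V\subaux^{(\ee_{a^1\subaux}, \ppi\subin, \ppi\subout)} - V\subaux^{(\theta \ee_{a^1\subaux}+(1-\theta)\ee_{a^2\subaux}, \ppi\subin, \ppi\subout)} \ge (1-\theta)\gamma(1-\gamma)\epsilon$ from~\Cref{sec:resultstbsg}) can be driven arbitrarily far from the NE-approximation slack by enlarging $L$ polynomially in $\tfrac{1}{1-\gamma}$; thus any $\gamma$-dependence is absorbed into the reduction's size, not into the accuracy threshold. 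Combining this $\gamma$-uniform reduction with the accuracy-monotonicity observation then yields the corollary.
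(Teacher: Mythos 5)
Your accuracy-monotonicity observation is fine and matches what the paper uses. The divergence, and the problem, is in how you handle the discount factor. The paper does not re-run the $\gcircuit$ reduction for each $\gamma$; it invokes \Cref{lem:hardness-ineq}, which maps any $\gamma$-discounted $\onestate$ instance to a $\gamma'$-discounted one (for any $\gamma < \gamma' < 1$) by adding a zero-reward absorbing state and rescaling transitions by $\gamma/\gamma'$, so that value functions --- and hence the set of $\delta$-approximate NE --- are preserved \emph{exactly}. Hardness at $\gamma = 1/2$ with threshold $\epsilon_1$ therefore transfers to every $\gamma \in (1/2,1)$ with the \emph{same} $\epsilon_1$, which is precisely what the corollary's quantifier order (``there exists $\epsilon_1$ such that for all $\gamma$'') demands.

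Your route --- re-parametrizing the gadget construction in $\gamma$ --- has a genuine gap at exactly the step you flag as ``requiring care.'' The NE accuracy for which each gadget's correctness proof goes through is $\delta \le \frac{(1-\gamma)\gamma^{L+1}}{8}\epsilon$, and the factor $(1-\gamma)$ there is \emph{not} an artifact of the path length $L$: it is the lower slope bound from pseudo-linearity (\Cref{coro:quasi-mono-ssg}), i.e.\ the guarantee that deviating by weight $\theta$ gains only at least $(1-\gamma)\theta$ times the full utility gap between the two pure actions. Enlarging $L$ controls the leakage terms $\gamma^L \le \epsilon^2$ but cannot remove this $(1-\gamma)$ factor, and the rewards are capped at $1$ in magnitude so the utility gaps cannot be amplified to compensate. (The $\gamma^{L+1}$ factor is harmless --- with $L = \lceil \tfrac{4}{1-\gamma}\log(1/\epsilon)\rceil$ it stays bounded below by roughly $\epsilon^4$ uniformly in $\gamma$ --- but the $(1-\gamma)$ factor is not.) Your construction therefore yields a threshold $\epsilon_1(\gamma) = \Theta(1-\gamma)$ that vanishes as $\gamma \to 1$, i.e.\ a $\gamma$-dependent threshold rather than the single universal constant the corollary asserts. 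To repair this you would either need gadgets whose deviation incentives do not degrade with $(1-\gamma)$, or --- much more simply --- the exact value-preserving discount-rescaling reduction of \Cref{lem:hardness-ineq}, which sidesteps the issue entirely.
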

\begin{proof}
This is an immediate consequence of~\Cref{lem:hardness-ineq} in~\Cref{apdx:average}, which shows we can reduce NE computation in  $\gamma$-discounted $\onestate$s to NE computation in $\gamma'$-discounted $\onestate$ with same accuracy, for any $\gamma<\gamma'<1$.
\end{proof}

The particular class of $\tbsg$s that we reduce $\epsilon$-$\gcircuit$ to admits the structural property that the discount factor is some fixed constant $\gamma=1/2$ and each player controls only one state ($\onestate$) with two actions. Throughout we use $p_i$ to parametrize the strategy of player $i$ at state $s_i$: with probability $p_i$ it plays $a^1_i$, and with probability $1-p_i$ it plays $a^2_i$.
We first give the construction of a series of gadgets for $\onestate$, which shows that the satisfying the corresponding gate required in $\gcircuit$ can be reduced to finding a mixed NE of the gadget in $\onestate$.
In this reduction, each state (equivalently, player) of $\onestate$ represents a different node/vertex in the $\eps$-$\gcircuit$ instance.
The transition structure that is used in each of the $\onestate$ gadgets is displayed in Figure~\ref{fig:mixedgamegadgets}.

\begin{figure}[t]
\centering
\begin{minipage}[b]{0.49\linewidth}
  \centering
  \centerline{\includegraphics[width=8cm]{Figures/equal_gadget.pdf}}
  \centerline{The equal and multiplication gadgets.}\medskip
\end{minipage}
\begin{minipage}[b]{0.49\linewidth}
  \centering
  \centerline{\includegraphics[width=8cm]{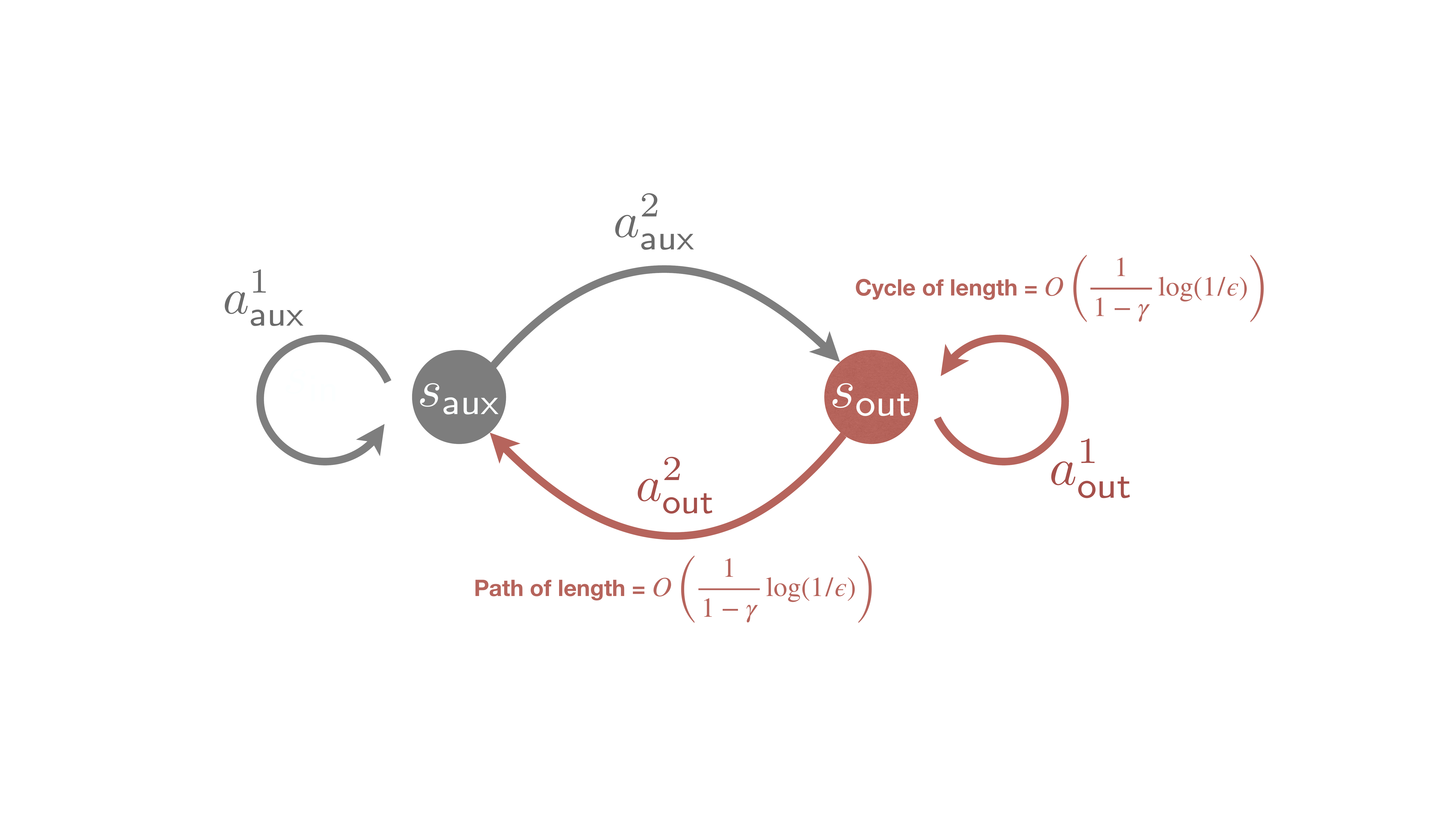}}
  \centerline{The constant gadget.}\medskip
\end{minipage}
\begin{minipage}[b]{0.49\linewidth}
  \centering
  \centerline{\includegraphics[width=8cm]{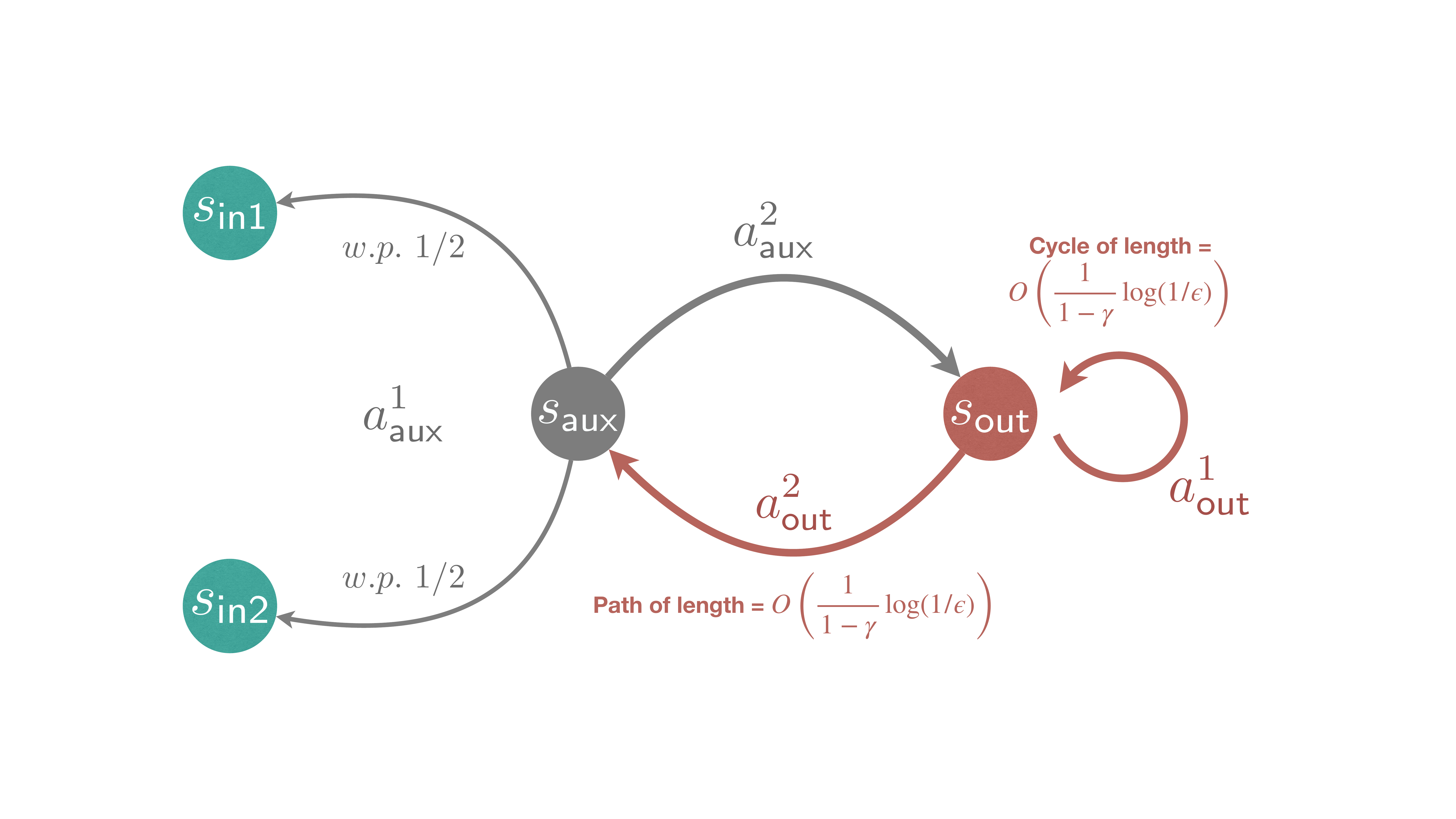}}
  \centerline{The sum and subtraction gadgets.}\medskip
\end{minipage}
\begin{minipage}[b]{0.49\linewidth}
  \centering
  \centerline{\includegraphics[width=8cm]{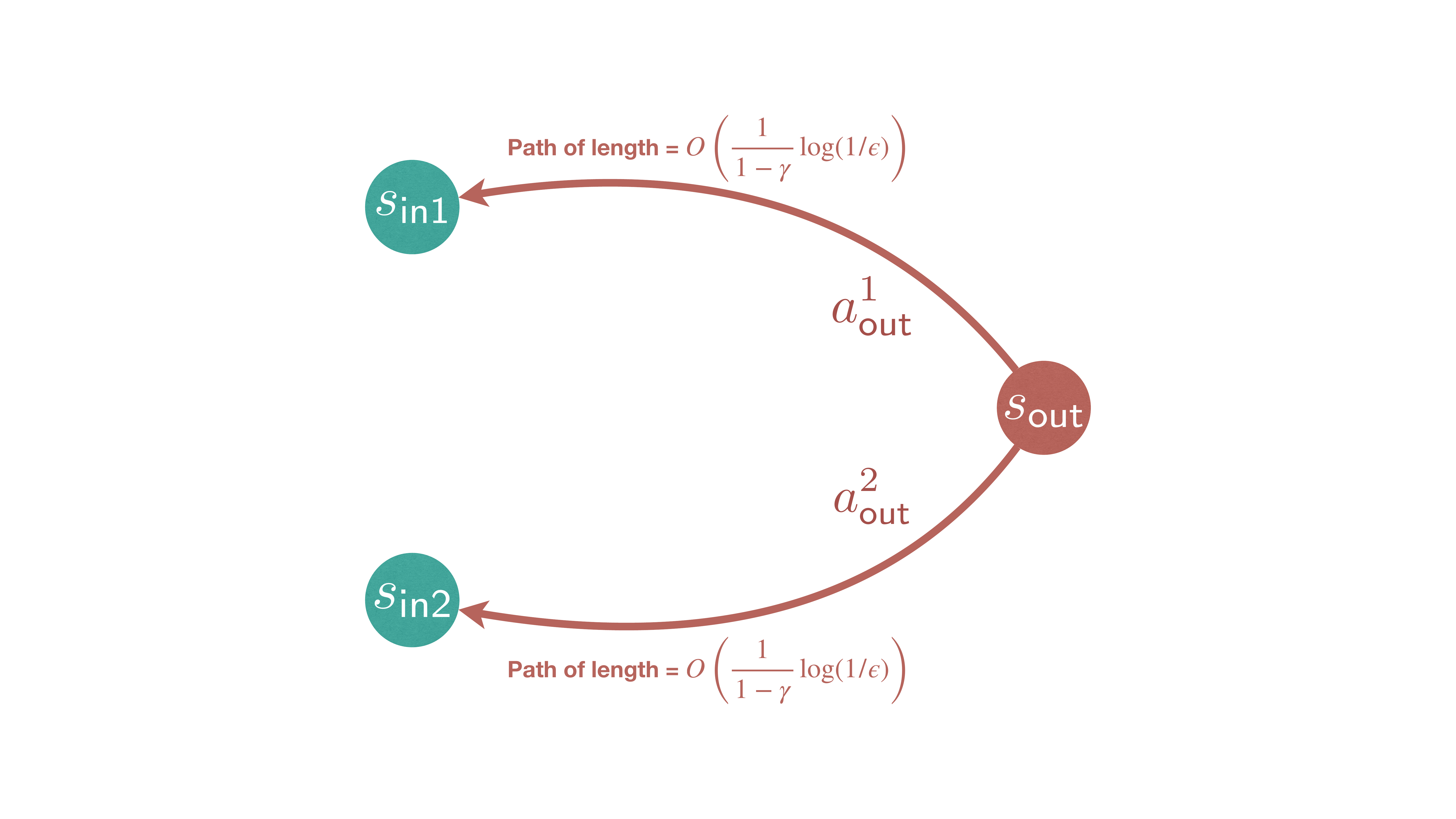}}
  \centerline{The comparison gadget.}\medskip
\end{minipage}

 \caption{Illustration of states and transitions for ``game gadgets'' used to construct an $\gcircuit$ instance using $\onestate$. Each vertex corresponds to a different player. Instantaneous rewards are specified in the tables corresponding to each gadget. All transitions are deterministic unless specified otherwise. The specific transitions marked in red represent a cycle or path of length $L$, where we define $L = \lceil\tfrac{4}{1-\gamma}\log\left(\frac{1}{\eps}\right)\rceil$.
 All the states in the cycle/path, besides $\out$, have a single action (of transiting to the next state in the cycle/path) and zero reward for all players.}

    \label{fig:mixedgamegadgets}
\end{figure}

Note that throughout we will use $\upsilon_i^{\ppi} = V_i^{\ppi}(s_i)$ as the utility function we consider, given the known equivalence in~\Cref{coro:equivalent-notion-NE} of NEs for $\onestate$. While we state the construction of gadgets with general $\gamma$ (as discount factor of $\tbsg$ we consider), and $\epsilon$ (as accuracy of gates in the $\gcircuit$ we implement), we only need it for constant $\gamma = 1/2$, and $\epsilon = \epsilon_0$ for~\Cref{thm:PPAD-hard}, and all the other cases of $1/2\le \gamma< 1$ and $\epsilon\le \epsilon_0$ follows as an immediate corollary (see~\Cref{coro:PPAD-hard}). All our constructions also use long paths / cycles with length $L \defeq \lceil \frac{4}{1-\gamma}\log\left(\frac{1}{\epsilon}\right)\rceil$ by default, which ensures $\gamma^L\le \epsilon^2$.

\paragraph{The equal gadget $G_{=}$:}~
\begin{itemize}
\item $\mathsf{INPUT}$:	Player $i = \mathsf{in}$ which controls state $s\subin$ and plays strategy $p\subin\in[0,1]$, corresponding to choosing action $a^1\subin$ with probability $p\subin$ and action $a^2\subin$ with probability $1-p\subin$.
\item $\mathsf{OUTPUT}$: Player $i= \mathsf{out}$ which controls state $s\subout$ and achieves the same strategy $p\subout\in[p\subin- \epsilon, p\subin+\epsilon]$ at $s\subout$ at a $\delta(\eps)$-approximate NE where $\delta(\eps) \leq \frac{(1-\gamma)\gamma^{L+1}}{8}\epsilon$.
\end{itemize}

We consider an input player $i$ with state $s\subin$, actions $a^1\subin$ and $a^2\subin$. We construct the output player $i= \mathsf{out}$ with state $s\subout$ and auxiliary player $i=\mathsf{aux}$ with state $s\subaux$. When player $\mathsf{out}$ plays action $a^1\subout$, with probability $1$ it transits to itself $s\subout$ through a length-$L$ cycle with dummy states; when it plays action $a^2\subout$, with probability $1$ it transits to $s\subaux$ through a length-$L$ path with dummy states, where we set $ L=\lceil \frac{4}{1-\gamma}\log\left(\frac{1}{\epsilon}\right)\rceil$. When player $\mathsf{aux}$ plays action $a\subaux^1$, with probability $1$ transits to state $s\subin$, when it plays action $a^2\subaux$, with probability $1$ transits to state $s\subout$. 
All states in the long cycle/path, besides $s\subout$ itself, have a single action (of transiting to the next state in the cycle/path) and zero reward for all players.
Moreover, the dummy states have a trivial action that they take in any (approximate) NE.

We will assume that both actions of $\mathsf{in}$ go through a cycle/path of length-$L$ first before going back to any states in the current gadget, throughout the claims we make on all gadget constructions. This assumption is important and we will explain why it is satisfied in our generic $\onestate$ in~\Cref{sssec:combininggadgets} when we discuss how to \textbf{combine gadgets}.

We begin by specifying the rewards that player $\mathsf{aux}$ and $\mathsf{out}$ receives at each state-action pair in~\Cref{tab:equal-reward-aux-out}. 

\begin{table}[htb!]
\centering
\caption{\textbf{Rewards of player $\mathsf{aux}$ and $\mathsf{out}$:}}\label{tab:equal-reward-aux-out}
\begin{tabular}{lllllll}
\multicolumn{1}{l|}{}     & \multicolumn{1}{l|}{$s\subin,a\subin^1$} & \multicolumn{1}{l|}{$s\subin,a\subin^2$} & \multicolumn{1}{l|}{$s\subaux, a^1\subaux$} & \multicolumn{1}{l|}{$s\subaux, a^2\subaux$} & \multicolumn{1}{l|}{$s\subout, a^1\subout$} & \multicolumn{1}{l}{$s\subout, a^2\subout$}  \\ \hline
\multicolumn{1}{l|}{$\aux$}   &
\multicolumn{1}{l|}{$\frac{1}{2}$} & \multicolumn{1}{l|}{$0$}   & 
\multicolumn{1}{l|}{$0$} & 
\multicolumn{1}{l|}{$-\frac{\gamma}{2}$}        & \multicolumn{1}{l|}{$1$}    
& \multicolumn{1}{l}{$\frac{1}{2}$}  
\\ \hline
\multicolumn{1}{l|}{$\mathsf{out}$}  &
\multicolumn{1}{l|}{$-\frac{1}{4}$}   &
\multicolumn{1}{l|}{$-\frac{1}{4}$} & \multicolumn{1}{l|}{$0$}      & \multicolumn{1}{l|}{$\frac{3\gamma}{4}$} & \multicolumn{1}{l|}{$\frac{\gamma^{L+1}}{4}$}        & \multicolumn{1}{l}{$0$}   
\end{tabular}
\end{table}

\begin{claim}
	For any $\gamma\in(0,1)$ and $\epsilon\le \frac{\gamma}{12}$, suppose both actions of $p\subin$ go through path with length at least $L = \lceil\frac{4}{1-\gamma}\log(1/\epsilon)\rceil$ before transiting to any states in the current gadget $\mathsf{in}$, $\mathsf{out}$, $\mathsf{aux}$, the above $\onestate$ gadget implements $p\subout\in[\max(0,p\subin -\epsilon), \min(1,p\subin +\epsilon)]$ for any $\delta$-approximate NE, where $\delta\le \delta \leq \frac{(1-\gamma)\gamma^{L+1}}{8}\epsilon$.
\end{claim}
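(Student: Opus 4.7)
The plan is to show both directions of the approximate equality $p\subout \in [p\subin - \epsilon, p\subin + \epsilon]$ by contradiction, using the pseudo-linearity bound from Corollary~\ref{lem:quasi-mono-tbsg} to convert suboptimality gaps into tight bounds on mixing probabilities, and using the long-path assumption to make the value functions of $\mathsf{aux}$ and $\mathsf{out}$ essentially linear functions of $p\subin, p\subout$, and $q := \pi\subaux(a^1\subaux)$ up to errors of order $\gamma^{L+1}/(1-\gamma)$. Our choice of $L = \lceil \tfrac{4}{1-\gamma} \log(1/\epsilon)\rceil$ guarantees $\gamma^L \le \epsilon^2$, so tail contributions beyond any length-$L$ path are negligible compared to $\epsilon$.

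First I would compute the values at $s\subaux$ for each pure action. By the reward table and the long-path assumption, anything beyond the length-$L$ cycles/paths out of $s\subin$, $s\subaux$, $s\subout$ contributes at most $\gamma^{L+1}/(1-\gamma)$ in absolute value. A direct calculation gives
\[
V\subaux^{(\ee_{a^1\subaux}, \ppi\subin, \ppi\subout)}(s\subaux) = \tfrac{\gamma}{2} p\subin + O\!\bigl(\gamma^{L+1}/(1-\gamma)\bigr),
\qquad
V\subaux^{(\ee_{a^2\subaux}, \ppi\subin, \ppi\subout)}(s\subaux) = \tfrac{\gamma}{2} p\subout + O\!\bigl(\gamma^{L+1}/(1-\gamma)\bigr),
\]
where the $-\gamma/2$ reward at $(s\subaux,a^2\subaux)$ is calibrated to cancel the guaranteed $\gamma/2$ contribution from $s\subout$. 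Thus the value gap equals $\tfrac{\gamma}{2}(p\subin - p\subout)$ up to the tiny tail error.

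Next I would apply Corollary~\ref{lem:quasi-mono-tbsg} to player $\mathsf{aux}$: at a $\delta$-approximate NE, if $V\subaux^{a^1\subaux} > V\subaux^{a^2\subaux}$, then mixing probability $\theta$ towards $a^2\subaux$ must cost at most $\delta$, but the bounded-slope bound says it costs at least $(1-\gamma)\theta$ times the gap. Assuming $p\subin > p\subout + \epsilon$ the gap is at least $\gamma\epsilon/2 - O(\gamma^{L+1}/(1-\gamma))\ge \gamma\epsilon/4$, so $1 - q \le 4\delta / ((1-\gamma)\gamma\epsilon)$, which by our hypothesis $\delta \le (1-\gamma)\gamma^{L+1}\epsilon/8$ is at most $\gamma^L/2 \le \epsilon^2/2$. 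Symmetrically, $p\subin < p\subout - \epsilon$ forces $q \le \epsilon^2/2$.

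Then I would repeat the computation for $\mathsf{out}$ at $s\subout$. Playing $a^1\subout$ yields $V\subout^{a^1\subout}(s\subout) = \tfrac{\gamma^{L+1}}{4} + \gamma^L V\subout^{\ppi\subout}(s\subout)$, and playing $a^2\subout$ routes through the length-$L$ path to $s\subaux$, giving
\[
V\subout^{a^2\subout}(s\subout) = \gamma^L\Bigl[(1-q)\cdot\tfrac{3\gamma}{4} + \gamma\bigl(q\cdot(-\tfrac{1}{4}) + (1-q)V\subout^{\ppi\subout}(s\subout)\bigr)\Bigr] + O(\gamma^{2L+1}),
\]
where the tail term captures the contribution of returning from $s\subin$'s long path. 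After subtracting, the common $\gamma^{L+1} V\subout^{\ppi\subout}(s\subout)$ piece cancels (this is why the $\gamma^{L+1}/4$ reward at $(s\subout, a^1\subout)$ is calibrated precisely), and the remaining gap is $\gamma^{L+1}(q - \tfrac{1}{2}) \cdot \text{const} + O(\gamma^{2L+1})$, strictly positive when $q\ge 1 - \epsilon^2/2$ and strictly negative when $q \le \epsilon^2/2$.

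Finally I would close both contradictions. If $p\subin > p\subout + \epsilon$, then $q \approx 1$ makes $a^1\subout$ strictly preferred with a value gap on the order of $\gamma^{L+1}$; pseudo-linearity for $\mathsf{out}$ then forces $p\subout \ge 1 - O(\delta/((1-\gamma)\gamma^{L+1})) \ge 1 - \epsilon/2$, but $p\subin \le 1 \le p\subout + \epsilon/2$, contradiction. The case $p\subin < p\subout - \epsilon$ is symmetric: $q \approx 0$ makes $a^2\subout$ preferred, so $p\subout \le \epsilon/2$, contradicting $p\subout > p\subin + \epsilon \ge \epsilon$.

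The main obstacle is bookkeeping of the several sources of error simultaneously (tail errors of order $\gamma^{L+1}/(1-\gamma)$, $\mathsf{aux}$'s residual mixing error of order $\delta/((1-\gamma)\gamma\epsilon)$, and the $\delta$-slack in $\mathsf{out}$'s best-response inequality), and verifying that the reward magnitudes in Table~\ref{tab:equal-reward-aux-out}—in particular the carefully chosen $\gamma^{L+1}/4$ reward at $(s\subout, a^1\subout)$ and $-\gamma/2$ reward at $(s\subaux, a^2\subaux)$—make the induced value gaps exactly cancel the deterministic parts so that the sign of the differences is governed by $(p\subin - p\subout)$ and $(q - \tfrac12)$ respectively. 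The hypothesis $\delta \le (1-\gamma)\gamma^{L+1}\epsilon/8$ is exactly what is needed for all three error sources to be absorbed into $\epsilon$.
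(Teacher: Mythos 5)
Your proposal is correct and follows essentially the same route as the paper's proof: a two-sided contradiction argument that first computes $V\subaux$ under each pure action as $\tfrac{\gamma}{2}p\subin$ and $\tfrac{\gamma}{2}p\subout$ up to $O(\eps^2)$ tail error, invokes the pseudo-linearity bound to pin down $\pi\subaux(a^1\subaux)$, then computes the order-$\gamma^{L+1}$ gap in $V\subout$ and applies pseudo-linearity again to force $p\subout$ near $1$ (resp.\ near $0$), contradicting the assumed deviation. The only cosmetic differences are that you track a tighter bound on $\mathsf{aux}$'s mixing probability ($1-q\le\eps^2/2$ versus the paper's $p\subaux\ge 3/4$) and express $\mathsf{out}$'s value gap as proportional to $q-\tfrac12$ rather than bounding $V\subout^{(1)}$ and $V\subout^{(0)}$ separately.
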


\begin{proof}
We prove by contradiction. 
Suppose instead that $0 \le p\subout < p\subin - \eps$ at $\delta$-approximate NE.
Under this condition, we denote by $V\subaux^{(p\subaux)}(\mathsf{aux})$ the cumulative reward for player $\mathsf{aux}$ 
starting from initial distribution $\qq = \ee_{\mathsf{aux}}$, assuming it takes action $a^1\subaux$ with probability $p\subaux$ at its own state, and the input and output players play $p\subin$ and $p\subout$ respectively. 
We also drop the initial distribution $(\mathsf{aux})$ in the $\onestate$ setting when clear from context. 
Then, the definition of rewards in~\Cref{tab:equal-reward-aux-out} directly gives us 
\begin{align*}
 \frac{1}{2}\gamma p\subin\le V\subaux^{(1)} & \le \frac{1}{2}\gamma p\subin+\epsilon^2+\epsilon^4+\cdots\le  \frac{1}{2}\gamma p\subin+2\epsilon^2,
 \end{align*}
 where for the upper bound we also uses the fact that both actions in state $\mathsf{in}$ and $\mathsf{out}$ go into length-$L$ paths before generating another non-zero reward and that $\gamma^L\le \epsilon^2$.
We also can write the linear system for $V\subaux^{(0)}$ as
\begin{equation*}
\begin{aligned} 
& \begin{cases}
	V\subaux^{(0)} = -\frac{\gamma}{2}+\gamma V\subaux^{(0)}(\out)\\
	V\subaux^{(0)}(\out) = p\subout\left(1+\gamma ^L V\subaux^{(0)}(\out)\right) +(1-p\subout)\left(\frac{1}{2}+\gamma ^L V\subaux^{(0)}\right)
\end{cases}\\
\implies V\subaux^{(0)} & = \frac{\frac{\gamma}{2}(1+\gamma^L)p\subout}{(1- p\subout \gamma^L - \gamma^{L+1}(1-p\subout))}\in\frac{1}{2}\gamma p\subout\cdot \left[1, 1+4\epsilon^2\right],
\end{aligned}
\end{equation*}
where both ranges are due to the assumed choice of $\gamma^L\le \epsilon^2$ and the assumption of the claim that $\epsilon\le 1$.

Combining these two bounds, we obtain
\begin{equation}\label{eq:V-aux-equal}
\frac{1}{2}\gamma \left(p\subin-p\subout\right) -2\gamma\epsilon^2 \le V\subaux^{(1)} - V\subaux^{(0)}\le \frac{1}{2}\gamma\left(p\subin-p\subout\right)+2\epsilon^2.
\end{equation}

Now, we consider a $\delta$-approximate NE with $\delta \le  \frac{(1-\gamma)\gamma\epsilon}{16}$.
We show by pseudo-linearity of the utilities (\Cref{coro:quasi-mono-ssg}) that this must imply $p\subaux \ge 1-\frac{1}{4}\ge \frac{3}{4}$, i.e. player $\mathsf{aux}$ must play action $a\subaux^1$ with probability at least $3/4$. 
To see this, we apply~\eqref{eq:def-quasi-mono-bounded-velo} in~\Cref{coro:quasi-mono-ssg} with $i=\aux$, $\ppi_i = \ee_{a\subaux^1}$, $\ppi_i' = \ee_{a\subaux^2}$ and $\theta = 1-p\subaux$ to get
\begin{align*}
\frac{V\subaux^{(1)}-V\subaux^{(p\subaux)}}{V\subaux^{(1)} - V\subaux^{(0)}} & \ge (1-\gamma)(1-p\subaux)\\
\implies \frac{(1-\gamma)\gamma}{16}\epsilon\stackrel{(i)}{\ge} V\subaux^{(1)}-V\subaux^{(p\subaux)} & \ge (1-\gamma)(1-p\subaux)\left(V\subaux^{(1)}-V\subaux^{(0)}\right)\\
& \stackrel{(ii)}{\ge} \frac{(1-\gamma)\gamma}{3}(1-p\subaux)\epsilon,\\
\implies p\subaux\ge 1-\frac{3}{16}\ge \frac{3}{4},
\end{align*}
where we use $(i)$ definition of $\delta$-approximate NE and $(ii)$ the inequality~\eqref{eq:V-aux-equal} with choice of $\epsilon$.

Thus, we have $p\subaux>\frac{3}{4}$ at any $\delta$-approximate NE. 
We show that this in turn requires player $\out$ to play $p\subout \geq 1 - \eps \geq p\subin - \eps$, which is the desired contradiction.
We bound the value functions for player $\mathsf{out}$ under such strategy of player $\mathsf{aux}$, starting from the initial distribution $\qq = \ee\subout$ as
\begin{align*}
V\subout^{(1)} & = \frac{\gamma^{L+1}}{4(1-\gamma^L)}\ge \frac{\gamma^{L+1}}{4},\\
V\subout^{(0)} & \le  \gamma^L p\subaux\Par{0+\gamma \Par{-\frac{1}{4}}} + \gamma^L\left(1-p\subaux\right)\left(\frac{3\gamma}{4}+\gamma V\subout^{(0)}\right)\\
~~\implies ~~V\subout^{(0)} & \le \frac{-\frac{\gamma^{L+1}}{4}p\subaux+\frac{3\gamma^{L+1}}{4}(1-p\subaux)}{1-\gamma^{L+1}(1-p\subaux)}\le \frac{-\frac{3}{4}\gamma^{L+1}+\frac{3}{4}\gamma^{L+1}}{1-\gamma^{L+1}(1-p\subaux)} = 0.
\end{align*}
We then apply~\eqref{eq:def-quasi-mono-bounded-velo} in~\Cref{coro:quasi-mono-ssg} with $i=\out$, $\ppi_i = \ee_{a\subout^1}$, $\ppi_i' = \ee_{a\subout^2}$ and $\theta = 1-p\subout$ to get
\begin{align*}
& \frac{V\subout^{(1)}-V\subout^{(p\subout)}}{V\subout^{(1)} - V\subout^{(0)}}\ge (1-\gamma)(1-p\subout),\\
\implies & \frac{(1-\gamma)\gamma^{L+1}}{4}\epsilon \stackrel{(i)}{\ge}  V\subout^{(1)}-V\subout^{(p\subout)}\ge (1-\gamma)(1-p\subout)\frac{\gamma^{L+1}}{4},\\
\implies & p\subout\ge 1-\epsilon,
\end{align*}
where we use again $(i)$ the definition of a $\delta$-approximate NE with $\delta\le \frac{(1-\gamma)\gamma^{L+1}}{4}\epsilon$.  

We now similarly consider the other contradictory case.
Suppose instead that $1\ge p\subout>p\subin+\epsilon$.
In this case, a $\delta$-approximate NE must have player $\mathsf{aux}$ taking strategy $p\subaux\le \frac{1}{4}$ for achieving the corresponding $\delta$-approximate NE, since by a similar application of~\Cref{coro:quasi-mono-ssg},~\Cref{eq:V-aux-equal}, definition of $\delta$-approximate NE, and choice of $\epsilon$ we have
\begin{align*}
& \frac{V\subaux^{(0)}-V\subaux^{(p\subaux)}}{V\subaux^{(0)}-V\subaux^{(1)}}\ge (1-\gamma)p\subaux,\\
\implies & \frac{(1-\gamma)\gamma\epsilon}{16}\ge V\subaux^{(0)}-V\subaux^{(p\subaux)}\ge (1-\gamma)p\subaux\left(V\subaux^{(0)}-V\subaux^{(1)}\right)\ge \frac{(1-\gamma)\gamma\epsilon}{3}p\subaux,\\
\implies & p\subaux\le \frac{3}{16}\le \frac{1}{4}.
\end{align*}

Under such choice of player $\mathsf{aux}$, the value function  of player $\mathsf{out}$ starting from its own state  becomes
\begin{align*}
	V\subout^{(1)} & =  \frac{\gamma^{L+1}}{4(1-\gamma^L)}\le \frac{1}{4}\gamma^{L+1}\left(1+2\epsilon^2\right)\\ 
V\subout^{(0)} & =\frac{-\frac{\gamma^{L+1}}{4}p\subaux+\frac{3\gamma^{L+1}}{4}(1-p\subaux)}{1-\gamma^{L+1}(1-p\subaux)}\ge \frac{\gamma^{L+1}}{2(1-\gamma^{L+1}(1-p\subaux))}\ge \frac{1}{2}\gamma^{L+1}.
\end{align*}

Combining this with a similar application of~\Cref{coro:quasi-mono-ssg} and choice of $\delta$, $\epsilon$, we get 
\begin{align*}
& \frac{V\subout^{(0)}-V\subout^{(p\subout)}}{V\subout^{(0)}-V\subout^{(1)}}\ge (1-\gamma)p\subout\\
\implies & \frac{(1-\gamma)\gamma^{L+1}}{8}\epsilon\ge V\subout^{(0)}-V\subout^{(p\subout)}\ge (1-\gamma)p\subout\left(V\subout^{(0)}-V\subout^{(1)}\right)\ge \frac{(1-\gamma)\gamma^{L+1}}{8}p\subout,\\
\implies p\subout\le \epsilon,
\end{align*}
and thus leads to a contradiction with the assumption. 

Since the existence of an approximate NE is guaranteed for such games, we conclude that at the approximate NE, it must hold that $p\subout \in [\max(0,p\subin -\epsilon), \min(1,p\subin +\epsilon)] $at any $\delta$-approximate NE where $\delta \leq \frac{(1-\gamma)\gamma^{L+1}}{8}\epsilon$.
\end{proof}

\paragraph{The constant gadget $G_{\alpha}$:}~
\begin{itemize}
\item $\mathsf{INPUT}$:	Some scalar $\alpha\in\R$.
\item $\mathsf{OUTPUT}$: Player $i= \mathsf{out}$ which controls state $s\subout$ and plays strategy $p\subout\in[\med(0,\alpha-\epsilon, 1-\epsilon), \med(1,\alpha +\epsilon, \epsilon)]$ at $s\subout$ at a $\delta(\epsilon)$-approximate NE where  $\delta(\epsilon)\leq \frac{(1-\gamma)\gamma^{L+1}}{8}\epsilon$.
(Here $\med(\cdot)$ denotes the median of the tuple of numbers.)
\end{itemize}

We consider a construction of gadget with states $s\subaux$, $s\subout$, and specify the transition probabilities of each player as follows: When auxiliary player $\mathsf{aux}$ plays $a^1\subaux$ at its own state $s\subaux$, with probability $1$ it transits to itself, when it plays $a^2\subaux$, with probability $1$ it transits to $\mathsf{out}$. When player $\mathsf{out}$ plays action $a^1\subout$, with probability $1$ it transits to itself $s\subout$ through length-$L$ cycle; when it plays action $a^2\subout$, with probability $1$ it transits to $s\subaux$ through length-$L$ path. The rewards each player of $\mathsf{aux}$, $\mathsf{out}$ receives at each state-action pair are defined in~\Cref{tab:const-reward-both} -  we omit the all-zero rewards they receive along the length-$L$ cycle / path as previous gadget construction for simplicity.

\begin{table}[htb!]
\centering
\caption{\textbf{Rewards of players $\mathsf{aux}$, $\mathsf{out}$:} $\alpha\in\R$.}\label{tab:const-reward-both}
\begin{tabular}{lllllll}
\multicolumn{1}{l|}{}     & \multicolumn{1}{l|}{$s\subaux,a^1\subaux$} & \multicolumn{1}{l|}{$s\subaux,a^2\subaux$} & \multicolumn{1}{l|}{$s\subout,a^1\subout$} & \multicolumn{1}{l}{$s\subout,a^2\subout$}  \\ \hline
\multicolumn{1}{l|}{$\mathsf{aux}$}     & \multicolumn{1}{l|}{$\frac{\gamma(1-\gamma) \alpha}{2}$} & \multicolumn{1}{l|}{$-\frac{\gamma}{2}$}        & \multicolumn{1}{l|}{$1$}    & \multicolumn{1}{l}{$\frac{1}{2}$}\\
\hline
\multicolumn{1}{l|}{$\mathsf{out}$}    & \multicolumn{1}{l|}{$-\frac{\gamma}{4}$} & \multicolumn{1}{l|}{$\frac{3\gamma}{4}$}        & \multicolumn{1}{l|}{$\frac{\gamma^{L+1}}{4}$}    & \multicolumn{1}{l}{$0$}
\end{tabular}
\end{table}

\begin{claim} For any $\gamma\in(0,1)$ and $\epsilon\le \frac{1}{12}$, the above $\onestate$ gadget implements $p\subout\in [\med(0,\alpha -\epsilon,1-\epsilon), \med(1,\alpha+\epsilon,\epsilon)]$ for any $\delta$-approximate NE, where $\delta\le \frac{(1-\gamma)\gamma^{L+1}}{8}\epsilon$.
\end{claim}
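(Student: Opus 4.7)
My proof plan mirrors the equal gadget argument, using two contradictory cases with the pseudo-linearity tool from \Cref{coro:quasi-mono-ssg}. The gadget has only two non-trivial players ($\mathsf{aux}$, $\mathsf{out}$) with the self-loop at $s\subaux$ under $a^1\subaux$ replacing the dependence on an input player $\mathsf{in}$ by the constant $\alpha$. So the proof should follow the template of the equal gadget line-by-line, with $\alpha$ playing the role that $p\subin$ played there.

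First I will compute the two extreme values $V\subaux^{(0)}$ and $V\subaux^{(1)}$ for player $\mathsf{aux}$ starting from her own state. Under pure action $a^1\subaux$, the self-loop gives $V\subaux^{(1)} = \frac{\gamma(1-\gamma)\alpha/2}{1-\gamma} = \frac{\gamma\alpha}{2}$ exactly. Under pure action $a^2\subaux$, the transition to $s\subout$ and the structure of the length-$L$ cycle/path at $s\subout$ is identical to that in the equal gadget, so the same linear-system solution gives $V\subaux^{(0)} \in \frac{\gamma p\subout}{2}\cdot[1,1+4\epsilon^2]$. Consequently
\[
\tfrac{\gamma}{2}(\alpha - p\subout) - 2\gamma\epsilon^2 \;\le\; V\subaux^{(1)} - V\subaux^{(0)} \;\le\; \tfrac{\gamma}{2}(\alpha - p\subout) + 2\epsilon^2,
\]
exactly mirroring \eqref{eq:V-aux-equal}.

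Next I will handle the two contradictory cases. If $p\subout < \med(0,\alpha-\epsilon,1-\epsilon)$, then either $\alpha \le \epsilon$ (in which case the median is $0$ and $p\subout<0$ is impossible) or we must have both $\alpha - p\subout > \epsilon$ and $p\subout < 1-\epsilon$. Under the latter, $V\subaux^{(1)} - V\subaux^{(0)} \gtrsim \gamma\epsilon/3$, so applying \eqref{eq:def-quasi-mono-bounded-velo} to player $\mathsf{aux}$ with $\theta = 1-p\subaux$ forces $p\subaux \ge 3/4$ at any $\delta$-approximate NE with $\delta \le (1-\gamma)\gamma\epsilon/16$. I will then solve the linear recursion for $V\subout$ at $s\subaux$, namely $V\subout(s\subaux)[1 - \gamma p\subaux - \gamma(1-p\subaux)\gamma^L] = \gamma[\tfrac{3}{4} - p\subaux]$, which shows $V\subout(s\subaux) \le 0$ whenever $p\subaux \ge 3/4$, and hence $V\subout^{(0)} = \gamma^L V\subout(s\subaux) \le 0 < \tfrac{\gamma^{L+1}}{4(1-\gamma^L)} = V\subout^{(1)}$. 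A second application of pseudo-linearity to player $\mathsf{out}$ then forces $p\subout \ge 1-\epsilon$, contradicting $p\subout < 1-\epsilon$. The symmetric case $p\subout > \med(1,\alpha+\epsilon,\epsilon)$ proceeds identically: $p\subaux \le 1/4$ is forced, and the same recursion yields $V\subout^{(0)} \ge \gamma^{L+1}/2 > V\subout^{(1)}$, driving $p\subout \le \epsilon$ and again producing the contradiction.

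The main obstacle will be bookkeeping the constants so that the $\Theta(\gamma\epsilon)$ gap surviving in $V\subaux^{(1)}-V\subaux^{(0)}$ (after the $O(\epsilon^2)$ slack from the length-$L$ path approximation) remains strictly larger than the $\delta$ slack permitted by the approximate-NE definition, and similarly for the $\Theta(\gamma^{L+1})$ gap in the $\mathsf{out}$ values. The chosen bound $\delta \le (1-\gamma)\gamma^{L+1}\epsilon/8$, together with $\epsilon \le 1/12$ and $\gamma^L \le \epsilon^2$, should precisely absorb all the slack, as it did in the equal gadget. I do not expect any genuinely new mathematical content beyond this careful accounting.
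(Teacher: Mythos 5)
Your proposal is correct and follows essentially the same route as the paper's proof: compute $V\subaux^{(1)}=\gamma\alpha/2$ from the self-loop and $V\subaux^{(0)}\in\frac{\gamma}{2}p\subout\cdot[1,1+4\epsilon^2]$ from the same linear system as the equal gadget, then apply pseudo-linearity (\Cref{coro:quasi-mono-ssg}) twice — first to force $p\subaux\ge 3/4$ (resp.\ $\le 1/4$), then to force $p\subout\ge 1-\epsilon$ (resp.\ $\le\epsilon$) — in the two contradictory cases. The only difference is that you make the $V\subout(s\subaux)$ recursion explicit where the paper defers to the equal-gadget calculation, which is a presentational refinement rather than a new argument.
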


\begin{proof}
We prove the upper and lower bound respectively. Suppose  $0\le p\subout<\alpha-\epsilon$ (as for other cases the lower bound naturally holds). Under such condition, for player $\mathsf{aux}$, we let $V\subaux^{(p\subaux)}(\mathsf{aux})$ be the cumulative reward starting from initial distribution $\qq = \ee_{\mathsf{aux}}$, assuming it takes action $a^1\subaux$ with probability $p\subaux$  at its own state, and input and output players play $p\subin$ and $p\subout$ respectively. We also drop the initial distribution $(\mathsf{aux})$ in the $\onestate$ setting when clear from context. Then the rewards in~\Cref{tab:const-reward-both} and same calculation as in~the equal gadget directly gives us
\begin{align*}
V\subaux^{(1)} & = \frac{\gamma \alpha(1-\gamma)}{2(1-\gamma)} = \frac{\gamma}{2}\alpha,\\
V\subaux^{(0)} & = \frac{\frac{\gamma}{2}(1+\gamma^L)p\subout}{(1- p\subout \gamma^L -\gamma^{L+1}(1-p\subout))}\in\frac{\gamma}{2} p\subout\cdot \left[1, 1+4\epsilon^2\right].
\end{align*}

Consequently, since the current strategy achieves $\delta$-NE with $\delta \le  \frac{(1-\gamma)\gamma}{16}\epsilon$, we show by pseudo-linearity of utility function (\Cref{coro:quasi-mono-ssg}) similar as in equal gadget that  $p\subaux \ge \frac{3}{4}$, i.e. player $\mathsf{aux}$ plays action $a\subaux^1$ with probability at least $3/4$. We show that this in turn implies player $\mathsf{out}$ needs to play $p\subout\ge 1-\epsilon$, which combined with the assumption that $0\le p\subout<\alpha-\epsilon$ shows the lower bound. To see this, we consider the value functions for player $\mathsf{out}$ under such strategy of player $\mathsf{aux}$, following similar calculations as the equal gadget we have
\begin{align*}
V\subout^{(1)} & \ge \frac{\gamma^{L+1}}{4}~~\text{and}~~V\subout^{(0)} \le 0,
\end{align*}
which implies $p\subout \ge 1-\epsilon$ at the $\delta$-approximate NE with $\delta\le \frac{(1-\gamma)\gamma^{L+1}}{4}\epsilon$ given pseudo-linearity in~\Cref{coro:quasi-mono-ssg} and thus leads to the condition that $p\subout\ge \med(0,\alpha-\epsilon,1-\epsilon)$. 

Similarly, for the upper bound, it suffices to consider the case when $1\ge p\subout>\alpha+\epsilon$, we have player $\mathsf{aux}$ must take strategy $p\subaux\le \frac{1}{4}$ for achieving the corresponding $\delta$-approximate NE, i.e. take action $a^2\subaux$ with at least probability $3/4$. Under such choice of player $\mathsf{aux}$, the value functions  of player $\mathsf{out}$ remains the same as the equal gadget case, which implies $p\subout\le \epsilon$ at the $\delta$-approximate NE and thus leads to the condition that $p\subout\le \med(1,\alpha+\epsilon,\epsilon)$. Since the existence of an approximate NE is guaranteed for such games, we conclude that at the corresponding approximate NE, it holds that $p\subout \in[\med(0,\alpha-\epsilon, 1-\epsilon), \med(1,\alpha +\epsilon, \epsilon)]$.
\end{proof}

\paragraph{The multiplicative gadget $G_{\times}$:}~
\begin{itemize}
\item $\mathsf{INPUT}$:	Player $i = \mathsf{in}$ which controls state $s\subin$ and plays strategy $p\subin\in[0,1]$, corresponding to choosing action $a^1\subin$ with probability $p\subin$ and action $a^2\subin$ with probability $1-p\subin$, some scalar $0<\alpha\le 2$.
\item $\mathsf{OUTPUT}$: Player $i= \mathsf{out}$ which controls state $s\subout$ and plays strategy $p\subout\in[\med(0,\alpha p\subin-\epsilon,1-\epsilon), \med(1,\alpha p\subin+\epsilon,\epsilon)]$ at $s\subout$ at a $\delta(\epsilon)$-approximate NE where $\delta(\epsilon)\le \frac{(1-\gamma)\gamma^{L+1}}{8}\epsilon$.
\end{itemize}

We consider a similar construction of gadget with states $s\subin$, $s\subaux$, $s\subout$, and the same transition probabilities under each two actions for the states. The rewards each player of $\mathsf{aux}$, $\mathsf{out}$ receives at each state-action pair is defined in~\Cref{tab:multi-reward-both}.

\begin{table}[htb!]
\centering
\caption{\textbf{Rewards of players $\mathsf{aux}$, $\mathsf{out}$:} $\alpha>0$.}\label{tab:multi-reward-both}
\begin{tabular}{lllllll}
\multicolumn{1}{l|}{}     & \multicolumn{1}{l|}{$s\subin,a^1\subin$} & \multicolumn{1}{l|}{$s\subin,a^2\subin$} & \multicolumn{1}{l|}{$s\subaux,a^1\subaux$} & \multicolumn{1}{l|}{$s\subaux,a^2\subaux$} & \multicolumn{1}{l|}{$s\subout,a^1\subout$} & \multicolumn{1}{l}{$s\subout,a^2\subout$}  \\ \hline
\multicolumn{1}{l|}{$\mathsf{aux}$}   &
\multicolumn{1}{l|}{$\frac{\alpha}{2}$} & \multicolumn{1}{l|}{$0$}      & \multicolumn{1}{l|}{$0$} & \multicolumn{1}{l|}{$-\frac{\gamma}{2}$}        & \multicolumn{1}{l|}{$1$}    & \multicolumn{1}{l}{$\frac{1}{2}$}\\
\hline
\multicolumn{1}{l|}{$\mathsf{out}$}   &
\multicolumn{1}{l|}{$-\frac{1}{4}$} & \multicolumn{1}{l|}{$-\frac{1}{4}$}      & \multicolumn{1}{l|}{$0$} & \multicolumn{1}{l|}{$\frac{3\gamma}{4}$}        & \multicolumn{1}{l|}{$\frac{\gamma^{L+1}}{4}$}    & \multicolumn{1}{l}{$0$}
\end{tabular}
\end{table}

\begin{claim}
	For any $\alpha\in(0,2]$, $\gamma\in(0,1)$ and $\epsilon\le \frac{\gamma}{12}$, suppose both actions of $p\subin$ go through path with length at least $L = \lceil\frac{4}{1-\gamma}\log(1/\epsilon)\rceil$ before transiting to any states in the current gadget $\mathsf{in}$, $\mathsf{out}$, $\mathsf{aux}$, the above $\onestate$ gadget implements $p\subout\in [\med(0,\alpha p\subin-\epsilon,1-\epsilon), \med(1,\alpha p\subin+\epsilon,\epsilon)]$ for any $\delta$-approximate NE, where $\delta \leq \frac{(1-\gamma)\gamma^{L+1}}{8}\epsilon$.
\end{claim}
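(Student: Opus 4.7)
My plan is to follow the same contradiction template used for the equal gadget proof, with only minor quantitative modifications to account for the reward $\alpha/2$ (instead of $1/2$) that player $\aux$ receives at $(s\subin, a^1\subin)$. Because the transition structure of the multiplicative gadget is identical to the equal gadget and only the $\aux$-reward at $s\subin$ is rescaled, the $\onestate$-level calculations will closely mirror those of the equal gadget, with $p\subin$ simply replaced by $\alpha p\subin$ inside $V\subaux^{(1)}$.

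First, I would compute $V\subaux^{(1)}$ and $V\subaux^{(0)}$ under the assumption that both actions at $s\subin$ pass through a length-$L$ path before producing any further non-zero reward in the current gadget (this is exactly the assumption we carry forward from the equal gadget, and justifies that any feedback from $s\subin$ back into the gadget is damped by $\gamma^L \le \eps^2$). This will yield
\begin{equation*}
\tfrac{\gamma\alpha}{2}\,p\subin \le V\subaux^{(1)} \le \tfrac{\gamma\alpha}{2}\,p\subin + 2\alpha\eps^{2},\qquad V\subaux^{(0)} \in \tfrac{\gamma}{2}\,p\subout\cdot[1,\,1+4\eps^{2}].
\end{equation*}
Subtracting gives, analogously to~\eqref{eq:V-aux-equal},
\begin{equation*}
\tfrac{\gamma}{2}(\alpha p\subin - p\subout) - 2\gamma\eps^{2}\;\le\; V\subaux^{(1)}-V\subaux^{(0)} \;\le\; \tfrac{\gamma}{2}(\alpha p\subin - p\subout) + 4\eps^{2},
\end{equation*}
where the factor of $\alpha \le 2$ is absorbed into the constants, and $\alpha/2 \le 1$ keeps all rewards in $[-1,1]$ as required.

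Next, I would perform the two contradiction cases exactly as in the equal gadget. If $0 \le p\subout < \alpha p\subin - \eps$, then $V\subaux^{(1)}-V\subaux^{(0)} \ge \tfrac{\gamma}{3}\eps$ (for $\eps$ small enough), and applying the pseudo-linearity bound~\eqref{eq:def-quasi-mono-bounded-velo} of~\Cref{coro:quasi-mono-ssg} to player $\aux$ at a $\delta$-approximate NE with $\delta \le \tfrac{(1-\gamma)\gamma}{16}\eps$ forces $p\subaux \ge 3/4$. The value function calculations for player $\out$ are \emph{identical} to those of the equal gadget (since the $\out$ rewards in \Cref{tab:multi-reward-both} match those in \Cref{tab:equal-reward-aux-out}), yielding $V\subout^{(1)}\ge \tfrac{\gamma^{L+1}}{4}$ and $V\subout^{(0)}\le 0$, and a second application of pseudo-linearity forces $p\subout \ge 1-\eps$, contradicting the assumption unless $\alpha p\subin - \eps \ge 1 - \eps$. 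The symmetric case $1 \ge p\subout > \alpha p\subin + \eps$ gives $p\subaux \le 1/4$ and hence $p\subout \le \eps$, contradicting the assumption unless $\alpha p\subin + \eps \le \eps$. Together these bounds sandwich $p\subout$ in the interval $[\med(0, \alpha p\subin - \eps, 1-\eps),\ \med(1, \alpha p\subin + \eps, \eps)]$.

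The only substantive item requiring care is the observation that $\alpha p\subin$ may exceed $1$, which is precisely why the output is stated using $\med(\cdot)$ rather than a naive two-sided interval: the proof naturally produces $p\subout \ge 1-\eps$ in that regime, and the $\med$ formulation absorbs the clipping. The restriction $\alpha \in (0,2]$ is needed only to ensure that the reward $\alpha/2$ at $(s\subin, a^1\subin)$ remains within $[-1,1]$, i.e., that the construction is a valid $\onestate$ instance. I do not anticipate any genuinely new obstacle beyond what was handled for $G_{=}$; the main bookkeeping is to track the extra factor of $\alpha$ in the lower and upper bounds on $V\subaux^{(1)}-V\subaux^{(0)}$ and confirm that it does not weaken the threshold on $\delta$ below $\tfrac{(1-\gamma)\gamma^{L+1}}{8}\eps$.
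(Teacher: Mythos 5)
Your proposal is correct and follows essentially the same route as the paper's own (very terse) proof: the paper likewise computes $V\subaux^{(1)}\in[\tfrac{\gamma}{2}\alpha p\subin,\tfrac{\gamma}{2}\alpha p\subin+2\eps^2]$ and $V\subaux^{(0)}\in\tfrac{\gamma}{2}p\subout\cdot[1,1+4\eps^2]$, then invokes the identical two-case pseudo-linearity contradiction from the equal gadget, noting that the $\out$-player rewards are unchanged. Your additional remarks on the $\med(\cdot)$ clipping when $\alpha p\subin>1$ and on the role of $\alpha\le 2$ in keeping rewards bounded are consistent with, and slightly more explicit than, the paper's treatment.
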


\begin{proof}
We prove by contradiction. 	Similar to the calculations for the equal gadget we have 
\begin{align*}
V\subaux^{(1)} & \in 
 \left[\frac{\gamma}{2}\alpha p\subin, \frac{\gamma}{2}\alpha p\subin+2\epsilon^2\right],\\
V\subaux^{(0)} & = \frac{\frac{\gamma}{2}(1+\gamma^L)p\subout}{(1- p\subout \gamma^L +\gamma^{L+1}(1-p\subout))}\in\frac{\gamma}{2} p\subout\cdot \left[1, 1+4\epsilon^2\right].
\end{align*}

Now suppose $0\le p\subout<\alpha p\subin-\epsilon$, then at $\delta$-NE for given $\delta$ player $\mathsf{aux}$ will take action $a\subaux^{1}$ with probability no less than $3/4$, which leads to $p\subout\ge 1-\epsilon$ due to rewards of player $\mathsf{out}$, following the same calculations in the equal gadget. The argument applies symmetrically for the upper bound.
\end{proof}

\paragraph{The sum gadget $G_{+}$:}~
\begin{itemize}
\item $\mathsf{INPUT}$:	Two players $i = \mathsf{in}1, \mathsf{in2}$ which control state $s\subino$, $s\subint$ and play strategy $p\subino,p\subint\in[0,1]$,  corresponding to choosing action $a^1_i$ with probability $p_i$ and action $a^2_i$ with probability $1-p_i$, for $i\in\{\mathsf{in}1,\mathsf{in}2\}$, respectively.
\item $\mathsf{OUTPUT}$: Player $i= \mathsf{out}$ which controls state $s\subout$ and plays strategy $p\subout\in[\max(0,p\subino+p\subint-\epsilon), \min(1, p\subino+p\subint+\epsilon)]$ at $s\subout$ at a $\delta(\epsilon)$-approximate NE where $\delta(\eps) \leq \frac{(1-\gamma)\gamma^{L+1}}{8}\epsilon$.
\end{itemize}

We consider a similar construction of gadget with states $s\subino$, $s\subint$, $s\subaux$, $s\subout$, and the same transition probabilities under each two actions for the state $s\subout$. For $s\subaux$, it still has two actions $a\subaux^1$ and $a\subaux^2$. If it takes action $a\subaux^1$, with probability  $1/2$ it transits to state $s\subino$, with probability $1/2$ it transits to state $s\subint$. When taking action $a\subaux^2$, with probability $1$ it transits to state $s\subout$. The rewards each player of $\mathsf{aux}$, $\mathsf{out}$ receives at state-action pairs of states $s\subino$ and $s\subint$ are defined in~\Cref{tab:sum-reward-cumu}. The instant rewards each player of $\mathsf{aux}$, $\mathsf{out}$ receives at state-action pairs of states $s\subaux$ and $s\subout$ are defined in~\Cref{tab:sum-reward-inst}.

\begin{table}[htb!]
\centering
\caption{\textbf{Rewards of players $\mathsf{aux}$, $\mathsf{out}$.}}\label{tab:sum-reward-cumu}
\begin{tabular}{lllll}
\multicolumn{1}{l|}{}     & \multicolumn{1}{l|}{$s\subino,a^1\subino$} & \multicolumn{1}{l|}{$s\subino,a^2\subino$} &
\multicolumn{1}{l|}{$s\subint,a^1\subint$} & \multicolumn{1}{l}{$s\subint,a^2\subint$} \\ \hline
\multicolumn{1}{l|}{$\mathsf{aux}$}   &
\multicolumn{1}{l|}{$1$} & \multicolumn{1}{l|}{$0$}   & 
\multicolumn{1}{l|}{$1$} & \multicolumn{1}{l}{$0$}   \\
\hline
\multicolumn{1}{l|}{$\mathsf{out}$}   &
\multicolumn{1}{l|}{$-\frac{1}{4}$} & \multicolumn{1}{l|}{$-\frac{1}{4}$} &
\multicolumn{1}{l|}{$-\frac{1}{4}$} & \multicolumn{1}{l}{$-\frac{1}{4}$} 
\end{tabular}
\end{table}

\begin{table}[htb!]
\centering
\caption{\textbf{Rewards of players $\mathsf{aux}$, $\mathsf{out}$.} }\label{tab:sum-reward-inst}
\begin{tabular}{lllll}
\multicolumn{1}{l|}{}   &\multicolumn{1}{l|}{$s\subaux,a^1\subaux$} & \multicolumn{1}{l|}{$s\subaux,a^2\subaux$} & \multicolumn{1}{l|}{$s\subout,a^1\subout$} & \multicolumn{1}{l}{$s\subout,a^2\subout$}  \\ \hline
\multicolumn{1}{l|}{$\mathsf{aux}$}   & 
\multicolumn{1}{l|}{$0$} & \multicolumn{1}{l|}{$-\frac{\gamma}{2}$}        & \multicolumn{1}{l|}{$1$}    & \multicolumn{1}{l}{$\frac{1}{2}$}\\
\hline
\multicolumn{1}{l|}{$\mathsf{out}$}   & \multicolumn{1}{l|}{$0$} & \multicolumn{1}{l|}{$\frac{3\gamma}{4}$}        & \multicolumn{1}{l|}{$\frac{\gamma^{L+1}}{4}$}    & \multicolumn{1}{l}{$0$}
\end{tabular}
\end{table}

\begin{claim}
	For any $\gamma\in(0,1)$ and $\epsilon\le \frac{\gamma}{12}$, suppose both actions of $p\subino$, $p\subint$ go through path with length at least $L = \lceil\frac{4}{1-\gamma}\log(1/\epsilon)\rceil$ before transiting to any states in the current gadget $\mathsf{in1}$, $\mathsf{in2}$, $\mathsf{out}$, $\mathsf{aux}$, the above $\onestate$ gadget implements $p\subout\in [\med(0,p\subino+p\subint-\epsilon,1-\epsilon), \min(1, p\subino+p\subint+\epsilon)]  $ for any $\delta$-approximate NE, where $\delta \leq \frac{(1-\gamma)\gamma^{L+1}}{8}\epsilon$.
\end{claim}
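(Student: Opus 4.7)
The plan is to adapt the contradiction arguments used for the equal and multiplicative gadgets, accounting for the modified transition from $s\subaux$ under $a^1\subaux$ that goes to $s\subino$ and $s\subint$ each with probability $1/2$. The key quantity to compute is $V\subaux^{(p\subaux)}$, the value of player $\aux$ starting at her own state under her own mixed strategy parameterized by $p\subaux$, given fixed $p\subino, p\subint, p\subout$. By the hypothesis that both actions from $s\subino$ and $s\subint$ go through length-$L$ paths before re-entering the gadget, the immediate expected reward collected in the first two steps after taking $a^1\subaux$ is exactly $\gamma \cdot \tfrac{1}{2}(p\subino + p\subint)$ (using \Cref{tab:sum-reward-cumu}), and all subsequent returns are attenuated by factors of $\gamma^L \leq \epsilon^2$. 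So one obtains
\[
V\subaux^{(1)} \in \Brack{\tfrac{\gamma}{2}(p\subino+p\subint),\ \tfrac{\gamma}{2}(p\subino+p\subint) + 2\epsilon^2}.
\]
The computation of $V\subaux^{(0)}$ is identical to the equal gadget, since the substructure reached from $a^2\subaux$ (namely $s\subout$ with its cycle/path back through $s\subaux$) and the rewards of $\aux$ on $s\subaux$ and $s\subout$ are unchanged, giving
\[
V\subaux^{(0)} \in \tfrac{\gamma}{2}\, p\subout \cdot [1,\ 1+4\epsilon^2].
\]
Subtracting, $V\subaux^{(1)} - V\subaux^{(0)} \in \tfrac{\gamma}{2}\bigl((p\subino+p\subint) - p\subout\bigr) \pm 2\gamma\epsilon^2$, which is the sum-gadget analog of equation \eqref{eq:V-aux-equal}.

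Next I would run the two-sided contradiction argument. Suppose for contradiction that $0 \leq p\subout < \med(0, p\subino+p\subint-\epsilon, 1-\epsilon)$. In this regime we necessarily have $p\subino + p\subint - p\subout > \epsilon$ and $p\subout < 1 - \epsilon$. Thus $V\subaux^{(1)} - V\subaux^{(0)} \geq \tfrac{\gamma\epsilon}{3}$ (after absorbing the $O(\epsilon^2)$ slack using $\epsilon \leq \gamma/12$), and applying the pseudo-linearity statement \eqref{eq:def-quasi-mono-bounded-velo} from \Cref{coro:quasi-mono-ssg} to player $\aux$ with $\theta = 1 - p\subaux$ forces $p\subaux \geq 3/4$ at any $\delta$-approximate NE with $\delta \leq \tfrac{(1-\gamma)\gamma}{16}\epsilon$. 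Now the reward structure for $\out$ (\Cref{tab:sum-reward-inst}) is identical to the equal gadget, so the same calculation yields $V\subout^{(1)} \geq \tfrac{\gamma^{L+1}}{4}$ and $V\subout^{(0)} \leq 0$ under $p\subaux \geq 3/4$; pseudo-linearity applied to $\out$ then gives $p\subout \geq 1 - \epsilon$, which is a contradiction since we assumed $p\subout < 1 - \epsilon$. The symmetric argument under the assumption $\min(1, p\subino+p\subint+\epsilon) < p\subout \leq 1$ forces $p\subaux \leq 1/4$ and then $p\subout \leq \epsilon$, again a contradiction. Combining the two cases, every $\delta$-approximate NE satisfies the claimed output range.

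The main obstacle I anticipate is purely bookkeeping: verifying that the $O(\epsilon^2)$ tail terms introduced by the long cycles/paths (both in $V\subaux$ and $V\subout$, and cascading through the assumption on the in-paths of $p\subino, p\subint$) are dominated by the $\gamma\epsilon/8$-sized gaps that drive the pseudo-linearity bounds, under the chosen constraints $\epsilon \leq \gamma/12$ and $L = \lceil\tfrac{4}{1-\gamma}\log(1/\epsilon)\rceil$. The argument itself is entirely parallel to the equal-gadget proof; the only new structural ingredient is the $\tfrac{1}{2}$-$\tfrac{1}{2}$ split on $a^1\subaux$ that converts the single-input quantity $p\subin$ into the sum-like quantity $p\subino + p\subint$, which is exactly what the output constraint of $G_+$ in \Cref{tab:gcircuit} requires.
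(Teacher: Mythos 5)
Your proposal is correct and follows essentially the same route as the paper's proof: the same bounds on $V\subaux^{(1)}$ and $V\subaux^{(0)}$ (with the $\tfrac{1}{2}$-$\tfrac{1}{2}$ split converting $p\subin$ into $p\subino+p\subint$), followed by the same two-sided contradiction via pseudo-linearity applied first to $\aux$ and then to $\out$. Your version is in fact slightly more careful than the paper's terse write-up in stating the contradiction hypothesis with the $\med$ truncation and in tracking the $O(\epsilon^2)$ slack explicitly.
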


\begin{proof}
We prove by contradiction. 	Similar to the calculations for the previous gadgets we have 
\begin{align*}
V\subaux^{(1)} & \in  \left[\frac{\gamma}{2}\left(p\subino + p\subint \right),\frac{\gamma}{2}\left(p\subino + p\subint \right)+2\epsilon^2\right],\\
V\subaux^{(0)} & = \frac{\frac{\gamma}{2}(1+\gamma^L)p\subout}{(1- p\subout \gamma^L - \gamma^{L+1}(1-p\subout))}\in\frac{\gamma}{2} p\subout\cdot \left[1, 1+4\epsilon^2\right].
\end{align*}

Now suppose $0\le p\subout< p\subino+p\subint-\epsilon$, then at $\delta$-NE for given $\delta$ player $\mathsf{aux}$ will take action $a\subaux^{1}$ with probability no less than $3/4$, which leads to $p\subout\ge1-\epsilon$ due to rewards of player $\mathsf{out}$, following the same calculations in previous gadget and proves the lower bound condition on $p\subout$. The argument applies symmetrically to the upper bound.
\end{proof}

\paragraph{The subtraction gadget $G_{-}$:}~
\begin{itemize}
\item $\mathsf{INPUT}$:	Two players $i = \mathsf{in}1, \mathsf{in2}$ which control state $s\subino$, $s\subint$ and play strategy $p\subino,p\subint\in[0,1]$, corresponding to choosing action $a^1_i$ with probability $p_i$ and action $a^2_i$ with probability $1-p_i$, for $i\in\{\mathsf{in}1,\mathsf{in}2\}$, respectively.
\item $\mathsf{OUTPUT}$: Player $i= \mathsf{out}$ which controls state $s\subout$ and plays strategy $p\subout\in[\max(0,p\subino-p\subint-\epsilon), \med(1, p\subino-p\subint+\epsilon,\epsilon)]$ at $s\subout$ at a $\delta(\epsilon)$-approximate NE where $\delta(\eps) \leq \frac{(1-\gamma)\gamma^{L+1}}{8}\epsilon$.
\end{itemize}

We consider a similar construction of gadget with states $s\subino$, $s\subint$, $s\subaux$, $s\subout$, and the same transition probabilities under each two actions for the state $s\subaux$, $s\subout$ as in the construction for the sum gadget. The rewards each player of $\mathsf{aux}$, $\mathsf{out}$ receives at state-action pairs of states $s\subino$ and $s\subint$ are defined in~\Cref{tab:subtract-reward-cumu}. The instant rewards each player of $\mathsf{aux}$, $\mathsf{out}$ receives at state-action pairs of states $s\subaux$ and $s\subout$ are defined in~\Cref{tab:subtract-reward-inst}.

\begin{table}[htb!]
\centering
\caption{\textbf{Rewards of players $\mathsf{aux}$, $\mathsf{out}$.} }\label{tab:subtract-reward-cumu}
\begin{tabular}{lllll}
\multicolumn{1}{l|}{}     & \multicolumn{1}{l|}{$s\subino,a^1\subino$} & \multicolumn{1}{l|}{$s\subino,a^2\subino$} &
\multicolumn{1}{l|}{$s\subint,a^1\subint$} & \multicolumn{1}{l}{$s\subint,a^2\subint$} \\ \hline
\multicolumn{1}{l|}{$\mathsf{aux}$}   &
\multicolumn{1}{l|}{$1$} & \multicolumn{1}{l|}{$0$}   & 
\multicolumn{1}{l|}{$-1$} & \multicolumn{1}{l}{$0$}   \\
\hline
\multicolumn{1}{l|}{$\mathsf{out}$}   &
\multicolumn{1}{l|}{$-\frac{1}{4}$} & \multicolumn{1}{l|}{$-\frac{1}{4}$} &
\multicolumn{1}{l|}{$-\frac{1}{4}$} & \multicolumn{1}{l}{$-\frac{1}{4}$} 
\end{tabular}
\end{table}

\begin{table}[htb!]
\centering
\caption{\textbf{Rewards of players $\mathsf{aux}$, $\mathsf{out}$.} }\label{tab:subtract-reward-inst}
\begin{tabular}{lllll}
\multicolumn{1}{l|}{}   &\multicolumn{1}{l|}{$s\subaux,a^1\subaux$} & \multicolumn{1}{l|}{$s\subaux,a^2\subaux$} & \multicolumn{1}{l|}{$s\subout,a^1\subout$} & \multicolumn{1}{l}{$s\subout,a^2\subout$}  \\ \hline
\multicolumn{1}{l|}{$\mathsf{aux}$}   & 
\multicolumn{1}{l|}{$0$} & \multicolumn{1}{l|}{$-\frac{\gamma}{2}$}        & \multicolumn{1}{l|}{$1$}    & \multicolumn{1}{l}{$\frac{1}{2}$}\\
\hline
\multicolumn{1}{l|}{$\mathsf{out}$}   & \multicolumn{1}{l|}{$0$} & \multicolumn{1}{l|}{$\frac{3\gamma}{4}$}        & \multicolumn{1}{l|}{$\frac{\gamma^{L+1}}{4}$}    & \multicolumn{1}{l}{$0$}
\end{tabular}
\end{table}
\begin{claim}
	For any $\gamma\in(0,1)$ and $\epsilon\le \frac{\gamma}{12}$, suppose both actions of $p\subino$, $p\subint$ go through path with length at least $L = \lceil\frac{4}{1-\gamma}\log(1/\epsilon)\rceil$ before transiting to any states in the current gadget $\mathsf{in1}$, $\mathsf{in2}$, $\mathsf{out}$, $\mathsf{aux}$, the above $\onestate$ gadget implements $p\subout\in [\max(0,p\subino-p\subint-\epsilon), \med(1, p\subino-p\subint+\epsilon,\epsilon)]  $ for any $\delta$-approximate NE, where $\delta \leq \frac{(1-\gamma)\gamma^{L+1}}{8}\epsilon$.
\end{claim}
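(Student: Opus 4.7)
The plan is to mirror the structure used for the sum gadget, but with careful attention to sign of $p\subino - p\subint$ since now $\aux$'s contribution from visiting $s\subint$ is negative rather than positive. Throughout I denote by $V\subaux^{(p\subaux)}$ the value of player $\aux$ at $s\subaux$ when she plays $a\subaux^1$ with probability $p\subaux$, and the inputs/outputs play their prescribed strategies. As in the earlier gadgets, both actions of players $\mathsf{in}1$ and $\mathsf{in}2$ traverse paths of length $\geq L$ before returning to any state of the current gadget, so all ``return contributions'' through $s\subino, s\subint$ to $V\subaux$ are uniformly at most $O(\gamma^L) \le O(\eps^2)$ in magnitude. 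Using the rewards of Tables~\ref{tab:subtract-reward-cumu} and~\ref{tab:subtract-reward-inst} and the same geometric-sum calculation as before, I obtain
\begin{align*}
V\subaux^{(1)} &\in \Bigl[\tfrac{\gamma}{2}(p\subino-p\subint) - 2\eps^2,~ \tfrac{\gamma}{2}(p\subino-p\subint) + 2\eps^2\Bigr], \\
V\subaux^{(0)} &\in \tfrac{\gamma}{2}p\subout \cdot [1, 1+4\eps^2],
\end{align*}
where $V\subaux^{(0)}$ is identical to the analogous quantity in the sum/equal gadgets because the sub-game from $s\subout$ onward is unchanged.

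For the lower bound, suppose for contradiction that $0 \le p\subout < p\subino - p\subint - \eps$ (so in particular $p\subino - p\subint > \eps$). Then
\[
V\subaux^{(1)} - V\subaux^{(0)} \geq \tfrac{\gamma}{2}(p\subino - p\subint) - \tfrac{\gamma}{2}p\subout(1+4\eps^2) - 2\eps^2 \geq \tfrac{\gamma\eps}{3},
\]
using $\eps \le \gamma/12$. Applying pseudo-linearity (Corollary~\ref{coro:quasi-mono-ssg}) with $i=\aux$, $\ppi_i = \ee_{a\subaux^1}$, $\ppi'_i = \ee_{a\subaux^2}$, $\theta = 1 - p\subaux$, together with the $\delta$-NE hypothesis and $\delta \le \tfrac{(1-\gamma)\gamma\eps}{16}$, forces $p\subaux \geq 3/4$. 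But the value calculation at $s\subout$ is identical to the equal gadget, so $V\subout^{(1)} \ge \tfrac{\gamma^{L+1}}{4}$ and $V\subout^{(0)} \le 0$; another pseudo-linearity application then forces $p\subout \ge 1 - \eps$, contradicting $p\subout < p\subino - p\subint - \eps \le 1$.

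For the upper bound I need $p\subout \le \med(1, p\subino-p\subint+\eps, \eps)$. This is where the subtraction gadget diverges slightly from the sum gadget and is the main step of interest. I split into two cases based on the sign of $p\subino - p\subint$. \emph{Case 1:} $p\subino \geq p\subint$, so the median equals $\min(1, p\subino - p\subint + \eps)$; assume for contradiction $p\subout > p\subino - p\subint + \eps$. Then
\[
V\subaux^{(0)} - V\subaux^{(1)} \ge \tfrac{\gamma}{2} p\subout - \tfrac{\gamma}{2}(p\subino-p\subint) - 2\eps^2 \geq \tfrac{\gamma\eps}{3}.
\]
\emph{Case 2:} $p\subino < p\subint$, so the median equals $\eps$; assume $p\subout > \eps$. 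Now $V\subaux^{(1)} \le \tfrac{\gamma}{2}(p\subino - p\subint) + 2\eps^2 \le 2\eps^2$, while $V\subaux^{(0)} \ge \tfrac{\gamma}{2}p\subout \ge \tfrac{\gamma\eps}{2}$, hence again $V\subaux^{(0)} - V\subaux^{(1)} \geq \tfrac{\gamma\eps}{3}$. In either case, pseudo-linearity plus the $\delta$-NE condition with $\delta \le \tfrac{(1-\gamma)\gamma\eps}{16}$ forces $p\subaux \le 1/4$. Under this strategy of $\aux$, the same value computations as in the equal gadget give $V\subout^{(1)} \le \tfrac{1}{4}\gamma^{L+1}(1+2\eps^2)$ and $V\subout^{(0)} \ge \tfrac{1}{2}\gamma^{L+1}$; pseudo-linearity together with $\delta \le \tfrac{(1-\gamma)\gamma^{L+1}}{8}\eps$ then forces $p\subout \le \eps$, contradicting both case assumptions.

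The main obstacle, and the reason this deserves slightly more care than the sum gadget, is that $V\subaux^{(1)}$ now can be \emph{negative} when $p\subint > p\subino$, so the interval for $V\subaux^{(1)}$ must be two-sided and the upper-bound analysis requires a separate treatment of the sign of $p\subino - p\subint$ (equivalently, a case split induced by which of the three values is the median). Once that splitting is done explicitly, the quantitative estimates are essentially the same as in the equal/sum gadgets, and combining with the existence of approximate NE for $\onestate$ instances yields $p\subout \in [\max(0,p\subino-p\subint-\eps),~\med(1, p\subino-p\subint+\eps, \eps)]$ at any $\delta$-approximate NE with $\delta \le \tfrac{(1-\gamma)\gamma^{L+1}}{8}\eps$.
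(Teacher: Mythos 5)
Your proposal is correct and follows essentially the same route as the paper's proof: the same two-sided interval for $V\subaux^{(1)}$ (accounting for the possible negativity when $p\subint > p\subino$), the same expression for $V\subaux^{(0)}$, and the same contradiction via pseudo-linearity forcing $p\subaux \ge 3/4$ (resp.\ $\le 1/4$) and then $p\subout \ge 1-\eps$ (resp.\ $\le \eps$). The paper dispatches the upper bound with ``the argument applies symmetrically,'' so your explicit case split on the sign of $p\subino - p\subint$ (and hence on which value the median takes) is simply a more careful writeup of the same argument, up to immaterial constant-factor slack in the $\gamma\eps/3$ bounds.
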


\begin{proof}
We prove by contradiction. 	Similar to the calculations for the previous gadgets we have 
\begin{align*}
V\subaux^{(1)} & \in 
 	\left[\frac{\gamma}{2}\left(p\subino - p\subint \right)-2\epsilon^2, \frac{\gamma}{2}\left(p\subino - p\subint \right)+2\epsilon^2\right],\\
V\subaux^{(0)} & =\frac{\frac{\gamma}{2}(1+\gamma^L)p\subout}{(1- p\subout \gamma^L - \gamma^{L+1}(1-p\subout))}\in\frac{\gamma}{2} p\subout\cdot \left[1, 1+4\epsilon^2\right].
\end{align*}

Now suppose $0\le p\subout< p\subino-p\subint-\epsilon$, then at $\delta$-NE for given $\delta$ player $\mathsf{aux}$ will take action $a\subaux^{1}$ with probability no less than $3/4$, which leads to contradiction due to rewards of player $\mathsf{out}$, following the same calculations in previous gadget. The argument applies symmetrically for the upper bound.
\end{proof}

\paragraph{The comparison gadget $G_{>}$:}~
\begin{itemize} 
\item $\mathsf{INPUT}$:	Two players $i = \mathsf{in}1, \mathsf{in2}$ which control state $s\subino$, $s\subint$ and play strategy $p\subino,p\subint\in[0,1]$, corresponding to choosing action $a^1_i$ with probability $p_i$ and action $a^2_i$ with probability $1-p_i$, for $i\in\{\mathsf{in}1,\mathsf{in}2\}$, respectively.
\item $\mathsf{OUTPUT}$: Player $i= \mathsf{out}$ which controls state $s\subout$ and plays strategy $p\subout\ge 1-\epsilon$ if $p\subino\ge p\subint+\epsilon$, and $p\subout\le \epsilon$ if $p\subino\le p\subint-\epsilon$ at $s\subout$ at a $\delta(\epsilon)$-approximate NE, where $\delta(\epsilon)\le \frac{(1-\gamma)\gamma^L}{2}\epsilon^2$.
\end{itemize}

We consider a simplified construction of previous gadgets with states $s\subino$, $s\subint$, $s\subout$, and no auxiliary player / state. If playing action $a\subout^1$ at $s\subout$, it transits to $s\subino$ with probability $1$ through a length-$L$ path; if playing action $a\subout^2$ at $s\subout$, it transits to $s\subint$ with probability $1$ through a length-$L$ path. The rewards player $\mathsf{out}$ receives at state-action pairs are defined in~\Cref{tab:comp-reward}.

\begin{table}[htb!]
\centering
\caption{\textbf{Rewards of player $\mathsf{out}$.} }\label{tab:comp-reward}
\begin{tabular}{lllllll}
\multicolumn{1}{l|}{}     & \multicolumn{1}{l|}{$s\subino,a^1\subino$} & \multicolumn{1}{l|}{$s\subino,a^2\subino$} &
\multicolumn{1}{l|}{$s\subint,a^1\subint$} & \multicolumn{1}{l|}{$s\subint,a^2\subint$} & \multicolumn{1}{l|}{$s\subout,a^1\subout$}& \multicolumn{1}{l}{$s\subout,a^2\subout$}\\ \hline
\multicolumn{1}{l|}{$\mathsf{out}$}   &
\multicolumn{1}{l|}{$1$} & \multicolumn{1}{l|}{$0$}   & 
\multicolumn{1}{l|}{$1$} & \multicolumn{1}{l|}{$0$}  & \multicolumn{1}{l|}{$0$} & \multicolumn{1}{l}{$0$}  
\end{tabular}
\end{table}

\begin{claim}
	For any $\gamma\in(0,1)$ and $\epsilon\le \frac{\gamma}{12}$, suppose both actions of $p\subino$, $p\subint$ go through path with length at least $L = \lceil\frac{4}{1-\gamma}\log(1/\epsilon)\rceil$ before transiting to any states in the current gadget $\mathsf{in1}$, $\mathsf{in2}$, $\mathsf{out}$, $\mathsf{aux}$, the above $\onestate$ gadget implements $p\subout\ge 1-\epsilon$ if $p\subino\ge p\subint+\epsilon$, and $p\subout\le \epsilon$ if $p\subino\le p\subint-\epsilon$ for any $\delta$-approximate NE, where $\delta \leq \frac{(1-\gamma)\gamma^L}{2}\epsilon^2$.
\end{claim}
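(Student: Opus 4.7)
The plan is to compute the pure-strategy values of player $\mathsf{out}$ at its own state, show that the gap $V\subout^{(1)} - V\subout^{(0)}$ has sign $\text{sgn}(p\subino - p\subint)$ and magnitude at least roughly $\gamma^L |p\subino - p\subint|$, then invoke the pseudo-linearity bound from Corollary~\ref{coro:quasi-mono-ssg} to transfer this gap into a bound on the mixing weight $p\subout$ at any $\delta$-approximate NE.

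First, I will evaluate $V\subout^{(1)} = V\subout^{(\ee_{a\subout^1}, \ppi\subino, \ppi\subint)}$ and $V\subout^{(0)}$. Since $\mathsf{out}$'s reward is zero along the length-$L$ dummy paths and also at its own state, the only non-zero immediate rewards collected by $\mathsf{out}$ occur when the process sits at $s\subino$ (reward $p\subino$ in expectation) or at $s\subint$ (reward $p\subint$ in expectation). By assumption, after visiting an input state the process then travels along another length-$L$ path before re-entering the current gadget. Using the geometric-series bound $\sum_{k\ge 1}\gamma^{kL} \le 2\gamma^{L} \le 2\eps^2$ from the choice $L = \lceil\tfrac{4}{1-\gamma}\log(1/\eps)\rceil$, a direct computation analogous to the value computations for the equal and sum gadgets yields
\[
V\subout^{(1)} \in \bigl[\gamma^{L+1} p\subino - 2\eps^4,\ \gamma^{L+1} p\subino + 2\eps^4\bigr], \qquad V\subout^{(0)} \in \bigl[\gamma^{L+1} p\subint - 2\eps^4,\ \gamma^{L+1} p\subint + 2\eps^4\bigr].
\]
(The exact numerical constants here will be confirmed in the routine calculation; the point is that all recursion-induced corrections are of order $\gamma^{2L} \le \eps^4$.) Consequently,
\[
\left|\,V\subout^{(1)} - V\subout^{(0)} - \gamma^{L+1}(p\subino - p\subint)\,\right| \le 4\eps^4,
\]
and for $\eps$ at most a sufficiently small constant this yields $V\subout^{(1)} - V\subout^{(0)} \ge \tfrac{1}{2}\gamma^{L+1}\eps$ whenever $p\subino \ge p\subint + \eps$, and $V\subout^{(0)} - V\subout^{(1)} \ge \tfrac{1}{2}\gamma^{L+1}\eps$ whenever $p\subino \le p\subint - \eps$.

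Next, I argue by contradiction using pseudo-linearity. Suppose $p\subino \ge p\subint + \eps$ and that some $\delta$-approximate NE has $p\subout < 1 - \eps$. Apply Corollary~\ref{coro:quasi-mono-ssg} to player $\mathsf{out}$ with $\ppi_i = \ee_{a\subout^1}$, $\ppi_i' = \ee_{a\subout^2}$, and $\theta = 1 - p\subout$:
\[
V\subout^{(1)} - V\subout^{(p\subout)} \ge (1-\gamma)(1-p\subout)\bigl(V\subout^{(1)} - V\subout^{(0)}\bigr) \ge \tfrac{1}{2}(1-\gamma)\gamma^{L+1}\eps \cdot (1-p\subout) > \tfrac{1}{2}(1-\gamma)\gamma^{L+1}\eps^2.
\]
For $\delta \le \tfrac{(1-\gamma)\gamma^L}{2}\eps^2$ (which dominates $\tfrac{1}{2}(1-\gamma)\gamma^{L+1}\eps^2$ after a mild constant adjustment absorbed into $L$), this contradicts the definition of $\delta$-approximate NE. The case $p\subino \le p\subint - \eps$ is handled by the symmetric argument, swapping the roles of $a\subout^1$ and $a\subout^2$ and applying pseudo-linearity with $\theta = p\subout$.

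The main obstacle I anticipate is bookkeeping the $O(\eps^4)$ correction terms coming from the recursive visits to $s\subout$ through the dummy paths, and verifying that the threshold $\delta \le \tfrac{(1-\gamma)\gamma^L}{2}\eps^2$ is compatible with the pseudo-linearity slope bound $(1-\gamma)$ after accounting for the $\gamma$ factor in $\gamma^{L+1}$. In contrast to the equal, constant, multiplicative, sum, and subtraction gadgets, here the absence of an auxiliary player means the gap $V\subout^{(1)} - V\subout^{(0)}$ is only $\Theta(\gamma^L\eps)$ rather than $\Theta(\gamma\eps)$, which is precisely why the allowed approximation error shrinks to $\Theta((1-\gamma)\gamma^L\eps^2)$ — ensuring this constant-factor tracking is careful is the only delicate part of the proof.
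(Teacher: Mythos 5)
Your proposal follows exactly the paper's proof: compute the pure-strategy values $V\subout^{(1)}\approx\gamma^{L}p\subino$ and $V\subout^{(0)}\approx\gamma^{L}p\subint$, lower-bound their gap by roughly $\tfrac{1}{2}\gamma^{L}\eps$ when $|p\subino-p\subint|\ge\eps$, and apply the pseudo-linearity bound of \Cref{coro:quasi-mono-ssg} with $\theta=1-p\subout$ to force $p\subout\ge 1-\eps$ at any $\delta$-approximate NE (symmetrically for the other case). Two bookkeeping points: the paper's leading discount is $\gamma^{L}$ rather than $\gamma^{L+1}$, so the stated threshold $\delta\le\tfrac{(1-\gamma)\gamma^{L}}{2}\eps^{2}$ works without absorbing constants into $L$; and the recursion corrections must be kept in the multiplicative form $\gamma^{L}p\cdot[1,1+2\eps^{2}]$ (i.e.\ additive error $O(\eps^{2}\gamma^{L})$, which is negligible next to $\gamma^{L}\eps$) rather than the looser absolute bound $O(\eps^{4})$, since $\eps^{4}$ need not be small relative to $\gamma^{L}\eps$ when $\gamma^{L}\ll\eps^{3}$.
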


\begin{proof}
Denote $V^{(p\subout)}\subout$ as the value function for player $\mathsf{out}$ starting from its own state $\qq = \ee\subout$, when the input players play strategy $p\subino$ and $p\subint$ respectively, and the output player plays action $a^1\subout, a^2\subout$ with probability $p\subout$, we have 
\begin{align*}
V\subout^{(1)} & \in  \gamma^{L} p\subino\cdot \left[1, 1+\epsilon^2+\epsilon^4\cdots\right]\subseteq \gamma^{L} p\subino\cdot \left[1, 1+2\epsilon^2\right],\\
V\subout^{(0)} & = \gamma^{L}p\subint\cdot \left[1, 1+\epsilon^2+\epsilon^4\cdots\right]\subseteq \gamma^{L} p\subint\cdot \left[1, 1+2\epsilon^2\right].
\end{align*}

Now suppose $p\subino> p\subint+\epsilon$, we show that at $\delta$-NE for given $\delta \le\frac{(1-\gamma)\gamma^L}{2}\epsilon^2 $ player $\mathsf{out}$ will take action $a\subaux^{1}$ with probability no less than $1-\epsilon$. To see this, applying~\Cref{coro:quasi-mono-ssg} with $i=\out$, $\ppi_i = \ee_{a\subout}^1$, $\ppi_i' = \ee_{a\subout}^2$ and $\theta = 1-p\subout$, we have 
\[
\frac{V\subout^{(1)}-V\subout^{(p\subout)}}{V\subout^{(1)} - V\subout^{(0)}}> (1-\gamma)(1-p\subout)\implies V\subout^{(1)}-V\subout^{(p\subout)}\ge (1-\gamma)(1-p\subout)\frac{\gamma^L\epsilon}{2},
\] 
which implies $p\subout \ge 1-\epsilon$ at a $\delta$-approximate NE. The argument applies symmetrically for the other case.
\end{proof}

\subsubsection{Combining gadgets together}\label{sssec:combininggadgets}

The traditional reductions~\cite{daskalakis2009complexity,chen2006settling} of $n$-vertex graphical games to $\gcircuit$ involves combining the individual sub-gadgets described above ``in series''.
What this means is that an output vertex for a certain gate can be used as an input vertex, but \emph{not} an output vertex for another gate.
In other words, for two gates $T,T'$ we can have $s'\subin = s\subout$ but need to have $s'\subout \neq s\subout$.
This is consistent with the definition of $\gcircuit$ provided in~\Cref{def:gcircuit}.

To complete our proof of $\ppad$-hardness we show that a similar idea works for $\onestate$.
In other words, we posit the following structure in the transitions when combining gadgets $G$ and $G'$ with $s\subout = s'\subin$:
\begin{enumerate}
\item Outgoing transitions from $s\subout$ is solely decided by the gadget $G$.
\item Since $s\subout = s'\subin$ in the new gadget $G'$, the outgoing transitions of $s'\subin$ are completely independent of the structure of gadget $G'$, while ingoing transitions from $s'\subaux$ or $s'\subout$ is decided by the structure of gadget $G'$.
\item Since both actions of $s\subout$ of all of the gadgets constructed above go through cycles or paths of length at least $L$ before (if it ever) transits to themselves, the assumptions of the previous claims for properties of constructed gadgets hold. 
\end{enumerate}
Mapping this structure to the original $\gcircuit$ instance, they ensure outputs of arbitrary sub-gadgets (as corresponding to nodes / gates in $\gcircuit$) can be combined to be the new input of the subsequent sub-gadget (as node in $\gcircuit$), without changing the internal game structure that we require of each sub-gadget. This ensures that we can create an arbitrary $\gcircuit$ instance from $\onestate$ by linking $\onestate$ sub-gadgets together and treating each vertex as a separate player.
The rewards of the sub-gadget players are kept unchanged at the original states in the sub-gadget, and the instantaneous rewards of the sub-gadget players when they transit to a state outside their sub-gadget are identically set to zero.
\emph{Note that this also ensures that each player receives reward at no more than $4$ states.}
The final step is to argue that the equilibrium strategies at each vertex remain unchanged as a consequence of this combination.
For this, we critically rely on the $\onestate$ structure and the satisfaction of assumptions made in the claims.
More concretely, we show the property of approximate NE of a designated output state $s\subout$ only relies on the transition probabilities and rewards defined for the additional auxiliary player $\aux$, and output player $\out$, and never depends on the input player $\mathsf{in}$'s actions, probability transitions,  or instantaneous rewards, besides each player's specified instant rewards at input players, and the fact that all actions of input players go through long paths before transitting back to themselves.  
This allows us to combine small gadgets to bigger ones while maintaining the property of approximate NE for each individual sub-gadgets in it. 
This finishes our reduction from $\onestate$ to $\epsilon$-$\gcircuit$.
The later reductions provided in~\cite{daskalakis2009complexity,chen2006settling,rubinstein2018inapproximability} show that $\gcircuit$ is $\ppad$-hard, even for a small enough constant $\epsilon$.
This completes our proof of $\ppad$-hardness of $\onestate$.

\section{Complexity results for $\tbsg$s under $\ronly$}\label{sec:pure}

In this section, we consider specially structured instances of $\onestate$ in which each player only receives a non-zero instantaneous reward at her own (single) state.
We call this class of games $\ronly$ and formally define it below.
\begin{definition}
A $\ronly$ game $\mathcal{G} = (n,\calS = \cup_{i\in[n]}\{s_i\},\mathcal{A},\pp,\r,\gamma)$ is an instance of $\onestate$ that satisfies $r_{i,s,a} = 0$ for any $s \neq s_i$.
\end{definition}

In~\Cref{ssec:reward-only-sign}, we present a few positive results when the all non-zero instantaneous rewards have a fixed sign, and design particular algorithms based on potential games~\cite{monderer1996potential} for randomized transition probabilities and algorithms based on graph structure for deterministic transition probabilities \emph{and} action-independent rewards, respectively. In~\Cref{ssec:mixed-sign}, we show that once we lift the condition that all players receive rewards of the same sign, the existence of pure NE becomes an NP-hard problem. 

\subsection{$\ronly$ $\onestate$s with fixed-sign rewards}\label{ssec:reward-only-sign}

First, we provide polynomial-time algorithms for $\ronly$ $\onestate$s instances with rewards all of the same sign, i.r. $r_{i,s_i,a} \geq 0$ for all $i \in [n], a \in \calA_i$ \emph{or} $r_{i,s_i,a} \leq 0$ for all $i \in [n],a \in \calA_i$.
For an $\ronly$ instance, the utility function of each player simplifies to
\begin{equation}\label{eq:fixed-sign-V}
\begin{aligned}
V_i^{\ppi} = \inprod{\ee_{s_i}}{(\II-\gamma\PP^{\ppi})^{-1}\r^{\ppi}} & = \frac{1}{\det(\ppi)}\left( \sum_{j\in[n]} E_i(\ppi_{-j})r_i^{\ppi_j}(s_j)\right)\\
& = \frac{E_i(\ppi_{-i})}{\det(\ppi)}r_i^{\ppi_i}(s_i).
\end{aligned}
\end{equation}
Above, we defined as shorthand $\det(\ppi) \defeq \det(\II-\gamma\PP^{\ppi})$ and $E_i(\ppi_{-j})\defeq (-1)^{i+j}\det\left(\mathbf{M}_{ji}\right)$, where $\mathbf{M}_{ji}$ denotes the $(j,i)$ minor of matrix $\II-\gamma\PP^{\ppi}$.
Since only the $j^{th}$ row of $\PP^{\ppi}$ depends on $\ppi_j$, we get that $\mathbf{M}_{ji}$ is only a function of $\ppi_{-j}$.
Moreover, as long as $\gamma < 1$ we have $E(\ppi_{-i}),\det(\ppi) > 0$ based on properties of strictly diagonally dominant matrices~\cite{golub2013matrix}.
Consequently, the sign of $V_i^{\ppi}$ will entirely depend on the reward functions $r_i^{\ppi_i}(s_i)$ for each $i\in[n]$. We consider the two cases of non-negative and non-positive rewards below.

\subsubsection{$\ronly$ $\onestate$s with non-negative rewards}\label{ssec:non-negative}

First, we consider the case for which all players receive non-negative rewards.
In other words, we assume that $r_{i,s_i,a} \ge 0$ for all $i \in [n]$ and $a\in\calA_{i}$.
As a warm-up, we show in the following lemma that a pure NE must always exist in this case.
\begin{lemma}\label{lem:nonneg-existence}
Consider a $\ronly$ $\onestate$ instance $\mathcal{G} = (n,\calS = \cup_{i\in[n]}\{s_i\},\mathcal{A},\pp,\r,\gamma)$ for which all rewards are non-negative, i.e., $r_{i,s_i,a}\ge0$	for any $i \in [n]$ and $a\in\calA_i$.
Then, any such game must have a pure NE.
\end{lemma}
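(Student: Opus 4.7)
The plan is to leverage the potential-game structure sketched in the paper's overview. I would first observe that under the non-negativity hypothesis and the utility formula $V_i^{\ppi} = \frac{E_i(\ppi_{-i})}{\det(\ppi)}\, r_i^{\ppi_i}(s_i)$ derived in~\eqref{eq:fixed-sign-V}, each $V_i^{\ppi}$ is non-negative, because $\det(\ppi) > 0$ by strict diagonal dominance of $\II - \gamma \PP^{\ppi}$ and $E_i(\ppi_{-i}) > 0$ by cofactor expansion and diagonal dominance of the corresponding $(i,i)$-minor. Crucially, since only the $i$-th row of $\PP^{\ppi}$ depends on $\ppi_i$, the minor $E_i(\ppi_{-i})$ is independent of $\ppi_i$.

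Assuming temporarily that all rewards are strictly positive, I would define the potential
\[
\Phi(\ppi) \defeq -\log \det(\ppi) + \sum_{j \in [n]} \log r_j^{\ppi_j}(s_j),
\]
and verify directly using the above factorization that for any player $i$ and any deviation $\ppi_i'$,
\[
\Phi(\ppi_i', \ppi_{-i}) - \Phi(\ppi_i, \ppi_{-i}) = \log V_i^{(\ppi_i', \ppi_{-i})} - \log V_i^{(\ppi_i, \ppi_{-i})},
\]
because the $\log E_i(\ppi_{-i})$ term cancels. Since $\log$ is strictly monotone, $\Phi$ is an (exact) potential: any unilateral improvement for a player strictly increases $\Phi$. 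Taking any maximizer $\ppi^*$ of $\Phi$ over the finite set of pure strategy profiles, and recalling that in each induced single-player DMDP (with $\ppi_{-i}^*$ fixed) an optimal pure best-response is always attainable, I would conclude that no player has a (even mixed) profitable unilateral deviation from $\ppi^*$. Hence $\ppi^*$ is a pure NE.

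To remove the strict positivity assumption, I would perturb the rewards by a small $\delta > 0$, setting $r'_{i,s_i,a} = r_{i,s_i,a} + \delta$ so that the perturbed game falls under the strictly positive case and admits a pure NE $\ppi^*_\delta$. Since the set of pure strategy profiles is finite, some fixed profile $\ppi^*$ is a NE of the perturbed game along a sequence $\delta_k \downarrow 0$. Continuity of the value functions $V_i^{\ppi}$ in the rewards (standard, via continuity of the matrix inverse since $\det(\ppi)$ is uniformly bounded away from $0$) transfers each NE inequality $V_{i,\delta_k}^{\ppi^*} \ge V_{i,\delta_k}^{(\ppi_i, \ppi^*_{-i})}$ to the limit, yielding a pure NE of the original game.

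The main obstacle I anticipate is the careful verification that $E_i(\ppi_{-i})$ depends only on $\ppi_{-i}$ and is strictly positive, and that the cancellation in the potential identity goes through cleanly; however, this is ultimately a bookkeeping exercise with cofactor expansions and diagonal dominance. The rest reduces to the classical observation that any finite ordinal potential game admits a pure NE via potential maximization.
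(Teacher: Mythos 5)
Your proof is correct, but it takes a genuinely different (and in some ways more robust) route than the paper. The paper argues by contradiction: assuming no pure NE exists, it follows a chain of strict best-response improvements, which must eventually cycle since the pure profile set is finite; multiplying the improvement inequalities $\det(\ppi) > \frac{r_i^{\ppi_i}(s_i)}{r_i^{\ppi'_i}(s_i)}\det(\ppi'_i,\ppi_{-i})$ around the cycle, the reward ratios telescope to $1$ and yield $\det(\ppi^{(0)}) > \det(\ppi^{(0)})$. That is the finite-improvement-property face of the same multiplicative structure you exploit; you instead make the ordinal potential $\Phi(\ppi) = -\log\det(\ppi) + \sum_j \log r_j^{\ppi_j}(s_j)$ explicit and take a maximizer, which is arguably cleaner and directly anticipates the potential the paper only introduces later in the runtime analysis of Proposition~\ref{lem:pureinP_potential}. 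Both arguments need the same two supporting facts, which you supply: positivity of $\det(\ppi)$ and of the minor $E_i(\ppi_{-i})$ (with the latter independent of $\ppi_i$), and the reduction from mixed to pure deviations (you invoke optimality of deterministic policies in the induced DMDP, the paper invokes pseudo-linearity from Corollary~\ref{coro:quasi-mono-ssg}; either works). One point in your favor: your perturbation-and-limit step cleanly handles profiles with $r_i^{\ppi_i}(s_i) = 0$, where the logarithms (and, strictly speaking, the paper's telescoping ratios $C^{(k)}$) are undefined; the paper's proof glosses over this boundary case. The limit argument is sound since $V_i^{\ppi}$ is linear in the reward vector for fixed $\ppi$, so the finitely many NE inequalities against pure deviations pass to the limit along the subsequence where the same pure profile recurs.
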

\begin{proof}
We prove the existence of pure NE by contradiction. 
Suppose instead that a pure NE does not exist.
Then, for any pure strategy $\ppi$ there exists a player $i$ who strictly wishes to deviate.
More formally, there exists an index $i \in [n]$ and some other strategy $(\ppi_i',\ppi_{-i})$ that changes only at state $s_i$ such that
\begin{align}\label{eq:ronly-relationship}
\frac{r_i^{\ppi_i}(s_i) E_i(\ppi_{-i})}{\det(\ppi)} < \frac{r_i^{\ppi'_i}(s_i)E_i(\ppi_{-i})}{\det(\ppi_i',\ppi_{-i})} 
\implies \det(\ppi) > \frac{r_i^{\ppi_i}(s_i)}{r_i^{\ppi'_i}(s_i)}\det(\ppi_i', \ppi_{-i}). 
\end{align}
By the pseudolinearity property of the utilities in~\Cref{ssec:quasi-mono}, we can take $\ppi'_i$ to be a deterministic policy, and thus $(\ppi_i', \ppi_{-i})$ is a pure strategy as well.

In summary, since we have made the contradictory assumption that there is no pure NE, for any pure $\ppi$ there is a coordinate corresponding to some player's policy $\ppi_i$ that we can change to some other pure strategy $(\ppi'_i,\ppi_{-i})$ and ensure that $\det(\ppi) > \frac{r_i^{\ppi_i}(s_i)}{r_i^{\ppi'_i}(s_i)}\det(\ppi_i', \ppi_{-i})$. 
Also note that the total number of pure strategies is finite.
Therefore, this strategy improvement procedure will lead to a cycle of length $K < \infty$, i.e.
\[\ppi^{(0)}\rightarrow\ppi^{(1)}\rightarrow \ppi^{(2)}\rightarrow\cdots\rightarrow \ppi^{(K)}\rightarrow\ppi^{(K+1)} = \ppi^{(0)}, ~~\text{for}~~K<\infty.\]
For step $k\in[K+1]$, we denote $i_k \in [n]$ as the player who improves their strategy.
Further, we denote the constant coefficient 
\[C^{(k)} = \frac{r_{i_k}^{\ppi_{i_k}}(s_{i_k})}{r_{i_k}^{\ppi'_{i_k}}(s_{i_k})}\] and apply the relationship in \eqref{eq:ronly-relationship} to $\ppi^{(0)}, \ppi^{(1)},\cdots$ to conclude that
\[
\det(\ppi^{(0)}) > C^{(0)}\det(\ppi^{(1)})\ge C^{(0)}C^{(1)}\det(\ppi^{(2)})\ge \cdots \ge \Par{\prod_{k\in[K+1]}C^{(k-1)}}\det(\ppi^{(0)}) = \det(\ppi^{0}).
\]
Note that the last equality above uses the fact that since $\ppi^{(0)}$ and $\ppi^{(K+1)}$ have the same policy for each player, $\prod_{k\in[K+1]}C^{(k)} = 1$ as the policy shifts will cancel in the cycle. 
This ultimately leads to $\det(\ppi^{(0)}) > \det(\ppi^{(0)})$, which is not possible.
This completes the proof by contradiction and proves the claim that a pure NE must exist.
\end{proof}
Lemma~\ref{lem:nonneg-existence} hints at a \emph{potential game} structure for the utilities in the case of $\ronly$ $\onestate$ with non-negative rewards.
This suggests that \emph{best-response-dynamics}~\cite{monderer1996potential} and their variants would successfully converge to a pure NE.
Our first Algorithm~\ref{alg:ronly-non-negative-AltMin} is a variant of approximate-best-response dynamics, and we show that it converges to an $\eps$-approximate pure NE in polynomial-time.
Our analysis of this algorithm makes the potential structure in $\ronly$ $\onestate$ explicit.

\begin{algorithm2e}[h]
	\caption{Strategy iteration for non-negative $\ronly$ $\onestate$}
	\label{alg:ronly-non-negative-AltMin}
	\DontPrintSemicolon
	\codeInput $\ronly$ $\onestate$ instance $\calG  = (n,\calS = \cup_{i\in[n]}\{s_i\},\mathcal{A},\pp,\r,\gamma)$ satisfying $r_{i,s,a}\ge 0$, accuracy $\epsilon$\;
	Preprocess $\tilde{r}_{i,s_i,a} = \max\Par{r_{i,s_i,a}, \frac{(1-\gamma)\eps}{2}}$, for each $i\in[n]$, $a\in\calA_i$\;
	Define game $\calG'= (n,\calS = \cup_{i\in[n]}\{s_i\},\mathcal{A},\pp,\tilde{r},\gamma)$\;
	Initialize two pure strategies $ \ppi\neq\ppi_{+}$\;
	\While{$\ppi\neq \ppi_{+}$}{\label{line:ronly-while}
	$\ppi\gets\ppi_{+}$\;
	Compute $\hat{\upsilon}_i(\ppi) = \hat{V}_i^{\ppi}(s_i)$ for all player $i\in[n]$ for game $\calG'$\;\label{line:ronly-V}
	\For{$i=1$ {\bfseries{\textup{to}}} $n$, $a_i\in\calA_i$}{
	Compute $\hat{\upsilon}_i(\ee_{a_i},\ppi_{-i}) = \hat{V}_i^{(\ee_{a_i},\ppi_{-i})}(s_i)$ for game $\calG'$ using reward $\tilde{r}$\;\label{line:ronly-V-new}
	\If{$\hat{\upsilon}_i(\ee_{a_i},\ppi_{-i})\ge \hat{\upsilon}_i(\ppi)+\frac{\epsilon}{2}$}{
	$\ppi_{+} \gets (\ee_{a_i},\ppi_{-i})$ and \textbf{break}\;
	}
	}
	}
	\codeReturn $\ppi$ 
\end{algorithm2e}	

\begin{proposition}\label{lem:pureinP_potential}
Consider a $\ronly$ $\onestate$ instance $\calG  = (n,\calS = \cup_{i\in[n]}\{s_i\},\mathcal{A},\pp,\r,\gamma)$ where all rewards are non-negative, i.e., $r_{i,s_i,a}\ge0$ for any $i\in[n]$, $a\in\calA_i$.
Then, given some desired accuracy $\epsilon$,~\Cref{alg:ronly-non-negative-AltMin} finds an $\epsilon$-approximate pure NE in time \[O\Par{\frac{n^4\Atot}{(1-\gamma)\epsilon}\log \Par{\frac{1}{(1-\gamma)^2\epsilon}}}\]

\end{proposition}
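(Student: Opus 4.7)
The plan is to interpret Algorithm~\ref{alg:ronly-non-negative-AltMin} as approximate best-response dynamics on a weighted potential game associated with the perturbed instance $\calG'$. The preprocessing $\tilde{r}_{i,s_i,a} \defeq \max(r_{i,s_i,a}, (1-\gamma)\epsilon/2)$ ensures all instantaneous rewards are strictly positive, so by the $\ronly$ utility formula~\eqref{eq:fixed-sign-V} the perturbed values $\hat{V}_i^{\ppi} = E_i(\ppi_{-i})\tilde{r}_i^{\ppi_i}(s_i)/\det(\ppi)$ are strictly positive. I would then introduce the potential function
$$\Phi(\ppi) \defeq -\log\det(\II-\gamma\PP^{\ppi}) + \sum_{j\in[n]} \log\tilde{r}_j^{\ppi_j}(s_j),$$
and observe that because both $E_i(\ppi_{-i})$ and $\tilde{r}_j^{\ppi_j}(s_j)$ for $j\neq i$ are independent of $\ppi_i$, one has the exact potential identity $\Phi(\ppi_i',\ppi_{-i}) - \Phi(\ppi) = \log\hat{V}_i^{(\ppi_i',\ppi_{-i})} - \log\hat{V}_i^{\ppi}$ for any single-player deviation.

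Next I would bound the iteration count of the while loop. Since $\gamma\PP^{\ppi}$ has spectral radius at most $\gamma$, we have $\det(\II-\gamma\PP^{\ppi}) \in [(1-\gamma)^n, (1+\gamma)^n]$, and combined with $\tilde{r}_j^{\ppi_j}(s_j) \in [(1-\gamma)\epsilon/2, 1]$ this gives total range $O(n\log(1/((1-\gamma)^2\epsilon)))$ on $\Phi$. Each successful switch triggers $\hat{V}_i^{(\ee_{a_i},\ppi_{-i})} \geq \hat{V}_i^{\ppi} + \epsilon/2$; since $\hat{V}_i^{\ppi} \leq 1/(1-\gamma)$, the potential identity yields $\Phi(\ppi_+) - \Phi(\ppi) \geq \log(1 + (1-\gamma)\epsilon/2) = \Omega((1-\gamma)\epsilon)$. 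Dividing the range by this per-step gain caps the iteration count at $O(n\log(1/((1-\gamma)^2\epsilon))/((1-\gamma)\epsilon))$. Each iteration requires computing $\hat{V}_i^{(\ee_{a_i},\ppi_{-i})}$ for every $(i,a_i)$, each reducing to a linear system of size $n\times n$ costing $O(n^3)$ (faster via Sherman-Morrison but not needed here), for per-iteration cost $O(n^3\Atot)$ and total runtime matching the claim.

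Finally I would verify that the returned $\ppi$ is an $\epsilon$-approximate pure NE of the original game $\calG$. Upon termination, $\hat{V}_i^{(\ee_{a_i},\ppi_{-i})} < \hat{V}_i^{\ppi} + \epsilon/2$ for all $i,a_i$. Since $|\tilde{r}_{i,s_i,a} - r_{i,s_i,a}| \leq (1-\gamma)\epsilon/2$, the value functions satisfy $|\hat{V}_i^{\ppi'} - V_i^{\ppi'}| \leq \epsilon/2$ for any strategy $\ppi'$, so $V_i^{(\ee_{a_i},\ppi_{-i})} - V_i^{\ppi} < 3\epsilon/2$. Then Corollary~\ref{lem:quasi-mono-tbsg} gives $V_i^{(\ppi_i',\ppi_{-i})} \leq \max_{a_i\in\calA_i} V_i^{(\ee_{a_i},\ppi_{-i})}$ for any mixed deviation $\ppi_i'$, extending the pure-deviation bound to arbitrary deviations; absorbing the factor of $3/2$ into the internal constants (e.g.\ preprocessing with threshold $(1-\gamma)\epsilon/4$ and switching at gain $\epsilon/4$) produces a true $\epsilon$-approximate NE. The main subtlety will be carefully tracking the approximation slack between $\calG$ and $\calG'$ and invoking pseudo-linearity to justify that checking only pure-action deviations suffices.
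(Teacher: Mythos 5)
Your proposal is correct and follows essentially the same route as the paper: the identical logarithmic potential $\Phi(\ppi)=\log\bigl(\prod_i \tilde r_i^{\ppi_i}(s_i)/\det(\II-\gamma\PP^{\ppi})\bigr)$, the same range and per-step-gain bounds giving the iteration count, the same $O(n^3\Atot)$ per-iteration cost, and the same use of pseudo-linearity to reduce mixed deviations to pure-action deviations. The only (cosmetic) difference is in handling the padding slack: the paper exploits the one-sided inequality $\upsilon_i(\ppi)\le\hat\upsilon_i(\ppi)\le\upsilon_i(\ppi)+\eps/2$ (since $\tilde r\ge r$ pointwise) to land exactly on $\eps$, whereas your two-sided bound yields $3\eps/2$ and requires the constant adjustment you already flag.
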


\begin{proof}

\textbf{Correctness.}
We use $\upsilon_i(\ppi)$ and $\hat{\upsilon}_i(\ppi)$ to denote the value function of game $\calG$ and $\calG'$ (where everything remains unchanged except that we pad the instantaneous rewards to be $\tilde{r}_{i,s_i,a} = \max\Par{r_{i,s_i,a}, \frac{(1 - \gamma)\eps}{2}}$) for player $i$ at the state it controls. We show that~\Cref{alg:ronly-non-negative-AltMin} finds an $\eps/2$-approximate pure NE of $\calG'$ under the utility function defined by $\{\upsilon_i\}_{i\in[n]}$. To see this, note that when the algorithm terminates with strategy $\ppi$, we must have $\hat{\upsilon}_i(\ee_{a_i},\ppi_{-i})\le \hat{\upsilon}_i(\ppi)+\frac{\epsilon}{2}$ for all $i\in[n]$ and $a_i\in\calA_i$.
As a consequence of this and the pseudolinearity of utility functions in $\onestate$ (\Cref{coro:quasi-mono-ssg}), we have
\begin{equation}\label{eq:single-reward-positive-eq-1}
	\hat{\upsilon}_i(\ppi_i',\ppi_{-i})\le \hat{\upsilon}_i(\ppi)+\frac{\epsilon}{2},
\end{equation}
for any $i \in [n]$ and any alternative policy $\ppi'_i$.
Consequently, the output of the algorithm, $\ppi$, is an $\eps/2$-approximate pure NE for game $\calG'$. 
Next, we need to show that $\ppi$ is also an $\eps$-approximate pure NE for the original game $\calG$.
By the expression of value functions in~\eqref{eq:fixed-sign-V}, we observe that $\norm{\left(\II-\gamma\PP^{\ppi}\right)^{-1}}_\infty\le \sum_{t\ge 0}\gamma^t\norm{\left(\PP^{\ppi}\right)^t}_\infty\le \frac{1}{1-\gamma}$ and all entries of $\left(\II-\gamma\PP^{\ppi}\right)^{-1}$ are non-negative. Consequently, noting that $|\tilde{r}_{i,s_i,a} - r_{i,s_i,a}| \leq \frac{(1 - \gamma)\eps}{2}$ gives us
\[\upsilon_i(\ppi)\le \hat{\upsilon}_i(\ppi)\le \upsilon_i(\ppi)+\frac{1}{1-\gamma}\frac{(1-\gamma)\eps}{2}=\upsilon_i(\ppi)+\frac{\eps}{2}~~\text{for any}~i\in[n], \ppi.\]
Plugging this back into~\eqref{eq:single-reward-positive-eq-1},  we have
\[
\upsilon_i(\ppi_i',\ppi_{-i})\le \upsilon_i(\ppi)+\epsilon
\]
for any $i \in [n]$ and alternative policy $\ppi'_i$ for player $i$.
This implies that $\ppi$ is also an $\eps$-approximate pure NE for the original game $\calG$, proving the correctness of the algorithm.

\textbf{Runtime.} 
We first show that~\Cref{alg:ronly-non-negative-AltMin} terminates in polynomial number of steps. To see this, we define the logarithmic reward function of each player to be 
\[\tilde{V}_i^{\ppi} \defeq \log \hat{\upsilon}_i(\ppi) = \log \frac{E_i(\ppi_{-i})\tilde{r}_i^{\ppi_i}(s_i)}{\det(\ppi)},\]
and a universal potential function to be 
\[\Phi(\ppi) \defeq \log\left(\frac{\prod_{i\in[n]}\tilde{r}_i^{\ppi_i}(s_i)}{\det(\ppi)}\right).\]
The key observation is that this potential function satisfies the following property: For any $i\in[n]$, and strategy $\ppi$ and alternative policy $\ppi_i'$ for player $i$, we have
\begin{align*}
\Phi(\ppi_i',\ppi_{-i})-\Phi(\ppi) & = \log\left(\frac{\tilde{r}_i^{\ppi'_i}(s_i)}{\det(\ppi'_i,\ppi_{-i})}\right)-\log \left(\frac{\tilde{r}_i^{\ppi_i}(s_i)}{\det(\ppi)}\right)\\
& = \log\left(\frac{E_i(\ppi_{-i})\tilde{r}_i^{\ppi'_i}(s_i)}{\det(\ppi'_i,\ppi_{-i})}\right)-\log \left(\frac{E_i(\ppi_{-i})\tilde{r}_i^{\ppi_i}(s_i)}{\det(\ppi)}\right)\\
& = \tilde{V}_i^{(\ppi'_i,\ppi_{-i})}-\tilde{V}_i^{\ppi}.
\end{align*}

By the properties of strictly diagonally dominant matrices $\II-\gamma \PP^{\ppi}$, we know all its eigenvalues are bounded, i.e.\  $\lambda_i\in[1-\gamma,1+\gamma]$, for each $i\in[n]$. Thus, we have $\det(\ppi) = \prod_{i\in[n]}\lambda_i\in[(1-\gamma)^n, (1+\gamma)^n]$.
Moreover, noting that $\tilde{r}_{i,s_i,a} \geq \frac{(1 - \gamma) \eps}{2}$ for all $i \in [n], a \in \calA_i$ implies that
\begin{equation}\label{eq:ronly-phi-bound-1-pos}
	n\log\Par{\frac{(1-\gamma)\eps}{2(1+\gamma)}}\le\Phi(\ppi)\le n\log\Par{\frac{1}{1-\gamma}}.
\end{equation}
Now, whenever we restart on~\Cref{line:ronly-while}, it means that we have found a player $i\in[n]$ and a new strategy $(\ppi_i',\ppi)$ such that  $\hat{\upsilon}_i(\ppi_i',\ppi)\ge \hat{\upsilon}_i(\ppi)+\frac{\epsilon}{2}$. Since $\hat{\upsilon}_i(\ppi)\le \frac{1}{1-\gamma}$ for any $i,\ppi$, we have $\hat{\upsilon}_i(\ppi_i',\ppi_{-i})\ge \hat{\upsilon}_i(\ppi)+\frac{(1-\gamma)\epsilon}{2}\hat{\upsilon}_i(\ppi)$.
As a consequence, every ``restart'' must increase the potential by at least 
\begin{equation}\label{eq:ronly-phi-bound-2-pos}\Delta \Phi\defeq \log \Par{\frac{\hat{\upsilon}_i(\ppi_i',\ppi_{-i})}{ \hat{\upsilon}_i(\ppi)}}\ge \log\Par{1+\frac{(1-\gamma)\epsilon}{2}}.	
\end{equation}

Combining~\eqref{eq:ronly-phi-bound-1-pos} and~\eqref{eq:ronly-phi-bound-2-pos} yields that the while loop of~\Cref{line:ronly-while} restarts at most 
\[
O\Par{\frac{n}{(1-\gamma)\epsilon}\log \Par{\frac{1}{(1-\gamma)^2\epsilon}}}
\]
iterations.
Next, we characterize the per-iteration runtime.
Within each iteration of the while loop, we need to first compute all $\hat{\upsilon}_i(\ppi)$ for each $i\in[n]$ under the given strategy $\ppi$ in~\Cref{line:ronly-V}, which takes $O(\sum_{i\in[n]}|\calA_i|+n^3)$. 
We also need to compute each value function when switching the policy at one single state for each state and action pair in~\Cref{line:ronly-V-new}, which takes $O(n^3(\sum_{i\in[n]}|\calA_i|))$ in total. 
Summing these together and multiplying by the total number of iterations yields overall runtime
\[
O\Par{\frac{n^4\Atot}{(1-\gamma)\epsilon}\log \Par{\frac{1}{(1-\gamma)^2\epsilon}}},
\]
which is polynomial in problem parameters $\Atot$, $1/(1-\gamma)$ and $1/\epsilon$.
\end{proof}
Algorithm~\ref{alg:ronly-non-negative-AltMin} is an appealing algorithm to compute $\eps$-approximate NE for $\ronly$ $\onestate$ instances with non-negative rewards, and its proof uncovers a Markov potential game structure.
In~\Cref{sssec:nonpositive} we show that a similarly defined algorithm (with negative clippings for the rewards) also succeeds in computing $\eps$-approximate NE for $\ronly$ $\onestate$ instances with non-positive rewards.
Algorithm~\ref{alg:ronly-non-negative-AltMin} is a centralized algorithm as all players do not update their strategies in the same iteration.
However, based on the identified potential game structure we believe that there is also potential to study the convergence of independent policy gradient methods~\cite{fox2021independent} for $\ronly$ $\onestate$.

The essence of the potential function structure simply lies in the fact that the rewards attained by each player at their own state are all of the same sign.
While at a high level this provides a qualitative incentive-symmetry, the quantitative incentives of each player can be quite different.
In particular each player wants to maximize some combination of her instantaneous reward at her own state and her own state's visitation probability; but the relative combination of these factors can completely vary across players.
We next further specialize the non-negative $\ronly$ $\onestate$ problem to completely symmetrize the players' incentives, and show that \emph{exact} NE can be computed efficiently under this further simplification.
\begin{definition}\label{def:ronly-further}
	We say an $\ronly$ $\onestate$ has \emph{deterministic transitions} if  all transitions are deterministic, i.e. $\pp_{s,a}(s') \in \{0,1\}$ for any $s\in\calS, a\in\calA_s$, $s'\in\calS$. 
	Further, we say it has \emph{action-independent rewards} if the instantaneous  rewards only depend on state, i.e. $r_{i,s_i,a} = r_{i,s_i,a'}$ for any $a \neq a'$.
\end{definition}
The independence of instantaneous rewards on action implies that every player $i$ simply wants to visit her own state as frequently as possible to maximize her utility.
Further, the deterministic nature of the transitions allows us to make a natural mapping to graph problems.
In particular, for every $\ronly$ $\onestate$ instance with the further structure of~\Cref{def:ronly-further} we can define a corresponding directed graph with vertices representing states and directed edges representing deterministic transitions between states.
Through this graph interpretation, we see that maximizing a player's own-state visitation probability corresponds to finding short cycles for the vertex that player controls.
This graph-theoretic interpretation yields the following polynomial-time algorithm for finding \emph{exact} pure NEs, described in~\Cref{alg:ronly-non-negative-graph}. 

\begin{algorithm2e}[h]
	\caption{Shortest cycles for non-negative $\ronly$ $\onestate$, deterministic transitions, action-independent rewards}
	\label{alg:ronly-non-negative-graph}
	\DontPrintSemicolon
	\codeInput $\ronly$ $\onestate$ $\calG  = (n,\calS = \cup_{i\in[n]}\{s_i\},\mathcal{A},\pp,\r,\gamma)$ with deterministic transitions, action independent rewards so that $r_{i,s,a}\ge 0$\;
	Construct a graph $G=(V,E)$: $V = \calS$, each action $a\in\calA_i$ corresponds to a distinct edge $(s_i,s_j)\in E$ when $\pp_{s_i,a}(s_j)=1$. \;
	\While{there exist cycles in $G(E)$}{
	Find shortest cycle on $G$, denote its vertex set as $V'$\tcp*{break ties arbitrarily}
	Update $\ppi$ according to the cycle\;
	Update set $E$ by removing $(s_i, s_j)\in E$ for all $s_i\in V'$, $s_j\in V$\;
	}
	Pick arbitrary action for the remaining vertices with edges on $G(E)$ and update $\ppi$\;\label{line:ronly-graph-remain}
	\codeReturn $\ppi$ 
\end{algorithm2e}	

\begin{proposition}
	Given a $\ronly$ $\onestate$ instance $\calG  = (n,\calS = \cup_{i\in[n]}\{s_i\},\mathcal{A},\pp,\r,\gamma)$ where all rewards are non-negative, all transitions are deterministic and rewards are action-independent (see~\Cref{def:ronly-further}),~\Cref{alg:ronly-non-negative-graph} finds an exact pure NE in time $O(n\Atot^2\log n)$
\end{proposition}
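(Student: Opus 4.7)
The plan is to exploit the simple structure of the utility function under deterministic transitions and action-independent rewards. For any pure strategy $\ppi$, the trajectory from $s_i$ is deterministic and either enters a cycle through $s_i$ of some length $L_i$ or never revisits $s_i$, yielding $\outility_i(\ppi) = r_i/(1-\gamma^{L_i})$ in the first case and $\outility_i(\ppi) = r_i$ in the second, where $r_i \defeq r_{i,s_i,a}$ denotes the (action-independent, non-negative) reward. Since $r_i \ge 0$ and $\gamma \in (0,1)$, every player weakly prefers lying on any cycle over lying on none and, when on a cycle, strictly prefers a shorter cycle. Hence checking the NE condition reduces to checking cycle lengths through each $s_i$ in the final policy graph.

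The heart of the correctness argument is to show that players $i \in V'_k$ cannot benefit from deviating. A deviation chooses some edge $(s_i, s_{j'})$ and, under the other players' fixed strategies, either fails to revisit $s_i$ (yielding utility $r_i \le r_i/(1-\gamma^{L_k})$ with $L_k \defeq |V'_k|$) or produces a new cycle $s_i = v_0 \to v_1 \to \cdots \to v_{L'-1} \to s_i$. I would establish by backward induction along this new cycle that each $v_t$ lies in $V'_k$ or in the ``remaining'' set (vertices never selected by the while loop), and in particular never in $V'_m$ with $m < k$: the predecessor $v_{L'-1}$ of $s_i \in V'_k$ is either in $V'_k$ (since a fixed strategy arising from the $V'_m$-cycle can only point inside the same $V'_m$, forcing $m=k$) or a remaining vertex whose arbitrary choice in the last line happens to target $s_i$; and once a remaining vertex appears in the chain, all further predecessors must also be remaining by the same reasoning. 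Consequently none of the edges of the new cycle were deleted before iteration $k$, so this cycle existed in the remaining graph at iteration $k$, and the algorithm's shortest-cycle rule forces $L' \ge L_k$. Thus the deviation yields utility at most $r_i/(1-\gamma^{L_k})$ and is not a strict improvement.

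For players $i$ outside every $V'_k$, the set $G(E)$ is acyclic at termination, so every cycle in the final policy graph lies inside some committed $V'_m$; the trajectory from $s_i$ visits $s_i$ only once and $\outility_i(\ppi) = r_i$. The same backward-induction template shows that a hypothetical deviation-induced cycle through $s_i$ would consist entirely of remaining vertices (since $s_i \notin V'_m$ for any $m$), contradicting the acyclicity of $G(E)$; hence no deviation can raise her utility above $r_i$. The runtime claim then follows from the observation that the while loop runs at most $n$ times, each iteration commits at least one new vertex and is dominated by a shortest-directed-cycle computation performable in $O(\Atot^2 \log n)$ via standard BFS/priority-queue techniques on the edge-indexed graph. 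The main technical obstacle is the predecessor/backward-induction argument and its interaction with the arbitrary assignments in the last line of the algorithm, which must be handled carefully to ensure that those assignments cannot produce cycles outside the acyclic $G(E)$ or shrink cycle lengths below the ``shortest-cycle at iteration $k$'' bound.
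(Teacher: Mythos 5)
Your proposal is correct and follows essentially the same route as the paper's proof: the same case split between players committed to a shortest cycle at some iteration and the leftover players assigned in the final line, the same key observation that any deviation-induced cycle must already have been present in the residual graph at the iteration where the player was committed (so the shortest-cycle rule bounds its length from below), and the same runtime accounting via at most $n$ iterations of a shortest-directed-cycle computation. Your backward-induction argument along the deviation cycle is a careful elaboration of the step the paper dispatches with ``by the definition of a shortest cycle,'' and it correctly handles the interaction with the arbitrarily assigned remaining vertices.
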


\begin{proof}
\textbf{Correctness.} We first show that the players whose corresponding actions are defined only in~\Cref{line:ronly-graph-remain} cannot switch to another action to generate strictly better utility.
Note that, given the policies chosen by the other players, no cycles pass through the corresponding vertices for these players in the induced graph.
Consequently, any action yields the same reward for these players and the chosen actions are optimal.

Next, we consider any player $i$ whose corresponding action $\ppi_i$ is chosen by picking a cycle at some step.
Let $\calS_{<i}$ denote the set of all states (equivalently players) that have chosen an action \emph{before} player $i$. 
If player $i$ takes any action that transits to any $s \in \calS_{<i}$, she will never visit herself again, simply because player $s$ has already found a shortest cycle that repeatedly transits back to itself.
Therefore, any such action are clearly suboptimal for player $i$.
Moreover, by the definition of a shortest cycle, player $i$ also cannot generate a strictly better utility by picking an action that transits to some other state than the current shortest cycle chosen. 
Putting these cases together, player $i$'s current action, which keeps her on the shortest-cycle, is optimal given all other players' actions.
Consequently, the strategy $\ppi$ is a pure NE.

\textbf{Runtime}. We know that each iteration at least removes one state from $V$, so the total while loop must terminate within $n$ iterations. Also, within each iteration, we compute a shortest cycle which takes time $O(|\calA|(|\calA|+n)\log n)$~\cite{dijkstra1959note}. Thus, the total runtime is bounded by $O(n\Atot^2\log n)$.
\end{proof}

\subsubsection{$\ronly$ $\onestate$s with non-positive rewards}\label{sssec:nonpositive}

In this section, we consider the alternative case of $\ronly$ for which all players receive non-positive rewards.
In other words, we assume that $r_{i,s_i,a} \le 0$ for all $i \in [n]$ and $a\in\calA_i$.
All proof arguments follow very similarly as in~the non-negative-reward case (except for a slightly different exact NE algorithm for $\ronly$ $\onestate$s when transitions are deterministic and rewards are action-independent, see~\Cref{alg:ronly-non-positive-graph}). 
We provide the results here mainly for completeness. 
We first show that, similar to the case of non-negative rewards, a pure NE must always exist.

\begin{lemma}\label{lem:nonpos-existence}
Consider a $\ronly$ $\onestate$ instance $\mathcal{G} = (n,\calS = \cup_{i\in[n]}\{s_i\},\mathcal{A},\pp,\r,\gamma)$ for which all rewards-at-own-state are non-positive, i.e., $r_{i,s_i,a}\le0$	for any $i \in [n]$ and $a\in\calA_i$.
Then, any such game must have a pure NE.
\end{lemma}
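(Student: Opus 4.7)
The plan is to mirror the proof of Lemma~\ref{lem:nonneg-existence} essentially verbatim, flipping the relevant inequalities to account for the sign change of rewards. As in the non-negative case, I will argue by contradiction: if no pure NE existed, then for every pure strategy some player would strictly prefer a single-state deviation, so by pseudo-linearity (\Cref{coro:quasi-mono-ssg}) we could always pick the deviating strategy to remain pure. Since the number of pure strategies is finite, iterating this produces a cycle of strictly improving pure deviations $\ppi^{(0)}\to\ppi^{(1)}\to\cdots\to\ppi^{(K+1)}=\ppi^{(0)}$, with some player $i_k$ switching from $\ppi^{(k)}_{i_k}$ to $\ppi^{(k+1)}_{i_k}$ at step $k$.

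Using the closed form~\eqref{eq:fixed-sign-V} and dividing out the common positive factor $E_{i_k}(\ppi^{(k)}_{-i_k})$, the strict-improvement inequality $V_{i_k}^{\ppi^{(k+1)}}>V_{i_k}^{\ppi^{(k)}}$ reduces to
\[
\frac{r_{i_k}^{\ppi^{(k+1)}_{i_k}}(s_{i_k})}{\det(\ppi^{(k+1)})}>\frac{r_{i_k}^{\ppi^{(k)}_{i_k}}(s_{i_k})}{\det(\ppi^{(k)})}\,.
\]
Assuming for the moment that all on-state rewards are strictly negative, multiplying both sides by $-1$ and flipping the inequality yields the mirror-image relation of~\eqref{eq:ronly-relationship}:
\[
\det(\ppi^{(k)})<\frac{\bigl|r_{i_k}^{\ppi^{(k)}_{i_k}}(s_{i_k})\bigr|}{\bigl|r_{i_k}^{\ppi^{(k+1)}_{i_k}}(s_{i_k})\bigr|}\,\det(\ppi^{(k+1)})\,.
\]
Chaining this across the cycle and using that $\ppi^{(0)}=\ppi^{(K+1)}$ forces the telescoping product of the ratios to equal $1$ gives $\det(\ppi^{(0)})<\det(\ppi^{(0)})$, the desired contradiction. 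Equivalently, one may view this as showing that $\Phi(\ppi):=\log\det(\ppi)-\sum_{i\in[n]}\log|r_i^{\ppi_i}(s_i)|$ is a strict potential that increases along every improving deviation, precluding any improving cycle.

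The only subtlety, and the main obstacle, is that some on-state rewards may equal zero, in which case the ratios and logarithms above are undefined. I plan to dispatch this by a standard perturbation argument: replace every $r_{i,s_i,a}$ by $r_{i,s_i,a}-\delta$ for a small $\delta>0$, so that all rewards become strictly negative and the above argument produces a pure NE $\ppi^{(\delta)}$. Since there are only finitely many pure strategies, some pure strategy $\ppi^\star$ arises as $\ppi^{(\delta_k)}$ along a sequence $\delta_k\to 0$; by continuity of each $V_i^{\ppi}$ in the rewards (with fixed denominator $\det(\ppi)>0$), $\ppi^\star$ satisfies the NE inequalities in the limit and is therefore a pure NE of the original game. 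Alternatively, one can argue directly: any player who ever switches to a zero-reward action attains utility $0$, the maximum possible value under non-positive rewards, independently of $\ppi_{-i}$, hence she never deviates again, so the improving cycle can only involve deviations among strictly-negative-reward actions, to which the potential argument applies without modification.
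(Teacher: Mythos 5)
Your proposal is correct and follows essentially the same route as the paper's proof: a contradiction via a cycle of strictly improving pure single-state deviations, the determinant--reward ratio inequality obtained from the closed form \eqref{eq:fixed-sign-V}, and the observation that the ratios telescope to $1$ around the cycle. Your explicit treatment of the zero-reward edge case (via perturbation or the ``utility $0$ is maximal, so such a player never deviates again'' argument) is a welcome addition that the paper's proof glosses over, since its ratio $C^{(k)}$ is undefined or degenerate when some on-state reward vanishes.
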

\begin{proof}
Similar to the corresponding proof of~\Cref{lem:nonneg-existence}, we prove the existence of pure NE by contradiction.
Suppose instead that a pure NE does not exist.
Then, for any pure strategy $\ppi$ there exists a player $i$ who strictly wishes to deviate.
More formally, there exists an index $i \in [n]$ and some other strategy $(\ppi_i', \ppi_{-i})$ that changes at one state $s_i$ such that 
\begin{align}\label{eq:ronly-relationship-negative}
0\ge \frac{r_i^{\ppi'_i}(s_i)}{\det(\ppi_i', \ppi_{-i})}> \frac{r_i^{\ppi_i}(s_i)}{\det(\ppi)} ~~\implies~~ \det(\ppi_i', \ppi_{-i})>\frac{r_i^{\ppi'_i}(s_i)}{r_i^{\ppi_i}(s_i)}\det(\ppi).
\end{align}
Again, by the pseudo-linearity property of the utilities in~\Cref{ssec:quasi-mono} we can take $\ppi'_i$ to be a deterministic policy, and thus $(\ppi'_i,\ppi_{-i})$ is a pure strategy as well.

In summary, since we have made the contradictory assumption that there is no pure NE, for any $\ppi$ there is a coordinate corresponding to some player's policy $\ppi_i$ that we can change to some other pure strategy and ensure that $\det(\ppi_i', \ppi_{-i})>\frac{r_i^{\ppi'_i}(s_i)}{r_i^{\ppi_i}(s_i)}\det(\ppi)$. 
Also note that the total number of pure strategies is finite.
Therefore, this strategy improvement procedure will lead to a cycle of length $K < \infty$,~i.e.
\[\ppi^{(0)}\rightarrow\ppi^{(1)}\rightarrow \ppi^{(2)}\rightarrow\cdots\rightarrow \ppi^{(K)}\rightarrow\ppi^{(K+1)}=\ppi^{(0)}, ~~\text{for}~~K<\infty.\]
Similar to the non-negative case, for every step $k \in [K+1]$ we denote $i_k \in [n]$ as the player who improves their strategy.
Further, we denote the contsant coefficient
\[C^{(k)} = \frac{r_{i_k}^{\ppi'_{i_k}}(s_{i_k})}{r_{i_k}^{\ppi_{i_k}}(s_{i_k})}\]
 and apply the relationship in \eqref{eq:ronly-relationship-negative} to $\ppi^{(0)}, \ppi^{(1)},\cdots$, we can conclude that
\[
\det(\ppi^{(0)}) > C^{(0)}\det(\ppi^{(1)})\ge  \cdots \ge \Par{\prod_{k\in[K+1]}C^{(k-1)}}\det(\ppi^{(0)}) = \det(\ppi^{0}).
\] 
Note that the last equality above uses the fact that since $\ppi^{(0)}$ and $\ppi^{(K+1)}$ have the same policy for each player, $\prod_{k \in [K+1]} C^{(k)} = 1$ as the policy shifts will cancel in the cycle.
This ultimately leads to $\det(\ppi^{(0)}) > \det(\ppi^{(0)})$, which is not possible.
This completes the proof by contradiction and proves the claim that a pure NE must exist.
\end{proof}

Now we provide two different algorithms for finding pure NEs of non-positive $\ronly$ $\onestate$s. First, we can adapt a variant of~\Cref{alg:ronly-non-negative-AltMin}, i.e.,~\Cref{alg:ronly-non-positive-AltMin} to all such TBSG instances to find approximate pure NEs.
This proof shows similar potential structure with the negative logarithm of utilities considered instead (owing to the non-positivity of rewards). 
The only change in the algorithm is a slightly different way to truncate the rewards $r$ in~\Cref{line:ronly-neg-truncate}, also based on the fact that they are non-positive.

\begin{algorithm2e}[h]
	\caption{Strategy iteration for non-positive $\ronly$ $\onestate$}
	\label{alg:ronly-non-positive-AltMin} 
	\DontPrintSemicolon
	\codeInput $\ronly$ $\onestate$ instance $\mathcal{G} = (n,\calS = \cup_{i\in[n]}\{s_i\},\mathcal{A},\pp,\r,\gamma)$ so that $r_i(s,a)\le 0$, accuracy $\epsilon$\;
	Preprocess $\tilde{r}_{i,s_i,a} = \min\Par{r_{i,s_i,a}, -\frac{(1-\gamma)\eps}{2}}$, for each $i\in[n]$, $a\in\calA_i$\;\label{line:ronly-neg-truncate}
	Define game $\calG'=(n,\calS = \cup_{i\in[n]}\{s_i\},\mathcal{A},\pp,\tilde{\r},\gamma)$\;
	Initialize two pure strategies $ \ppi\neq\ppi_{+}$\;
	\While{$\ppi\neq \ppi_{+}$}{
	$\ppi_{+}\gets\ppi$\;
	Compute $\hat{\upsilon}_i(\ppi)=\hat{V}_i^{\ppi}(s_i)$ for all player $i\in[n]$ for game $\calG'$\;
	\For{$i=1$ {\bfseries{\textup{to}}} $n$, $a_i\in\calA_i$}{
	Compute $\hat{\upsilon}_i(\ee_{a_i},\ppi_{-i})=\hat{V}_i^{(\ee_{a_i},\ppi_{-i})}(s_i)$ for game $\calG'$ using reward $\tilde{r}$\;
	\If{$\hat{\upsilon}_i(\ee_{a_i},\ppi_{-i})\ge \hat{\upsilon}_i(\ppi)+\frac{\epsilon}{2}$}{
	$\ppi_{+} \gets (\ee_{a_i},\ppi_{-i})$ and \textbf{break}\;
	}
	}
	}
	\codeReturn $\ppi$ 
\end{algorithm2e}

\begin{proposition}\label{lem:pureinP_potential-nonpos}
Consider a $\ronly$ $\onestate$  instance $\calG = (n,\calS = \cup_{i\in[n]}\{s_i\},\mathcal{A},\pp,\tilde{\r},\gamma)$ where all rewards are non-positive, i.e., $ r_{i,s_i,a}\le0$	for any $i\in[n]$ , $a\in\calA_i$.
Then, given some desired accuracy $\epsilon$,~\Cref{alg:ronly-non-positive-AltMin} finds an $\epsilon$-approximate pure NE in time \[O\Par{\frac{n^4\Atot}{(1-\gamma)\epsilon}\log \Par{\frac{1}{(1-\gamma)^2\epsilon}}}.\]
\end{proposition}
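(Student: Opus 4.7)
The plan is to mirror the proof of Proposition~\ref{lem:pureinP_potential}, with the signs reversed throughout to accommodate non-positive rewards. The four main ingredients will be: (i) correctness of the approximation via pseudo-linearity; (ii) transfer from the clipped game $\calG'$ to the original game $\calG$; (iii) a potential-function argument bounding the number of iterations; and (iv) a per-iteration cost analysis.

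\medskip

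\textbf{Correctness.} I would first argue that if the while loop terminates with strategy $\ppi$, then $\hat{\upsilon}_i(\ee_{a_i},\ppi_{-i}) \le \hat{\upsilon}_i(\ppi)+\tfrac{\epsilon}{2}$ holds for every $i\in[n]$ and $a_i\in\calA_i$. By pseudo-linearity of the utilities in $\onestate$ (Corollary~\ref{coro:quasi-mono-ssg}), this inequality extends to all alternative policies $\ppi_i'$, so $\ppi$ is an $\tfrac{\epsilon}{2}$-approximate pure NE of the clipped game $\calG'$. To transfer this guarantee back to $\calG$, I would use that the clipping satisfies $|\tilde{r}_{i,s_i,a}-r_{i,s_i,a}| \le (1-\gamma)\epsilon/2$, together with the standard $\ell_\infty$-bound $\norm{(\II-\gamma\PP^{\ppi})^{-1}}_\infty \le 1/(1-\gamma)$, to conclude $|\upsilon_i(\ppi) - \hat{\upsilon}_i(\ppi)| \le \epsilon/2$ for every $i$ and every $\ppi$; hence the returned strategy is an $\epsilon$-approximate pure NE of $\calG$.

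\medskip

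\textbf{Potential function and iteration bound.} This is the main new ingredient, since non-positivity forces a slightly different potential than in the non-negative case. I would define
\[
\Phi(\ppi) \defeq \log\!\left(\frac{\prod_{i\in[n]}\bigl(-\tilde{r}_i^{\ppi_i}(s_i)\bigr)}{\det(\ppi)}\right),
\]
which is well-defined since the clipping guarantees $-\tilde{r}_i^{\ppi_i}(s_i) \in \big[(1-\gamma)\epsilon/2,\, 1\big]$ and $\det(\ppi) \in [(1-\gamma)^n,(1+\gamma)^n]$. A direct computation analogous to Proposition~\ref{lem:pureinP_potential}, using the formula $\hat{\upsilon}_i(\ppi) = E_i(\ppi_{-i})\tilde{r}_i^{\ppi_i}(s_i)/\det(\ppi)$ from~\eqref{eq:fixed-sign-V} and the fact that $E_i(\ppi_{-i})$ does not depend on $\ppi_i$, will show that single-player deviations satisfy
\[
\Phi(\ppi_i',\ppi_{-i}) - \Phi(\ppi) \;=\; \log\bigl(-\hat{\upsilon}_i(\ppi_i',\ppi_{-i})\bigr) - \log\bigl(-\hat{\upsilon}_i(\ppi)\bigr).
\]
At each ``restart'' step, the improvement condition gives $\hat{\upsilon}_i(\ppi_i',\ppi_{-i}) \ge \hat{\upsilon}_i(\ppi) + \epsilon/2$, so $-\hat{\upsilon}_i(\ppi_i',\ppi_{-i}) \le -\hat{\upsilon}_i(\ppi) - \epsilon/2$; dividing by $-\hat{\upsilon}_i(\ppi) \in [(1-\gamma)\epsilon/2,\, 1/(1-\gamma)]$ yields
\[
\frac{-\hat{\upsilon}_i(\ppi_i',\ppi_{-i})}{-\hat{\upsilon}_i(\ppi)} \;\le\; 1 - \frac{(1-\gamma)\epsilon}{2},
\]
so $\Phi$ strictly decreases by at least $-\log\!\bigl(1 - (1-\gamma)\epsilon/2\bigr) \ge (1-\gamma)\epsilon/2$ per restart. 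On the other hand, the range of $\Phi$ is $O\!\bigl(n\log\tfrac{1}{(1-\gamma)^2\epsilon}\bigr)$, so the number of restarts is at most $O\!\bigl(\tfrac{n}{(1-\gamma)\epsilon}\log\tfrac{1}{(1-\gamma)^2\epsilon}\bigr)$.

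\medskip

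\textbf{Per-iteration cost and final runtime.} Each iteration of the while loop evaluates a single-player MDP value $\hat{\upsilon}_i$ under the current strategy in $O(\Atot + n^3)$ time, and then at most $O(n\Atot)$ single-coordinate deviation values at cost $O(n^3)$ each, for a per-iteration cost of $O(n^3 \Atot)$. Multiplying by the iteration count gives the claimed overall runtime $O\!\bigl(\tfrac{n^4 \Atot}{(1-\gamma)\epsilon}\log\tfrac{1}{(1-\gamma)^2\epsilon}\bigr)$.

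\medskip

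The main conceptual obstacle I expect is verifying that the logarithmic potential on $-\tilde{\upsilon}_i$ (rather than on $\tilde{\upsilon}_i$ directly, which would fail since $\tilde{\upsilon}_i<0$) is indeed an ordinal potential for the clipped game, and quantifying the per-step multiplicative decrease using both the lower clip $(1-\gamma)\epsilon/2$ and the upper bound $1/(1-\gamma)$ on $|\hat{\upsilon}_i|$; everything else is a straightforward sign-flipped adaptation of the non-negative-reward argument.
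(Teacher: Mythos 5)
Your proposal matches the paper's proof essentially step for step: the same clipping/transfer argument for correctness (via $|\tilde{r}_{i,s_i,a}-r_{i,s_i,a}|\le(1-\gamma)\epsilon/2$ and $\norm{(\II-\gamma\PP^{\ppi})^{-1}}_\infty\le 1/(1-\gamma)$), the same logarithmic potential $\Phi(\ppi)=\log\bigl(\prod_{i\in[n]}(-\tilde{r}_i^{\ppi_i}(s_i))/\det(\ppi)\bigr)$, the same $O\bigl(\tfrac{n}{(1-\gamma)\epsilon}\log\tfrac{1}{(1-\gamma)^2\epsilon}\bigr)$ iteration bound, and the same per-iteration cost. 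If anything, you are more careful than the paper about the sign of the potential change (it decreases by at least $-\log\bigl(1-(1-\gamma)\epsilon/2\bigr)$ per restart, whereas the paper's text says it increases), but the magnitude bound and the conclusion are identical.
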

\begin{proof}

\textbf{Correctness.}
We use $\upsilon_i(\ppi)$ and $\hat{\upsilon}_i(\ppi)$ to denote the value function of $\calG$ and $\calG'$ respective for player $i$ at state $s_i$. 
Following an identical argument to the non-negative case,~\Cref{alg:ronly-non-positive-AltMin} finds an $\eps/2$-approximate pure NE of $\calG'$ satisfying
\begin{equation}\label{eq:single-reward-negative-eq-1}
	\hat{\upsilon}_i(\ppi_i',\ppi_{-i})\le \hat{\upsilon}_i(\ppi)+\frac{\epsilon}{2}
\end{equation}
for any $i \in [n]$ and any alternative policy $\ppi'_i$.
By the definition of the padded rewards $\tilde{r}$ and the expression of value functions in~\eqref{eq:fixed-sign-V}, we then get
\[\upsilon_i(\ppi)-\frac{\eps}{2}=\upsilon_i(\ppi)-\frac{1}{1-\gamma}\frac{(1-\gamma)\eps}{2}\le \hat{\upsilon}_i(\ppi)\le \upsilon_i(\ppi)~~\text{for any}~i\in[n], \ppi.\]
Plugging this back into~\eqref{eq:single-reward-negative-eq-1},  we have 
\[
\upsilon_i(\ppi_i',\ppi_{-i})\le \upsilon_i(\ppi)+\epsilon,~~\text{for any}~i\in[n], \ppi'_{i},
\]
for any $i \in [n]$ and any alternative policy $\ppi'_i$ for player $i$.
This implies that $\ppi$ is also an $\eps$-approximate pure NE for the original game $\calG$, proving the correctness of the algorithm.

\textbf{Runtime.} Similar to the non-negative case, we first show that~\Cref{alg:ronly-non-positive-AltMin} terminates in polynomial number of steps. To see this, we define the logarithmic reward function of each player to be 
\[\tilde{V}_i^{\ppi} \defeq \log \hat{\upsilon}_i(\ppi) = \log \frac{E_i(\ppi_{-i})(-\tilde{r}_i^{\ppi_i}(s_i))}{\det(\ppi)},\]
and a universal ``negative'' version of potential function to be 
\[\Phi(\ppi) \defeq \log\left((-1)^n\frac{\prod_{i\in[n]}\tilde{r}_i^{\ppi_i}(s_i)}{\det(\ppi)}\right).\]

The key observation remains that this potential function satisfies the following property: For any $i\in[n]$, and strategy $\ppi$, $\ppi_i'$,
\begin{align*}
\Phi(\ppi_i',\ppi_{-i})-\Phi(\ppi) & = \log\left(\frac{\tilde{r}_i^{\ppi'_i}(s_i)}{\det(\ppi'_i,\ppi_{-i})}\right)-\log \left(\frac{\tilde{r}_i^{\ppi_i}(s_i)}{\det(\ppi)}\right)\\
& = \log\left(\frac{E_i(\ppi_{-i})\tilde{r}_i^{\ppi'_i}(s_i)}{\det(\ppi'_i,\ppi_{-i})}\right)-\log \left(\frac{E_i(\ppi_{-i})\tilde{r}_i^{\ppi_i}(s_i)}{\det(\ppi)}\right)\\
& = \tilde{V}_i^{(\ppi'_i,\ppi_{-i})}-\tilde{V}_i^{\ppi}.
\end{align*}
Using again a similar argument to the non-negative and the new padding definition for non-positive rewards provided in~\Cref{line:ronly-neg-truncate}, we get
\begin{equation}\label{eq:ronly-phi-bound-1}
	n\log\Par{\frac{(1-\gamma)\eps}{2(1+\gamma)}}\le\Phi(\ppi)\le n\log\Par{\frac{1}{1-\gamma}}.
\end{equation}
Everytime we restart the while loop we also must increase the potential by at least 
\begin{equation}\label{eq:ronly-phi-bound-2}\Delta \Phi\defeq \log \Par{\frac{\hat{\upsilon}_i(\ppi_i',\ppi_{-i})}{ \hat{\upsilon}_i(\ppi)}}\ge \log\Par{1+\frac{(1-\gamma)\epsilon}{2}}.	
\end{equation}
Combining~\eqref{eq:ronly-phi-bound-1} and~\eqref{eq:ronly-phi-bound-2} we know the while loop of~\Cref{line:ronly-while} restarts for iterations at most 
\[
O\Par{\frac{n}{(1-\gamma)\epsilon}\log \Par{\frac{1}{(1-\gamma)^2\epsilon}}}.
\]
Thus the total runtime remains unchanged in comparison with the non-negative $\ronly$ case and is bounded by
\[
O\Par{\frac{n^4\Atot}{(1-\gamma)\epsilon}\log \Par{\frac{1}{(1-\gamma)^2\epsilon}}},
\]
which is polynomial in problem parameters $\Atot$, $1/(1-\gamma)$ and $1/\epsilon$.
\end{proof}
We now turn to the problem of exact NE computation for $\ronly$ $\onestate$ with non-positive rewards and the corresponding special case defined in~\Cref{def:ronly-further},~i.e. deterministic transitions and action-independent instantaneous rewards.
Since the rewards are non-positive, every player $i$ now wants to visit her own state as \emph{infrequently} as possible to maximize their utility, which corresponds to finding paths or longest cycles for each vertex in the graph-theoretic interpretation of the problem. 
We provide the following algorithm for finding exact pure NEs of this  of this deterministic transition and action-independent reward setting in~\Cref{alg:ronly-non-positive-graph}. 

\begin{algorithm2e}[h]
	\caption{Longest cycles for non-positive $\ronly$ $\onestate$, deterministic transitions, action-independent rewards}
	\label{alg:ronly-non-positive-graph}
	\DontPrintSemicolon
	\codeInput $\ronly$ $\onestate$ TBSG $\calG  = (\{0\}\cup[n],\calS = \calS_0\cup(\cup_{i\in[n]}\mathcal{S}_i),\mathcal{A}=\calA_0\cup(\cup_{i\in[n]}\mathcal{A}_i),\PP,\RR,\gamma)$ with deterministic transitions, reward bound $M$ so that $r_i(s,a)\le 0$\;
Randomly initialize some action $\ppi$\;
	\Repeat{$\ppi_{+}= \ppi$}{
	$\ppi_{+}\gets \ppi$\;
	\For{$s\in \calS$}{
	\If{can switch action to a longer cycle or acyclic path for $s$}{
	Update $\ppi_{+}$ and \textbf{break}\;
	}
	}
	}
	\codeReturn $\ppi$ 
\end{algorithm2e}	

\begin{proposition}
	Given a $\ronly$ $\onestate$ instance $\calG  = (n,\calS = \cup_{i\in[n]}\{s_i\},\mathcal{A},\pp,\r,\gamma)$ where all rewards are non-positive, all transitions are deterministic and rewards are action-independent (see~\Cref{def:ronly-further}), ~\Cref{alg:ronly-non-positive-graph} finds an exact pure NE in time $O(n^2\Atot)$.
\end{proposition}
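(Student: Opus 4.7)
The plan has two main parts: establishing correctness of the fixed point and bounding the total runtime. For \emph{correctness}, I will use the fact that under deterministic transitions, action-independent rewards, and pure strategies, the orbit starting from any state $s_i$ is eventually periodic. Thus, for each player $i$, exactly one of two things happens: (a) the orbit never revisits $s_i$, in which case $V_i^{\ppi}(s_i) = r_i(s_i)$; or (b) the orbit returns to $s_i$ with some period $k_i \in \{1,\ldots,n\}$, giving $V_i^{\ppi}(s_i) = r_i(s_i)/(1-\gamma^{k_i})$. Since $r_i(s_i) \leq 0$, case (a) strictly dominates every instance of case (b), and within case (b) a strictly larger cycle length $k_i$ yields a strictly larger (less negative) utility. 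Thus the strict-improvement rule inside the \texttt{for}-loop of~\Cref{alg:ronly-non-positive-graph} exactly captures the unique way a player could profitably deviate at state $s_i$. Combining this with~\Cref{lem:nonpos-existence} (which guarantees the existence of a pure NE and thereby that the algorithm does not get stuck before reaching one), termination of the outer \texttt{repeat}-loop implies that the returned $\ppi$ is an exact pure NE.

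For \emph{termination and runtime}, I will use an explicit potential function adapted from the proof of~\Cref{lem:pureinP_potential-nonpos}. In the action-independent setting $\prod_i r_i^{\ppi_i}(s_i)$ is constant, so
\[
\Phi(\ppi) \;\defeq\; -\log\det\bigl(\II - \gamma \PP^{\ppi}\bigr)
\]
serves as a monotone quantity along the algorithm's trajectory. Since $\PP^{\ppi}$ is a $0/1$ functional-graph matrix, one has the clean factorization $\det(\II - \gamma \PP^{\ppi}) = \prod_{c \in \calC(\ppi)} (1 - \gamma^{|c|})$ over the cycles $\calC(\ppi)$ of the functional graph (vertices off cycles contribute a nilpotent block). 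I will show that each iteration of the inner loop that updates $\ppi$ strictly changes $\Phi$ by a quantity tied to the change of a single cycle's length (or the destruction of a cycle), and that each state $s_i$ can be the ``breaking'' state at most $O(n)$ times, since its cycle length is in $\{1,2,\ldots,n\}$ and the final transition to an acyclic orbit can occur at most once. This yields $O(n^2)$ total updates of $\ppi$.

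For the per-iteration cost, I will argue that one full sweep over $\calS$ in the inner \texttt{for}-loop costs $O(n \Atot)$: for each state $s$ we examine all $|\calA_s|$ alternative actions, and for each such action we trace the deterministic orbit to determine whether the resulting cycle containing $s$ is strictly longer or is absent altogether, which takes $O(n)$ time. Since the outer \texttt{repeat}-loop is executed once per successful update plus one final verifying sweep, combining this with the $O(n^2)$ bound on updates — or, more efficiently, amortizing update-detection across successive sweeps — gives total runtime $O(n^2 \Atot)$.

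The main technical obstacle I anticipate is the potential-decrease accounting: when state $s_i$ switches its action, the functional graph reorganizes globally, so cycles elsewhere may be created, destroyed, merged, or split, and the change in $\Phi$ must be attributed cleanly to a single improving player. Handling this rigorously will require a case analysis of how $s_i$'s new outgoing edge intersects the rest of the current graph (e.g., whether the new successor lies on $s_i$'s former cycle, on a different cycle, on a tree attached to some cycle, or on the unique acyclic tail leading out of $s_i$), together with a matching of old and new cycles that isolates the net $\log$-determinant change to the term governing $s_i$'s own cycle length.
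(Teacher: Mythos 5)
Your correctness argument is fine and is essentially the paper's (made more explicit via the formulas $V_i^{\ppi}(s_i)=r_i(s_i)$ for an acyclic orbit and $r_i(s_i)/(1-\gamma^{k_i})$ for a cycle of length $k_i$). The gap is in the iteration bound. Your claim that ``each state $s_i$ can be the breaking state at most $O(n)$ times, since its cycle length is in $\{1,\ldots,n\}$ and the final transition to an acyclic orbit can occur at most once'' implicitly assumes that player $i$'s cycle length is monotone over the run. It is not: when some player $j$ on a cycle $D$ not containing $s_i$ (but into which $s_i$'s acyclic tail feeds) reroutes to a longer cycle $C'$, the new cycle $C'$ may pass through $s_i$, pulling a previously acyclic $s_i$ back onto a cycle --- and $|C'|$ can be shorter than cycles $s_i$ had earlier escaped. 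So a single player can be the improving player $\omega(n)$ times under your accounting, and the per-player argument does not close. Your fallback, the potential $\Phi(\ppi)=-\log\det(\II-\gamma\PP^{\ppi})=-\sum_{c}\log(1-\gamma^{|c|})$, is indeed strictly monotone along the algorithm (each improvement replaces a factor $(1-\gamma^{k})$ by $(1-\gamma^{k'})$ with $k'>k$, or deletes it), but it does not by itself give a polynomial iteration count: the per-step decrease can be as small as $\Theta(\gamma^{n})$ while $\Phi$ ranges over an interval of width $O(n\log\frac{1}{1-\gamma})$, and the number of distinct values of $\prod_c(1-\gamma^{|c|})$ is governed by integer partitions, which is super-polynomial. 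You correctly flag the global-reorganization accounting as an obstacle, but that obstacle is exactly where the proof has to live.

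The paper closes this with a purely combinatorial potential that you should adopt in place of both strands: in the functional graph of a pure strategy, cycles are vertex-disjoint and only the improving vertex's own cycle is affected by its reroute, so every accepted update either (a) destroys a cycle without creating one (cycle-to-acyclic move), decreasing the number of cycles by one, or (b) replaces the improving player's cycle of length $k$ by one of length $k'>k$, leaving the number of cycles unchanged and increasing the total number of on-cycle vertices by $k'-k\ge 1$ (moves from acyclic onto a cycle are never improvements, so they never occur). The lexicographic quantity $(\#\text{cycles},\,-\,\text{on-cycle vertices})$, or equivalently $(n+1)\cdot\#\text{cycles}-(\text{on-cycle vertices})$, then strictly decreases with each update and takes $O(n^2)$ values, giving the $O(n^2)$ bound on updates and the stated $O(n^2\Atot)$ runtime. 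Note also that your per-sweep cost of $O(n\Atot)$ multiplied by $n^2$ updates gives $O(n^3\Atot)$ unless you carry out the amortization you allude to; the paper charges $O(\Atot)$ per iteration of the outer loop.
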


\begin{proof}
\textbf{Correctness.} We first show that when the algorithm converges, i.e.,  $\ppi_{+}=\ppi$, the corresponding strategy $\ppi$ must be a pure NE. 
We prove this by contradiction. 
Suppose, instead, that $\ppi$ is not a pure NE.
Then there must exist some state $s_i$ of player $i$ such that one can switch from current action $a$ to some $a'\in\calA_i$ to lead to a longer cycle.
This contradicts with the fact that $\ppi_{+}=\ppi$. 
Consequently, we have $\ppi$ is a pure NE for the given instance.

\textbf{Runtime}.  We first note the random initialization step is $O(n)$. Now within each step, enumerating over all vertices $s\in C$ and their actions takes at most $O(\Atot)$ complexity. 
Note if we have updated $\ppi_{+}$ from $\ppi$, then either the total number of cycles has gone down by $1$, or the total number of vertices in all cycles has gone up by at least $1$. 
This then implies $\ppi_{+}$ could only change at most $n^2$ times, and the algorithm must terminate within $n^2$ iterations. 
Putting it all together, this shows the algorithm's runtime is $O(n^2\Atot)$.

\end{proof}

\subsection{$\ronly$ $\onestate$s with mixed-sign rewards}\label{ssec:mixed-sign}

The preceding sections have shown that it is possible to compute either approximate or exact NE of $\ronly$ $\onestate$ in polynomial-time, provided that the rewards received by players at their respective controlling states are of the same sign.
We saw in~\Cref{ssec:ppad-hard} that if we remove the $\ronly$ assumption, the problem of approximate NE computation becomes $\ppad$-hard.
In this section, we show that the further assumption of requiring the rewards to be of the same sign is essential to our positive results and the existence of pure NE.
In particular, we consider instances of $\ronly$ $\onestate$ in which each player receives a reward that can be either positive or negative only at the single state she controls.
In other words, this is $\ronly$ $\onestate$ without the requirement of rewards being of the same sign.
We show that even when the transitions are deterministic and rewards are action-independent (\Cref{def:ronly-further}), we can construct an instance $\calG  = (n,\calS = \cup_{i\in[n]}\{s_i\},\mathcal{A},\pp,\r,\gamma)$ such that deciding whether a pure NE exists or not is NP-hard.
This in turn implies that computing a pure NE for $\ronly$ $\onestate$ games is NP-hard.
This is in contrast to the case of zero-sum $\tbsg$ for which a pure NE always exists and is computable in polynomial time~\cite{shapley1953stochastic,hansen2013strategy}. 

The main idea is to reduce finding a Hamiltonian path of any directed graph to finding a pure NE of a correspondingly constructed $\ronly$ $\onestate$ instance. To show why the problem of finding Hamiltonian cycle is closely related to $\onestate$, we first provide a warm-up argument that proves that the \emph{decision problem} of pure NE for $\onestate$ is NP-hard. 
We formally define a pure-NE decision problem for $\onestate$ below.
\begin{definition}\label{def:decisionproblemotbsg}
For any $\onestate$ instance $\calG$ and a value vector $\mathbf{v} \in \mathbb{R}^n$, we let the pure NE decision problem be to determine if there exists pure NE $\ppi$ such that $\upsilon_i(\ppi) \ge v_i$ for all $i \in [n]$.
\end{definition}
We now show that this decision-problem is NP-hard, using the proper choice of value vector $\mathbf{v}$.

\begin{proposition}\label{thm:np-hard-decision}
The pure NE decision problem defined in~\Cref{def:decisionproblemotbsg} is NP-hard in  $\ronly$ $\onestate$s.
\end{proposition}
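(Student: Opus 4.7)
The plan is to reduce from the NP-complete directed Hamiltonian cycle problem. Given a directed graph $G=(V,E)$ with $n=|V|$ vertices, I would construct an $\ronly$ $\onestate$ instance as follows: each vertex $v\in V$ becomes a player controlling a distinct state $s_v$; the action set $\calA_v$ is in bijection with the outgoing edges of $v$ in $G$; the action encoding edge $(v,w)$ transitions $s_v$ deterministically to $s_w$; and each player receives an action-independent reward of $-1$ at her own state and zero elsewhere. Fix any $\gamma\in(0,1)$ and set the target value vector to $v_i=-\tfrac{1}{1-\gamma^n}$ for every $i$. Note this instance is also $\ronly$ with non-positive, action-independent rewards and deterministic transitions, so the hardness it exhibits is ``tight'' with respect to the positive results of \Cref{ssec:reward-only-sign}.

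The reduction hinges on a clean formula for utilities under pure strategies. Any pure strategy $\ppi$ induces a functional graph in which each vertex has out-degree one, so starting from $s_v$ the trajectory either lies on a unique cycle through $v$ or exits $v$ forever. A direct geometric-series computation gives
\begin{equation*}
\outility_v(\ppi)=\begin{cases}-\tfrac{1}{1-\gamma^k}&\text{if $v$ lies on a cycle of length $k$,}\\ -1&\text{if $v$ lies on no cycle.}\end{cases}
\end{equation*}
Since $-\tfrac{1}{1-\gamma^n}<-1$, any off-cycle player automatically satisfies the value constraint, and $-\tfrac{1}{1-\gamma^k}\ge-\tfrac{1}{1-\gamma^n}$ holds iff $k\ge n$. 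Because every cycle in an $n$-vertex functional graph has length at most $n$, the joint value constraint is equivalent to requiring that every cycle present in the induced functional graph has length exactly $n$.

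For the forward direction, suppose $G$ admits a Hamiltonian cycle $v_0\to v_1\to\cdots\to v_{n-1}\to v_0$. I would take $\ppi^\star$ to be the pure strategy in which player $v_i$ selects edge $(v_i,v_{(i+1)\bmod n})$, which yields $\outility_{v_i}(\ppi^\star)=-\tfrac{1}{1-\gamma^n}$ for every $i$ and thus meets the target. To verify that $\ppi^\star$ is a pure NE, I would show that any unilateral deviation of player $v_i$ to an alternative out-edge $(v_i,v_j)\in E$ keeps $v_i$ on a cycle---the other players' unchanged edges trace a path $v_j\to v_{j+1}\to\cdots\to v_{i-1}\to v_i$---but this new cycle has length $1+((i-j)\bmod n)\le n-1<n$. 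By the utility formula, the deviator's utility becomes strictly more negative, so no deviation is profitable.

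For the reverse direction, if some pure NE $\ppi$ satisfies $\outility_v(\ppi)\ge-\tfrac{1}{1-\gamma^n}$ for all $v$, then every cycle in the induced functional graph has length exactly $n$. Since a functional graph on $n$ vertices always contains at least one cycle, and a length-$n$ cycle must visit every vertex, the functional graph consists of a single Hamiltonian cycle whose edges lie entirely in $E$, yielding a Hamiltonian cycle in $G$. The main technical subtlety is the NE verification in the forward direction: one must confirm that, regardless of which alternative edge $(v_i,v_j)\in E$ the deviator picks, the fixed Hamiltonian structure of the other players' edges always routes the trajectory back to $v_i$ along a strictly shorter cycle---this uses in an essential way that under $\ppi^\star$ the remaining vertices form a single Hamiltonian path between any two vertices on the cycle.
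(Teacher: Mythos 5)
Your proposal is correct and follows essentially the same route as the paper's proof: the identical reduction from directed Hamiltonian cycle, with each vertex becoming a player receiving action-independent reward $-1$ at her own state, edges as deterministic actions, and the target value vector $v_i=-\tfrac{1}{1-\gamma^{|V|}}$. Your treatment is in fact slightly more careful than the paper's in explicitly handling off-cycle players (whose utility $-1$ vacuously satisfies the constraint) and in computing the exact length $1+((i-j)\bmod n)$ of a deviator's induced cycle, but the underlying argument is the same.
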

\begin{proof}
We show the reduction of existence of a Hamiltonian cycle, a well-known NP-hard problem, to this decision problem. 
Given any directed graph $G=(V,E)$, we construct a corresponding instance of $\onestate$ instance $\calG_{G} = (n,\calS,\mathcal{A},\pp,\r,\gamma)$ with the following specifications.

\begin{itemize}
\item Number of players: We have $n=|V|$ since the instance is $\onestate$, and use $i\in V$ to denote a player.
\item State space: Each vertex corresponds to a distinct state. We use $s_i\in \calS$ to denote the state corresponding to vertex $i\in V$.	
\item Reward: Independent of the actions taken, each player $i$ receives negative instantaneous reward $r_{i,s_i,a} = -1<0$ for any $a\in\calA_i$ when she is taking actions at her own state $s_i$, and receives $0$ at all other states regardless of which action is taken.
\item  Action space and deterministic transitions: We let $a(s)$ denote the action that with probability $1$ transits to state $s\in\calS$. For each state $s_i\in\calS$, we specify their action set as $\calA_i = \cup_{j:(i,j)\in E(G)}\{a(s_j)\}$.
\item Utility function: For simplicity and given the equivalences of NE at different states as given in~\Cref{ssec:bellman}, we define the utility functions for each player defined as the value function of the state $s_i$ a player $i\in V$ controls, i.e., $\upsilon_i(\ppi) = V_i^{\ppi}(s_i)$.
\end{itemize}

We now claim that $G$ has a Hamiltonian cycle if and only if $\calG_G$ has an exact NE strategy $\ppi$ such that 
\begin{equation}\label{eq:NPhard-decision-value}
	\upsilon_i(\ppi)\ge -\frac{1}{1-\gamma^{|V|}}, ~~\text{for each}~i\in V.
\end{equation}
This claim, if true, implies the NP-hardness of the decision problem in~\Cref{def:decisionproblemotbsg} for $\mathbf{v} = -\frac{1}{1-\gamma^{|V|}} \mathbf{1}$.
We now prove the claim.
First, we show that if $G$ has a Hamiltonian cycle, we can construct a pure NE that satisfies Equation~\eqref{eq:NPhard-decision-value}.
We can construct a deterministic strategy $\ppi$ based on the Hamiltonian cycle so that $\ppi_{s_i} = \ee_{a(s_j)}$ if and only if $(i,j)$ is in the cycle. 
Note that by the definition of a Hamiltonian cycle, the utility of all players when they play $\ppi$ is equal to $-\frac{1}{1 - \gamma^{|V|}}$ which satisfies Equation~\eqref{eq:NPhard-decision-value}.
It is clear that this strategy is a NE because for any state $s_i$, switching to other action would only lead to shorter cycle starting from $s_i$, and consequently worse utility $\upsilon_i(\ppi_i', \ppi_{-i})\le -\frac{1}{1-\gamma^{|V|-1}} < -\frac{1}{1 - \gamma^{|V|}}$. 

On the other hand, suppose there exists a pure NE strategy $\ppi$ such that $\upsilon_i(\ppi)\ge -\frac{1}{1-\gamma^{|V|}}$, for all $i\in V$, we show there must be a Hamiltonian cycle in the graph. 
If this were not true, the pure strategy would induce some shorter cycle with length $\le |V|-1$ for some player $i$ on the directed graph. 
This implies that the utility for such player $i$ is $\upsilon_i(\ppi)\le -\frac{1}{1-\gamma^{|V|-1}}$, which leads to the the desired contradiction.
 
Combining two cases proves the desired claim, and thus the NP-hardness follows from the fact that finding a Hamiltonian cycle of any directed graph is NP-hard~\cite{garey1979computers}.
\end{proof}
We now provide a more complex graph-theoretic construction in order to prove the NP-hardness of determining the existence of pure NEs for the class of mixed-sign $\ronly$ $\onestate$s. 
Given a directed graph $G = (V,E)$, we let $L=|V|$ and describe the construction of such instance $\calG_{G} = (n,\calS,\mathcal{A},\PP,\RR,\gamma)$ with the following specifications.
(For this construction, we index by state $s$ rather than player $i$ for notational convenience.)

\begin{itemize}
	\item State space: The state space is composed of $\calS = \cup_{i\in V}\calS_i$, where each \[\calS_i \defeq \{s_i^\lrm,s_i^\srm\}\cup \{s_i^{(l),{\srm}}\}_{l\in[2L+1]}\cup \{s_i^{(l),{\lrm}}\}_{l\in[2L-2]}\cup\Par{\cup_{j\neq i:j\in V}\{s_{i,j}^{(l),{\slrm}}\}_{l\in[2L-1]}}.\] 
	For simplicity of notation, we define as shorthand
\begin{align*}
 	\calS^{\aux,\srm}_i & \defeq \{s_i^{(l),\srm}\}_{l\in[2L+1]},\\
 	\calS^{\aux,\lrm}_i & \defeq \{s_i^{(l),\lrm}\}_{l\in[2L-2]},\\
 	\calS^{\aux,\slrm}_i & \defeq \cup_{j\neq i:j\in V}\{s_{i,j}^{(l),{\slrm}}\}_{l\in[2L-1]},\\
 	\calS^\aux & \defeq \cup_{i\in V}\Par{\calS^{\aux,\srm}_i\cup \calS^{\aux,\lrm}_i\cup \calS^{\aux,\slrm}_i},\\
 	\calS^{\lrm} & \defeq \cup_{i\in V}\{s_i^\lrm\}~~\text{and}~~\calS^{\srm} \defeq \cup_{i\in V}\{s_i^\srm\}.
 \end{align*}
 The total number of states is equal to $O(L|V|\cdot |V|) = O(L|V|^2)$. 
	\item Reward: Independent of the action that is taken, each player controlling states $s\in\calS^\srm$ receives positive instant reward $r_s(s) = 1>0$ , each player controlling states $s\in\calS^\lrm$ receives negative instant reward $r_s(s) = -1<0$. In all other cases, the instant reward is equal to $0$ everywhere.\footnote{Note that we have grouped all states (players) with positive instant rewards as $\calS^\srm$ as they have incentives to get into shorter cycles; and all states (players) with negative rewards as $\calS^\lrm$ as they are incentivized to get in longer cycles. The other states (players) are neutral as they receive reward equal to $0$ everywhere.}
	\item  Action space and (deterministic) transitions: We let $a(s)$ denote the action that with probability $1$ transits to state $s\in\calS$. For each state $s\in\calS$, we specify their corresponding action set as follows:
	\begin{equation}\label{eq:Hamiltonian-action-set}
	\calA_s=\begin{cases}
	\Par{\cup_{j:(i,j)\in E(G)}\{a(s_j^\srm)\}}\cup\{a(s_i^\srm)\}\cup\{a(s^{(1),\lrm}_{i})\},~~~~~\text{if}~~s = s_i^\lrm, i\in V;\\
	\Par{\cup_{j:j\in V(G)\setminus\{i\}}\{a(s_{i,j}^{(1),\slrm})\}}\cup\{a(s_i^\lrm)\}\cup\{a(s^{(1),\srm}_{i})\},~~~~\text{if}~~s = s_i^\srm, i\in V;\\
	\{a(s_i^{(l+1),\lrm})\}, \quad\quad\quad~~\text{if}~~s = s_i^{(l),\lrm}, l\in[2L], i\in V; \\
	\{a(s_i^{\lrm})\}, \quad\quad\quad\quad\quad~~\text{if}~~s = s_i^{(2L+1),\lrm}, i\in V; \\
	\{a(s_i^{(l+1),\srm})\}, \quad\quad\quad~~\text{if}~~s = s_i^{(l),\srm}, l\in[2L-3], i\in V; \\
	\{a(s_i^{\srm})\}, \quad\quad\quad\quad\quad~~\text{if}~~s = s_i^{(2L-2),\srm}, i\in V;  \\
	\{a(s_{i,j}^{(l+1),\slrm})\}, \quad\quad\quad\quad\quad~~\text{if}~~s = s_{i,j}^{(l),\slrm}, l\in[2L-2], i\neq j\in V; \\
	\{a(s_{i,j}^{\lrm})\}, \quad\quad\quad\quad\quad~~\text{if}~~s = s_{i,j}^{(2L-1),\slrm}, i\neq j\in V. \\
\end{cases}
	\end{equation}
\item Utility function: As before, we consider utility functions for each player $\upsilon_s$ defined as the value function of the state $s$ some player controls for that player.
\end{itemize}

\begin{figure}[h]
\centering
\centerline{\includegraphics[width=\textwidth]{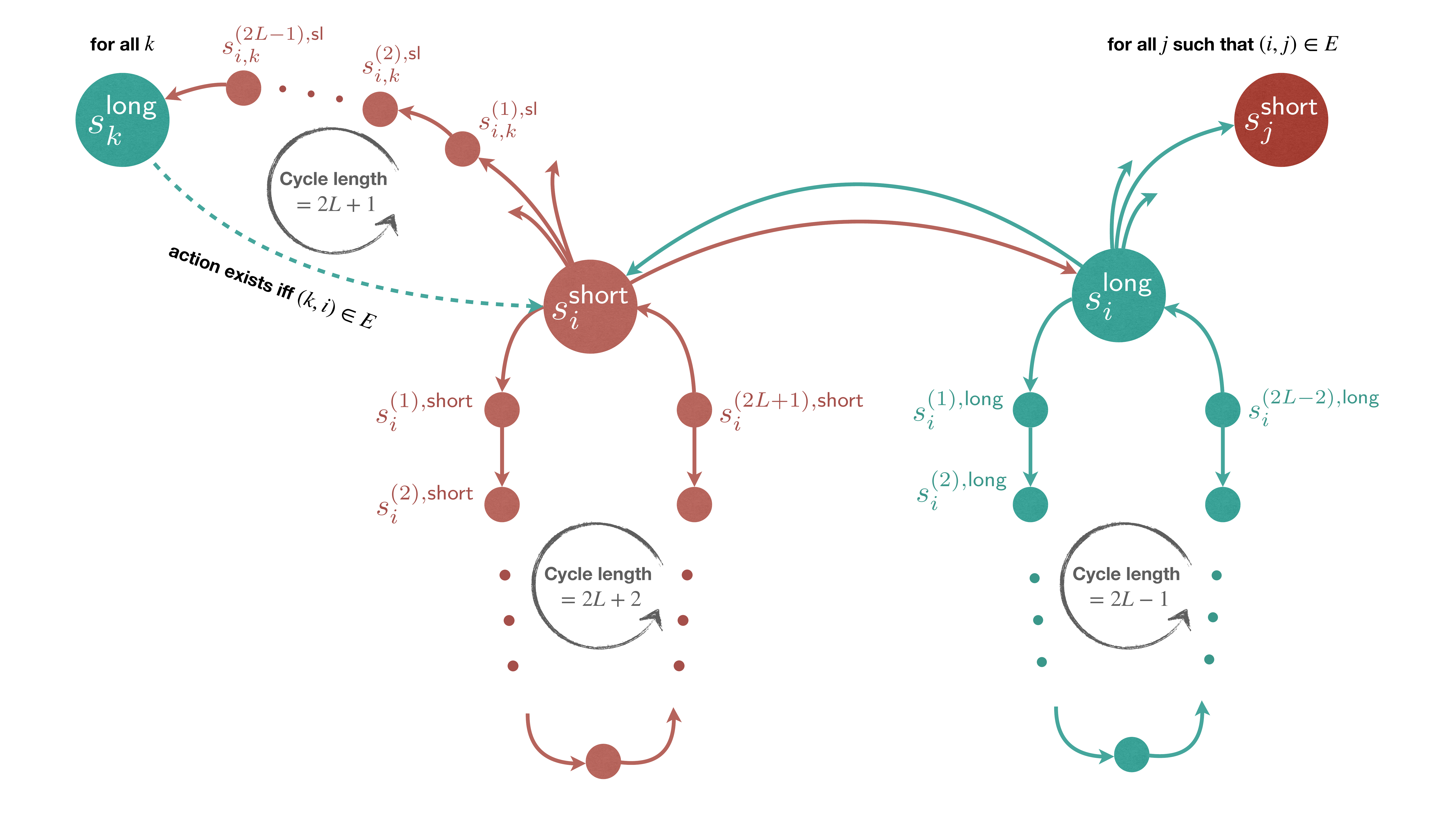}}
\caption{Illustration of construction of $\onestate$ $\calG_G$ based on a directed graph $G=(V,E)$.
Here we focus on the corresponding states $s_i^\srm$ (red center node which is incentivized to go in shorter cycles), $s_i^\lrm$ (blue center node which is incentivized to go in longer cycles) and players of a single vertex $i\in V$ of graph $G$. We show its different actions (all with deterministic transitions to some other states) using the correspondingly-colored arrows. Note that the player at $s_i^\srm$ has actions to all $s_k^\lrm$ for $k\in V, k\neq i$, while the player at $s_i^\lrm$ only has action to $s_j^\srm$ when $(i,j)\in E$. The construction of $\calG_G$ corresponding to the other vertices follow similarly.}
\label{fig:mixedgamegadgets}
\end{figure}

We first define a subgraph of $G$ induced by the strategy of $\calG_G$.

\begin{definition}\label{def:subgraph-induce-by-pi}[Induced Subgraph]
	Given a deterministic strategy $\ppi$ of above $\calG_{G}$, we define its \emph{induced subgraph} $G'(\calG_G, \ppi)$ on $G$, so that $e=(i,j)\in G'$ if and only if player at $s_i^\lrm$ picks $a(s_j^\srm)$ in the strategy $\ppi$, i.e.\ $\ppi_{s_i^\lrm} = \ee_{a(s_j^\srm)}$. 
\end{definition}

Based on this connection, we show the one-one correspondence between pure NE of $\calG_G$, and Hamiltonian path of original graph $G$, formally in~\Cref{prop:Hamiltonian}.

\begin{proposition}\label{prop:Hamiltonian}
Consider any directed graph $G=(V,E)$ with $L = |V|$ and the $\ronly$ $\onestate$ instance $\calG_G  = (n,\calS = \cup_{i\in[n]}\{s_i\},\mathcal{A},\PP,\RR,\gamma)$ constructed as above with $n = O(L^3)$ and instantaneous rewards $1$, $-1$ or $0$. 
Given accuracy $\delta<\frac{\gamma^{2L+2}(1-\gamma)}{(1-\gamma^{2L+2})(1-\gamma^{2L+3})}$, the game $\calG_G$ has a $\delta$-approximate pure NE $\ppi$ \emph{if and only if} the induced subgraph $G'(\calG_G,\ppi)$ defined in~\eqref{def:subgraph-induce-by-pi} is a Hamiltonian cycle of $G$. 
Further, any such pure strategy $\ppi$ must also be an exact pure NE.
\end{proposition}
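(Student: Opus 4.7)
The plan is to prove the biconditional in both directions and then derive the ``exact NE'' strengthening as a corollary. The computational backbone throughout is that under any pure strategy the dynamics of $\calG_G$ are deterministic, so every state sits on a trajectory that either cycles back to itself with some period $T$ or is absorbed into a cycle not containing it; its value to the controlling player is then $\tfrac{r}{1-\gamma^T}$ or $r$ respectively, where $r$ is the per-visit reward. The three canonical cycle lengths are the self-aux loop of $s^\srm_i$ (length $2L-1$, value $\tfrac{1}{1-\gamma^{2L-1}}$), the self-aux loop of $s^\lrm_i$ (length $2L+2$, value $-\tfrac{1}{1-\gamma^{2L+2}}$), and the $\slrm$-path from $s^\srm_i$ to $s^\lrm_j$ (length $2L$). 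Note that the proposition's threshold equals exactly $\tfrac{1}{1-\gamma^{2L+2}}-\tfrac{1}{1-\gamma^{2L+3}}$, the value gap separating cycle lengths $2L{+}2$ and $2L{+}3$.

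For the direction ``Hamiltonian cycle in $G$ implies NE with Hamiltonian $G'$'', given a cycle $i_1\to\cdots\to i_L\to i_1$ I will construct $\ppi^\star$ by setting $\ppi^\star_{s^\lrm_{i_k}}=\ee_{a(s^\srm_{i_{k+1}})}$ and $\ppi^\star_{s^\srm_i}=\ee_{a(s_i^{(1),\srm})}$ for every $i$. Along $\ppi^\star$ every $s^\srm_i$ sits on its own length-$(2L{-}1)$ self-loop attaining value $\tfrac{1}{1-\gamma^{2L-1}}$, while every $s^\lrm_i$ is absorbed into a self-looping $s^\srm_{i_{k+1}}$ and never returns, attaining its maximum possible value $-1$. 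A direct case analysis of unilateral deviations will then verify exact optimality: any deviation by an $s^\srm$ routes through some $s^\lrm$ or $\slrm$-path and is absorbed by a different $s^\srm$-loop (value $\le 1$); any deviation by an $s^\lrm$ either self-auxes (value $-\tfrac{1}{1-\gamma^{2L+2}}<-1$) or picks another edge landing in some self-looping $s^\srm$ (value $-1$, tied). Hence $\ppi^\star$ is an exact NE whose induced subgraph is the given Hamiltonian cycle.

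For the reverse direction I will take an arbitrary pure $\delta$-NE $\ppi$ and argue in three steps. Step (i): every $s^\srm_i$ plays $a(s_i^{(1),\srm})$. The self-loop value $\tfrac{1}{1-\gamma^{2L-1}}$ is always achievable, so any non-self-loop action of $s^\srm_i$ either absorbs $s^\srm_i$ into a cycle not containing it (value $\le 1$) or participates in a cycle through some $s^\lrm_k$. By the combinatorial structure, any such participating cycle has length at least $2L{+}1$, which leaves $s^\lrm_k$ with value at most $-\tfrac{1}{1-\gamma^{2L+1}}$; the gap against its self-aux deviation $-\tfrac{1}{1-\gamma^{2L+2}}$ exceeds the proposition's $\delta$-threshold (since the gap between lengths $2L{+}1$ and $2L{+}2$ dominates that between $2L{+}2$ and $2L{+}3$), so $s^\lrm_k$ would itself deviate, ruling out this configuration. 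Step (ii): given (i), any $a(s^\srm_j)$ action available to $s^\lrm_i$ yields value exactly $-1$ (absorption into $s^\srm_j$'s self-loop) while self-aux yields $-\tfrac{1}{1-\gamma^{2L+2}}$; the gap $\tfrac{\gamma^{2L+2}}{1-\gamma^{2L+2}}$ strictly exceeds $\delta$, so $s^\lrm_i$ must pick a true $G$-edge $a(s_j^\srm)$ with $(i,j)\in E$. Step (iii): the function $f:V\to V$ defined by $\ppi_{s_i^\lrm}=\ee_{a(s_{f(i)}^\srm)}$ must be a single-cycle permutation. Here I will exploit the asymmetry that every $s^\srm_k$ can freely jump to \emph{any} $s^\lrm_m$ via its $\slrm$-action: if $f$ is not a Hamiltonian-cycle permutation then some vertex is either missed or lies on a proper sub-cycle, and I will exhibit a coupled deviation at the ``skipped'' $s^\srm$-player that routes through an $\slrm$-path into the downstream chain of $s^\lrm$-players, producing a cycle whose length falls strictly below $2L{+}2$ and whose value violates $\delta$-NE exactly at the calibrated threshold.

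The sharpness claim then follows: having pinned $\ppi$ down by (i)--(iii) to a Hamiltonian-cycle structure, each player's chosen action is a strict best response (the only ties being genuinely symmetric choices among edges of the Hamiltonian cycle), so the $\delta$-slack in the NE definition is never actually consumed and $\ppi$ is an exact NE. The hard part will be step (iii): unlike (i) and (ii), which are individual-rationality bounds, Hamiltonicity is a global graph-theoretic property, and promoting local best-response conditions into it requires identifying the ``right'' coupled multi-player deviation that uses the precise proposition's $\delta$-threshold $\tfrac{\gamma^{2L+2}(1-\gamma)}{(1-\gamma^{2L+2})(1-\gamma^{2L+3})}$. Careful bookkeeping of path lengths through the $\slrm$-auxiliary paths, together with the proposition's calibration to separate lengths $2L{+}2$ and $2L{+}3$, is where the proof will be most delicate.
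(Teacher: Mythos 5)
Your high-level skeleton (pin down the structure of any $\delta$-approximate pure NE by individual-rationality bounds, then show only Hamiltonian cycles survive) matches the paper's, but the concrete execution rests on a misreading of the gadget's calibration, and this breaks both directions of the argument. You have the two self-loop lengths swapped: in the construction, the auxiliary self-loop of $s_i^\srm$ has length $2L+2$ (value $\tfrac{1}{1-\gamma^{2L+2}}$) and the auxiliary self-loop of $s_i^\lrm$ has length $2L-1$ (value $-\tfrac{1}{1-\gamma^{2L-1}}$), not the other way around. This calibration is the whole point: the $\srm$ player's fallback is deliberately made the \emph{longest} of its options, so that whenever some $s_i^\lrm$ points at $s_j^\srm$, player $s_j^\srm$ strictly prefers to route \emph{back} to $s_i^\lrm$ (the $\slrm$ return path closes a cycle of length $2L+1 < 2L+2$, and $a(s_j^\lrm)$ followed by the chain does even better) rather than self-loop. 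This ``return-path'' mechanism, which is Lemma~\ref{lem:Hamiltonian-2} in the paper, forces every $\lrm$ player onto a genuine cycle of length $2c$ where $c$ is the length of its cycle in the induced functional graph on $V$; Hamiltonicity is then exactly the statement $2c = 2L$, the unique length exceeding the $\lrm$ self-loop length $2L-1$.

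Concretely, your forward-direction strategy $\ppi^\star$ (every $s_i^\srm$ self-loops, every $s_i^\lrm$ is absorbed at value $-1$) is not a $\delta$-approximate NE of the actual game: the player $s_{i_{k+1}}^\srm$ pointed at by $s_{i_k}^\lrm$ can deviate to $a(s_{i_{k+1},i_k}^{(1),\slrm})$ and close a cycle of length $2L+1$, improving from $\tfrac{1}{1-\gamma^{2L+2}}$ to $\tfrac{1}{1-\gamma^{2L+1}}$, a gap exceeding the stated $\delta$. Worse, if your $\ppi^\star$ \emph{were} an equilibrium, it would be one for every choice of outgoing edges at the $\lrm$ states, so equilibria would exist whether or not $G$ is Hamiltonian and the reduction would prove nothing. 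Symmetrically, your Step~(i) in the reverse direction (every $s_i^\srm$ self-loops at any $\delta$-NE) is the opposite of the correct characterization: at any $\delta$-NE, an $s_i^\srm$ that is pointed at by some $\lrm$ player must play $a(s_i^\lrm)$. Finally, your Step~(iii) does not need any ``coupled multi-player deviation'' (NE is defined by unilateral deviations only): once the return-path mechanism is in place, a non-Hamiltonian induced subgraph forces some $\lrm$ player onto a cycle of length at most $2L-2$, and its unilateral deviation to its own length-$(2L-1)$ self-loop already violates the $\delta$-NE condition.
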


To prove the proposition, we first provide a few lemmas that characterize the structural properties of an $\delta$-approximate pure NE (for sufficiently small $\delta$) of the game $\calG_G$, due to its construction.
The lemmas build on one another in sequence.

\begin{lemma}\label{lem:Hamiltonian-1}
Consider any $\delta< \frac{\gamma^{2L+2}}{1-\gamma^{2L+2}}$.
Then, at any $\delta$-approximate pure NE $\ppi$ of $\calG_G$, the player controlling $s_i^\lrm$ cannot take action $a(s_i^{(1), \lrm})$ for $i\in V$.	
\end{lemma}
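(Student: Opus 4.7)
The plan is to argue by contradiction: suppose $\ppi$ is a $\delta$-approximate pure NE with $\ppi_{s_i^\lrm} = \ee_{a(s_i^{(1),\lrm})}$. Then the trajectory started at $s_i^\lrm$ deterministically cycles through the auxiliary states $s_i^{(1),\lrm},\ldots,s_i^{(2L+1),\lrm}$ back to $s_i^\lrm$ in exactly $2L+2$ steps, which gives $u_{s_i^\lrm}(\ppi) = -\tfrac{1}{1-\gamma^{2L+2}}$. I will exhibit a deviation whose gain strictly exceeds $\delta$, thereby contradicting the $\delta$-NE assumption. The deviation I consider is $\ppi'_{s_i^\lrm} = \ee_{a(s_i^\srm)}$, and I will case-split on the pure action $\ppi_{s_i^\srm}$, which by the construction of $\calA_{s_i^\srm}$ must be one of $a(s_i^\lrm)$, $a(s_i^{(1),\srm})$, or $a(s_{i,k}^{(1),\slrm})$ for some $k\in V\setminus\{i\}$.

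If $\ppi_{s_i^\srm} = \ee_{a(s_i^\lrm)}$, the trajectory of $\ppi$ from $s_i^\srm$ goes $s_i^\srm \to s_i^\lrm \to$ (the aux loop of $s_i^\lrm$), so $s_i^\srm$ is never re-visited and $u_{s_i^\srm}(\ppi) = 1$; switching $s_i^\srm$ to $a(s_i^{(1),\srm})$ produces a self-cycle at $s_i^\srm$ of length $2L-1$ with utility $\tfrac{1}{1-\gamma^{2L-1}}$. I will verify that the resulting gain $\tfrac{\gamma^{2L-1}}{1-\gamma^{2L-1}}$ strictly exceeds $\tfrac{\gamma^{2L+2}(1-\gamma)}{(1-\gamma^{2L+2})(1-\gamma^{2L+3})}$ by clearing denominators and using $(1-\gamma^n)=(1-\gamma)\sum_{k=0}^{n-1}\gamma^k$; after this expansion the inequality reduces to the clean observation that $\sum_{k=0}^{2L+2}\gamma^k > 1$, which contradicts the $\delta$-NE condition for $s_i^\srm$. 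If $\ppi_{s_i^\srm} = \ee_{a(s_i^{(1),\srm})}$, then under $\ppi'$ the trajectory from $s_i^\lrm$ enters $s_i^\srm$'s own aux loop and never returns to $s_i^\lrm$, yielding deviation utility $-1$ and gain $\tfrac{\gamma^{2L+2}}{1-\gamma^{2L+2}}$, which exceeds the $\delta$-bound since $1-\gamma^{2L+3}\ge 1-\gamma$.

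The most delicate sub-case is $\ppi_{s_i^\srm} = \ee_{a(s_{i,k}^{(1),\slrm})}$ for some $k\neq i$: under $\ppi'$ the trajectory from $s_i^\lrm$ visits $s_i^\srm$ at step $1$, traverses the $\slrm$ auxiliary path, and reaches $s_k^\lrm$ at step $2L+1$. Since $k\neq i$ none of steps $1,\ldots,2L+1$ lands at $s_i^\lrm$; furthermore, from $s_k^\lrm$ the only possible next states are $s_k^{(1),\lrm}$ or some choice state $s_m^\srm$ with $m=k$ or $(k,m)\in E(G)$, none of which equals $s_i^\lrm$, so step $2L+2$ is not $s_i^\lrm$ either. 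Consequently the first return time to $s_i^\lrm$ under $\ppi'$ is either infinite or at least $2L+3$, giving deviation utility at least $-\tfrac{1}{1-\gamma^{2L+3}}$ and gain at least $\tfrac{1}{1-\gamma^{2L+2}}-\tfrac{1}{1-\gamma^{2L+3}}=\tfrac{\gamma^{2L+2}(1-\gamma)}{(1-\gamma^{2L+2})(1-\gamma^{2L+3})}>\delta$, again contradicting $\delta$-NE for $s_i^\lrm$. The main obstacle I anticipate is the bookkeeping for the algebraic inequality in the first sub-case, which has to be verified uniformly for all $\gamma\in(0,1)$ and $L\ge 1$; the geometric-series identity handles this cleanly but requires some care.
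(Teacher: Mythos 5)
Your overall strategy---contradiction via an explicit profitable deviation of $s_i^\lrm$ to $a(s_i^\srm)$, with a case split on the pure action of $s_i^\srm$---is reasonable, and your first sub-case coincides with the paper's key step, but there are two concrete problems. First, you have the two auxiliary cycle lengths swapped: in the intended construction the self-cycle of $s_i^\lrm$ through $\calS^{\aux,\lrm}_i$ has length $2L-1$ (there are $2L-2$ auxiliary states) and the self-cycle of $s_i^\srm$ through $\calS^{\aux,\srm}_i$ has length $2L+2$, not the other way around. The displayed action sets do support your reading, but the state-space declaration, every value computed in the surrounding lemmas, and the threshold in this very lemma---$\frac{\gamma^{2L+2}}{1-\gamma^{2L+2}} = \frac{1}{1-\gamma^{2L+2}}-1$, which is exactly the gain of player $s_i^\srm$ from abandoning utility $1$ for its own length-$(2L+2)$ cycle---all fix the lengths the other way; and with your lengths the downstream reduction collapses, since $s_i^\lrm$ would strictly prefer its own length-$(2L+2)$ auxiliary cycle to the length-$2L$ Hamiltonian cycle. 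Consequently every numerical value in your argument (the baseline $-\frac{1}{1-\gamma^{2L+2}}$, the $\frac{1}{1-\gamma^{2L-1}}$ in sub-case one) is off.

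Second, and independently, you repeatedly compare your deviation gains to $\frac{\gamma^{2L+2}(1-\gamma)}{(1-\gamma^{2L+2})(1-\gamma^{2L+3})}$, which is the threshold of \Cref{prop:Hamiltonian}, not the threshold $\frac{\gamma^{2L+2}}{1-\gamma^{2L+2}}$ of this lemma; the former is smaller by the factor $\frac{1-\gamma}{1-\gamma^{2L+3}}$. This is fatal in your third sub-case: the gain you establish there is exactly $\frac{\gamma^{2L+2}(1-\gamma)}{(1-\gamma^{2L+2})(1-\gamma^{2L+3})}$, so the claimed contradiction ``gain $>\delta$'' fails for every $\delta$ between that quantity and the lemma's threshold, and at best you prove a strictly weaker statement. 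The paper's proof avoids this case: it first uses the incentives of $s_i^\srm$ (your first sub-case, with the correct numbers, where the gain is exactly $\frac{\gamma^{2L+2}}{1-\gamma^{2L+2}}$ and no further algebra is needed) to rule out $\ppi_{s_i^\srm}=\ee_{a(s_i^\lrm)}$, then asserts that the trajectory from $s_i^\srm$ never re-enters $s_i^\lrm$, so the deviation of $s_i^\lrm$ to $a(s_i^\srm)$ has utility exactly $-1$ and gain $\frac{\gamma^{2L-1}}{1-\gamma^{2L-1}} > \frac{\gamma^{2L+2}}{1-\gamma^{2L+2}} > \delta$. To your credit, your case analysis surfaces a real subtlety that the paper glosses over---the trajectory could in principle re-enter $s_i^\lrm$ through an $\slrm$ path via some $s_m^\srm$ playing $a(s_{m,i}^{(1),\slrm})$---but your lower bound of $2L+3$ on the return time is far from tight (any such re-entry costs at least roughly $4L$ steps), and even the improved bound does not beat the lemma's stated threshold uniformly in $\gamma$ and $L$ (e.g.\ for $\gamma$ near $1$ and $L\ge 4$). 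Closing that sub-case requires something beyond a return-time bound, so as written the proposal does not establish the lemma at its stated accuracy.
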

\begin{proof}
We prove by contradiction. 
Suppose, instead, that there is some $\delta$-approximate pure NE $\ppi$ such that for some $i\in V$, the player controlling $s_i^\lrm$ takes action $a(s_i^{(1), \lrm})$. Then, the utility of player $s_i^\lrm$ is equal to $\upsilon_{s_i^\lrm}(\ppi) = -\frac{1}{1-\gamma^{2L-1}}$ given the actions taken by $\calS^{\aux,\lrm}_i$ defined in~\eqref{eq:Hamiltonian-action-set}. 
Now for player controlling state $s_i^\srm$, we denote its utility when taking action $a$ at its own state by $\upsilon_{s_i^\srm}(\ee_{a},\ppi_{-s_i^\srm})$.
We then get
\[\upsilon_{s_i^\srm}\Par{\ee_{a},\ppi_{-s_i^\srm}} = \begin{cases}
 1~&~\text{if}~~a = a(s_i^\lrm),\\
 \frac{1}{1-\gamma^{2L+2}}~&~\text{if}~~a = a(s_i^{(1),\srm}).
 \end{cases}
\]
Now recall that we defined $\delta< \frac{\gamma^{2L+2}}{1-\gamma^{2L+2}} = \frac{1}{1-\gamma^{2L+2}}-1$.
From the above, we note that player cannot pick action $a(s_i^\lrm)$ at state $s_i^\srm$ at any policy $\ppi$ that is a $\delta$-approximate pure NE. 
Therefore, if the game starts from state $s_i^\srm$ and all players follow the strategy $\ppi$, then the game will never transit to $s_i^\lrm$. 
As a consequence of this, we have
\[\upsilon_{s_i^\lrm}\Par{\ee_{a}, \ppi_{-s_i^\lrm}}=-1,~~\text{if}~~a = a(s_i^\srm),\]
and so $a(s_i^\srm)$ provides an improvement $\geq \delta$ over $s_i^{(1),\lrm}$ for the state $s_i^\lrm$.
This breaks the property of $\delta$-approximate NE and completes the proof.
\end{proof}

\begin{lemma}\label{lem:Hamiltonian-2}
Consider $\delta<  \frac{\gamma^{2L+2}(1-\gamma)}{(1-\gamma^{2L+2})(1-\gamma^{2L+1})}$.
Then, any $\delta$-approximate pure NE $\ppi$ of $\calG_G$ must satisfy the following property: if there is some $i\in V$ such that the player at state $s_i^\lrm$ takes action $a(s_j^\srm)$ for $j\neq i$, then the player at state $s_j^\srm$ must not take action $a(s_j^{(1),\srm})$ or $a(s_{j,i}^{(1),\slrm})$.	
\end{lemma}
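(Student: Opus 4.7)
The plan is to proceed by contradiction. Assume $\ppi$ is a $\delta$-approximate pure NE with $\ppi_{s_i^\lrm}=\ee_{a(s_j^\srm)}$ for some $j\neq i$, and suppose the player at $s_j^\srm$ plays one of the two forbidden actions. I would exhibit, in each case, a deviation at $s_j^\srm$ whose utility gain exceeds $\delta$, contradicting the approximate NE property.

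First I would compute the utilities for the candidate actions at $s_j^\srm$ under the hypothesis, using the deterministic transition structure specified in~\eqref{eq:Hamiltonian-action-set}. If $s_j^\srm$ plays $a(s_j^{(1),\srm})$, the game enters the short-cycle chain that returns to $s_j^\srm$ after $2L+2$ steps and never visits $s_i^\lrm$, so the utility is $\frac{1}{1-\gamma^{2L+2}}$ (matching Lemma~\ref{lem:Hamiltonian-1}'s computation). If $s_j^\srm$ plays $a(s_{j,i}^{(1),\slrm})$, the $\slrm$-chain traverses $2L$ steps to reach $s_i^\lrm$, which by hypothesis transitions to $s_j^\srm$ in one further step, closing a cycle of length $2L+1$ with utility $\frac{1}{1-\gamma^{2L+1}}$.

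For the first forbidden action, $a(s_j^{(1),\srm})$, I would directly compare with the deviation $a(s_{j,i}^{(1),\slrm})$. The resulting strict improvement is
\[
\frac{1}{1-\gamma^{2L+1}}-\frac{1}{1-\gamma^{2L+2}}
= \frac{\gamma^{2L+1}(1-\gamma)}{(1-\gamma^{2L+1})(1-\gamma^{2L+2})}
= \frac{1}{\gamma}\cdot\frac{\gamma^{2L+2}(1-\gamma)}{(1-\gamma^{2L+2})(1-\gamma^{2L+1})}>\delta,
\]
where the last inequality uses the hypothesis on $\delta$ together with $\gamma<1$. This violates the $\delta$-NE condition and rules out $a(s_j^{(1),\srm})$.

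For the second forbidden action, $a(s_{j,i}^{(1),\slrm})$, the argument is more delicate since the baseline utility $\frac{1}{1-\gamma^{2L+1}}$ is already quite good. My plan is to use the deviation $a(s_j^\lrm)$. By Lemma~\ref{lem:Hamiltonian-1} the player at $s_j^\lrm$ does not play $a(s_j^{(1),\lrm})$, so $\ppi_{s_j^\lrm}$ is either $a(s_j^\srm)$, yielding a 2-cycle with utility $\frac{1}{1-\gamma^2}\gg\frac{1}{1-\gamma^{2L+1}}$ and an obviously profitable deviation, or $a(s_m^\srm)$ for some $(j,m)\in E$ with $m\neq j$. In the remaining case, I would trace the deviation trajectory $s_j^\srm\to s_j^\lrm\to s_m^\srm\to\cdots$ and argue that the first two reward terms alone contribute $1$, while the continuation from $s_m^\srm$ under $\ppi$ must be comparable (up to a factor of $\gamma$) to the baseline cycle from $s_j^\srm$; the extra factor of $\gamma$ built into the stated threshold precisely accounts for this one-step delay through $s_j^\lrm$. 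The main obstacle is that the player at $s_j^\lrm$ is unvisited under $\ppi$, so its action is not pinned down by its own best-response condition; I would resolve this by observing that pseudo-linearity (Corollary~\ref{lem:quasi-mono-tbsg}) lets us restrict to deterministic choices, and then noting that whatever choice $s_j^\lrm$ makes, the deviated path from $s_j^\srm$ either (i) re-enters $s_j^\srm$ via a cycle of length at most $2L$, yielding utility at least $\frac{1}{1-\gamma^{2L}}>\frac{1}{1-\gamma^{2L+1}}+\delta$, or (ii) reaches a continuation that, combined with $\gamma$-discounting, contributes at least the same surplus $\gamma\cdot(\frac{1}{1-\gamma^{2L+1}}-\frac{1}{1-\gamma^{2L+2}})$ relative to the baseline. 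Either case yields a contradiction with the $\delta$-NE property, completing the proof.
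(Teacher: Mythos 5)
Your treatment of the first forbidden action, $a(s_j^{(1),\srm})$, is correct and coincides with the paper's argument: the hypothesis that $s_i^\lrm$ plays $a(s_j^\srm)$ makes the deviation to $a(s_{j,i}^{(1),\slrm})$ close a $(2L+1)$-cycle worth $\tfrac{1}{1-\gamma^{2L+1}}$, beating the $(2L+2)$-self-cycle by more than $\delta$.

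The second half has a genuine gap. To rule out $a(s_{j,i}^{(1),\slrm})$ you look for a profitable deviation \emph{for the player at $s_j^\srm$}, namely $a(s_j^\lrm)$. But in your Case B ($\ppi_{s_j^\lrm}=\ee_{a(s_m^\srm)}$ with $m\neq j$) this deviation need not be profitable at all: player $s_j^\srm$'s utility counts only discounted returns to her own state, and the deviated trajectory $s_j^\srm\to s_j^\lrm\to s_m^\srm\to\cdots$ may simply never come back to $s_j^\srm$ (for instance if $s_m^\srm$ self-cycles through its own auxiliary chain), in which case the deviation is worth exactly $1$, far below the baseline $\tfrac{1}{1-\gamma^{2L+1}}$. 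Your claim that the continuation from $s_m^\srm$ is ``comparable up to a factor of $\gamma$ to the baseline cycle'' conflates the value of the continuation trajectory with its value \emph{to player $s_j^\srm$}, and your sub-case (i) bound of a return cycle of length at most $2L$ has no justification. The paper closes this case by switching to a \emph{different} player's incentive: if $s_j^\srm$ plays $a(s_{j,i}^{(1),\slrm})$, the induced cycle $s_i^\lrm\to s_j^\srm\to(\slrm\text{-chain})\to s_i^\lrm$ has length $2L+1$ and does not contain $s_i^\srm$; one first shows, as in Lemma~\ref{lem:Hamiltonian-1}, that player $s_i^\srm$ therefore refuses to play $a(s_i^\lrm)$ (she would never return to her own state, getting $1$ instead of $\tfrac{1}{1-\gamma^{2L+2}}$), and then that player $s_i^\lrm$ --- who wants long cycles --- gains more than $\delta$ by deviating to $a(s_i^\srm)$ and escaping the $(2L+1)$-cycle. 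This contradicts the lemma's standing hypothesis that $s_i^\lrm$ plays $a(s_j^\srm)$. You would need this argument (or another one) in place of the unsupported claim that the $a(s_j^\lrm)$ deviation returns to $s_j^\srm$.
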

\begin{proof}
Consider a $\delta$-approximate pure NE $\ppi$ such that the player at state $s_i^\lrm$ transits to $s_j^\srm$.
Then, the utilities of player $s_j^\srm$ are given by
\begin{align*}
\upsilon_{s_j^\srm}\Par{\ee_{a},\ppi_{-s_j^\srm}} = \begin{cases}
 \frac{1}{1-\gamma^{2L+1}}~&~\text{if}~~a = a(s_{j,i}^{(1),\slrm}),\\
 \frac{1}{1-\gamma^{2L+2}}~&~\text{if}~~a = a(s_j^{(1),\srm}).
 \end{cases}
\end{align*}
Noting that we need a $\delta$-approximate NE for the choice $\delta<  \frac{\gamma^{2L+1}(1-\gamma)}{(1-\gamma^{2L+2})(1-\gamma^{2L+1})}$, this rules out the possibility of player $s_j^\srm$ taking action $a(s_j^{(1),\srm})$.
To rule out the other action $a(s_{j,i}^{(1),\slrm})$, we prove by contradiction.
Suppose instead that the player $s_j^\srm$ takes action $a(s_{j,i}^{(1),\slrm})$ at $\ppi$.
Then, the state transits from $s_j^\srm \to s_i^\lrm$, and there is consequently an induced cycle $s_i^\lrm \to s_j^\srm \to s_i^\lrm$ of length equal to $2L+1$.

We now consider the action that player of $s_{i}^\srm$ takes at the $\delta$-approximate pure NE. 
Recall that player $s_i^\srm$ wishes to minimize the length of her own cycle.
Consequently, in an argument similar to the proof of~\Cref{lem:Hamiltonian-1}, player $s_i^\srm$ will never visit $s_i^\lrm$ at a $\delta$-approximate NE for any $\delta< \frac{\gamma^{2L+2}}{(1-\gamma^{2L+2})}$.
Finally, recall that player $s_i^\lrm$ wishes to maximize the length of her own cycle, or ideally find a path.
Consequently, player $s_i^\lrm$ is strictly incentivized to take action $a(s_i^\srm)$ in $\delta$-approximate NE which causes a contradiction with the assumption that player $s_i^\lrm$ takes action $a(s_j^\srm)$.
This completes the proof.
\end{proof}

\begin{lemma}\label{lem:Hamiltonian-3}
Consider any $\delta<  \frac{\gamma^{2L+2}(1-\gamma)}{(1-\gamma^{2L+2})(1-\gamma^{2L+3})}$ and a $\delta$-approximate pure NE $\ppi$ of $\calG_G$.
Then, for any $i\in V$, no player $s_i^\srm$ can take action $a(s_{i,j}^{(1),\slrm})$, for all $j\neq i$ and $j\in V$.	
\end{lemma}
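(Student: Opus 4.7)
The plan is to argue by contradiction: suppose the player at $s_i^\srm$ takes action $a(s_{i,j}^{(1),\slrm})$ for some $i\in V$ and $j\in V\setminus\{i\}$ at a $\delta$-approximate pure NE $\ppi$. The central comparison will be with the deviation to $a(s_i^{(1),\srm})$, whose utility is exactly $\frac{1}{1-\gamma^{2L+2}}$ regardless of $\ppi_{-s_i^\srm}$, because the induced cycle traverses only auxiliary $\srm$-states, each of which has a single (deterministic) action. Invoking the $\delta$-NE condition together with the chosen bound $\delta < \frac{1}{1-\gamma^{2L+2}}-\frac{1}{1-\gamma^{2L+3}}$ then yields
\[
\upsilon_{s_i^\srm}(\ppi)\;\geq\;\tfrac{1}{1-\gamma^{2L+2}}-\delta\;>\;\tfrac{1}{1-\gamma^{2L+3}}.
\]

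Since rewards are $\ronly$ and transitions are deterministic, $\upsilon_{s_i^\srm}(\ppi)$ is the geometric sum $\sum_{t:\,s^t=s_i^\srm}\gamma^t$ over return times to $s_i^\srm$ along the trajectory. The first $2L$ steps from $s_i^\srm$ are rigidly $s_i^\srm\to s_{i,j}^{(1),\slrm}\to\cdots\to s_{i,j}^{(2L-1),\slrm}\to s_j^\lrm$, so any return to $s_i^\srm$ takes total length $T\geq 2L+1$. The strict lower bound above rules out ``no return'' (which would give $\upsilon=1<\tfrac{1}{1-\gamma^{2L+3}}$) and forces the unique cycle length $T$ through $s_i^\srm$ to satisfy $T\in\{2L+1,2L+2\}$.

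Next I would eliminate $T=2L+2$ by direct inspection of the action sets in~\eqref{eq:Hamiltonian-action-set}: from $s_j^\lrm$ every available action lands in $\calS^\srm\cup\{s_j^{(1),\lrm}\}$, and from each of those states no single-step transition to $s_i^\srm$ exists (actions at any $s_k^\srm$ go to $\slrm$-, $\lrm$-, or $\srm$-auxiliary states, never to $s_i^\srm$; and $s_j^{(1),\lrm}$ advances along the $\lrm$-auxiliary path). Hence no cycle of length exactly $2L+2$ through $s_i^\srm$ exists. The case $T=2L+1$ requires $s_j^\lrm$ to take the action $a(s_i^\srm)$, which with the relabeling ``$i\leftrightarrow j$'' is precisely the hypothesis of Lemma~\ref{lem:Hamiltonian-2}; its conclusion rules out $s_i^\srm$ taking $a(s_{i,j}^{(1),\slrm})$, contradicting our standing assumption.

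The main obstacle I anticipate is the $T=2L+2$ case: it demands a careful enumeration of all two-step transitions out of $s_j^\lrm$ to confirm that no two-hop path returns to $s_i^\srm$. Once this combinatorial check is cleanly written, the $T=2L+1$ case reduces to a transparent re-indexing of Lemma~\ref{lem:Hamiltonian-2}, and the overall contradiction closes the proof.
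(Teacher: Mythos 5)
Your proof is correct and follows essentially the same route as the paper's: a contradiction via the comparison with the deviation $a(s_i^{(1),\srm})$ of value $\frac{1}{1-\gamma^{2L+2}}$, the choice of $\delta$ below the gap $\frac{1}{1-\gamma^{2L+2}}-\frac{1}{1-\gamma^{2L+3}}$, and an appeal to Lemma~\ref{lem:Hamiltonian-2} (with indices swapped) to exclude the short return through $s_j^\lrm$. The only difference is organizational — you enumerate the possible return times and rule out $T=2L+2$ by an explicit inspection of the action sets, a step the paper's proof uses implicitly when asserting the cycle length is at least $2L+3$ — so if anything your write-up is slightly more careful on that combinatorial point.
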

\begin{proof}
We prove by contradiction. 
Consider, instead, a $\delta$-approximate pure NE $\ppi$ where state $s_i^\srm$ takes action $a(s_{i,j}^{(1),\slrm})$ for some $i\neq j\in V$. 
Given~\Cref{lem:Hamiltonian-2}, we know that state $s_j^\lrm$ must not take action $a(s_i^\srm)$ even if $(j,i)\in E$. 
Now we consider the value function of player $s_i^\srm$ when fixing the policies of other players in $\ppi$.
We have
\begin{align}\label{eq:valueshort}
\upsilon_{s_i^\srm}\Par{\ee_{a},\ppi_{-s_i^\srm}} \begin{cases}
 \le \frac{1}{1-\gamma^{2L+3}}~&~\text{if}~~a = a(s_{i,j}^{(1),\slrm}),\\
 =\frac{1}{1-\gamma^{2L+2}}~&~\text{if}~~a = a(s_i^{(1),\srm}).
 \end{cases}
\end{align}
Above, the first inequality is due to the fact that if $s_i^\srm$ still revisits itself, it must revisit itself with cycle length $\ge 2L+3$. This is because the state $s_j^\lrm$ that it visits does not transit back to $s_i^\srm$ directly under strategy $\ppi$. 
Noting that $\delta<\frac{\gamma^{2L+2}(1-\gamma)}{(1-\gamma^{2L+2})(1-\gamma^{2L+3})} $, Equation~\eqref{eq:valueshort} implies that $\ppi$ cannot be a $\delta$-approximate pure NE for player $s_i^\srm$. This proves the lemma.
\end{proof}

\begin{lemma}\label{lem:Hamiltonian-4}
Consider any $\delta<  \frac{\gamma^2(1-\gamma^{2L-3})}{(1-\gamma^2)(1-\gamma^{2L+2})}$.
Then, for any $i \in V$, no player of state $s_i^\lrm$ will take action $a(s_i^\srm)$ at a $\delta$-approximate pure NE $\ppi$ of $\calG_G$. 
\end{lemma}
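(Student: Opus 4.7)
The plan is a proof by contradiction. Suppose that at some $\delta$-approximate pure NE $\ppi$, the player at $s_i^\lrm$ plays $\ppi_{s_i^\lrm} = \ee_{a(s_i^\srm)}$ for some $i \in V$. First I would narrow down what $s_i^\srm$ can be doing: its action set in~\eqref{eq:Hamiltonian-action-set} consists of $\{a(s_i^\lrm),\ a(s_i^{(1),\srm})\}\cup\{a(s_{i,k}^{(1),\slrm}) : k\neq i\}$, and the entire $\slrm$ family is ruled out by Lemma~\ref{lem:Hamiltonian-3}. Only two cases for $\ppi_{s_i^\srm}$ remain, and I would derive a contradiction in each by exhibiting some player whose deviation beats the stated threshold.

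In Case~A, where $\ppi_{s_i^\srm}=\ee_{a(s_i^\lrm)}$, the trajectory starting from $s_i^\lrm$ is locked into the length-$2$ cycle $s_i^\lrm\to s_i^\srm\to s_i^\lrm$, so $\upsilon_{s_i^\lrm}(\ppi)=-\tfrac{1}{1-\gamma^{2}}$. The proposed deviation is to switch $s_i^\lrm$ to $a(s_i^{(1),\lrm})$, which (by the same computation used in the proof of Lemma~\ref{lem:Hamiltonian-1}) puts $s_i^\lrm$ on the $\lrm$-auxiliary self-cycle of length $2L-1$ and yields utility $-\tfrac{1}{1-\gamma^{2L-1}}$. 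The gain equals $\tfrac{\gamma^{2}(1-\gamma^{2L-3})}{(1-\gamma^{2})(1-\gamma^{2L-1})}$, which exceeds the stated threshold $\tfrac{\gamma^{2}(1-\gamma^{2L-3})}{(1-\gamma^{2})(1-\gamma^{2L+2})}$ because $1-\gamma^{2L-1}\le 1-\gamma^{2L+2}$, contradicting the $\delta$-NE property for the $s_i^\lrm$ player.

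In Case~B, where $\ppi_{s_i^\srm}=\ee_{a(s_i^{(1),\srm})}$, the trajectory from $s_i^\lrm$ is immediately absorbed into the length-$(2L+2)$ self-cycle of $s_i^\srm$ and never returns to $s_i^\lrm$; so the current utility of the $s_i^\srm$ player is $\tfrac{1}{1-\gamma^{2L+2}}$. Here the right player to deviate is $s_i^\srm$, not $s_i^\lrm$: because $\ppi_{-s_i^\srm}$ still has $s_i^\lrm$ playing $a(s_i^\srm)$, a unilateral switch of $s_i^\srm$ to $a(s_i^\lrm)$ reinstates the length-$2$ cycle and gives utility $\tfrac{1}{1-\gamma^{2}}$. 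The resulting gain $\tfrac{\gamma^{2}(1-\gamma^{2L})}{(1-\gamma^{2})(1-\gamma^{2L+2})}$ again strictly exceeds the threshold, since $1-\gamma^{2L}\ge 1-\gamma^{2L-3}$. Either case contradicts that $\ppi$ is a $\delta$-NE, completing the argument.

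The only real subtlety, and the step I would be most careful about, is noticing that the contradictory deviation is performed by different players in the two cases (by $s_i^\lrm$ in Case~A but by $s_i^\srm$ in Case~B), and then checking that the single stated threshold $\tfrac{\gamma^{2}(1-\gamma^{2L-3})}{(1-\gamma^{2})(1-\gamma^{2L+2})}$ is simultaneously a lower bound on both gains. Everything else is routine geometric-series bookkeeping of the kind already used in the proofs of Lemmas~\ref{lem:Hamiltonian-1}--\ref{lem:Hamiltonian-3}.
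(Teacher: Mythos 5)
Your proof is correct and takes essentially the same approach as the paper's: the paper organizes the same two deviations sequentially rather than as a case split (it first forces $\ppi_{s_i^\srm}=\ee_{a(s_i^\lrm)}$ by directly bounding all alternatives of $s_i^\srm$ — including the $\slrm$ actions, which it handles in-line by the bound $\le \frac{1}{1-\gamma^{2L+2}}$ rather than by citing Lemma~\ref{lem:Hamiltonian-3} — and then lets $s_i^\lrm$ deviate to $a(s_i^{(1),\lrm})$). The two gains you compute, $\frac{\gamma^2(1-\gamma^{2L})}{(1-\gamma^2)(1-\gamma^{2L+2})}$ and $\frac{\gamma^2(1-\gamma^{2L-3})}{(1-\gamma^2)(1-\gamma^{2L-1})}$, are exactly the paper's, and your closing observation that the single stated threshold lower-bounds both is precisely the point the paper relies on.
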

\begin{proof}
	We prove by contradiction. 
	Suppose, instead, that there is a $\delta$-approximate pure NE $\ppi$ where for some $i\in V$, the player of $s_i^\lrm$ takes action $a(s_i^\srm)$. Then, the value function of player $s_i^\srm$ under different actions when fixing other players' strategies in $\ppi$ is equal to
	\[
	\upsilon_{s_i^\srm}\Par{\ee_{a},\ppi_{-s_i^\srm}}\begin{cases}
  = \frac{1}{1-\gamma^{2}} ~&~\text{if}~~a = a(s_{i}^{\lrm}),\\
  \le \frac{1}{1-\gamma^{2L+2}}~&~\text{if}~~a = a(s_{i,j}^{(1),\slrm}),\\
 =\frac{1}{1-\gamma^{2L+2}}~&~\text{if}~~a = a(s_i^{(1),\srm}).
 \end{cases}
	\]
Then, in order to satisfy a $\delta$-approximate NE with $\delta<\frac{\gamma^2(1-\gamma^{2L})}{(1-\gamma^2)(1-\gamma^{2L+2})}$, the player of $s_i^\srm$ must take action $a(s_i^\lrm)$ in $\ppi$. 
In that case, the value function for player $s_i^\lrm$ is equal to
	\[
	\upsilon_{s_i^\lrm}\Par{\ee_{a},\ppi_{-s_i^\lrm}} = \begin{cases}
   -\frac{1}{1-\gamma^{2}} ~&~\text{if}~~a = a(s_{i}^{\srm}),\\
 -\frac{1}{1-\gamma^{2L-1}}~&~\text{if}~~a = a(s_i^{(1),\lrm}).
 \end{cases}
	\]
	Consequently, for $\delta<\frac{\gamma^2(1-\gamma^{2L-3})}{(1-\gamma^2)(1-\gamma^{2L-1})}$, we have the player of $s_i^\lrm$ must not take action $a(s_{i}^{\srm})$ which leads to a contradiction.
	This concludes the proof.
\end{proof}

With these lemmas we are ready to formally characterize the necessary structural properties for a $\delta$-approximate pure NE $\ppi$ of game $\calG_G$, and thus proving~\Cref{prop:Hamiltonian}.

\begin{proof}[Proof of~\Cref{prop:Hamiltonian}]	
We first prove that when $\ppi$ is a $\delta$-approximate pure NE, then its induced subgraph $G'(\calG_G, \ppi)$ must be a Hamiltonian cycle. 
From Lemmas~\ref{lem:Hamiltonian-1},~\ref{lem:Hamiltonian-2},~\ref{lem:Hamiltonian-3} and~\ref{lem:Hamiltonian-4}, we know that if $\ppi$ is a $\delta$-approximate pure NE of $\calG_G$, then it must satisfy
\begin{align*}
	\ppi_s = \begin{cases}
	\ee_{a(s_j^\srm)}~~\text{for some}~~(i,j)\in E&\text{if}~~s = s_i^\lrm,\\
	\ee_{a(s_i^\lrm)}&\text{if}~~s = s_i^\srm~\text{and}~\exists k\in V~\text{such that}~\ppi_{s_k^\lrm}=\ee_{a(s_i^\srm)},\\
	\ee_{a(s_i^\srm)}~\text{or}~\ee_{a(s_i^{(1),\srm})}&\text{if}~~s = s_i^\srm~\text{without such}~k~\text{in previous case}.
	\end{cases}
\end{align*}

Now, let us consider the induced subgraph $G'(\calG_G, \ppi)$.
Suppose it is not a Hamiltonian cycle of original graph $G$.
Then, there must exist a cycle of smaller length $\le L-1$, which corresponds to a cycle of length $\le 2L-2$ in game $\calG'$. 
For any player of $s_i^\lrm$ that is on the cycle, the value function is equal to
\[
\upsilon_{s_i^\lrm}\Par{\ee_{a},\ppi_{-s_i^\lrm}} = \begin{cases}
   -\frac{1}{1-\gamma^{2L-2}} ~&~\text{if}~~a = a(s_{j}^{\srm})~\text{for some}~j:(i,j)\in E,\\
 -\frac{1}{1-\gamma^{2L-1}}~&~\text{if}~~a = a(s_i^{(1),\lrm}).
 \end{cases}
\]
This contradicts with the fact that $s_i^\lrm$ takes action $a(s_j^\srm)$ in a $\delta$-approximate pure NE $\ppi$ with $\delta<\frac{\gamma^{2L-2}(1-\gamma)}{(1-\gamma^{2L-2})(1-\gamma^{2L-1})}$. Consequently we conclude that the induced subgraph $G'(\calG_G,\ppi)$ must be a Hamiltonian cycle.

We now show the reverse argument, i.e. that we can construct a pure NE out of any Hamiltonian cycle of the graph $G$ if it exists.
Consider a Hamiltonian cycle $G'$ on the original graph $G$.
Then, we can construct a pure strategy for game $\calG_G$ as below:
\begin{align*}
	\ppi_s = \begin{cases}
	\ee_{a(s_j^\srm)}&\text{if}~~s = s_i^\lrm,~(i,j)\in E(G')\\
	\ee_{a(s_i^\lrm)}&\text{if}~~s = s_i^\srm.
	\end{cases}
\end{align*}
Since $G'$ is a Hamiltonian cycle, it uniquely defines a pure strategy $\ppi$ for all players.
Now for each state $s_i^\lrm$, the value function under different actions of the player $s_i^\lrm$ when fixing other players' action in $\ppi$ becomes equal to
\[
\upsilon_{s_i^\lrm}\Par{\ee_{a},\ppi_{-s_i^\lrm}}  \begin{cases}
   = -\frac{1}{1-\gamma^{2L}} ~&~\text{if}~~a = a(s_{j}^{\srm})~\text{for}~j:(i,j)\in E(G'),\\
   \le -\frac{1}{1-\gamma^{2(L-1)}} ~&~\text{if}~~a = a(s_{j}^{\srm})~\text{for}~j:(i,j)\notin E(G'),\\
   = -\frac{1}{1-\gamma^{2}}~&~\text{if}~~a = a(s_i^\srm),\\
 = -\frac{1}{1-\gamma^{2L-1}}~&~\text{if}~~a = a(s_i^{(1),\lrm}),
 \end{cases}
\]
which implies the current action $a_k^\srm$ for some $(i,k)\in E(G')$ of $s_i^\lrm$ is a best response to the other players' strategy.

Similarly, for each state $s_i^\srm$, the value function under different actions of the player $s_i^\srm$ when fixing other players' action in $\ppi$ becomes equal to
\[
\upsilon_{s_i^\srm}\Par{\ee_{a},\ppi_{-s_i^\srm}}  \begin{cases}
   \le \frac{1}{1-\gamma^{2L+2}} ~&~\text{if}~~a = a(s_{i,j}^{(1),\slrm})~\text{for}~j\in V(G)\setminus\{i\},\\
   = \frac{1}{1-\gamma^{2L}}~&~\text{if}~~a = a(s_i^\lrm),\\
 = \frac{1}{1-\gamma^{2L+2}}~&~\text{if}~~a = a(s_i^{(1),\srm}),
 \end{cases}
\]
where the first inequality is due to the fact that the cycle such action leads to when fixing other players' action must have length at least equal to $2L+2$. 
This also implies the current action of $s_i^\srm$ is a best response for other players' strategy.

Combining these together, we conclude that we can modify any Hamiltonian cycle $G'$ of $G$ into an exact pure NE of game $\calG_G$.
This proves the second part of the proposition.
\end{proof}

Following the NP-hardness of finding a Hamiltonian cycle of any directed graph~\cite{garey1979computers}, we obtain the NP-hardness of finding $\epsilon$-approximate pure NE for small enough $\epsilon$. We state the formal theorem as an immediate corollary of~\Cref{prop:Hamiltonian} here for completeness.

\begin{theorem}
It is NP-hard to find an $\epsilon$-approximate pure NE in $\gamma$-discounted mixed-sign $\ronly$ $\onestate$s with both positive and negative rewards for any accuracy $\epsilon <\frac{\gamma^{2\Atot+2}(1-\gamma)}{(1-\gamma^{2\Atot+2})(1-\gamma^{2\Atot+3})}$.
\end{theorem}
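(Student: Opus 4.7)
The proof is essentially immediate from Proposition~\ref{prop:Hamiltonian}, which has already established the main reduction; the remaining task is bookkeeping to convert the statement into the standard ``NP-hard for accuracy up to $\poly$-small threshold'' form.

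The plan is as follows. First, I would reduce from the classical NP-hard problem of deciding whether a directed graph $G = (V, E)$ has a Hamiltonian cycle~\cite{garey1979computers}. Given such $G$ with $L = |V|$, I would construct the $\ronly$ $\onestate$ instance $\calG_G$ described just before Proposition~\ref{prop:Hamiltonian}. This construction is polynomial-time: the state space has size $O(L |V|^2) = O(L^3)$, and it is mixed-sign $\ronly$ because the only nonzero instantaneous rewards are $+1$ at states in $\calS^{\srm}$ and $-1$ at states in $\calS^{\lrm}$, and each such state is controlled by a single distinct player who collects reward only at her own state.

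Next, I would invoke Proposition~\ref{prop:Hamiltonian} directly: for any $\delta < \frac{\gamma^{2L+2}(1-\gamma)}{(1-\gamma^{2L+2})(1-\gamma^{2L+3})}$, a $\delta$-approximate pure NE exists in $\calG_G$ if and only if $G$ has a Hamiltonian cycle. Hence computing a $\delta$-approximate pure NE of $\calG_G$ solves Hamiltonian cycle on $G$, proving NP-hardness at accuracy $\delta$.

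Finally, I would verify that the claimed bound $\epsilon < \frac{\gamma^{2\Atot+2}(1-\gamma)}{(1-\gamma^{2\Atot+2})(1-\gamma^{2\Atot+3})}$ is implied by (i.e.\ at least as strong as) the bound from Proposition~\ref{prop:Hamiltonian}. For the constructed instance, inspection of the action sets in~\eqref{eq:Hamiltonian-action-set} shows that only states $s_i^{\lrm}$ and $s_i^{\srm}$ have nontrivial actions (of sizes $\le L+2$ and $L+1$ respectively), while all auxiliary states have a single action, so $\Atot = \Theta(L^2)$ and in particular $\Atot \ge L$. Since the function $x \mapsto \frac{\gamma^{2x+2}(1-\gamma)}{(1-\gamma^{2x+2})(1-\gamma^{2x+3})}$ is decreasing in $x$ for $\gamma \in (0,1)$ (the numerator $\gamma^{2x+2}$ shrinks and the denominator grows toward $1$), the threshold with $\Atot$ in place of $L$ is no larger than the one in Proposition~\ref{prop:Hamiltonian}. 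Therefore any $\epsilon$ satisfying the theorem's bound automatically satisfies the proposition's bound, and the reduction goes through. There is no real obstacle here beyond checking the monotonicity in the exponent and the fact that $\Atot \ge L$ for the constructed instance; the whole argument is a short corollary.
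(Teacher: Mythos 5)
Your proposal is correct and matches the paper, which states this theorem as an immediate corollary of Proposition~\ref{prop:Hamiltonian}; your added bookkeeping (monotonicity of the threshold in the exponent plus $\Atot \ge L$) is exactly the right way to pass from the proposition's $L$-dependent bound to the theorem's $\Atot$-dependent one. One immaterial slip: the construction has $O(L^3)$ auxiliary states each contributing one action, so $\Atot = \Theta(L^3)$ rather than $\Theta(L^2)$, but since only $\Atot \ge L$ is used the argument is unaffected.
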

In particular, setting $\gamma = 1-\poly(\Atot)^{-1}$, this implies the NP-hardness of finding approximate pure NE with accuracy under some small enough constant $\epsilon\le \epsilon_0\in(0,1)$ in the class of $\gamma$-discounted mixed-sign $\ronly$ $\onestate$s.

\section*{Acknowledgment}
The authors thank Constantinos Daskalakis, Noah Golowich, and Kaiqing Zhang for kindly coordinating on uploads to arXiv. The authors also thank Aviad Rubinstein and anonymous reviewers for helpful feedback.

Researchers are supported in part by an Adobe Data Science Research Award,  a Danzig-Lieberman Graduate Fellowship, a Google Research Colabs Award, a Microsoft Research Faculty Fellowship, NSF CAREER Award CCF-1844855, NSF Grant CCF-1955039, NSF Grant IIS-2212182, a PayPal research
award, a Sloan Research Fellowship and a Stanford Graduate Fellowship.  Part of this work was conducted while the authors were
visiting the Simons Institute for the Theory of Computing.

\newpage

\bibliographystyle{alpha}

\begin{thebibliography}{FRGH{\etalchar{+}}21}

\bibitem[Alt94]{altman1994flow}
Eitan Altman.
\newblock Flow control using the theory of zero sum markov games.
\newblock {\em IEEE transactions on automatic control}, 39(4):814--818, 1994.

\bibitem[AM09]{andersson2009complexity}
Daniel Andersson and Peter~Bro Miltersen.
\newblock The complexity of solving stochastic games on graphs.
\newblock In {\em International Symposium on Algorithms and Computation}, pages
  112--121. Springer, 2009.

\bibitem[BBHR20]{boodaghians2020smoothed}
Shant Boodaghians, Joshua Brakensiek, Samuel~B Hopkins, and Aviad Rubinstein.
\newblock Smoothed complexity of 2-player nash equilibria.
\newblock In {\em 2020 IEEE 61st Annual Symposium on Foundations of Computer
  Science (FOCS)}, pages 271--282. IEEE, 2020.

\bibitem[BDK10]{borkovsky2010user}
Ron~N Borkovsky, Ulrich Doraszelski, and Yaroslav Kryukov.
\newblock A user's guide to solving dynamic stochastic games using the homotopy
  method.
\newblock {\em Operations Research}, 58(4-part-2):1116--1132, 2010.

\bibitem[Ber95]{Bertsekas19}
Dimitri~P Bertsekas.
\newblock {\em Dynamic programming and optimal control}, volume~1.
\newblock Athena scientific Belmont, MA, 1995.

\bibitem[BJY20]{bai2020near}
Yu~Bai, Chi Jin, and Tiancheng Yu.
\newblock Near-optimal reinforcement learning with self-play.
\newblock {\em Advances in neural information processing systems},
  33:2159--2170, 2020.

\bibitem[BO98]{bacsar1998dynamic}
Tamer Ba{\c{s}}ar and Geert~Jan Olsder.
\newblock {\em Dynamic noncooperative game theory}.
\newblock SIAM, 1998.

\bibitem[CD06]{chen2006settling}
Xi~Chen and Xiaotie Deng.
\newblock Settling the complexity of two-player nash equilibrium.
\newblock In {\em 2006 47th Annual IEEE Symposium on Foundations of Computer
  Science (FOCS'06)}, pages 261--272. IEEE, 2006.

\bibitem[CDDT09]{chen2009settling}
Xi~Chen, Decheng Dai, Ye~Du, and Shang-Hua Teng.
\newblock Settling the complexity of arrow-debreu equilibria in markets with
  additively separable utilities.
\newblock In {\em 2009 50th Annual IEEE Symposium on Foundations of Computer
  Science}, pages 273--282. IEEE, 2009.

\bibitem[CDO15]{chen2015complexity}
Xi~Chen, David Durfee, and Anthi Orfanou.
\newblock On the complexity of nash equilibria in anonymous games.
\newblock In {\em Proceedings of the forty-seventh annual ACM symposium on
  Theory of computing}, pages 381--390, 2015.

\bibitem[Con92]{condon1992complexity}
Anne Condon.
\newblock The complexity of stochastic games.
\newblock {\em Information and Computation}, 96(2):203--224, 1992.

\bibitem[CPY17]{chen2017complexity}
Xi~Chen, Dimitris Paparas, and Mihalis Yannakakis.
\newblock The complexity of non-monotone markets.
\newblock {\em Journal of the ACM (JACM)}, 64(3):1--56, 2017.

\bibitem[CS02]{conitzer2002complexity}
Vincent Conitzer and Tuomas Sandholm.
\newblock Complexity results about nash equilibria.
\newblock {\em arXiv preprint cs/0205074}, 2002.

\bibitem[D{\etalchar{+}}59]{dijkstra1959note}
Edsger~W Dijkstra et~al.
\newblock A note on two problems in connexion with graphs.
\newblock {\em Numerische mathematik}, 1(1):269--271, 1959.

\bibitem[Das13]{daskalakis2013complexity}
Constantinos Daskalakis.
\newblock On the complexity of approximating a nash equilibrium.
\newblock {\em ACM Transactions on Algorithms (TALG)}, 9(3):1--35, 2013.

\bibitem[DFS20]{deligkas2020tree}
Argyrios Deligkas, John Fearnley, and Rahul Savani.
\newblock Tree polymatrix games are ppad-hard.
\newblock {\em arXiv preprint arXiv:2002.12119}, 2020.

\bibitem[DGP09]{daskalakis2009complexity}
Constantinos Daskalakis, Paul~W Goldberg, and Christos~H Papadimitriou.
\newblock The complexity of computing a nash equilibrium.
\newblock {\em SIAM Journal on Computing}, 39(1):195--259, 2009.

\bibitem[DGZ22]{daskalakis2022complexity}
Constantinos Daskalakis, Noah Golowich, and Kaiqing Zhang.
\newblock The complexity of markov equilibrium in stochastic games.
\newblock {\em arXiv preprint arXiv:2204.03991}, 2022.

\bibitem[DI09]{dermed2009solving}
Liam Dermed and Charles Isbell.
\newblock Solving stochastic games.
\newblock {\em Advances in Neural Information Processing Systems}, 22, 2009.

\bibitem[DLM{\etalchar{+}}21]{deng2021complexity}
Xiaotie Deng, Yuhao Li, David~Henry Mguni, Jun Wang, and Yaodong Yang.
\newblock On the complexity of computing markov perfect equilibrium in
  general-sum stochastic games.
\newblock {\em arXiv preprint arXiv:2109.01795}, 2021.

\bibitem[DM86]{dasgupta1986existence}
Partha Dasgupta and Eric Maskin.
\newblock The existence of equilibrium in discontinuous economic games, i:
  Theory.
\newblock {\em The Review of economic studies}, 53(1):1--26, 1986.

\bibitem[DSZ21]{daskalakis2021complexity}
Constantinos Daskalakis, Stratis Skoulakis, and Manolis Zampetakis.
\newblock The complexity of constrained min-max optimization.
\newblock In {\em Proceedings of the 53rd Annual ACM SIGACT Symposium on Theory
  of Computing}, pages 1466--1478, 2021.

\bibitem[EJ91]{emerson1991tree}
E~Allen Emerson and Charanjit~S Jutla.
\newblock Tree automata, mu-calculus and determinacy.
\newblock In {\em FoCS}, volume~91, pages 368--377. Citeseer, 1991.

\bibitem[EY10]{etessami2010complexity}
Kousha Etessami and Mihalis Yannakakis.
\newblock On the complexity of nash equilibria and other fixed points.
\newblock {\em SIAM Journal on Computing}, 39(6):2531--2597, 2010.

\bibitem[Fin64]{fink1964equilibrium}
Arlington~M Fink.
\newblock Equilibrium in a stochastic $ n $-person game.
\newblock {\em Journal of science of the hiroshima university, series ai
  (mathematics)}, 28(1):89--93, 1964.

\bibitem[FMOP21]{fox2021independent}
Roy Fox, Stephen McAleer, Will Overman, and Ioannis Panageas.
\newblock Independent natural policy gradient always converges in markov
  potential games.
\newblock {\em arXiv preprint arXiv:2110.10614}, 2021.

\bibitem[FRGH{\etalchar{+}}21]{filos2021complexity}
Aris Filos-Ratsikas, Yiannis Giannakopoulos, Alexandros Hollender, Philip
  Lazos, and Diogo Po{\c{c}}as.
\newblock On the complexity of equilibrium computation in first-price auctions.
\newblock In {\em Proceedings of the 22nd ACM Conference on Economics and
  Computation}, pages 454--476, 2021.

\bibitem[FRHHH22]{filos2022fixp}
Aris Filos-Ratsikas, Kristoffer~Arnsfelt Hansen, Kasper H{\o}gh, and Alexandros
  Hollender.
\newblock Fixp-membership via convex optimization: Games, cakes, and markets.
\newblock In {\em 2021 IEEE 62nd Annual Symposium on Foundations of Computer
  Science (FOCS)}, pages 827--838. IEEE, 2022.

\bibitem[FV12]{filar2012competitive}
Jerzy Filar and Koos Vrieze.
\newblock {\em Competitive Markov decision processes}.
\newblock Springer Science \& Business Media, 2012.

\bibitem[GHS{\etalchar{+}}03]{greenwald2003correlated}
Amy Greenwald, Keith Hall, Roberto Serrano, et~al.
\newblock Correlated q-learning.
\newblock In {\em ICML}, volume~3, pages 242--249, 2003.

\bibitem[GJ79]{garey1979computers}
Michael~R Garey and David~S Johnson.
\newblock {\em Computers and intractability}, volume 174.
\newblock freeman San Francisco, 1979.

\bibitem[GKK88]{gurvich1988cyclic}
Vladimir~A Gurvich, Alexander~V Karzanov, and LG~Khachivan.
\newblock Cyclic games and an algorithm to find minimax cycle means in directed
  graphs.
\newblock {\em USSR Computational Mathematics and Mathematical Physics},
  28(5):85--91, 1988.

\bibitem[GMVY17]{garg2017settling}
Jugal Garg, Ruta Mehta, Vijay~V Vazirani, and Sadra Yazdanbod.
\newblock Settling the complexity of leontief and plc exchange markets under
  exact and approximate equilibria.
\newblock In {\em Proceedings of the 49th Annual ACM SIGACT Symposium on Theory
  of Computing}, pages 890--901, 2017.

\bibitem[GVL13]{golub2013matrix}
Gene~H Golub and Charles~F Van~Loan.
\newblock {\em Matrix computations}.
\newblock JHU press, 2013.

\bibitem[HMZ13]{hansen2013strategy}
Thomas~Dueholm Hansen, Peter~Bro Miltersen, and Uri Zwick.
\newblock Strategy iteration is strongly polynomial for 2-player turn-based
  stochastic games with a constant discount factor.
\newblock {\em Journal of the ACM (JACM)}, 60(1):1--16, 2013.

\bibitem[HP04]{herings2004stationary}
P~Jean-Jacques Herings and Ronald~JAP Peeters.
\newblock Stationary equilibria in stochastic games: structure, selection, and
  computation.
\newblock {\em Journal of Economic Theory}, 118:32--60, 2004.

\bibitem[HW03]{hu2003nash}
Junling Hu and Michael~P Wellman.
\newblock Nash q-learning for general-sum stochastic games.
\newblock {\em Journal of machine learning research}, 4(Nov):1039--1069, 2003.

\bibitem[JLWY21]{jin2021v}
Chi Jin, Qinghua Liu, Yuanhao Wang, and Tiancheng Yu.
\newblock V-learning--a simple, efficient, decentralized algorithm for
  multiagent rl.
\newblock {\em arXiv preprint arXiv:2110.14555}, 2021.

\bibitem[JPZ08]{jurdzinski2008deterministic}
Marcin Jurdzi{\'n}ski, Mike Paterson, and Uri Zwick.
\newblock A deterministic subexponential algorithm for solving parity games.
\newblock {\em SIAM Journal on Computing}, 38(4):1519--1532, 2008.

\bibitem[JS21]{jin2021towards}
Yujia Jin and Aaron Sidford.
\newblock Towards tight bounds on the sample complexity of average-reward mdps.
\newblock In {\em International Conference on Machine Learning}, pages
  5055--5064. PMLR, 2021.

\bibitem[KB99]{konda1999actor}
Vijaymohan~R Konda and Vivek~S Borkar.
\newblock Actor-critic--type learning algorithms for markov decision processes.
\newblock {\em SIAM Journal on control and Optimization}, 38(1):94--123, 1999.

\bibitem[Kea07]{kearns2007graphical}
Michael Kearns.
\newblock Graphical games.
\newblock {\em Algorithmic game theory}, 3:159--180, 2007.

\bibitem[KLS13]{kearns2013graphical}
Michael Kearns, Michael~L Littman, and Satinder Singh.
\newblock Graphical models for game theory.
\newblock {\em arXiv preprint arXiv:1301.2281}, 2013.

\bibitem[KSS97]{karatzas1997strategic}
Ioannis Karatzas, Martin Shubik, and William~D Sudderth.
\newblock A strategic market game with secured lending.
\newblock {\em Journal of mathematical economics}, 28(2):207--247, 1997.

\bibitem[L{\etalchar{+}}01]{littman2001friend}
Michael~L Littman et~al.
\newblock Friend-or-foe q-learning in general-sum games.
\newblock In {\em ICML}, volume~1, pages 322--328, 2001.

\bibitem[LM80]{levhari1980great}
David Levhari and Leonard~J Mirman.
\newblock The great fish war: an example using a dynamic cournot-nash solution.
\newblock {\em The Bell Journal of Economics}, pages 322--334, 1980.

\bibitem[Loz18]{lozovanu2018stationary}
Dmitrii Lozovanu.
\newblock Stationary nash equilibria for average stochastic positional games.
\newblock In {\em Frontiers of Dynamic Games}, pages 139--163. Springer, 2018.

\bibitem[LS18]{liu2018approximation}
Zhengyang Liu and Ying Sheng.
\newblock On the approximation of nash equilibria in sparse win-lose games.
\newblock In {\em Proceedings of the AAAI Conference on Artificial
  Intelligence}, volume~32, 2018.

\bibitem[MB22]{mao2022provably}
Weichao Mao and Tamer Ba{\c{s}}ar.
\newblock Provably efficient reinforcement learning in decentralized
  general-sum markov games.
\newblock {\em Dynamic Games and Applications}, pages 1--22, 2022.

\bibitem[Meh14]{mehta2014constant}
Ruta Mehta.
\newblock Constant rank bimatrix games are ppad-hard.
\newblock In {\em Proceedings of the forty-sixth annual ACM symposium on Theory
  of computing}, pages 545--554, 2014.

\bibitem[MP91]{megiddo1991total}
Nimrod Megiddo and Christos~H Papadimitriou.
\newblock On total functions, existence theorems and computational complexity.
\newblock {\em Theoretical Computer Science}, 81(2):317--324, 1991.

\bibitem[MS96]{monderer1996potential}
Dov Monderer and Lloyd~S Shapley.
\newblock Potential games.
\newblock {\em Games and economic behavior}, 14(1):124--143, 1996.

\bibitem[Mye97]{myerson1997game}
Roger~B Myerson.
\newblock {\em Game theory: analysis of conflict}.
\newblock Harvard university press, 1997.

\bibitem[Nas51]{nash1951non}
John Nash.
\newblock Non-cooperative games.
\newblock {\em Annals of mathematics}, pages 286--295, 1951.

\bibitem[Pap94]{papadimitriou1994complexity}
Christos~H Papadimitriou.
\newblock On the complexity of the parity argument and other inefficient proofs
  of existence.
\newblock {\em Journal of Computer and system Sciences}, 48(3):498--532, 1994.

\bibitem[PLB15]{prasad2015two}
HL~Prasad, Prashanth LA, and Shalabh Bhatnagar.
\newblock Two-timescale algorithms for learning nash equilibria in general-sum
  stochastic games.
\newblock In {\em Proceedings of the 2015 International Conference on
  Autonomous Agents and Multiagent Systems}, pages 1371--1379, 2015.

\bibitem[PP21]{papadimitriou2021public}
Christos Papadimitriou and Binghui Peng.
\newblock Public goods games in directed networks.
\newblock In {\em Proceedings of the 22nd ACM Conference on Economics and
  Computation}, pages 745--762, 2021.

\bibitem[PR08]{papadimitriou2008computing}
Christos~H Papadimitriou and Tim Roughgarden.
\newblock Computing correlated equilibria in multi-player games.
\newblock {\em Journal of the ACM (JACM)}, 55(3):1--29, 2008.

\bibitem[PSPP17]{perolat2017learning}
Julien P{\'e}rolat, Florian Strub, Bilal Piot, and Olivier Pietquin.
\newblock Learning nash equilibrium for general-sum markov games from batch
  data.
\newblock In {\em Artificial Intelligence and Statistics}, pages 232--241.
  PMLR, 2017.

\bibitem[Put14]{puterman2014markov}
Martin~L Puterman.
\newblock {\em Markov decision processes: discrete stochastic dynamic
  programming}.
\newblock John Wiley \& Sons, 2014.

\bibitem[Rub16]{rubinstein2016settling}
Aviad Rubinstein.
\newblock Settling the complexity of computing approximate two-player nash
  equilibria.
\newblock In {\em 2016 IEEE 57th Annual Symposium on Foundations of Computer
  Science (FOCS)}, pages 258--265. IEEE, 2016.

\bibitem[Rub18]{rubinstein2018inapproximability}
Aviad Rubinstein.
\newblock Inapproximability of nash equilibrium.
\newblock {\em SIAM Journal on Computing}, 47(3):917--959, 2018.

\bibitem[SB18]{sutton2018reinforcement}
Richard~S Sutton and Andrew~G Barto.
\newblock {\em Reinforcement learning: An introduction}.
\newblock MIT press, 2018.

\bibitem[Sha53]{shapley1953stochastic}
Lloyd~S Shapley.
\newblock Stochastic games.
\newblock {\em Proceedings of the national academy of sciences},
  39(10):1095--1100, 1953.

\bibitem[SHM{\etalchar{+}}16]{silver2016mastering}
David Silver, Aja Huang, Chris~J Maddison, Arthur Guez, Laurent Sifre, George
  Van Den~Driessche, Julian Schrittwieser, Ioannis Antonoglou, Veda
  Panneershelvam, Marc Lanctot, et~al.
\newblock Mastering the game of go with deep neural networks and tree search.
\newblock {\em nature}, 529(7587):484--489, 2016.

\bibitem[SMB21]{song2021can}
Ziang Song, Song Mei, and Yu~Bai.
\newblock When can we learn general-sum markov games with a large number of
  players sample-efficiently?
\newblock {\em arXiv preprint arXiv:2110.04184}, 2021.

\bibitem[SSS{\etalchar{+}}17]{silver2017mastering}
David Silver, Julian Schrittwieser, Karen Simonyan, Ioannis Antonoglou, Aja
  Huang, Arthur Guez, Thomas Hubert, Lucas Baker, Matthew Lai, Adrian Bolton,
  et~al.
\newblock Mastering the game of go without human knowledge.
\newblock {\em nature}, 550(7676):354--359, 2017.

\bibitem[SV12]{schoenebeck2012computational}
Grant~R Schoenebeck and Salil Vadhan.
\newblock The computational complexity of nash equilibria in concisely
  represented games.
\newblock {\em ACM Transactions on Computation Theory (TOCT)}, 4(2):1--50,
  2012.

\bibitem[SWYY20]{sidford2020solving}
Aaron Sidford, Mengdi Wang, Lin Yang, and Yinyu Ye.
\newblock Solving discounted stochastic two-player games with near-optimal time
  and sample complexity.
\newblock In {\em International Conference on Artificial Intelligence and
  Statistics}, pages 2992--3002. PMLR, 2020.

\bibitem[Tak64]{takahashi1964equilibrium}
Masayuki Takahashi.
\newblock Equilibrium points of stochastic non-cooperative $ n $-person games.
\newblock {\em Journal of Science of the Hiroshima University, Series AI
  (Mathematics)}, 28(1):95--99, 1964.

\bibitem[VJ00]{voge2000discrete}
Jens V{\"o}ge and Marcin Jurdzi{\'n}ski.
\newblock A discrete strategy improvement algorithm for solving parity games.
\newblock In {\em International conference on computer aided verification},
  pages 202--215. Springer, 2000.

\bibitem[VY11]{vazirani2011market}
Vijay~V Vazirani and Mihalis Yannakakis.
\newblock Market equilibrium under separable, piecewise-linear, concave
  utilities.
\newblock {\em Journal of the ACM (JACM)}, 58(3):1--25, 2011.

\bibitem[Ye11]{ye2011simplex}
Yinyu Ye.
\newblock The simplex and policy-iteration methods are strongly polynomial for
  the markov decision problem with a fixed discount rate.
\newblock {\em Mathematics of Operations Research}, 36(4):593--603, 2011.

\bibitem[YZ14]{young2014handbook}
Peyton Young and Shmuel Zamir.
\newblock {\em Handbook of game theory}.
\newblock Elsevier, 2014.

\bibitem[ZGL06]{zinkevich2006cyclic}
Martin Zinkevich, Amy Greenwald, and Michael Littman.
\newblock Cyclic equilibria in markov games.
\newblock {\em Advances in Neural Information Processing Systems}, 18:1641,
  2006.

\bibitem[ZP96]{zwick1996complexity}
Uri Zwick and Mike Paterson.
\newblock The complexity of mean payoff games on graphs.
\newblock {\em Theoretical Computer Science}, 158(1-2):343--359, 1996.

\bibitem[ZYB21]{zhang2021multi}
Kaiqing Zhang, Zhuoran Yang, and Tamer Ba{\c{s}}ar.
\newblock Multi-agent reinforcement learning: A selective overview of theories
  and algorithms.
\newblock {\em Handbook of Reinforcement Learning and Control}, pages 321--384,
  2021.

\end{thebibliography}
\newcommand{\etalchar}[1]{$^{#1}$}

\onecolumn
\newpage
\appendix

\part*{Supplementary material}
\section{A counterexample to pseudo-linearity for $\tbsg$}\label{app:qmcounterexample}

In this section, we provide a simple counterexample to the pseudolinearity property for $\tbsg$ when one player can control multiple states.
We consider a $2$-player $\tbsg$ with $3$ states: $\calS = \{A,B,C\}$, and $2$ actions per state. 
We denote the action set for both players as $\calA = \{1,2\}$, and let player $1$ control states $\{A,B\}$ while player $2$ controls state $C$.
We fix the two candidate policies as $\ppi = (\ee_1, \ee_1, \ee_2)$ and $\ppi' = (\ee_2, \ee_2, \ee_2)$.
We will now specify the parameters of the $\tbsg$ (i.e. the transition probabilities and instantaneous reward functions for player $1$) such that her value function is \emph{not} pseudo-linear, i.e. for any $\lambda \in [0,1]$, $V^{\lambda \ppi + (1 - \lambda) \ppi'}$ is \emph{not} maximized at $\lambda \in \{0,1\}$.

We fix the following details for the $\tbsg$:
\begin{itemize}
\item We specify the transition probabilities from states that are controlled by player $1$ (i.e. $\{A,B\}$) as: 
\begin{align*}
\pp_{A,2} &= \begin{bmatrix} 1/5 & 2/5 & 2/5 \end{bmatrix} \\
\pp_{A,1} &= \begin{bmatrix} 2/5 & 1/5 & 2/5 \end{bmatrix} \\
\pp_{B,2} &= \begin{bmatrix} 2/5 & 1/5 & 2/5 \end{bmatrix} \\
\pp_{B,1} &= \begin{bmatrix} 1/15 & 4/5 & 2/15 \end{bmatrix} \\
\pp_{C,2} &= \begin{bmatrix} 2/5 & 2/5 & 1/5 \end{bmatrix}.
\end{align*}
(Note that we do not need to specify the transition probabilities $\pp_{C,1}$, as we are considering only the case where player $2$ takes action $2$ at his state.)

As a result of this, we get the following transition kernels:
\begin{align*}
\PP^{\ppi} &= \begin{bmatrix}
2/5 & 1/5 & 2/5 \\
1/15 & 4/5 & 2/15 \\
2/5 & 2/5 & 1/5 
\end{bmatrix} \\
\PP^{\ppi'} &= \begin{bmatrix}
1/5 & 2/5 & 2/5 \\
2/5 & 1/5 & 2/5 \\
2/5 & 2/5 & 1/5.
\end{bmatrix}
\end{align*}
\item We consider discount factor $\gamma = 5/6$.
\item For player $1$, we specify the reward function
\begin{align*}
r_{1,A,1} &= 1 \\
r_{1,A,2} &= 1.1 \\
r_{1,B,a} &= 0 \text{ for } a \in \{1,2\}.
\end{align*}
\end{itemize}
We assume that play starts at state $A$.
From these specifications, simple algebra gives
\begin{align*}
V^{(\lambda \ppi + (1 - \lambda) \ppi')} = \frac{(63 - 37 \lambda)(1 + 0.1 \lambda)}{(4 \lambda^2 - 84 \lambda + 147)}
\end{align*}
which is \emph{not} maximized at $\lambda \in \{0,1\}$, thus violating the pseudolinear property.

\section{Bellman optimality from Nash equilibrium (NE)}
\label{ssec:bellman}

In this section we show the connection of NE of $\ssg$s and their corresponding Bellman equations. We define the Bellman equations first for $\ssg$s and thus the conditions for $\tbsg$s follow as special cases.

By definition, a strategy $\ppi$ is an exact NE of an $\ssg$ instance  for some initial distribution $\qq\in\Delta^\calS$, $q(s)>0$ for any $s\in\calS$, if and only if the following Bellman optimality conditions are satisfied:
\begin{equation}\label{eq:NE-bellman}
V_i^{\ppi}(s) = 
 	\begin{cases}
 	\max_{a\in\calA_{i,s}} \Brack{r^{(\ee_{a},\ppi_{(i,-s)},\ppi_{-i})}_i(s)+\gamma \PP^{(\ee_{a},\ppi_{(i,-s)},\ppi_{-i})}(s,\cdot) \VV_i^{\ppi}},~&~\text{for any}~i\in[n],~s\in\calS_i,\\
 	r^{\ppi}_i(s)+\gamma \PP^{\ppi}(s,\cdot) \VV_i^{\ppi},~&~\text{for any}~i\in[n],~s\notin\calS_i.
 \end{cases}
\end{equation}

This implies that whether a strategy $\ppi$ is an exact NE or not of the $\ssg$ instance does not depend on the particular initial distribution $\qq$ as long as it satisfies non-degeneracy. In particular, it also shows for $\onestate$ ($\onestatessg$), the notion of exact NE is unchanged if we define utilities as $u_i(\ppi) = V_i^{\ppi}(s_i)$ instead.

We next prove the following theorems that connect Bellman optimality equations with $\eps$-approximate NEs, which we use in the derivations of~\Cref{sec:ppad-membership,sec:tbsg}.

\begin{lemma}[Necessary Bellman condition]\label{lem:bellman-nece}
Considering the Bellman optimality systems, a necessary condition of a strategy $\ppi$ being an $\epsilon$-approximate NE (when utilities are defined under the fixed initial uniform distribution) is that the corresponding value functions under strategy $\ppi$ satisfy
\begin{equation}\label{eq:NE-bellman-nece}
V_i^{\ppi}(s)  \begin{cases}
 	\ge \max_{a\in\calA_{i,s}} \Brack{r^{(\ee_{a},\ppi_{(i,-s)},\ppi_{-i})}_i(s)+\gamma \PP^{(\ee_{a},\ppi_{(i,-s)},\ppi_{-i})}(s,\cdot) \VV_i^{\ppi}}-|\calS|\epsilon,~~\text{for any}~i\in[n],~s\in\calS_i,\\
 	= r^{\ppi}_i(s)+\gamma \PP^{\ppi}(s,\cdot) \VV_i^{\ppi},~~\text{for any}~i\in[n],~s\notin\calS_i.
 \end{cases}
\end{equation}
\end{lemma}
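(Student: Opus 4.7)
The plan is to verify the two conditions separately: the equality at states not controlled by player $i$ is an immediate consequence of the definition of the value function as the unique fixed point of the on-policy Bellman operator $\calT^{\ppi}$. Specifically, since $s \notin \calS_i$ means player $i$ has no action to choose at $s$, the on-policy Bellman equation $V_i^{\ppi} = \r_i^{\ppi} + \gamma \PP^{\ppi} V_i^{\ppi}$ restricted to coordinate $s$ gives the desired identity regardless of whether $\ppi$ is an approximate NE or not.

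The main work lies in the inequality for $s \in \calS_i$. I would proceed by contradiction: assume there exists a player $i$, a state $s \in \calS_i$, and an action $a \in \calA_{i,s}$ such that
\[
r_i^{(\ee_a,\ppi_{(i,-s)},\ppi_{-i})}(s) + \gamma \PP^{(\ee_a,\ppi_{(i,-s)},\ppi_{-i})}(s,\cdot) V_i^{\ppi} > V_i^{\ppi}(s) + |\calS|\epsilon,
\]
and define the modified strategy $\ppi' \defeq (\ee_a, \ppi_{(i,-s)}, \ppi_{-i})$ that coincides with $\ppi$ except that player $i$ deterministically chooses action $a$ at state $s$. The goal is to derive a pointwise improvement $V_i^{\ppi'} \geq V_i^{\ppi} + |\calS|\epsilon \cdot \ee_s$, which would translate to a utility gap of at least $\epsilon$ under the uniform initial distribution, contradicting the $\epsilon$-NE property.

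To obtain the pointwise improvement, I would invoke the standard single-player policy improvement argument, analogous to the reasoning in \Cref{lem:mdp-policy-improve}. Concretely, the on-policy Bellman operator $\calT^{\ppi'}$ applied to $V_i^{\ppi}$ satisfies $\calT^{\ppi'} V_i^{\ppi}(s) \geq V_i^{\ppi}(s) + |\calS|\epsilon$ at the chosen state $s$ (by the contradictory assumption) and $\calT^{\ppi'} V_i^{\ppi}(s') = V_i^{\ppi}(s')$ for all $s' \neq s$ (since $\ppi'$ agrees with $\ppi$ elsewhere and $V_i^{\ppi}$ is a fixed point of $\calT^{\ppi}$). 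Hence $\calT^{\ppi'} V_i^{\ppi} \geq V_i^{\ppi} + |\calS|\epsilon \cdot \ee_s$ coordinatewise, and iterating with the monotonicity of $\calT^{\ppi'}$ together with its $\gamma$-contractivity to the unique fixed point $V_i^{\ppi'}$ yields $V_i^{\ppi'} \geq V_i^{\ppi} + |\calS|\epsilon \cdot \ee_s$.

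Finally, taking the average over the uniform initial distribution $\qq = \frac{1}{|\calS|}\ee_{\calS}$ used in the definition of the utility gives
\[
u_i(\ppi') - u_i(\ppi) = \frac{1}{|\calS|}\sum_{s' \in \calS} \left[V_i^{\ppi'}(s') - V_i^{\ppi}(s')\right] \geq \frac{1}{|\calS|} \cdot |\calS|\epsilon = \epsilon,
\]
which contradicts the assumption that $\ppi$ is an $\epsilon$-approximate NE. The only subtlety I anticipate is ensuring that the policy improvement inequality propagates correctly through the infinite horizon — this is handled by the monotonicity and contraction properties of the on-policy Bellman operator, as used earlier in the paper, so no new machinery is needed.
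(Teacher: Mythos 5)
Your proposal is correct and follows essentially the same route as the paper's proof: argue by contradiction, switch player $i$'s action at the single offending state to form $\ppi'$, use monotonicity of the on-policy Bellman operator to propagate the one-step improvement of $|\calS|\epsilon$ at state $s$ into a pointwise bound $V_i^{\ppi'} \ge V_i^{\ppi} + |\calS|\epsilon\,\ee_s$, and average over the uniform initial distribution to contradict the $\epsilon$-NE property. Your write-up is in fact somewhat more careful than the paper's (which contains sign/quantifier typos in the displayed intermediate inequality), but no new ideas are introduced.
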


\begin{proof}
	Suppose Eq.s~\eqref{eq:NE-bellman-nece} fail to hold, i.e. there exists some player $i\in[n]$, state $s\in\calS_i$, and $a\in\calA_{i,s}$ such that $V_i^{\ppi}(s)  < r^{(\ee_{a},\ppi_{(i,-s)},\ppi_{-i})}_i(s)+\gamma \PP^{(\ee_{a},\ppi_{(i,-s)},\ppi_{-i})}(s,\cdot) \VV_i^{\ppi}-|\calS|\epsilon$. Now apply the Bellman equality under the new strategy $\ppi' = (\ee_a, \ppi_{(i,-s)}, \ppi_{-i})$, by monotonicity and convergence we know that $\VV_i^{\ppi'}(s')\le \VV_i^{\ppi}(s')$ for all $s'\in\calS$, and in particular $\VV_i^{\ppi'}(s')> \VV_i^{\ppi}(s')+|\calS|\epsilon$. 
	
	Now for initial distribution $\qq = \frac{1}{|\calS|}\ee_\calS$, one has $u_i^{\ppi} = \langle \qq,\VV_i^{\ppi}\rangle$ and consequently $u_i^{\ppi'}> u_i^{\ppi}+\epsilon$. This implies that $\ppi$ is not an $\epsilon$-approximate NE and leads to a contradiction.
\end{proof}

\begin{lemma}[Sufficient Bellman condition]\label{lem:bellman-suff}A sufficient condition of a strategy $\ppi$ being an $\epsilon$-approximate NE is that there exists 
\begin{equation}\label{eq:NE-bellman-suff}
V_i^{\ppi}(s) \begin{cases}
 	\ge \max_{a\in\calA_{i,s}} \Brack{r^{(\ee_{a},\ppi_{(i,-s)},\ppi_{-i})}_i(s)+\gamma \PP^{(\ee_{a},\ppi_{(i,-s)},\ppi_{-i})}(s,\cdot) \VV_i^{\ppi}}-(1-\gamma)\epsilon,~~\text{for any}~i\in[n],~s\in\calS_i,\\
 	= r^{\ppi}_i(s)+\gamma \PP^{\ppi}(s,\cdot) \VV_i^{\ppi},~~\text{for any}~i\in[n],~s\notin\calS_i.
 \end{cases}
\end{equation}
\end{lemma}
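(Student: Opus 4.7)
The plan is to fix a player $i \in [n]$ and argue directly that no deviation $\ppi_i'$ can improve her utility by more than $\epsilon$, by comparing $V_i^{\ppi}$ with the optimal value function of the single-player MDP induced on player $i$ when the other players' policies $\ppi_{-i}$ are held fixed (as was done in \Cref{ssec:redx-m2o}). Call this induced MDP $\calM_i := \calM(\calG, i, \ppi_{-i})$: at states $s \in \calS_i$ player $i$ can choose any action in $\calA_{i,s}$, while at states $s \notin \calS_i$ the transition and reward are deterministically dictated by $\ppi_{-i}$.

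The key step is to translate the hypothesized Bellman slack into a uniform error bound on $V_i^{\ppi}$ against the optimal value $V_i^*$ of $\calM_i$. Let $\calT_i^\star$ denote the optimal Bellman operator of $\calM_i$. The hypothesis of the lemma rewrites exactly as
\[
\calT_i^\star V_i^{\ppi}(s) \leq V_i^{\ppi}(s) + (1-\gamma)\epsilon \quad \text{for } s \in \calS_i,
\]
while at states $s \notin \calS_i$ the induced MDP has a single feasible action, so the inequality holds with slack $0$. Hence $\calT_i^\star V_i^{\ppi} \leq V_i^{\ppi} + (1-\gamma)\epsilon \cdot \mathbf{1}$ entrywise. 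Iterating the $\gamma$-contractive monotone operator $\calT_i^\star$ and passing to the fixed point $V_i^*$ yields the standard telescoping bound
\[
V_i^* \leq V_i^{\ppi} + \sum_{t \geq 0} \gamma^t (1-\gamma)\epsilon \cdot \mathbf{1} = V_i^{\ppi} + \epsilon \cdot \mathbf{1}.
\]

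With this in hand, for any deviating policy $\ppi_i'$ we have $V_i^{(\ppi_i', \ppi_{-i})} \leq V_i^*$ pointwise since $V_i^*$ is optimal in $\calM_i$, and so $V_i^{(\ppi_i', \ppi_{-i})}(s) \leq V_i^{\ppi}(s) + \epsilon$ for every state $s$. Taking inner product with the initial distribution $\qq = \frac{1}{|\calS|}\ee_\calS$ gives $u_i(\ppi_i', \ppi_{-i}) \leq u_i(\ppi) + \epsilon$. Since this holds for every player $i$ and every candidate $\ppi_i'$, the strategy $\ppi$ is an $\epsilon$-approximate NE per the definition in \eqref{def:NE-approx}.

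The only non-routine piece is the telescoping/contractivity argument that turns a one-step Bellman slack of $(1-\gamma)\epsilon$ into an $\ell_\infty$ suboptimality of $\epsilon$; this is the mirror image of the policy-improvement lemma already used in \Cref{lem:mdp-policy-improve}, and I would cite that infrastructure (monotonicity plus $\gamma$-contraction of $\calT_i^\star$) to keep the proof compact rather than re-deriving the fixed-point convergence from scratch.
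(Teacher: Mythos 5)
Your proposal is correct and follows essentially the same contraction argument as the paper's proof: both convert the one-step Bellman slack of $(1-\gamma)\epsilon$ into a uniform value gap of $\epsilon$ via $\gamma$-contraction of the Bellman operator on the induced single-player MDP, and then pass to the utility by averaging $\VV_i^{\ppi}$ over the initial distribution. The only cosmetic difference is that you compare $V_i^{\ppi}$ against the optimal value $V_i^*$ of the induced MDP using the \emph{optimal} Bellman operator (thereby dominating all deviations at once), whereas the paper compares directly against $V_i^{(\ppi_i',\ppi_{-i})}$ for an arbitrary deviation $\ppi_i'$ using the \emph{on-policy} operator together with a min-over-states rearrangement; both steps are valid and yield the same bound.
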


\begin{proof}
	Consider a strategy $\ppi$ satisfying Eq.s~\eqref{eq:NE-bellman-suff}, we have for any player $i\in[n]$, for any alternative strategy $\ppi' = (\ppi'_{i}, \ppi_{-i})$ we have
	\begin{align*}
		V_i^{\ppi}(s) \begin{cases}
 	\ge \Brack{r^{\ppi'}_i(s)+\gamma \PP^{\ppi'}(s,\cdot) \VV_i^{\ppi}}-(1-\gamma)\epsilon,~~\text{for any}~i\in[n],~s\in\calS_i,\\
 	= r^{\ppi'}_i(s)+\gamma \PP^{\ppi'}(s,\cdot) \VV_i^{\ppi},~~\text{for any}~i\in[n],~s\notin\calS_i.
 	 \end{cases}
	\end{align*}
	
Combined the above with $V_i^{\ppi'}(s) = r^{\ppi'}_i(s)+\gamma \PP^{\ppi'}(s,\cdot) \VV_i^{\ppi'}$ for all $s\in\calS$, we have for any $s$, $V_i^{\ppi}(s)-V_i^{\ppi'}(s)\ge \gamma\min_{s\in\calS}\left[V_i^{\ppi}(s)-V_i^{\ppi'}(s)\right]-(1-\gamma)\epsilon$. Taking minimum over $s\in\calS$ on LHS and rearranging terms we obtain $V_i^{\ppi}(s)-V_i^{\ppi'}(s)\ge -\epsilon$ for all states $s$, given a player $i\in[n]$. Thus, by definition of utility function under the fixed initial uniform distribution, 
\[
u_i^{\ppi}=\left\langle \frac{1}{|\calS|}\ee_{\calS}, V^{\ppi}\right\rangle,
\] 
we can conclude that
\[
u_i^{\ppi}\ge u_i^{(\ppi'_i, \ppi_{-i})}-\epsilon,~~\text{for any}~~ \ppi_{i,s}' \in \Delta^{\calA_{i,s}}.
\]
\end{proof} 

As an immediately corollary of the above results, we show the equivalence between NEs for $\onestatessg$s (and $\onestate$s) under the uniform distribution $\qq = \frac{1}{|\calS|}\ee_\calS$ and under a single support $\ee_{s_i}$ for each player $i$, up to polynomial factors.

\begin{corollary}[Equivalence of NE notions for $\onestatessg$]\label{coro:equivalent-notion-NE}
	For any given $\onestatessg$ (or $\onestate$) instance, any $\epsilon$-approximate NE under uniform distribution $u_i(\ppi) = v_i^{\ppi,\qq}$ for $\qq = \frac{1}{|\calS|}\ee_\calS$ is also an $|\calS|\epsilon$-approximate NE under single-support distribution $u_i^{\ppi} = V_i^{\ppi}(s_i)$; any $\epsilon$-approximate NE under single-support distribution is also an $\epsilon/(1-\gamma)$-approximate NE under uniform distribution.
\end{corollary}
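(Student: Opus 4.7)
The plan is to obtain both directions of the equivalence by combining the Bellman characterizations of $\epsilon$-approximate NE (\Cref{lem:bellman-nece,lem:bellman-suff}) with the monotonicity of policy improvement for the single-player MDP induced by fixing the other players' strategies $\ppi_{-i}$ (this is the same induced MDP used in the proof of~\Cref{lem:redx-MtO}, whose on-policy Bellman operator is monotone, cf.~\Cref{lem:mdp-policy-improve}). For $\onestatessg$ ($\onestate$) instances $\calS_i = \{s_i\}$, so in the induced single-player MDP for player $i$ every state other than $s_i$ has a trivial action set; consequently only $s_i$ can carry a non-zero Bellman residual, and policy improvements for player $i$ only alter the row of $\PP^{\ppi}$ at $s_i$.

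For the direction ``uniform $\Rightarrow$ single-support'', fix $i$ and, by the pseudo-linearity of utilities proved in~\Cref{coro:quasi-mono-ssg}, assume the best-response deviation is pure, $\ppi^*_i = \ee_{a^\star}$. If this deviation does not strictly improve $V_i^\ppi(s_i)$ there is nothing to show; otherwise, since a strict improvement at $s_i$ forces $Q_i^{\ppi}(s_i,a^\star) > V_i^{\ppi}(s_i)$ in the induced MDP (else the policy-improvement monotonicity of the on-policy Bellman operator would give $V_i^{(\ee_{a^\star},\ppi_{-i})} \le V_i^{\ppi}$ componentwise, contradicting the strict gain at $s_i$), the same monotonicity lemma now yields $V_i^{(\ee_{a^\star},\ppi_{-i})}(s) \ge V_i^{\ppi}(s)$ for \emph{every} $s \in \calS$. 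Summing these non-negative gaps,
\[
V_i^{(\ee_{a^\star},\ppi_{-i})}(s_i) - V_i^{\ppi}(s_i) \;\le\; \sum_{s\in\calS}\bigl[V_i^{(\ee_{a^\star},\ppi_{-i})}(s)-V_i^{\ppi}(s)\bigr] = |\calS|\bigl(u_i(\ee_{a^\star},\ppi_{-i})-u_i(\ppi)\bigr) \;\le\; |\calS|\,\epsilon,
\]
which is exactly the single-support approximation bound.

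For the reverse direction ``single-support $\Rightarrow$ uniform'', let $V_i^\star$ denote the best-response value for player $i$ in her induced MDP. By the single-support assumption, $V_i^\star(s_i) - V_i^{\ppi}(s_i) \le \epsilon$. Using policy-improvement monotonicity as above, $V_i^\star \ge V_i^\ppi$ componentwise, so for any action $a \in \calA_i$,
\[
V_i^\star(s_i) \;\ge\; r_{i,s_i,a}^{(\ee_a,\ppi_{-i})} + \gamma\, \pp_{s_i,a}^{\top} V_i^\star \;\ge\; r_{i,s_i,a}^{(\ee_a,\ppi_{-i})} + \gamma\, \pp_{s_i,a}^{\top} V_i^{\ppi},
\]
which bounds the Bellman residual at $s_i$ by $\epsilon$. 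Since $\calS_i = \{s_i\}$ in $\onestatessg$, the sufficient Bellman condition of~\Cref{lem:bellman-suff} is satisfied with slack $\epsilon = (1-\gamma)\cdot\frac{\epsilon}{1-\gamma}$, and hence $\ppi$ is an $\epsilon/(1-\gamma)$-approximate NE under the uniform initial distribution.

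The only non-routine step is the implication ``strict improvement of $V_i$ at $s_i$ $\Rightarrow$ improvement at every $s$'', which requires using the contrapositive of the policy-improvement monotonicity lemma together with pseudo-linearity to reduce to a pure deviation; once that is in place, summation and Lemmas~\ref{lem:bellman-nece}--\ref{lem:bellman-suff} finish both directions.
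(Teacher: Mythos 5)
Your proof is correct and fills in, with the right ingredients, the argument the paper leaves implicit (it states the corollary as an ``immediate'' consequence of \Cref{lem:bellman-nece,lem:bellman-suff}). Both directions check out: the reverse direction is exactly the intended chain through the sufficient Bellman condition, and your forward direction correctly exploits that in an $\onestatessg$ a deviation by player $i$ changes $\PP^{\ppi}$ in only the row $s_i$, so $V_i^{(\ppi_i',\ppi_{-i})}-V_i^{\ppi}$ is a nonnegative multiple of $(\II-\gamma\PP^{(\ppi_i',\ppi_{-i})})^{-1}\ee_{s_i}$ and hence has a uniform sign across states, allowing the gap at $s_i$ to be bounded by the sum of gaps. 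One point worth keeping in your write-up: that direct summation is genuinely needed to obtain the constant $|\calS|\epsilon$ --- if one instead only invoked \Cref{lem:bellman-nece} to bound the Bellman residual at $s_i$ by $|\calS|\epsilon$ and then converted the residual back into a value gap, one would pick up an extra factor of $1/(1-\gamma)$.
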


\section{Infinite-horizon un-discounted stochastic games (when $\gamma\rightarrow$ 1)}\label{apdx:average}

In this section, we show how some of the main results in the main paper generalize to infinite-hor izon general-sum average-reward games ($\ar$-$\ssg$ and $\ar$-$\tbsg$) and absorbing games ($\Abs$-$\ssg$ and $\Abs$-$\tbsg$). In~\Cref{ssec:ar-prelims}, we provide some preliminaries for defining these variants of $\ssg$s and $\tbsg$s. In~\Cref{ssec:redx-average-to-discounted}, we provide a generic reduction from solving average-reward games to solving discounted games approximately, and show $\ppad$-membership of finding $\eps$-approximate NEs for average-reward games. In~\Cref{ssec:redx-discounted-to-average}, we provide a generic reduction from discounted games to absorbing games, and show $\ppad$-hardness of finding approximate NEs for $\Abs$-$\tbsg$.

\subsection{Definitions of infinite-horizon un-discounted general-sum games}\label{ssec:ar-prelims}

In this section, we introduce the basic concepts of infinite-horizon general-sum average-reward games ($\ar$-$\ssg$ and $\ar$-$\tbsg$) and absorbing games ($\Abs$-$\ssg$ and $\Abs$-$\tbsg$). The main difference of the general setup is that we use tuple $\mathcal{G} = (n, \calS ,\mathcal{A},\pp,\r)$, while omiting the last entry $\gamma$ since we let $\gamma=1$, i.e. consider no discounting in future instantaneous rewards for the average-reward setup. All the other elements including states, action space, $\calS_i$ and $\calI_s$ are defined identically.

The goal of the players is still to maximize the expected infinite-horizon reward, which is defined differently though from the discounted case as 
\begin{equation}\label{eq:def-ar-v-i}
v_i = 
\begin{cases}\lim_{T\rightarrow\infty}\E\left[\frac{1}{T}\sum_{t<T}r_{i,s^t, \aa^t}\right],~~\forall i\in[n]~~\text{for average-reward games},\\
\lim_{T\rightarrow\infty}\E\left[\sum_{t<T}r_{i,s^t, \aa^t}\right],~~\forall i\in[n]~~\text{for absorbing games}.
\end{cases}
\end{equation}

Under a strategy $\ppi$, the value functions of each player are similarly defined to be 
\begin{equation}\label{eq:def-ar-V-i-pi}
\begin{aligned}
V_i^{\ppi}(s) = 
\begin{cases}
	\lim_{T\rightarrow\infty}\E\left[\frac{1}{T}\sum_{t<T} r_{i,s^t,\aa^t}|s_0 = s, a^t_{j,s^t}\sim \ppi_{j,s^t}~\text{for all}~j,t\right],~\text{for each}~i\in[n], s\in\calS\\
\lim_{T\rightarrow\infty}\E\left[\sum_{t<T} r_{i,s^t,\aa^t}|s_0 = s, a^t_{j,s^t}\sim \ppi_{j,s^t}~\text{for all}~j,t\right],~\text{for each}~i\in[n], s\in\cal
\end{cases},\\
~~\text{and also satisfies}~~v_i^{\ppi,\qq} =\langle \qq,\VV_i^\pi\rangle.
\end{aligned}
\end{equation}

The NE for both $\ssg$ and $\tbsg$ are defined identically from definition, using the new definition of $\VV_i^{\ppi}$ in~\eqref{eq:def-ar-V-i-pi}. The equivalence between exact NE under utilities defined from initial $\qq = \frac{1}{|\calS|}\ee_\calS$ and from $V_i^{\ppi}(s_i)$ also applies.

We now introduce the notion of mixing un-discounted $\ssg$s, as a generalization of the mixing average-reward Markov Decision processes~\cite{jin2021v}, as follows:

\begin{definition}\label{def:mixing}
	An un-discounted $\tbsg$ instance with state space $\calS$ is \emph{multi-chain mixing} if for any initial distribution $\ee_s$, for some $s\in\calS$ and some strategy $\ppi$,  for any initial distribution $\ee_s\in\Delta^{\hat{\calS}}$ there exists a stationary distribution $\llam^{s,\ppi}$ so that the induced Markov chain has mixing time bounded by $\tmix<\infty$ , where $\tmix$ is defined as 
	 \begin{equation*}\label{def-multichain-mixingtime}
	\tmix\defeq\max_{\ppi}\left[\argmin_{t\ge1 }\left\{\max_{\ee_s\in\Delta^{\calS}} \norm{({\PP^{\ppi}}^{\top})^t\qq-\llam^{s,\ppi}}_1\le \tfrac{1}{2}\right\}\right].
	\end{equation*}
	Specifically, when $\llam^{s,\ppi} = \llam^{\ppi}$ for all $s\in\calS$, we further call it \emph{unichain mixing}, and $\tmix$ can be equivalently defined as,
	\begin{equation*}\label{def-mixingtime}
	\tmix\defeq\max_{\ppi}\left[\argmin_{t\ge1 }\left\{\max_{\qq\in\Delta^{\calS}} \norm{({\PP^{\ppi}}^{\top})^t\qq-\llam^{\ppi}}_1\le \tfrac{1}{2}\right\}\right].
	\end{equation*}
\end{definition}

\subsection{Reduction from average-reward $\ssg$ to discounted $\ssg$ with error}\label{ssec:redx-average-to-discounted}

In this section, we show finding an approximate NE for average-reward $\ssg$ can be reduced to finding an approximate NE for discounted $\ssg$. This approximation works for any multi-chain mixing average-reward $\ssg$ as long as there is a polynomially-bounded mixing time $\tmix$ (see~\Cref{def-mixingtime}). It is a slight generalization of Lemma 1 and Lemma 2 in~\cite{jin2021towards}.

 \begin{lemma}
 	Given any multi-chain mixing average-reward $\ssg$ 
 $\mathcal{G} = (n, \calS ,\mathcal{A},\pp,\r)$ with mixing time bound $\tmix<\infty$ and instant reward bound $|r_{i,s,\aa_s}|\le 1$ for all $i\in[n],s\in\calS, \aa_s\in\calA_s$, consider its corresponding discounted $\ssg$ $\hat{\calG} = (n, \calS ,\mathcal{A},\pp,\r, \gamma)$ with  $\gamma = 1-\eps/(9\tmix)$, we use $u$ and $\hat{u}$ to denote their corresponding utility functions under uniform initial distribution, then any $\frac{\eps}{3(1-\gamma)}$-approximate NE $\ppi$ of $\hat{\calG}$ is also an $\eps$-approximate NE of $\calG$.
 \end{lemma}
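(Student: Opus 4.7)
The plan is to establish the lemma by relating the discounted value functions of $\hat{\calG}$ to the average-reward value functions of $\calG$ via the mixing property, and then transferring the approximate-NE property from one to the other using the triangle inequality. The core observation is that for any mixing Markov chain with mixing time $\tmix$, the normalized discounted value $(1-\gamma)\hat{V}_i^{\ppi}(s)$ approximates the average-reward value $V_i^{\ppi}(s)$ with an error scaling as $O((1-\gamma)\tmix)$. Since the strategy space of each player in $\ssg$ (fixing the others' strategies) reduces to an MDP, we can leverage the corresponding approximation guarantees established for MDPs in~\cite{jin2021towards} (their Lemma~1 and Lemma~2) and lift them to our multi-agent setting.

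The key steps I would carry out are as follows. First, fix a strategy $\ppi$ and a player $i$; fixing $\ppi_{-i}$ induces a single-player MDP for player $i$ whose transition matrix $\PP^{\ppi}$ inherits the mixing time bound $\tmix$ from~\Cref{def:mixing}. Using the Poisson equation for this mixing chain, one writes $r_i^{\ppi}(s) - V_i^{\ppi}(s) = h_i^{\ppi}(s) - (\PP^{\ppi} h_i^{\ppi})(s)$ for some relative value function $h_i^{\ppi}$ with $\|h_i^{\ppi}\|_\infty \leq O(\tmix)$. Substituting into the Bellman equation $\hat{V}_i^{\ppi} = r_i^{\ppi} + \gamma \PP^{\ppi} \hat{V}_i^{\ppi}$ and rearranging yields $(1-\gamma)\hat{V}_i^{\ppi}(s) = V_i^{\ppi}(s) + O((1-\gamma)\tmix)$, uniformly in $s$ and $\ppi$. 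Taking inner product with the uniform initial distribution gives
\begin{equation*}
\bigl| (1-\gamma)\hat{u}_i(\ppi) - u_i(\ppi) \bigr| \leq C\,(1-\gamma)\tmix
\end{equation*}
for a universal constant $C$ (which I expect to be at most $3$ based on the argument in~\cite{jin2021towards}).

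Second, I would use the choice $\gamma = 1 - \eps/(9\tmix)$ to ensure $(1-\gamma)\tmix = \eps/9$, so that the error in the above display is bounded by $\eps/3$. Let $\ppi$ be a $\delta$-approximate NE of $\hat{\calG}$ with $\delta = \eps/(3(1-\gamma))$. Then for every player $i$ and alternative policy $\ppi'_i$, we have $\hat{u}_i(\ppi) \geq \hat{u}_i(\ppi'_i,\ppi_{-i}) - \delta$; multiplying both sides by $(1-\gamma)$ gives
\begin{equation*}
(1-\gamma)\hat{u}_i(\ppi) \geq (1-\gamma)\hat{u}_i(\ppi'_i,\ppi_{-i}) - \tfrac{\eps}{3}.
\end{equation*}
Combining this with the two approximation inequalities $u_i(\ppi) \geq (1-\gamma)\hat{u}_i(\ppi) - \eps/3$ and $(1-\gamma)\hat{u}_i(\ppi'_i,\ppi_{-i}) \geq u_i(\ppi'_i,\ppi_{-i}) - \eps/3$ via the triangle inequality yields $u_i(\ppi) \geq u_i(\ppi'_i,\ppi_{-i}) - \eps$ for every player $i$ and every alternative $\ppi'_i$, which is precisely the $\eps$-approximate NE condition for $\calG$.

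The main obstacle is establishing the approximation $|(1-\gamma)\hat{V}_i^{\ppi}(s) - V_i^{\ppi}(s)| \leq O((1-\gamma)\tmix)$ uniformly in the \emph{multi-chain} mixing setting, where the stationary distribution $\llam^{s,\ppi}$ is allowed to depend on the starting state $s$. In this case the Poisson equation still holds but the relative value function $h_i^{\ppi}$ must be analyzed more carefully on each recurrent class; I expect this to follow by decomposing the chain into its communicating classes and applying the unichain argument in each, using the fact that the mixing time bound controls the worst case across all starting states. Once this uniform approximation bound is in place, the remainder of the argument is a straightforward arithmetic chase as described above.
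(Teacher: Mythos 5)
Your proposal is correct and follows the same two-step skeleton as the paper's proof: establish the uniform approximation $|(1-\gamma)\hat{u}_i(\ppi)-u_i(\ppi)|\le \eps/3$ for all strategies, then run the triangle-inequality chase to transfer the approximate-NE property; your second step is essentially verbatim what the paper does. The difference lies in how the approximation bound is derived. You route through the Poisson equation, introducing a bias function $h_i^{\ppi}$ with $\|h_i^{\ppi}\|_\infty = O(\tmix)$ and substituting into the discounted Bellman equation, whereas the paper argues directly: it writes $\VV_i^{\ppi}-(1-\gamma)\hat{\VV}_i^{\ppi} = (1-\gamma)\sum_{t\ge 0}\gamma^t\bigl[(\PP^{\ppi})^t-\LLam^{\ppi}\bigr]\r_i^{\ppi}$ (using $V_i^{\ppi}=\LLam^{\ppi}\r_i^{\ppi}$ with $\LLam^{\ppi}$ the matrix of state-dependent stationary distributions) and bounds the series termwise via $\|(\PP^{\ppi})^k-\LLam^{\ppi}\|_\infty\le 2^{-\lfloor k/\tmix\rfloor}$, splitting at $t=\tmix$ to get the constant $3(1-\gamma)\tmix$ exactly. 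The direct expansion has two advantages worth noting. First, it handles the multi-chain case with no extra work — the state-dependence of $\llam^{s,\ppi}$ is absorbed into $\LLam^{\ppi}$ and the mixing bound of \Cref{def:mixing} applies row-by-row — so the decomposition into communicating classes you flag as the "main obstacle" is not actually needed. Second, the constant matters: with $\gamma=1-\eps/(9\tmix)$ you must show the approximation error is at most $3(1-\gamma)\tmix=\eps/3$, and a naive accounting of your Poisson route (e.g.\ $\|h\|_\infty\le 2\tmix$ combined with $\|(1-\gamma)\hat{V}-V\|_\infty\le 2(1-\gamma)\|h\|_\infty$) gives $4(1-\gamma)\tmix=4\eps/9>\eps/3$, which overshoots the budget. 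This is fixable by a sharper bound or by adjusting the constant $9$ in the statement, but as written your "I expect $C\le 3$" is the one place the argument is not yet closed; the paper's termwise bound delivers $C=3$ without slack.
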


\begin{proof}
	Following similar approach in~\cite{jin2021towards}, for any stationary strategy $\ppi$, we first let $\LLam^{\ppi}$ is the matrix where its $s$th row corresponds to $\llam^{s,\ppi}$ and observe that
	\begin{align}\label{eq:ar-contracitivity-linf}
		\norm{\left(\PP^{\ppi}\right)^k-\LLam^{\ppi}}_\infty\le \left(\frac{1}{2}\right)^{\lfloor k/\tmix \rfloor}.
	\end{align}
	
	Next, we show for the value functions of some player $i$, we have $\norm{\VV_i^{\ppi}-(1-\gamma)\hat{\VV}^{\ppi}_i}_\infty\le 3(1-\gamma)M\tmix\le \epsilon/3$. To see this, we let $\LLam^{\ppi}$ be the matrix where its $s$th row corresponds to $\llam^{s,\ppi}$, note
\begin{align*}
\VV^\pi_i-(1-\gamma)\hat{\VV}^\pi_i 
	&= (1-\gamma)\sum_{t\ge 0}\gamma^t\LLam^{\ppi}\r_i^{\ppi} - (1-\gamma)\sum_{t\ge 0}\gamma^t(\PP^{\ppi})^t\r_i^\pi\\
	&= (1-\gamma)\sum_{t=0}^\infty \gamma^t\left[(\PP^{\ppi})^t-\LLam^{\ppi}\right]\r_i^{\ppi}.
\end{align*}
For all $t \geq \tmix$ we have $\norm{(\PP^{\ppi})^t-\LLam^{\ppi}}_\infty \leq 2^{- \lfloor k/\tmix \rfloor}$ by  \Cref{eq:ar-contracitivity-linf}. Plugging this back yields the desired bound of
\begin{align*}
\norm{\VV_i^{\ppi}-\hat{\VV}_i^{\ppi}}_\infty  
&\leq
(1-\gamma)\sum_{t=0}^{\tmix - 1} \gamma^t\norm{(\PP^{\ppi})^t-\LLam^{\ppi}}_\infty 
+ (1-\gamma)\sum_{t \geq  \tmix} \gamma^t \norm{(\PP^{\ppi})^t-\LLam^{\ppi})^\top}_\infty  \\
&\leq (1-\gamma)\sum_{t=0}^{ \tmix - 1} 2 \gamma^t 
+ (1-\gamma)\sum_{t \geq  \tmix} \frac{1}{2^{\lfloor k/\tmix\rfloor}}
\leq 3(1 - \gamma)\tmix\le \frac{\eps}{3}\,.
\end{align*}

Note the above inequality holds for all players $i\in[n]$, consequently, given an $\frac{\eps}{3(1-\gamma)}$-approximate NE $\ppi$ of $\hat{\calG}$, we have for average-reward game $\calG$, it holds that for any player $i\in[n]$,

\begin{align*}
u_i(\ppi)+\tfrac{\eps}{3} & \stackrel{(i)}{\ge} (1-\gamma)\hat{u}_i(\ppi)\stackrel{(ii)}{\ge} (1-\gamma)\hat{u}_i(\ppi_i',\ppi_{-i})-\tfrac{\eps}{3} \stackrel{(i)}{\ge} \left(u_i(\ppi_i',\ppi_{-i})-\tfrac{\eps}{3}\right)-\tfrac{\eps}{3},~~\text{for any}~\ppi_{i,s}\in\Delta^{\calA_{i,s}}.
\end{align*}
Here we use $(i)$ the definition of $u$ under uniform initial distribution and inequality of $\norm{\VV_i^{\ppi}-\hat{\VV}_i^{\ppi}}_\infty \le \epsilon/3$, and $(ii)$ the definition of $\ppi$ as approximate NE of $\hat{\calG}$. Altogether this shows that $\ppi$ is an $\eps$-approximate NE for average-reward $\ssg$ $\calG$.
\end{proof}

This reduction shows: we can generalize the $\ppad$-membership of discounted $\ssg$ with polynomially-bounded discount factor $1/(1-\gamma)$ also  to the $\ppad$-membership of average-reward $\ssg$ with polynomially-bounded mixing time bound $\tmix$, which we state formally below for completeness.

\begin{corollary}[$\ppad$-membership of $\ar$-$\ssg$]\label{coro:ppad-membership}
The problem of $\eps$-approximate NE computation in multi-chain average-reward $\ssg$s  with mixing time bounded by $\tmix$ is in $\ppad$ for $\eps = \Omega(1/\poly(\Atot))$ and $\tmix = \poly(\Atot)$. 
\end{corollary}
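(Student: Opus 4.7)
The plan is to combine the preceding reduction lemma (which maps an average-reward $\ssg$ to a discounted $\ssg$ via an appropriate choice of $\gamma$) with the $\ppad$-membership of discounted $\ssg$s from Theorem~\ref{thm:ppad-membership}. Given an $\ar$-$\ssg$ instance $\calG$ with multi-chain mixing time $\tmix = \poly(\Atot)$ and target accuracy $\eps = \Omega(1/\poly(\Atot))$, I would form the corresponding discounted $\ssg$ $\hat{\calG}$ by setting $\gamma = 1 - \eps/(9\tmix)$ and leaving the state space, action sets, transition probabilities, and instantaneous rewards unchanged. This construction is clearly polynomial-time.

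Next I would verify that $\hat{\calG}$, together with the target accuracy $\eps' \defeq \eps/(3(1-\gamma)) = 3\tmix$, sits in the regime covered by Theorem~\ref{thm:ppad-membership}. By construction $1/(1-\gamma) = 9\tmix/\eps$, which is $\poly(\Atot)$ under the stated hypotheses on $\tmix$ and $\eps$; likewise $\eps' = 3\tmix = \poly(\Atot)$, which in particular is $\Omega(1/\poly(\Atot))$. Hence Theorem~\ref{thm:ppad-membership} furnishes a polynomial-time reduction from the problem of computing an $\eps'$-approximate NE of $\hat{\calG}$ (with utilities taken under the uniform initial distribution $\qq = \tfrac{1}{|\calS|}\ee_\calS$, consistent with the default convention from~\Cref{sec:prelim}) to $\lineofaend$.

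Finally, I would chain the two reductions: the preceding lemma guarantees that any $\eps'$-approximate NE of $\hat{\calG}$ is simultaneously an $\eps$-approximate NE of the original $\ar$-$\ssg$ $\calG$, so composing the polynomial-time construction of $\hat{\calG}$ with the Theorem~\ref{thm:ppad-membership} reduction gives a polynomial-time reduction from $\eps$-approximate NE computation in $\calG$ to $\lineofaend$, placing the problem in $\ppad$. There is no substantive obstacle here beyond bookkeeping; the only thing one must be careful about is that the two accuracy and discount parameters stay within the polynomial regime required by Theorem~\ref{thm:ppad-membership}, and the choice $\gamma = 1 - \eps/(9\tmix)$ is precisely calibrated (via the lemma) so that both $1/(1-\gamma)$ and $1/\eps'$ are $\poly(\Atot)$ simultaneously.
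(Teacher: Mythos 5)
Your proposal is correct and takes essentially the same route as the paper: the corollary is obtained there by exactly this chaining of the average-reward-to-discounted reduction lemma (with $\gamma = 1 - \eps/(9\tmix)$) with \Cref{thm:ppad-membership}, and the only bookkeeping needed is the check you perform that $1/(1-\gamma) = 9\tmix/\eps$ and the induced accuracy $\eps' = \eps/(3(1-\gamma)) = 3\tmix$ both lie in the polynomial regime required by that theorem. Nothing is missing.
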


\subsection{Reduction from discounted $\tbsg$ to absorbing $\tbsg$}\label{ssec:redx-discounted-to-average}

In this section, we provide a formal reduction from $\gamma$-discounted $\tbsg$ to absorbing $\tbsg$ with fast mixing time. Given any $\gamma$-discounted $\tbsg$ instance $\mathcal{G} = (n,\calS,\mathcal{A},\pp,\r,\gamma)$, we consider a $\gamma'$-discounted or absorbing undiscounted instance with $\hat{\calG} = (n',\hat{\calS},\hat{\calA},\hat{\pp},\hat{\r},\gamma')$ with the following specifications:
\begin{itemize}
\item State space: the state space consists of all states in $\calS$ and one extra absorbing state which we call $s_0$, so we have $\hat{\calS} = \{s_0\}\cup\calS$. We let $n' = n+1$ with an additional player controlling the new state $s_0$. 
\item Action space: the action set for each state $s\in \calS$ remains the same, i.e.\ $\hat{\calA}_{i,s}=\calA_{i,s}$ for all $s\neq s_0$, and $\hat{\calA}_{n+1,s_0}=\{a_0\}$, $\hat{\calA}_{i,s_0}=\emptyset$ for all $i\in[n]$. In other words, we let a new player to control the new state $s_0$ with a degenerate single action $a_{s_0}$ in its actions set. 
\item Probability transition kernels: For each state $(s,a_s)$ where $s\in\calS$, $a_s\in\calA_s$, we have transition probability $\hat{\pp}_{s,a_s} = \frac{\gamma}{\gamma'}\pp_{s,a_s}+(1-\frac{\gamma}{\gamma'})\ee_{s_0}$. For state $s_0$ and its only action $a_{0}$ we have  $\hat{\pp}_{s_0,a_{s_0}}=\ee_{s_0}$. 
\item Instantaneous rewards: For each state $(s,a)$ where $s\in\calS$, $a_s\in\calA_s$, the instantaneous reward remains unchanged for all player, i.e. $\hat{r}_{i,s,a_s} = r_{i,s,a_s}$. For state $s_0$, the instant reward is $0$, i.e. $\hat{r}_{i,s_0,a_{s_0}}=0$ for all $i\in[n]$.
\item Discount factor $\gamma'$: When $\gamma'\in(\gamma,1)$ we still consider the discounted reward model; when $\gamma'=1$ we alternatively consider the average-reward model.
\end{itemize}
Note in this construction we will get an $\onestate$ (or $\Abs$-$\onestate$) instance from any $\onestate$ instance, which  preserves the fact that each player only controls a single state.

Given a strategy $\ppi$ for the original discounted $\tbsg$ $\calG$, we keep $\ppi$ as the notation for a corresponding strategy for $\gamma'$-discounted or average-reward $\tbsg$ $\hat{\calG}$ where all policies for $s\in\calS$ remains unchanged, and player $n+1$ controlling $s_0$ will just take the single action available $a_{s_0}$ at state $s_0$. In this setting (as a typical generalization for discounted or absorbing MDPs), we let the utility function for each player $i\in[n]$ under some joint strategy $\ppi$ and initial distribution $\qq\in\Delta^\calS$ be 
\[
\hat{u}_i(\ppi) = \hat{v}_i^{\ppi,\qq} = \begin{cases}
 	\lim_{T\rightarrow\infty}\E\left[\sum_{t=0}^T\hat{r}_{i,s^t,a^t_{s^t}}|s^0 \sim\qq, a^t_{s^t}\sim \ppi_{s^t}~\text{for all}~t\right],~\text{for each}~i\in[n],~~\text{if}~\gamma'=1\\
 	\E\left[\sum_{t\ge0}(\gamma')^t\cdot \hat{r}_{i,s^t,a^t_{s^t}}|s^0 \sim\qq, a^t_{s^t}\sim \ppi_{s^t}~\text{for all}~t\right],~\text{for each}~i\in[n],~~\text{if}~\gamma\le \gamma'<1
 \end{cases}
.
\] 
We note that when $\gamma'=1$, the value (and utility) function we consider for the model defined especially for absorbing MDP, as in~\eqref{eq:def-ar-V-i-pi}. By construction our instance is also clearly $O(1/(1-\gamma))$-mixing, which we prove formally below

\begin{lemma}\label{lem:redx-mixing-time}
	Given a $\gamma$-discounted $\tbsg$ $\calG$, the absorbing $\tbsg$ $\hat{\calG}$ constructed above has mixing time bound $\tmix = O(1/(1-\gamma))$.
\end{lemma}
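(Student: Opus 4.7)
The plan is to exploit the fact that by construction every transition from a non-absorbing state ``leaks'' a fixed constant fraction of its mass into the sink $s_0$, which makes the entire chain geometrically contractive towards $\ee_{s_0}$ regardless of the strategy $\ppi$.

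First, I would identify the stationary distribution. Since $\gamma' = 1$ in the absorbing construction, we have $\hat{\pp}_{s,a_s} = \gamma\, \pp_{s,a_s} + (1-\gamma)\ee_{s_0}$ for every $s \in \calS$, $a_s \in \calA_s$, and $\hat{\pp}_{s_0, a_{s_0}} = \ee_{s_0}$. Therefore, for any stationary strategy $\ppi$, the matrix $\hat{\PP}^\ppi$ satisfies $\hat{\PP}^\ppi(s,s_0) \geq 1-\gamma$ for all $s \in \calS$ and $\hat{\PP}^\ppi(s_0,s_0) = 1$. Starting from any state, the chain is absorbed in $s_0$ almost surely, so the unique stationary distribution associated with any starting state is $\llam^{s,\ppi} = \ee_{s_0}$.

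Second, I would prove a one-step contraction for the total mass on $\calS$. For any vector $\qq \in \Delta^{\hat{\calS}}$, denote by $\qq_{\calS}$ its restriction to the non-absorbing states. The one-step update satisfies $(\hat{\PP}^{\ppi\top}\qq)(\calS) = \gamma\, \qq(\calS)$, because every unit of mass at $s \in \calS$ sends exactly a $\gamma$ fraction back into $\calS$ (via the rescaled $\pp$) and a $(1-\gamma)$ fraction into $s_0$, while mass already at $s_0$ remains there. Iterating this gives $((\hat{\PP}^{\ppi\top})^t\qq)(\calS) \leq \gamma^t$ for all $t \geq 0$.

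Third, I would convert this into the $\ell_1$ bound required by Definition~\ref{def:mixing}. Since both $(\hat{\PP}^{\ppi\top})^t\qq$ and $\ee_{s_0}$ are probability distributions that agree everywhere outside $\calS \cup \{s_0\}$, the $\ell_1$ distance equals $2((\hat{\PP}^{\ppi\top})^t\qq)(\calS) \leq 2\gamma^t$. Setting $t = \lceil \log 4 / \log(1/\gamma) \rceil = O(1/(1-\gamma))$ (using $\log(1/\gamma) \geq 1-\gamma$) makes this at most $1/2$ uniformly over $\qq$ and $\ppi$, which exactly gives the bound $\tmix = O(1/(1-\gamma))$ claimed.

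I do not anticipate any real obstacle: the construction is designed so that the rank-one perturbation $(1-\gamma)\ee_{s_0}$ added to every transition kernel mechanically forces geometric absorption at rate $\gamma$. The only small care point is handling the multi-chain definition of $\tmix$, which is resolved by the uniqueness of $\ee_{s_0}$ as the stationary distribution from every starting state.
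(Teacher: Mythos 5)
Your proposal is correct and follows essentially the same route as the paper: both arguments rest on the observation that every non-absorbing state leaks a $(1-\gamma)$ fraction of mass into $s_0$ each step, yielding the bound $\norm{(\hat{\PP}^{\ppi\top})^t\qq - \ee_{s_0}}_1 \le 2\gamma^t$ and hence $\tmix = O(1/(1-\gamma))$. Your version is slightly more explicit than the paper's (which bounds $\norm{(\hat{\PP}^{\ppi\top})^t(\qq-\qq')}_1$ directly) in that you identify $\ee_{s_0}$ as the stationary distribution from every starting state, but the underlying contraction argument is identical.
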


\begin{proof}
	By construction, we have the probability transition matrix of $\hat{\calG}$ under any strategy $\ppi$ can be expressed as \[\hat{\PP}^{\ppi} = \begin{bmatrix}
 1 & \mathbf{0_\calS}^\top\\
 (1-\gamma)\ee_\calS & 	\gamma \PP^{\ppi}
 \end{bmatrix} = (1-\gamma)\ee_{\calS'}\begin{bmatrix}
1 & \mathbf{0}_\calS^\top	
\end{bmatrix}+\gamma \begin{bmatrix}
 1 & \mathbf{0}_\calS^\top\\
 \mathbf{0}_\calS & \PP^{\ppi}
 \end{bmatrix}.
\]

Consequently, for any initial distribution $\qq,\qq'\in\Delta^{\calS'}$, we have 
\[
\norm{\left(\Par{\hat{\PP}^{\ppi}}^t\right)^\top\Par{\qq-\qq'} }_1\le 2\gamma^t \implies \tmix\le O\Par{\frac{1}{1-\gamma}},
\]
by definition of $\tmix$, which concludes the proof.
\end{proof}

Next, we show the formal lemma that reduces solving $\gamma$-discounted $\tbsg$ to solving $\gamma'$-discounted $\tbsg$ or $O(1/(1-\gamma))$-mixing absorbing $\tbsg$. This is a pretty straight-forward generalization of the standard reduction for MDPs, e.g. see~\cite{puterman2014markov}.

\begin{lemma}\label{lem:hardness-ineq}
Under some fixed strategy $\ppi$, for any player $i\in[n]$ and some initial probability $\qq\in\Delta^\calS$, let $\hat{v}$, $v$ be the value functions of $\hat{\calG}$ (for some $\gamma'\in[\gamma,1]$), $\calG$, respectively, it holds that, 
\[
\hat{v}_i^{\ppi,\qq} = v_i^{\ppi,\qq}.
\]
Consequently, when utility functions are defined under the same uniform initial distributions $\qq = \frac{1}{|\calS|}\ee_\calS$, then $\ppi$ is a $\delta$-approximate NE for $\gamma$-discounted $\tbsg$ $\calG$ if and only if it is also a $\delta$-approximate NE for the correspondingly-constructed $\gamma'$-discounted $\tbsg$ or absorbing $\tbsg$ $\hat{\calG}$.
\end{lemma}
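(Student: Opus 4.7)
The plan is to couple trajectories of $\hat{\calG}$ (under either the $\gamma'$-discounted or the absorbing model) with trajectories of the original $\calG$, and then observe that the two value functions must coincide when the initial distribution is supported on $\calS \subset \hat{\calS}$. The key structural observation is that the new transition $\hat{\pp}_{s,a_s} = (\gamma/\gamma')\pp_{s,a_s} + (1 - \gamma/\gamma')\ee_{s_0}$ can be viewed as a two-stage random experiment: with probability $\gamma/\gamma'$ we ``survive'' and sample the next state from $\pp_{s,a_s}$ exactly as in $\calG$, and with probability $1 - \gamma/\gamma'$ we are ``absorbed'' into $s_0$, from which we never leave and never earn a reward. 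Moreover $s_0$ is entered only via this absorbing event, so before absorption the trajectory in $\hat{\calG}$ is exactly distributed as a trajectory in $\calG$ under the same $\ppi$.

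With this coupling, I would show that for an initial distribution $\qq \in \Delta^\calS$ (hence putting zero mass on $s_0$), the probability that the $\hat{\calG}$-trajectory has survived through time step $t$ equals $(\gamma/\gamma')^t$, independent of $\ppi$. Conditional on survival through time $t$, the joint law of $(s^0,a^0,\dots,s^t,a^t)$ matches the $\calG$-trajectory, and $\hat{r}_{i,s^t,a^t_{s^t}} = r_{i,s^t,a^t_{s^t}}$; conditional on absorption before $t$, the reward at time $t$ is zero. Hence in the $\gamma'$-discounted case
\begin{align*}
\hat{v}_i^{\ppi,\qq}
&= \sum_{t\ge 0}(\gamma')^t\, \E\!\left[\hat{r}_{i,s^t,a^t}\right]
= \sum_{t\ge 0}(\gamma')^t \cdot (\gamma/\gamma')^t \cdot \E_{\calG}\!\left[r_{i,s^t,a^t}\right]
= \sum_{t\ge 0}\gamma^t\, \E_{\calG}\!\left[r_{i,s^t,a^t}\right]
= v_i^{\ppi,\qq},
\end{align*}
and in the absorbing case ($\gamma'=1$) the same calculation with the prefactor $(\gamma')^t = 1$ and absorption probability $(1-\gamma)$ per step yields the identical sum $\sum_{t\ge 0}\gamma^t\E_{\calG}[r_{i,s^t,a^t}] = v_i^{\ppi,\qq}$. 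This establishes the pointwise identity $\hat v_i^{\ppi,\qq} = v_i^{\ppi,\qq}$ for every player $i \in [n]$ and every $\qq \in \Delta^\calS$.

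The NE equivalence under uniform initial distribution then follows essentially for free. Since $\qq = \tfrac{1}{|\calS|}\ee_\calS$ is supported on $\calS$, by the first part we have $\hat u_i(\ppi) = u_i(\ppi)$ for all $i \in [n]$. Next, the action set $\hat{\calA}_{i,s} = \calA_{i,s}$ is unchanged for every $i\in[n]$ and $s\in\calS$, so any admissible deviation $\ppi_i'$ for player $i$ in $\calG$ is also admissible in $\hat\calG$ and vice versa, and the identity applied to $(\ppi_i',\ppi_{-i})$ gives $\hat u_i(\ppi_i',\ppi_{-i}) = u_i(\ppi_i',\ppi_{-i})$. Finally the extra player $n+1$ controls only $s_0$ and has the single degenerate action $a_{s_0}$, so she has no deviations to consider and the NE inequality is trivially satisfied. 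Combining these, the $\delta$-NE inequality $u_i(\ppi)\ge u_i(\ppi_i',\ppi_{-i})-\delta$ in $\calG$ holds if and only if the corresponding inequality $\hat u_i(\ppi)\ge \hat u_i(\ppi_i',\ppi_{-i})-\delta$ holds in $\hat\calG$.

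I do not expect significant obstacles: the only subtlety is making the survival/absorption coupling precise when $\gamma'=1$ (so that ``discounting'' is replaced by ``decaying survival probability''), but this is handled uniformly by writing $\hat{v}_i^{\ppi,\qq}$ as a sum over time $t$ and separating the expectation according to whether absorption has occurred by time $t$. A brief sanity check that the degenerate player $n+1$ does not change the NE definition (and that $\qq$ has no mass on $s_0$, so the initial-distribution difference between $\Delta^\calS$ and $\Delta^{\hat\calS}$ is immaterial) completes the argument.
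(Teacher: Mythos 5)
Your proposal is correct and matches the paper's argument: your survival/absorption coupling, which gives $\E[\hat{r}_{i,s^t,a^t}] = (\gamma/\gamma')^t\,\E_{\calG}[r_{i,s^t,a^t}]$ for $\qq$ supported on $\calS$, is exactly the probabilistic restatement of the paper's per-step matrix identity $\qq^\top(\PP^{\ppi})^t\r_i^{\ppi} = \begin{bmatrix} 0 & \qq^\top\end{bmatrix}(\gamma')^t(\hat{\PP}^{\ppi})^t\hat{\r}_i^{\ppi}$, which the paper derives from the block structure of $\hat{\PP}^{\ppi}$ and then sums over $t$. The NE equivalence is also handled the same way (identical deviation sets on $\calS$ and a degenerate extra player), with your write-up actually spelling out slightly more detail than the paper's one-line ``consequently.''
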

\begin{proof}
To see this, note it holds that $\hat{\PP}^{\ppi} = \begin{bmatrix}
 1 & \mathbf{0_\calS}^\top\\
 (1-\frac{\gamma}{\gamma'})\ee_\calS & 	\frac{\gamma}{\gamma'} \PP^{\ppi}
 \end{bmatrix}
$ and $\hat{\r}^{\ppi} = \begin{bmatrix}
 0\\
 \r^{\ppi}	
 \end{bmatrix}
$, we have the equality that for any initial distribution $\qq\in\Delta^\calS$ and $t\ge0$, 
\begin{equation}\label{eq:redx-average-to-discounted-equality}
\qq^\top\Par{\PP^{\ppi}}^t\r_i^{\ppi} = \begin{bmatrix}
 	0 & \qq^\top
 \end{bmatrix}
(\gamma')^t\Par{\hat{\PP}^{\ppi}}^t\hat{\r}_i^{\ppi}.	
\end{equation}
Now expanding out the expression of $\hat{V}$ and $V$, respectively, for the discounted case we have
\[
v_i^{\ppi,\qq} = \qq^\top\Par{\sum_{t\ge 0}\gamma^t\Par{\PP^{\ppi}}^t\r_i^{\ppi}}.
\]
For the average-reward case
\[
\hat{v}_i^{\ppi,\qq} = \begin{bmatrix}
 0 & \qq^\top 	
 \end{bmatrix}\Par{\sum_{t = 0}^\infty(\gamma')^t\Par{\hat{\PP}^{\ppi}}^t\hat{\r}_i^{\ppi}} = v_i^{\ppi,\qq},
\]
where for the last equality we use~\eqref{eq:redx-average-to-discounted-equality}. 
\end{proof}

Combining reduction from solving $\gamma$-discounted $\tbsg$ to solving $O((1-\gamma)^{-1})$-mixing absorbing $\tbsg$ approximately with the hardness result in~\Cref{thm:PPAD-hard}, we immediately obtain the following corollary.
\begin{corollary}[$\ppad$-hardness of $\Abs$-$\tbsg$]\label{coro:ar-hardness}
Given any mixing time $\tmix$ that is lower bounded by some constant and some small enough constant accuracy $\epsilon$,  the problem of finding an $\epsilon$-approximate NE in $\tmix$-mixing $\Abs$-$\onestate$s is $\ppad$-hard. 
\end{corollary}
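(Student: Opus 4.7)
The plan is to combine the PPAD-hardness result for discounted $\onestate$s (Theorem~\ref{thm:PPAD-hard} and its extension Corollary~\ref{coro:PPAD-hard}) with the reduction machinery established in Lemmas~\ref{lem:redx-mixing-time} and~\ref{lem:hardness-ineq}. Concretely, given a target lower-bound constant $c_0$ on the mixing time, I would first pick any discount factor $\gamma \in [1/2,1)$ with $1/(1-\gamma)$ of the appropriate order so that the resulting mixing time of the constructed absorbing instance matches the desired $\tmix$ (at minimum, $\gamma = 1/2$ suffices whenever $\tmix$ is at most a universal constant; for larger target $\tmix$, set $\gamma = 1 - \Theta(1/\tmix)$). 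Corollary~\ref{coro:PPAD-hard} then guarantees that computing an $\epsilon$-approximate NE in $\gamma$-discounted $\onestate$s is PPAD-hard for any sufficiently small constant $\epsilon \le \epsilon_1$.

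Next, I would apply the construction from Section~\ref{ssec:redx-discounted-to-average} with $\gamma' = 1$ to transform any $\gamma$-discounted $\onestate$ instance $\calG$ into an absorbing $\onestate$ instance $\hat{\calG}$, by adding a single absorbing state $s_0$ (controlled by a fresh dummy player with a single degenerate action) and rerouting a $(1-\gamma)$-fraction of each outgoing transition into $s_0$. Two facts combine to finish the reduction: Lemma~\ref{lem:redx-mixing-time} certifies that $\hat{\calG}$ is $O(1/(1-\gamma))$-mixing, so the mixing time sits in the promised regime; Lemma~\ref{lem:hardness-ineq} certifies that the value function of each player under any strategy $\ppi$ is \emph{identical} between $\calG$ and $\hat{\calG}$, so the $\delta$-approximate NE notion (under the uniform initial distribution) is preserved exactly in both directions.

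Since the construction is computable in linear time and preserves the $\epsilon$-approximate NE property with no loss in $\epsilon$, any polynomial-time algorithm for $\epsilon$-approximate NE in $\tmix$-mixing $\Abs$-$\onestate$s would yield a polynomial-time algorithm for $\epsilon$-approximate NE in $\gamma$-discounted $\onestate$s with the appropriately chosen $\gamma \in [1/2,1)$, contradicting Corollary~\ref{coro:PPAD-hard}. The only step requiring mild care, rather than being a genuine obstacle, is the calibration of $\gamma$ to the prescribed mixing-time range: the phrasing ``any $\tmix$ lower bounded by some constant'' means we need hardness for a family of absorbing TBSGs whose mixing times are at least $c_0$, and the freedom to tune $\gamma$ within $[1/2,1)$ suffices to realize this family via the reduction.
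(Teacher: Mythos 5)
Your proposal is correct and follows essentially the same route as the paper: the paper's proof is exactly the combination of Theorem~\ref{thm:PPAD-hard} (via Corollary~\ref{coro:PPAD-hard}) with the discounted-to-absorbing reduction of Section~\ref{ssec:redx-discounted-to-average}, using Lemma~\ref{lem:redx-mixing-time} for the $O(1/(1-\gamma))$ mixing-time bound and Lemma~\ref{lem:hardness-ineq} for exact preservation of $\delta$-approximate NE. Your additional remark about calibrating $\gamma$ to realize the prescribed mixing-time range is a sensible elaboration of a detail the paper leaves implicit, not a different argument.
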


\end{document}